\renewcommand\footnotetextcopyrightpermission[1]{} 
\renewcommand\@formatdoi[1]{\ignorespaces}
\newcolumntype{C}[1]{>{\centering\let\newline\\\arraybackslash\hspace{0pt}}m{#1}}
\definecolor{greytext}{gray}{0.5}
\newcommand{\oblset}[1]{\textsc{#1}}
\definecolor{light-gray}{gray}{0.9}
\definecolor{dark-gray}{gray}{0.5}
\newcommand{\lang}{GDTL\xspace} \newcommand{\slang}{SDTL\xspace}
\definecolor{lightgray}{gray}{0.90}
\newcommand{\gbox}[1]{{\setlength{\fboxsep}{1pt}\colorbox{lightgray}{\tiny$#1$}}}
\newcommand{\lub}[1]{\sqcup}
\newcounter{obj-trt-line}
\newsavebox{\saveboxedarray}
\reservestyle{\command}{\textsf}
\newcommand{\Pow}{\mathcal{P}}
\definecolor{bluekeywords}{rgb}{0.13,0.13,1}
\definecolor{greencomments}{rgb}{0,0.5,0}
\definecolor{redstrings}{rgb}{0.9,0,0}
\itshape\color{blue}}
\lstdefinelanguage{Rust}
{
    morekeywords={
    as, break, const, continue, crate, else, enum, extern, false, fn, for, if, impl, in, let, loop, match, mod, move, mut, pub, ref, return, Self, self, static, struct, super, trait, true, type, unsafe, use, where, while
  },
  otherkeywords={?, Unary, -, *, !, &, &mut, as, :, *, /, +, -, <<, >>, &, ^, |, ==, !=, <, >, <=, >=, &&, ||, .., ..., <-, =, +=, -=, *=, /=, &=, |=, ^=, <<=, >>=},
  sensitive=true,   morecomment=[l]{//},   morecomment=[s]{/*}{*/},   morestring=[b]",   commentstyle=\itshape\color{blue}
}
\itshape\color{darkgray}}
\itshape\color{darkgray}}
  \def\z#1{\sbox{0}{$#1$}  \ifdim\wd0>0.2\textwidth
     \expandafter\@firstoftwo
   \else
     \expandafter\@secondoftwo
    \fi
    {\rlap{\usebox{0}}\\\multicolumn{2}{r}{}}{\usebox{0}}}
\newcommand{\AllTerms}{\oblset{SCanonical}}
\newcommand{\AllGTerms}{\oblset{GCanonical}}
\newcommand{\pto}{\rightharpoonup}
\newcommand{\abe}{{\alpha\beta\eta}}
\newcommand{\CCw}{$\mathsf{CC}_\omega$}
\newcommand{\rrule}[1]{\rref*{#1}}
\newcommand{\ottdrule}[4][]{{\displaystyle\frac{\begin{array}{l}#2\end{array}}{#3}\quad\ottdrulename{#4}}}
\newcommand{\ottusedrule}[1]{\[#1\]}
\newcommand{\ottpremise}[1]{ #1 \\}
\newenvironment{ottdefnblock}[3][]{ \framebox{\mbox{#2}} \quad #3 \\[0pt]}{}
\newcommand{\ottnt}[1]{\mathit{#1}}
\newcommand{\ottkw}[1]{\mathbf{#1}}
\newcommand{\ottsym}[1]{#1}
\newcommand{\ottcom}[1]{\text{#1}}
\newcommand{\ottdrulename}[1]{\textsc{#1}}
\newcommand\leadsfrom{\reflectbox{$\leadsto$ } }
\newcommand\myepsilon\varepsilon
\newcommand\qm{\textbf{?} }
\newcommand{\gobble}[1]{}
\newcommand{\emptyspine}{\ifboolexpr{bool{ShowEmptyDot} }{\cdot}{} }
\newcommand{\gradualstyle}[3]{\color{RoyalBlue}\mathrm{ {#1}_{#2}#3 } }
\newcommand{\staticstyle}[3]{\color{BrickRed}\mathsf{ {#1}_{#2}#3 } }
\newcommand{\staticgreek}[3]{\color{BrickRed} {#1}_{#2}#3  }
\newcommand{\evstyle}[3]{\color{RoyalBlue}\mathrm{ {#1}_{#2}#3 } }
\newcommand{\TypeType}{\rev{\ottkw{Type} } }
\newcommand{\ottdruleSHsubPi}[1]{\ottdrule[#1]{%
\ottpremise{ {[  \static{u}  / { \mathit{x} } ]}^{ \static{U}  }  \static{U}_{{\mathrm{1}}} =  \static{U}'_{{\mathrm{1}}} }%
\ottpremise{ {[  \static{u}  / { \mathit{x} } ]}^{ \static{U}  }  \static{U}_{{\mathrm{2}}} =  \static{U}'_{{\mathrm{2}}} }%
\ottpremise{ \mathit{x}   \neq   \mathit{y} }%
}{
 {[  \static{u}  / { \mathit{x} } ]}^{ \static{U}  }  \ottsym{(}  \mathit{y}  \ottsym{:}  \static{U}_{{\mathrm{1}}}  \ottsym{)}  \rightarrow  \static{U}_{{\mathrm{2}}} =  \ottsym{(}  \mathit{y}  \ottsym{:}  \static{U}'_{{\mathrm{1}}}  \ottsym{)}  \rightarrow  \static{U}'_{{\mathrm{2}}} }{%
{\ottdrulename{SHsubPi}}{}%
}}
\newcommand{\ottdruleSHsubDiffNil}[1]{\ottdrule[#1]{%
\ottpremise{ \mathit{x}   \neq   \mathit{y} }%
}{
 {[  \static{u}  / { \mathit{x} } ]}^{ \static{U}  }   \mathit{y}  =   \mathit{y}  }{%
{\ottdrulename{SHsubDiffNil}}{}%
}}
\newcommand{\ottdruleSHsubDiffCons}[1]{\ottdrule[#1]{%
\ottpremise{ \mathit{x}   \neq   \mathit{y} }%
\ottpremise{ {[  \static{u}  / { \mathit{x} } ]}^{ \static{U}  }   \mathit{y} \static{e}  =   \mathit{y} \static{e}'  }%
\ottpremise{ {[  \static{u}  / { \mathit{x} } ]}^{ \static{U}  }  \static{u}_{{\mathrm{2}}} =  \static{u}_{{\mathrm{3}}} }%
}{
 {[  \static{u}  / { \mathit{x} } ]}^{ \static{U}  }   \mathit{y}  \static{e} \  \static{u}_{{\mathrm{2}}}   =   \mathit{y}  \static{e}' \  \static{u}_{{\mathrm{3}}}   }{%
{\ottdrulename{SHsubDiffCons}}{}%
}}
\newcommand{\ottdruleSHsubType}[1]{\ottdrule[#1]{%
}{
 {[  \static{u}  / { \mathit{x} } ]}^{ \static{U}  }   \TypeType_{ \ottnt{i} }  =   \TypeType_{ \ottnt{i} }  }{%
{\ottdrulename{SHsubType}}{}%
}}
\newcommand{\ottdruleSHsubLam}[1]{\ottdrule[#1]{%
\ottpremise{ {[  \static{u}  / { \mathit{x} } ]}^{ \static{U}  }  \static{u}_{{\mathrm{2}}} =  \static{u}_{{\mathrm{3}}} }%
\ottpremise{ \mathit{x}   \neq   \mathit{y} }%
}{
 {[  \static{u}  / { \mathit{x} } ]}^{ \static{U}  }  \ottsym{(}  \lambda  \mathit{y}  \ldotp  \static{u}_{{\mathrm{2}}}  \ottsym{)} =  \ottsym{(}  \lambda  \mathit{y}  \ldotp  \static{u}_{{\mathrm{3}}}  \ottsym{)} }{%
{\ottdrulename{SHsubLam}}{}%
}}
\newcommand{\ottdruleSHsubSpine}[1]{\ottdrule[#1]{%
\ottpremise{ {[  \static{u}_{{\mathrm{1}}}  / { \mathit{x} } ]}^{ \static{U}  } { \mathit{x}  \static{e} \  \static{u}_{{\mathrm{2}}}   }=\ { \static{u}_{{\mathrm{3}}} }\ :\ { \static{U}' } }%
}{
 {[  \static{u}_{{\mathrm{1}}}  / { \mathit{x} } ]}^{ \static{U}  }   \mathit{x}  \static{e} \  \static{u}_{{\mathrm{2}}}   =  \static{u}_{{\mathrm{3}}} }{%
{\ottdrulename{SHsubSpine}}{}%
}}
\newcommand{\ottdefnSHsub}[1]{\begin{ottdefnblock}[#1]{$ {[  \static{u}_{{\mathrm{1}}}  / { \mathit{x} } ]}^{ \static{U}  }  \static{u}_{{\mathrm{2}}} =  \static{u}_{{\mathrm{3}}} $}{\ottcom{Static Hereditary Substitution}}
\ottusedrule{\ottdruleSHsubPi{}}
\ottusedrule{\ottdruleSHsubDiffNil{}}
\ottusedrule{\ottdruleSHsubDiffCons{}}
\ottusedrule{\ottdruleSHsubType{}}
\ottusedrule{\ottdruleSHsubLam{}}
\ottusedrule{\ottdruleSHsubSpine{}}
\end{ottdefnblock}}
\newcommand{\ottdruleSHsubRHead}[1]{\ottdrule[#1]{%
}{
 {[  \static{u}  / { \mathit{x} } ]}^{ \static{U}  } { \mathit{x}  \emptyspine   }=\ { \static{u} }\ :\ { \static{U} } }{%
{\ottdrulename{SHsubRHead}}{}%
}}
\newcommand{\ottdruleSHsubRSpine}[1]{\ottdrule[#1]{%
\ottpremise{  {[  \static{u}_{{\mathrm{1}}}  / { \mathit{x} } ]}^{ \static{U}  } { \mathit{x} \static{e}  }=\ { \ottsym{(}  \lambda  \mathit{y}  \ldotp  \static{u}'_{{\mathrm{1}}}  \ottsym{)} }\ :\ { \ottsym{(}  \mathit{y}  \ottsym{:}  \static{U}'_{{\mathrm{1}}}  \ottsym{)}  \rightarrow  \static{U}'_{{\mathrm{2}}} }   \qquad   {[  \static{u}_{{\mathrm{1}}}  / { \mathit{x} } ]}^{ \static{U}  }  \static{u}_{{\mathrm{2}}} =  \static{u}_{{\mathrm{3}}}  }%
\ottpremise{  {[  \static{u}_{{\mathrm{3}}}  / { \mathit{y} } ]}^{ \static{U}'_{{\mathrm{1}}}  }  \static{u}'_{{\mathrm{1}}} =  \static{u}'_{{\mathrm{2}}}   \qquad   {[  \static{u}_{{\mathrm{3}}}  / { \mathit{y} } ]}^{ \static{U}'_{{\mathrm{1}}}  }  \static{U}'_{{\mathrm{2}}} =  \static{U}'_{{\mathrm{3}}}  }%
}{
 {[  \static{u}_{{\mathrm{1}}}  / { \mathit{x} } ]}^{ \static{U}  } { \mathit{x}  \static{e} \  \static{u}_{{\mathrm{2}}}   }=\ { \static{u}'_{{\mathrm{2}}} }\ :\ { \static{U}'_{{\mathrm{3}}} } }{%
{\ottdrulename{SHsubRSpine}}{}%
}}
\newcommand{\ottdefnSHsubR}[1]{\begin{ottdefnblock}[#1]{$ {[  \static{u}  / { \mathit{x} } ]}^{ \static{U}  } { \mathit{x} \static{e}  }=\ { \static{u}' }\ :\ { \static{U}' } $}{\ottcom{Static Atomic Hereditary Substitution}}
\ottusedrule{\ottdruleSHsubRHead{}}
\ottusedrule{\ottdruleSHsubRSpine{}}
\end{ottdefnblock}}
\newcommand{\ottdruleSWFEmpty}[1]{\ottdrule[#1]{%
}{
 \vdash  \cdot }{%
{\ottdrulename{SWFEmpty}}{}%
}}
\newcommand{\ottdruleSWFExt}[1]{\ottdrule[#1]{%
\ottpremise{   \vdash  \static{Gamma}   \qquad   \static{Gamma}  \vdash  \static{U}  : \TypeType    \qquad  \mathit{x} \, \not\in \, \static{Gamma} }%
}{
 \vdash  \ottsym{(}  \mathit{x}  \ottsym{:}  \static{U}  \ottsym{)}  \static{Gamma} }{%
{\ottdrulename{SWFExt}}{}%
}}
\newcommand{\ottdefnSWF}[1]{\begin{ottdefnblock}[#1]{$ \vdash  \static{Gamma} $}{\ottcom{Well Formed Static Environments}}
\ottusedrule{\ottdruleSWFEmpty{}}
\ottusedrule{\ottdruleSWFExt{}}
\end{ottdefnblock}}
\newcommand{\ottdruleStaticTypeType}[1]{\ottdrule[#1]{%
\ottpremise{\static{Gamma}  \vdash  \static{rr}  \Rightarrow   \TypeType_{ \ottnt{i} } }%
}{
 \static{Gamma}  \vdash  \static{rr}  : \TypeType }{%
{\ottdrulename{StaticTypeType}}{}%
}}
\newcommand{\ottdruleStaticTypePi}[1]{\ottdrule[#1]{%
\ottpremise{ \static{Gamma}  \vdash  \static{U}_{{\mathrm{1}}}  : \TypeType }%
\ottpremise{ \vdash  \ottsym{(}  \mathit{x}  \ottsym{:}  \static{U}_{{\mathrm{1}}}  \ottsym{)}  \static{Gamma} }%
\ottpremise{ \ottsym{(}  \mathit{x}  \ottsym{:}  \static{U}_{{\mathrm{1}}}  \ottsym{)}  \static{Gamma}  \vdash  \static{U}_{{\mathrm{2}}}  : \TypeType }%
}{
 \static{Gamma}  \vdash  \ottsym{(}  \mathit{x}  \ottsym{:}  \static{U}_{{\mathrm{1}}}  \ottsym{)}  \rightarrow  \static{U}_{{\mathrm{2}}}  : \TypeType }{%
{\ottdrulename{StaticTypePi}}{}%
}}
\newcommand{\ottdefnStaticType}[1]{\begin{ottdefnblock}[#1]{$ \static{Gamma}  \vdash  \static{U}  : \TypeType $}{\ottcom{Well-formed Types with Unknown Level}}
\ottusedrule{\ottdruleStaticTypeType{}}
\ottusedrule{\ottdruleStaticTypePi{}}
\end{ottdefnblock}}
\newcommand{\ottdruleSCSynthType}[1]{\ottdrule[#1]{%
\ottpremise{\ottnt{i}  \ottsym{>}  \ottsym{0}}%
}{
\static{Gamma}  \vdash   \TypeType_{ \ottnt{i} }   \Rightarrow   \TypeType_{ \ottnt{i}  + 1} }{%
{\ottdrulename{SCSynthType}}{}%
}}
\newcommand{\ottdruleSCSynthVar}[1]{\ottdrule[#1]{%
\ottpremise{ \vdash  \static{Gamma} }%
\ottpremise{\ottsym{(}  \mathit{x}  \ottsym{:}  \static{U}  \ottsym{)} \, \in \, \static{Gamma}}%
}{
\static{Gamma}  \vdash   \mathit{x}   \Rightarrow  \static{U}}{%
{\ottdrulename{SCSynthVar}}{}%
}}
\newcommand{\ottdruleSCSynthApp}[1]{\ottdrule[#1]{%
\ottpremise{\static{Gamma}  \vdash   \mathit{x} \static{e}   \Rightarrow  \ottsym{(}  \mathit{y}  \ottsym{:}  \static{U}_{{\mathrm{1}}}  \ottsym{)}  \rightarrow  \static{U}_{{\mathrm{2}}}}%
\ottpremise{ \static{Gamma}  \vdash  \static{u}  \Leftarrow  \static{U}_{{\mathrm{1}}}  \qquad   {[  \static{u}  / { \mathit{y} } ]}^{ \static{U}_{{\mathrm{1}}}  }  \static{U}_{{\mathrm{2}}} =  \static{U}_{{\mathrm{3}}}  }%
}{
\static{Gamma}  \vdash   \mathit{x}  \static{e} \  \static{u}    \Rightarrow  \static{U}_{{\mathrm{3}}}}{%
{\ottdrulename{SCSynthApp}}{}%
}}
\newcommand{\ottdefnSCSynth}[1]{\begin{ottdefnblock}[#1]{$\static{Gamma}  \vdash  \static{rr}  \Rightarrow  \static{U}$}{\ottcom{Static Well-Formed Atomics}}
\ottusedrule{\ottdruleSCSynthType{}}
\ottusedrule{\ottdruleSCSynthVar{}}
\ottusedrule{\ottdruleSCSynthApp{}}
\end{ottdefnblock}}
\newcommand{\ottdruleSCCheckSynth}[1]{\ottdrule[#1]{%
\ottpremise{\static{Gamma}  \vdash  \static{rr}  \Rightarrow  \static{RR}}%
}{
\static{Gamma}  \vdash  \static{rr}  \Leftarrow  \static{RR}}{%
{\ottdrulename{SCCheckSynth}}{}%
}}
\newcommand{\ottdruleSCCheckLevel}[1]{\ottdrule[#1]{%
\ottpremise{\static{Gamma}  \vdash  \static{RR}  \Rightarrow   \TypeType_{ \ottnt{i} } }%
\ottpremise{ 0 <  \ottnt{i}  <  \ottnt{j} }%
}{
\static{Gamma}  \vdash  \static{RR}  \Leftarrow   \TypeType_{ \ottnt{j} } }{%
{\ottdrulename{SCCheckLevel}}{}%
}}
\newcommand{\ottdruleSCCheckLam}[1]{\ottdrule[#1]{%
\ottpremise{  \vdash  \ottsym{(}  \mathit{x}  \ottsym{:}  \static{U}_{{\mathrm{1}}}  \ottsym{)}  \static{Gamma}   \qquad  \ottsym{(}  \mathit{x}  \ottsym{:}  \static{U}_{{\mathrm{1}}}  \ottsym{)}  \static{Gamma}  \vdash  \static{u}  \Leftarrow  \static{U}_{{\mathrm{2}}} }%
}{
\static{Gamma}  \vdash  \ottsym{(}  \lambda  \mathit{x}  \ldotp  \static{u}  \ottsym{)}  \Leftarrow  \ottsym{(}  \mathit{x}  \ottsym{:}  \static{U}_{{\mathrm{1}}}  \ottsym{)}  \rightarrow  \static{U}_{{\mathrm{2}}}}{%
{\ottdrulename{SCCheckLam}}{}%
}}
\newcommand{\ottdruleSCCheckPi}[1]{\ottdrule[#1]{%
\ottpremise{\static{Gamma}  \vdash  \static{U}_{{\mathrm{1}}}  \Leftarrow   \TypeType_{ \ottnt{i} } }%
\ottpremise{  \vdash  \ottsym{(}  \mathit{x}  \ottsym{:}  \static{U}_{{\mathrm{1}}}  \ottsym{)}  \static{Gamma}   \qquad  \ottsym{(}  \mathit{x}  \ottsym{:}  \static{U}_{{\mathrm{1}}}  \ottsym{)}  \static{Gamma}  \vdash  \static{U}_{{\mathrm{2}}}  \Leftarrow   \TypeType_{ \ottnt{i} }  }%
}{
\static{Gamma}  \vdash  \ottsym{(}  \mathit{x}  \ottsym{:}  \static{U}_{{\mathrm{1}}}  \ottsym{)}  \rightarrow  \static{U}_{{\mathrm{2}}}  \Leftarrow   \TypeType_{ \ottnt{i} } }{%
{\ottdrulename{SCCheckPi}}{}%
}}
\newcommand{\ottdefnSCCheck}[1]{\begin{ottdefnblock}[#1]{$\static{Gamma}  \vdash  \static{u}  \Leftarrow  \static{U}$}{\ottcom{Static Well-Formed Canonical Forms}}
\ottusedrule{\ottdruleSCCheckSynth{}}
\ottusedrule{\ottdruleSCCheckLevel{}}
\ottusedrule{\ottdruleSCCheckLam{}}
\ottusedrule{\ottdruleSCCheckPi{}}
\end{ottdefnblock}}
\newcommand{\ottdruleSSynthAnn}[1]{\ottdrule[#1]{%
\ottpremise{ \static{Gamma}  \vdash  \static{U}  \leadsfrom  \static{T}  : \TypeType }%
\ottpremise{\static{Gamma}  \vdash  \static{t}  \Leftarrow  \static{U}}%
}{
\static{Gamma}  \vdash  \ottsym{(}   \static{t} \dblcolon \static{T}   \ottsym{)}  \Rightarrow  \static{U}}{%
{\ottdrulename{SSynthAnn}}{}%
}}
\newcommand{\ottdruleSSynthType}[1]{\ottdrule[#1]{%
\ottpremise{\ottnt{i}  \ottsym{>}  \ottsym{0}}%
}{
\static{Gamma}  \vdash   \TypeType_{ \ottnt{i} }   \Rightarrow   \TypeType_{ \ottnt{i}  + 1} }{%
{\ottdrulename{SSynthType}}{}%
}}
\newcommand{\ottdruleSSynthVar}[1]{\ottdrule[#1]{%
\ottpremise{ \vdash  \static{Gamma} }%
\ottpremise{\ottsym{(}  \mathit{x}  \ottsym{:}  \static{U}  \ottsym{)} \, \in \, \static{Gamma}}%
}{
\static{Gamma}  \vdash  \mathit{x}  \Rightarrow  \static{U}}{%
{\ottdrulename{SSynthVar}}{}%
}}
\newcommand{\ottdruleSSynthApp}[1]{\ottdrule[#1]{%
\ottpremise{ \static{Gamma}  \vdash  \static{t}_{{\mathrm{1}}}  \Rightarrow  \ottsym{(}  \mathit{x}  \ottsym{:}  \static{U}_{{\mathrm{1}}}  \ottsym{)}  \rightarrow  \static{U}_{{\mathrm{2}}}  \qquad  \static{Gamma}  \vdash  \static{u}  \leadsfrom\  \static{t}_{{\mathrm{2}}}  \Leftarrow  \static{U}_{{\mathrm{1}}} }%
\ottpremise{ {[  \static{u}  / { \mathit{x} } ]}^{ \static{U}_{{\mathrm{1}}}  }  \static{U}_{{\mathrm{2}}} =  \static{U}_{{\mathrm{3}}} }%
}{
\static{Gamma}  \vdash   \static{t}_{{\mathrm{1}}} \  \static{t}_{{\mathrm{2}}}   \Rightarrow  \static{U}_{{\mathrm{3}}}}{%
{\ottdrulename{SSynthApp}}{}%
}}
\newcommand{\ottdefnSSynth}[1]{\begin{ottdefnblock}[#1]{$\static{Gamma}  \vdash  \static{t}  \Rightarrow  \static{U}$}{\ottcom{Static Synthesis}}
\ottusedrule{\ottdruleSSynthAnn{}}
\ottusedrule{\ottdruleSSynthType{}}
\ottusedrule{\ottdruleSSynthVar{}}
\ottusedrule{\ottdruleSSynthApp{}}
\end{ottdefnblock}}
\newcommand{\ottdruleSCheckSynth}[1]{\ottdrule[#1]{%
\ottpremise{\static{Gamma}  \vdash  \static{t}  \Rightarrow  \static{U}}%
}{
\static{Gamma}  \vdash  \static{t}  \Leftarrow  \static{U}}{%
{\ottdrulename{SCheckSynth}}{}%
}}
\newcommand{\ottdruleSCheckLevel}[1]{\ottdrule[#1]{%
\ottpremise{\static{Gamma}  \vdash  \static{T}  \Rightarrow   \TypeType_{ \ottnt{i} } }%
\ottpremise{ 0 <  \ottnt{i}  <  \ottnt{j} }%
}{
\static{Gamma}  \vdash  \static{T}  \Leftarrow   \TypeType_{ \ottnt{j} } }{%
{\ottdrulename{SCheckLevel}}{}%
}}
\newcommand{\ottdruleSCheckPi}[1]{\ottdrule[#1]{%
\ottpremise{ \static{Gamma}  \vdash  \static{U}  \leadsfrom\  \static{T}_{{\mathrm{1}}}  \Leftarrow   \TypeType_{ \ottnt{i} }   \qquad   \vdash  \ottsym{(}  \mathit{x}  \ottsym{:}  \static{U}  \ottsym{)}  \static{Gamma}  }%
\ottpremise{\ottsym{(}  \mathit{x}  \ottsym{:}  \static{U}  \ottsym{)}  \static{Gamma}  \vdash  \static{T}_{{\mathrm{2}}}  \Leftarrow   \TypeType_{ \ottnt{i} } }%
}{
\static{Gamma}  \vdash  \ottsym{(}  \mathit{x}  \ottsym{:}  \static{T}_{{\mathrm{1}}}  \ottsym{)}  \rightarrow  \static{T}_{{\mathrm{2}}}  \Leftarrow   \TypeType_{ \ottnt{i} } }{%
{\ottdrulename{SCheckPi}}{}%
}}
\newcommand{\ottdruleSCheckLam}[1]{\ottdrule[#1]{%
\ottpremise{ \vdash  \ottsym{(}  \mathit{x}  \ottsym{:}  \static{U}_{{\mathrm{1}}}  \ottsym{)}  \static{Gamma} }%
\ottpremise{\ottsym{(}  \mathit{x}  \ottsym{:}  \static{U}_{{\mathrm{1}}}  \ottsym{)}  \static{Gamma}  \vdash  \static{t}  \Leftarrow  \static{U}_{{\mathrm{2}}}}%
}{
\static{Gamma}  \vdash  \ottsym{(}  \lambda  \mathit{x}  \ldotp  \static{t}  \ottsym{)}  \Leftarrow  \ottsym{(}  \mathit{x}  \ottsym{:}  \static{U}_{{\mathrm{1}}}  \ottsym{)}  \rightarrow  \static{U}_{{\mathrm{2}}}}{%
{\ottdrulename{SCheckLam}}{}%
}}
\newcommand{\ottdefnSCheck}[1]{\begin{ottdefnblock}[#1]{$\static{Gamma}  \vdash  \static{t}  \Leftarrow  \static{U}$}{\ottcom{Static Checking}}
\ottusedrule{\ottdruleSCheckSynth{}}
\ottusedrule{\ottdruleSCheckLevel{}}
\ottusedrule{\ottdruleSCheckPi{}}
\ottusedrule{\ottdruleSCheckLam{}}
\end{ottdefnblock}}
\newcommand{\ottdruleStaticTypeNormSynth}[1]{\ottdrule[#1]{%
\ottpremise{\static{Gamma}  \vdash  \static{T}  \leadsto  \static{U}  \Rightarrow   \TypeType_{ \ottnt{i} } }%
}{
 \static{Gamma}  \vdash  \static{U}  \leadsfrom  \static{T}  : \TypeType }{%
{\ottdrulename{StaticTypeNormSynth}}{}%
}}
\newcommand{\ottdruleStaticTypeNormPi}[1]{\ottdrule[#1]{%
\ottpremise{ \static{Gamma}  \vdash  \static{U}_{{\mathrm{1}}}  \leadsfrom  \static{T}_{{\mathrm{1}}}  : \TypeType }%
\ottpremise{ \ottsym{(}  \mathit{x}  \ottsym{:}  \static{U}_{{\mathrm{1}}}  \ottsym{)}  \static{Gamma}  \vdash  \static{U}_{{\mathrm{2}}}  \leadsfrom  \static{T}_{{\mathrm{2}}}  : \TypeType }%
}{
 \static{Gamma}  \vdash  \ottsym{(}  \mathit{x}  \ottsym{:}  \static{U}_{{\mathrm{1}}}  \ottsym{)}  \rightarrow  \static{U}_{{\mathrm{2}}}  \leadsfrom  \ottsym{(}  \mathit{x}  \ottsym{:}  \static{T}_{{\mathrm{1}}}  \ottsym{)}  \rightarrow  \static{T}_{{\mathrm{2}}}  : \TypeType }{%
{\ottdrulename{StaticTypeNormPi}}{}%
}}
\newcommand{\ottdefnStaticTypeNorm}[1]{\begin{ottdefnblock}[#1]{$ \static{Gamma}  \vdash  \static{U}  \leadsfrom  \static{T}  : \TypeType $}{\ottcom{Normalization for Types with Unknown Level}}
\ottusedrule{\ottdruleStaticTypeNormSynth{}}
\ottusedrule{\ottdruleStaticTypeNormPi{}}
\end{ottdefnblock}}
\newcommand{\ottdruleSNormSynthAnn}[1]{\ottdrule[#1]{%
\ottpremise{ \static{Gamma}  \vdash  \static{U}  \leadsfrom  \static{T}  : \TypeType }%
\ottpremise{\static{Gamma}  \vdash  \static{u}  \leadsfrom\  \static{t}  \Leftarrow  \static{U}}%
}{
\static{Gamma}  \vdash  \ottsym{(}   \static{t} \dblcolon \static{T}   \ottsym{)}  \leadsto  \static{u}  \Rightarrow  \static{U}}{%
{\ottdrulename{SNormSynthAnn}}{}%
}}
\newcommand{\ottdruleSNormSynthType}[1]{\ottdrule[#1]{%
\ottpremise{\ottnt{i}  \ottsym{>}  \ottsym{0}}%
}{
\static{Gamma}  \vdash   \TypeType_{ \ottnt{i} }   \leadsto   \TypeType_{ \ottnt{i} }   \Rightarrow   \TypeType_{ \ottnt{i}  + 1} }{%
{\ottdrulename{SNormSynthType}}{}%
}}
\newcommand{\ottdruleSNormSynthVar}[1]{\ottdrule[#1]{%
\ottpremise{ \vdash  \static{Gamma} }%
\ottpremise{\ottsym{(}  \mathit{x}  \ottsym{:}  \static{U}  \ottsym{)} \, \in \, \static{Gamma}}%
\ottpremise{  \mathit{x}  \leadsto_\eta  \static{u} :  \static{U} }%
}{
\static{Gamma}  \vdash  \mathit{x}  \leadsto  \static{u}  \Rightarrow  \static{U}}{%
{\ottdrulename{SNormSynthVar}}{}%
}}
\newcommand{\ottdruleSNormSynthApp}[1]{\ottdrule[#1]{%
\ottpremise{\static{Gamma}  \vdash  \static{t}_{{\mathrm{1}}}  \leadsto  \ottsym{(}  \lambda  \mathit{x}  \ldotp  \static{u}_{{\mathrm{1}}}  \ottsym{)}  \Rightarrow  \ottsym{(}  \mathit{x}  \ottsym{:}  \static{U}_{{\mathrm{1}}}  \ottsym{)}  \rightarrow  \static{U}_{{\mathrm{2}}}}%
\ottpremise{\static{Gamma}  \vdash  \static{u}_{{\mathrm{2}}}  \leadsfrom\  \static{t}_{{\mathrm{2}}}  \Leftarrow  \static{U}_{{\mathrm{1}}}}%
\ottpremise{  {[  \static{u}_{{\mathrm{2}}}  / { \mathit{x} } ]}^{ \static{U}_{{\mathrm{1}}}  }  \static{u}_{{\mathrm{1}}} =  \static{u}_{{\mathrm{3}}}   \qquad   {[  \static{u}_{{\mathrm{2}}}  / { \mathit{x} } ]}^{ \static{U}_{{\mathrm{1}}}  }  \static{U}_{{\mathrm{2}}} =  \static{U}_{{\mathrm{3}}}  }%
}{
\static{Gamma}  \vdash   \static{t}_{{\mathrm{1}}} \  \static{t}_{{\mathrm{2}}}   \leadsto  \static{u}_{{\mathrm{3}}}  \Rightarrow  \static{U}_{{\mathrm{3}}}}{%
{\ottdrulename{SNormSynthApp}}{}%
}}
\newcommand{\ottdefnSNormSynth}[1]{\begin{ottdefnblock}[#1]{$\static{Gamma}  \vdash  \static{t}  \leadsto  \static{u}  \Rightarrow  \static{U}$}{\ottcom{Static Normalization Synthesis}}
\ottusedrule{\ottdruleSNormSynthAnn{}}
\ottusedrule{\ottdruleSNormSynthType{}}
\ottusedrule{\ottdruleSNormSynthVar{}}
\ottusedrule{\ottdruleSNormSynthApp{}}
\end{ottdefnblock}}
\newcommand{\ottdruleSNormCheckSynth}[1]{\ottdrule[#1]{%
\ottpremise{\static{Gamma}  \vdash  \static{t}  \leadsto  \static{u}  \Rightarrow  \static{U}}%
}{
\static{Gamma}  \vdash  \static{u}  \leadsfrom\  \static{t}  \Leftarrow  \static{U}}{%
{\ottdrulename{SNormCheckSynth}}{}%
}}
\newcommand{\ottdruleSNormCheckLevel}[1]{\ottdrule[#1]{%
\ottpremise{\static{Gamma}  \vdash  \static{T}  \leadsto  \static{U}  \Rightarrow   \TypeType_{ \ottnt{i} } }%
\ottpremise{ 0 <  \ottnt{i}  <  \ottnt{j} }%
}{
\static{Gamma}  \vdash  \static{U}  \leadsfrom\  \static{T}  \Leftarrow   \TypeType_{ \ottnt{j} } }{%
{\ottdrulename{SNormCheckLevel}}{}%
}}
\newcommand{\ottdruleSNormCheckPi}[1]{\ottdrule[#1]{%
\ottpremise{\static{Gamma}  \vdash  \static{U}_{{\mathrm{1}}}  \leadsfrom\  \static{T}_{{\mathrm{1}}}  \Leftarrow   \TypeType_{ \ottnt{i} } }%
\ottpremise{ \vdash  \ottsym{(}  \mathit{x}  \ottsym{:}  \static{U}_{{\mathrm{1}}}  \ottsym{)}  \static{Gamma} }%
\ottpremise{\ottsym{(}  \mathit{x}  \ottsym{:}  \static{U}_{{\mathrm{1}}}  \ottsym{)}  \static{Gamma}  \vdash  \static{U}_{{\mathrm{2}}}  \leadsfrom\  \static{T}_{{\mathrm{2}}}  \Leftarrow   \TypeType_{ \ottnt{i} } }%
}{
\static{Gamma}  \vdash  \ottsym{(}  \mathit{x}  \ottsym{:}  \static{U}_{{\mathrm{1}}}  \ottsym{)}  \rightarrow  \static{U}_{{\mathrm{2}}}  \leadsfrom\  \ottsym{(}  \mathit{x}  \ottsym{:}  \static{T}_{{\mathrm{1}}}  \ottsym{)}  \rightarrow  \static{T}_{{\mathrm{2}}}  \Leftarrow   \TypeType_{ \ottnt{i} } }{%
{\ottdrulename{SNormCheckPi}}{}%
}}
\newcommand{\ottdruleSNormCheckLam}[1]{\ottdrule[#1]{%
\ottpremise{ \vdash  \ottsym{(}  \mathit{x}  \ottsym{:}  \static{U}_{{\mathrm{1}}}  \ottsym{)}  \static{Gamma} }%
\ottpremise{\ottsym{(}  \mathit{x}  \ottsym{:}  \static{U}_{{\mathrm{1}}}  \ottsym{)}  \static{Gamma}  \vdash  \static{u}  \leadsfrom\  \static{t}  \Leftarrow  \static{U}_{{\mathrm{2}}}}%
}{
\static{Gamma}  \vdash  \ottsym{(}  \lambda  \mathit{x}  \ldotp  \static{u}  \ottsym{)}  \leadsfrom\  \ottsym{(}  \lambda  \mathit{x}  \ldotp  \static{t}  \ottsym{)}  \Leftarrow  \ottsym{(}  \mathit{x}  \ottsym{:}  \static{U}_{{\mathrm{1}}}  \ottsym{)}  \rightarrow  \static{U}_{{\mathrm{2}}}}{%
{\ottdrulename{SNormCheckLam}}{}%
}}
\newcommand{\ottdefnSNormCheck}[1]{\begin{ottdefnblock}[#1]{$\static{Gamma}  \vdash  \static{u}  \leadsfrom\  \static{t}  \Leftarrow  \static{U}$}{\ottcom{Static Normalization Checking}}
\ottusedrule{\ottdruleSNormCheckSynth{}}
\ottusedrule{\ottdruleSNormCheckLevel{}}
\ottusedrule{\ottdruleSNormCheckPi{}}
\ottusedrule{\ottdruleSNormCheckLam{}}
\end{ottdefnblock}}
\newcommand{\ottdruleEtaExpandAtomic}[1]{\ottdrule[#1]{%
}{
 \static{rr} \leadsto_\eta  \static{rr} :  \static{RR} }{%
{\ottdrulename{EtaExpandAtomic}}{}%
}}
\newcommand{\ottdruleEtaExpandPi}[1]{\ottdrule[#1]{%
\ottpremise{   \mathit{y}  \leadsto_\eta  \static{u} :  \static{U}_{{\mathrm{1}}}   \qquad    \mathit{x}  \static{e} \  \static{u}   \leadsto_\eta  \static{u} :  \static{U}_{{\mathrm{2}}}  }%
}{
  \mathit{x} \static{e}  \leadsto_\eta  \ottsym{(}  \lambda  \mathit{y}  \ldotp  \static{u}  \ottsym{)} :  \ottsym{(}  \mathit{y}  \ottsym{:}  \static{U}_{{\mathrm{1}}}  \ottsym{)}  \rightarrow  \static{U}_{{\mathrm{2}}} }{%
{\ottdrulename{EtaExpandPi}}{}%
}}
\newcommand{\ottdefnEtaExpand}[1]{\begin{ottdefnblock}[#1]{$ \static{rr} \leadsto_\eta  \static{u} :  \static{U} $}{\ottcom{Eta Expansion}}
\ottusedrule{\ottdruleEtaExpandAtomic{}}
\ottusedrule{\ottdruleEtaExpandPi{}}
\end{ottdefnblock}}
\newcommand{\ottdruleGradualEnvSubEmpty}[1]{\ottdrule[#1]{%
}{
 {[  \gradual{u}  / { \mathit{x} } ]}^{ \gradual{U}  }  \cdot  =  \cdot }{%
{\ottdrulename{GradualEnvSubEmpty}}{}%
}}
\newcommand{\ottdruleGradualEnvSubVarDiff}[1]{\ottdrule[#1]{%
\ottpremise{ \mathit{x}   \neq   \mathit{y} }%
\ottpremise{ {[  \gradual{u}  / { \mathit{x} } ]}^{ \gradual{U}  }  \gradual{U}_{{\mathrm{1}}}  =  \gradual{U}_{{\mathrm{2}}} }%
\ottpremise{ {[  \gradual{u}  / { \mathit{x} } ]}^{ \gradual{U}  }  \gradual{Gamma}  =  \gradual{Gamma}' }%
}{
 {[  \gradual{u}  / { \mathit{x} } ]}^{ \gradual{U}  }  \ottsym{(}  \mathit{y}  \ottsym{:}  \gradual{U}_{{\mathrm{1}}}  \ottsym{)}  \gradual{Gamma}  =  \ottsym{(}  \mathit{y}  \ottsym{:}  \gradual{U}_{{\mathrm{2}}}  \ottsym{)}  \gradual{Gamma}' }{%
{\ottdrulename{GradualEnvSubVarDiff}}{}%
}}
\newcommand{\ottdruleGradualEnvSubVarSame}[1]{\ottdrule[#1]{%
\ottpremise{ {[  \gradual{u}  / { \mathit{x} } ]}^{ \gradual{U}  }  \gradual{Gamma}  =  \gradual{Gamma}' }%
}{
 {[  \gradual{u}  / { \mathit{x} } ]}^{ \gradual{U}  }  \ottsym{(}  \mathit{x}  \ottsym{:}  \gradual{U}'  \ottsym{)}  \gradual{Gamma}  =  \gradual{Gamma}' }{%
{\ottdrulename{GradualEnvSubVarSame}}{}%
}}
\newcommand{\ottdefnGradualEnvSub}[1]{\begin{ottdefnblock}[#1]{$ {[  \gradual{u}  / { \mathit{x} } ]}^{ \gradual{U}  }  \gradual{Gamma}  =  \gradual{Gamma}' $}{\ottcom{Approximate Substitution on Gradual Environments}}
\ottusedrule{\ottdruleGradualEnvSubEmpty{}}
\ottusedrule{\ottdruleGradualEnvSubVarDiff{}}
\ottusedrule{\ottdruleGradualEnvSubVarSame{}}
\end{ottdefnblock}}
\newcommand{\ottdruleDomainPi}[1]{\ottdrule[#1]{%
}{
 \ottkw{dom}\  \ottsym{(}  \mathit{x}  \ottsym{:}  \gradual{U}_{{\mathrm{1}}}  \ottsym{)}  \rightarrow  \gradual{U}_{{\mathrm{2}}} =  \gradual{U}_{{\mathrm{1}}} }{%
{\ottdrulename{DomainPi}}{}%
}}
\newcommand{\ottdruleDomainDyn}[1]{\ottdrule[#1]{%
}{
 \ottkw{dom}\  {\qm } =  {\qm } }{%
{\ottdrulename{DomainDyn}}{}%
}}
\newcommand{\ottdefnDomain}[1]{\begin{ottdefnblock}[#1]{$ \ottkw{dom}\  \gradual{U}_{{\mathrm{1}}} =  \gradual{U}_{{\mathrm{2}}} $}{\ottcom{Gradual Domain}}
\ottusedrule{\ottdruleDomainPi{}}
\ottusedrule{\ottdruleDomainDyn{}}
\end{ottdefnblock}}
\newcommand{\ottdruleGHsubType}[1]{\ottdrule[#1]{%
}{
 {[  \gradual{u}  / { \mathit{x} } ]}^{ \gradual{U}  }   \TypeType_{ \ottnt{i} }   =   \TypeType_{ \ottnt{i} }  }{%
{\ottdrulename{GHsubType}}{}%
}}
\newcommand{\ottdruleGHsubDyn}[1]{\ottdrule[#1]{%
}{
 {[  \gradual{u}  / { \mathit{x} } ]}^{ \gradual{U}  }  {\qm }  =  {\qm } }{%
{\ottdrulename{GHsubDyn}}{}%
}}
\newcommand{\ottdruleGHsubPi}[1]{\ottdrule[#1]{%
\ottpremise{ \mathit{x}   \neq   \mathit{y} }%
\ottpremise{ {[  \gradual{u}  / { \mathit{x} } ]}^{ \gradual{U}  }  \gradual{U}_{{\mathrm{1}}}  =  \gradual{U}'_{{\mathrm{1}}} }%
\ottpremise{ {[  \gradual{u}  / { \mathit{x} } ]}^{ \gradual{U}  }  \gradual{U}_{{\mathrm{2}}}  =  \gradual{U}'_{{\mathrm{2}}} }%
}{
 {[  \gradual{u}  / { \mathit{x} } ]}^{ \gradual{U}  }  \ottsym{(}  \mathit{y}  \ottsym{:}  \gradual{U}_{{\mathrm{1}}}  \ottsym{)}  \rightarrow  \gradual{U}_{{\mathrm{2}}}  =  \ottsym{(}  \mathit{y}  \ottsym{:}  \gradual{U}'_{{\mathrm{1}}}  \ottsym{)}  \rightarrow  \gradual{U}'_{{\mathrm{2}}} }{%
{\ottdrulename{GHsubPi}}{}%
}}
\newcommand{\ottdruleGHsubPiRdxAlpha}[1]{}
\newcommand{\ottdruleGHsubLam}[1]{\ottdrule[#1]{%
\ottpremise{ \mathit{x}   \neq   \mathit{y} }%
\ottpremise{ {[  \gradual{u}  / { \mathit{x} } ]}^{ \gradual{U}  }  \gradual{u}_{{\mathrm{2}}}  =  \gradual{u}_{{\mathrm{3}}} }%
}{
 {[  \gradual{u}  / { \mathit{x} } ]}^{ \gradual{U}  }  \ottsym{(}  \lambda  \mathit{y}  \ldotp  \gradual{u}_{{\mathrm{2}}}  \ottsym{)}  =  \ottsym{(}  \lambda  \mathit{y}  \ldotp  \gradual{u}_{{\mathrm{3}}}  \ottsym{)} }{%
{\ottdrulename{GHsubLam}}{}%
}}
\newcommand{\ottdruleGHsubLamRdxAlpha}[1]{}
\newcommand{\ottdruleGHsubDiffNil}[1]{\ottdrule[#1]{%
\ottpremise{ \mathit{x}   \neq   \mathit{y} }%
}{
 {[  \gradual{u}  / { \mathit{x} } ]}^{ \gradual{U}  }   \mathit{y}   =   \mathit{y}  }{%
{\ottdrulename{GHsubDiffNil}}{}%
}}
\newcommand{\ottdruleGHsubDiffCons}[1]{\ottdrule[#1]{%
\ottpremise{ \mathit{x}   \neq   \mathit{y} }%
\ottpremise{ {[  \gradual{u}  / { \mathit{x} } ]}^{ \gradual{U}  }   \mathit{y} \gradual{e}   =   \mathit{y} \gradual{e}'  }%
\ottpremise{ {[  \gradual{u}  / { \mathit{x} } ]}^{ \gradual{U}  }  \gradual{u}_{{\mathrm{2}}}  =  \gradual{u}_{{\mathrm{3}}} }%
}{
 {[  \gradual{u}  / { \mathit{x} } ]}^{ \gradual{U}  }   \mathit{y}  \gradual{e} \  \gradual{u}_{{\mathrm{2}}}    =   \mathit{y}  \gradual{e}' \  \gradual{u}_{{\mathrm{3}}}   }{%
{\ottdrulename{GHsubDiffCons}}{}%
}}
\newcommand{\ottdruleGHsubSpine}[1]{\ottdrule[#1]{%
\ottpremise{ {[  \gradual{u}  / { \mathit{x} } ]}^{ \gradual{U}  }  \mathit{x} \gradual{e}  = { \gradual{u}_{{\mathrm{2}}} } : { \gradual{U}_{{\mathrm{2}}} } }%
}{
 {[  \gradual{u}  / { \mathit{x} } ]}^{ \gradual{U}  }   \mathit{x} \gradual{e}   =  \gradual{u}_{{\mathrm{2}}} }{%
{\ottdrulename{GHsubSpine}}{}%
}}
\newcommand{\ottdefnGHsub}[1]{\begin{ottdefnblock}[#1]{$ {[  \gradual{u}_{{\mathrm{1}}}  / { \mathit{x} } ]}^{ \gradual{U}  }  \gradual{u}_{{\mathrm{2}}}  =  \gradual{u}_{{\mathrm{3}}} $}{\ottcom{Approximate Hereditary Substitution}}
\ottusedrule{\ottdruleGHsubType{}}
\ottusedrule{\ottdruleGHsubDyn{}}
\ottusedrule{\ottdruleGHsubPi{}}
\ottusedrule{\ottdruleGHsubPiRdxAlpha{}}
\ottusedrule{\ottdruleGHsubLam{}}
\ottusedrule{\ottdruleGHsubLamRdxAlpha{}}
\ottusedrule{\ottdruleGHsubDiffNil{}}
\ottusedrule{\ottdruleGHsubDiffCons{}}
\ottusedrule{\ottdruleGHsubSpine{}}
\end{ottdefnblock}}
\newcommand{\ottdruleGHsubRHead}[1]{\ottdrule[#1]{%
}{
 {[  \gradual{u}  / { \mathit{x} } ]}^{ \gradual{U}  }  \mathit{x}  \emptyspine   = { \gradual{u} } : { \gradual{U} } }{%
{\ottdrulename{GHsubRHead}}{}%
}}
\newcommand{\ottdruleGHsubRDynSpine}[1]{\ottdrule[#1]{%
\ottpremise{  {[  \gradual{u}  / { \mathit{x} } ]}^{ \gradual{U}  }  \mathit{x} \gradual{e}  = { {\qm } } : { \ottsym{(}  \mathit{y}  \ottsym{:}  \gradual{U}_{{\mathrm{1}}}  \ottsym{)}  \rightarrow  \gradual{U}_{{\mathrm{2}}} }   \qquad   {[  \gradual{u}_{{\mathrm{2}}}  / { \mathit{y} } ]}^{ \gradual{U}_{{\mathrm{1}}}  }  \gradual{U}_{{\mathrm{2}}}  =  \gradual{U}_{{\mathrm{3}}}  }%
}{
 {[  \gradual{u}  / { \mathit{x} } ]}^{ \gradual{U}  }  \mathit{x}  \gradual{e} \  \gradual{u}_{{\mathrm{2}}}   = { {\qm } } : { \gradual{U}_{{\mathrm{3}}} } }{%
{\ottdrulename{GHsubRDynSpine}}{}%
}}
\newcommand{\ottdruleGHsubRLamSpine}[1]{\ottdrule[#1]{%
\ottpremise{   {[  \gradual{u}  / { \mathit{x} } ]}^{ \gradual{U}  }  \mathit{x} \gradual{e}  = { \ottsym{(}  \lambda  \mathit{y}  \ldotp  \gradual{u}_{{\mathrm{2}}}  \ottsym{)} } : { \ottsym{(}  \mathit{y}  \ottsym{:}  \gradual{U}_{{\mathrm{1}}}  \ottsym{)}  \rightarrow  \gradual{U}_{{\mathrm{2}}} }   \qquad   {[  \gradual{u}  / { \mathit{x} } ]}^{ \gradual{U}  }  \gradual{u}_{{\mathrm{1}}}  =  \gradual{u}_{{\mathrm{3}}}    \qquad   \gradual{U}_{{\mathrm{1}}}  \prec  \gradual{U}  }%
\ottpremise{   {[  \gradual{u}_{{\mathrm{3}}}  / { \mathit{y} } ]}^{ \gradual{U}_{{\mathrm{1}}}  }  \gradual{u}_{{\mathrm{2}}}  =  \gradual{u}_{{\mathrm{4}}}   \qquad   {[  \gradual{u}_{{\mathrm{3}}}  / { \mathit{y} } ]}^{ \gradual{U}_{{\mathrm{1}}}  }  \gradual{U}_{{\mathrm{2}}}  =  \gradual{U}_{{\mathrm{3}}}    \qquad   \gradual{u}_{{\mathrm{4}}} \leadsto_\eta  \gradual{u}_{{\mathrm{5}}} :  \gradual{U}_{{\mathrm{3}}}  }%
}{
 {[  \gradual{u}  / { \mathit{x} } ]}^{ \gradual{U}  }  \mathit{x}  \gradual{e} \  \gradual{u}_{{\mathrm{1}}}   = { \gradual{u}_{{\mathrm{5}}} } : { \gradual{U}_{{\mathrm{3}}} } }{%
{\ottdrulename{GHsubRLamSpine}}{}%
}}
\newcommand{\ottdruleGHsubRLamSpineOrd}[1]{\ottdrule[#1]{%
\ottpremise{  {[  \gradual{u}  / { \mathit{x} } ]}^{ \gradual{U}  }  \mathit{x} \gradual{e}  = { \ottsym{(}  \lambda  \mathit{y}  \ldotp  \gradual{u}_{{\mathrm{2}}}  \ottsym{)} } : { \ottsym{(}  \mathit{y}  \ottsym{:}  \gradual{U}_{{\mathrm{1}}}  \ottsym{)}  \rightarrow  \gradual{U}_{{\mathrm{2}}} }   \qquad   \gradual{U}_{{\mathrm{1}}}  \not\prec  \gradual{U}  }%
}{
 {[  \gradual{u}  / { \mathit{x} } ]}^{ \gradual{U}  }  \mathit{x}  \gradual{e} \  \gradual{u}_{{\mathrm{1}}}   = { {\qm } } : { {\qm } } }{%
{\ottdrulename{GHsubRLamSpineOrd}}{}%
}}
\newcommand{\ottdruleGHsubRDynType}[1]{\ottdrule[#1]{%
\ottpremise{ {[  \gradual{u}_{{\mathrm{1}}}  / { \mathit{x} } ]}^{ \gradual{U}  }  \mathit{x} \gradual{e}  = { \gradual{u}_{{\mathrm{2}}} } : { {\qm } } }%
}{
 {[  \gradual{u}_{{\mathrm{1}}}  / { \mathit{x} } ]}^{ \gradual{U}  }  \mathit{x}  \gradual{e} \  \gradual{u}_{{\mathrm{2}}}   = { {\qm } } : { {\qm } } }{%
{\ottdrulename{GHsubRDynType}}{}%
}}
\newcommand{\ottdruleGHsubRLamSpineRdxAlpha}[1]{}
\newcommand{\ottdefnGHsubR}[1]{\begin{ottdefnblock}[#1]{$ {[  \gradual{u}  / { \mathit{x} } ]}^{ \gradual{U}  }  \mathit{x} \gradual{e}  = { \gradual{u}_{{\mathrm{2}}} } : { \gradual{U}_{{\mathrm{2}}} } $}{\ottcom{Approximate Atomic Hereditary Substitution}}
\ottusedrule{\ottdruleGHsubRHead{}}
\ottusedrule{\ottdruleGHsubRDynSpine{}}
\ottusedrule{\ottdruleGHsubRLamSpine{}}
\ottusedrule{\ottdruleGHsubRLamSpineOrd{}}
\ottusedrule{\ottdruleGHsubRDynType{}}
\ottusedrule{\ottdruleGHsubRLamSpineRdxAlpha{}}
\end{ottdefnblock}}
\newcommand{\ottdruleCodSubPi}[1]{\ottdrule[#1]{%
\ottpremise{ {[  \gradual{u}  / { \mathit{x} } ]}^{ \gradual{U}_{{\mathrm{1}}}  }  \gradual{U}_{{\mathrm{2}}}  =  \gradual{U}'_{{\mathrm{2}}} }%
}{
 [  \gradual{u}  / {\_} ]  \ottkw{cod}\  \ottsym{(}  \mathit{x}  \ottsym{:}  \gradual{U}_{{\mathrm{1}}}  \ottsym{)}  \rightarrow  \gradual{U}_{{\mathrm{2}}}  =  \gradual{U}'_{{\mathrm{2}}} }{%
{\ottdrulename{CodSubPi}}{}%
}}
\newcommand{\ottdruleCodSubDyn}[1]{\ottdrule[#1]{%
}{
 [  \gradual{u}  / {\_} ]  \ottkw{cod}\  {\qm }  =  {\qm } }{%
{\ottdrulename{CodSubDyn}}{}%
}}
\newcommand{\ottdefnCodSub}[1]{\begin{ottdefnblock}[#1]{$ [  \gradual{u}  / {\_} ]  \ottkw{cod}\  \gradual{U}  =  \gradual{U}' $}{\ottcom{Gradual Codomain Substitution}}
\ottusedrule{\ottdruleCodSubPi{}}
\ottusedrule{\ottdruleCodSubDyn{}}
\end{ottdefnblock}}
\newcommand{\ottdruleBodySubPi}[1]{\ottdrule[#1]{%
\ottpremise{ {[  \gradual{u}  / { \mathit{x} } ]}^{ \gradual{U}  }  \gradual{u}_{{\mathrm{2}}}  =  \gradual{u}'_{{\mathrm{2}}} }%
}{
 {[  \gradual{u}  / {\_} ]}^{ \gradual{U}  } \ottkw{body}\  \ottsym{(}  \lambda  \mathit{x}  \ldotp  \gradual{u}_{{\mathrm{2}}}  \ottsym{)} =  \gradual{u}'_{{\mathrm{2}}} }{%
{\ottdrulename{BodySubPi}}{}%
}}
\newcommand{\ottdruleBodySubDyn}[1]{\ottdrule[#1]{%
}{
 {[  \gradual{u}  / {\_} ]}^{ \gradual{U}  } \ottkw{body}\  {\qm } =  {\qm } }{%
{\ottdrulename{BodySubDyn}}{}%
}}
\newcommand{\ottdefnBodySub}[1]{\begin{ottdefnblock}[#1]{$ {[  \gradual{u}  / {\_} ]}^{ \gradual{U}  } \ottkw{body}\  \gradual{u}_{{\mathrm{2}}} =  \gradual{u}_{{\mathrm{3}}} $}{\ottcom{Gradual Function Body Substitution}}
\ottusedrule{\ottdruleBodySubPi{}}
\ottusedrule{\ottdruleBodySubDyn{}}
\end{ottdefnblock}}
\newcommand{\ottdruleGradualTypeDynTy}[1]{\ottdrule[#1]{%
\ottpremise{\gradual{Gamma}  \vdash  \gradual{rr}  \Rightarrow  {\qm }}%
}{
 \gradual{Gamma}  \vdash  \gradual{rr}  : \TypeType }{%
{\ottdrulename{GradualTypeDynTy}}{}%
}}
\newcommand{\ottdruleGradualTypeType}[1]{\ottdrule[#1]{%
\ottpremise{\gradual{Gamma}  \vdash  \gradual{rr}  \Rightarrow   \TypeType_{ \ottnt{i} } }%
}{
 \gradual{Gamma}  \vdash  \gradual{rr}  : \TypeType }{%
{\ottdrulename{GradualTypeType}}{}%
}}
\newcommand{\ottdruleGradualTypePi}[1]{\ottdrule[#1]{%
\ottpremise{ \gradual{Gamma}  \vdash  \gradual{U}_{{\mathrm{1}}}  : \TypeType }%
\ottpremise{\vdash  \ottsym{(}  \mathit{x}  \ottsym{:}  \gradual{U}_{{\mathrm{1}}}  \ottsym{)}  \gradual{Gamma}}%
\ottpremise{ \ottsym{(}  \mathit{x}  \ottsym{:}  \gradual{U}_{{\mathrm{1}}}  \ottsym{)}  \gradual{Gamma}  \vdash  \gradual{U}_{{\mathrm{2}}}  : \TypeType }%
}{
 \gradual{Gamma}  \vdash  \ottsym{(}  \mathit{x}  \ottsym{:}  \gradual{U}_{{\mathrm{1}}}  \ottsym{)}  \rightarrow  \gradual{U}_{{\mathrm{2}}}  : \TypeType }{%
{\ottdrulename{GradualTypePi}}{}%
}}
\newcommand{\ottdruleGradualTypeDynVal}[1]{\ottdrule[#1]{%
}{
 \gradual{Gamma}  \vdash  {\qm }  : \TypeType }{%
{\ottdrulename{GradualTypeDynVal}}{}%
}}
\newcommand{\ottdefnGradualType}[1]{\begin{ottdefnblock}[#1]{$ \gradual{Gamma}  \vdash  \gradual{U}  : \TypeType $}{\ottcom{Well-Formed Gradual Types with Unkown Level}}
\ottusedrule{\ottdruleGradualTypeDynTy{}}
\ottusedrule{\ottdruleGradualTypeType{}}
\ottusedrule{\ottdruleGradualTypePi{}}
\ottusedrule{\ottdruleGradualTypeDynVal{}}
\end{ottdefnblock}}
\newcommand{\ottdruleWFEmpty}[1]{\ottdrule[#1]{%
}{
 \vdash  \cdot }{%
{\ottdrulename{WFEmpty}}{}%
}}
\newcommand{\ottdruleWFExt}[1]{\ottdrule[#1]{%
\ottpremise{  \vdash  \gradual{Gamma}   \qquad   \gradual{Gamma}  \vdash  \gradual{U}  : \TypeType  }%
\ottpremise{\mathit{x} \, \not\in \, \gradual{Gamma}}%
}{
 \vdash  \ottsym{(}  \mathit{x}  \ottsym{:}  \gradual{U}  \ottsym{)}  \gradual{Gamma} }{%
{\ottdrulename{WFExt}}{}%
}}
\newcommand{\ottdefnWF}[1]{\begin{ottdefnblock}[#1]{$ \vdash  \gradual{Gamma} $}{\ottcom{Well Formed Gradual Environments}}
\ottusedrule{\ottdruleWFEmpty{}}
\ottusedrule{\ottdruleWFExt{}}
\end{ottdefnblock}}
\newcommand{\ottdruleGCSynthType}[1]{\ottdrule[#1]{%
\ottpremise{\ottnt{i}  \ottsym{>}  \ottsym{0}}%
}{
\gradual{Gamma}  \vdash   \TypeType_{ \ottnt{i} }   \Rightarrow   \TypeType_{ \ottnt{i}  + 1} }{%
{\ottdrulename{GCSynthType}}{}%
}}
\newcommand{\ottdruleGCSynthVar}[1]{\ottdrule[#1]{%
\ottpremise{\vdash  \gradual{Gamma}}%
\ottpremise{\ottsym{(}  \mathit{x}  \ottsym{:}  \gradual{U}  \ottsym{)} \, \in \, \gradual{Gamma}}%
}{
\gradual{Gamma}  \vdash   \mathit{x}   \Rightarrow  \gradual{U}}{%
{\ottdrulename{GCSynthVar}}{}%
}}
\newcommand{\ottdruleGCSynthApp}[1]{\ottdrule[#1]{%
\ottpremise{\gradual{Gamma}  \vdash   \mathit{x} \gradual{e}   \Rightarrow  \gradual{U}}%
\ottpremise{ \ottkw{dom}\  \gradual{U} =  \gradual{U}_{{\mathrm{2}}} }%
\ottpremise{\gradual{Gamma}  \vdash  \gradual{u}  \Leftarrow  \gradual{U}_{{\mathrm{2}}}}%
\ottpremise{ [  \gradual{u}  / {\_} ]  \ottkw{cod}\  \gradual{U}  =  \gradual{U}_{{\mathrm{3}}} }%
}{
\gradual{Gamma}  \vdash   \mathit{x}  \gradual{e} \  \gradual{u}    \Rightarrow  \gradual{U}_{{\mathrm{3}}}}{%
{\ottdrulename{GCSynthApp}}{}%
}}
\newcommand{\ottdefnGCSynth}[1]{\begin{ottdefnblock}[#1]{$\gradual{Gamma}  \vdash  \gradual{rr}  \Rightarrow  \gradual{U}$}{\ottcom{Well-formed Gradual Atomic Forms}}
\ottusedrule{\ottdruleGCSynthType{}}
\ottusedrule{\ottdruleGCSynthVar{}}
\ottusedrule{\ottdruleGCSynthApp{}}
\end{ottdefnblock}}
\newcommand{\ottdruleGCCheckSynth}[1]{\ottdrule[#1]{%
\ottpremise{\gradual{Gamma}  \vdash  \gradual{rr}  \Rightarrow  \gradual{U}}%
\ottpremise{ \gradual{U}  \cong  \gradual{U}' }%
}{
\gradual{Gamma}  \vdash  \gradual{rr}  \Leftarrow  \gradual{U}'}{%
{\ottdrulename{GCCheckSynth}}{}%
}}
\newcommand{\ottdruleGCCheckLevel}[1]{\ottdrule[#1]{%
\ottpremise{\gradual{Gamma}  \vdash  \gradual{RR}  \Rightarrow   \TypeType_{ \ottnt{i} } }%
\ottpremise{ 0 <  \ottnt{i}  <  \ottnt{j} }%
}{
\gradual{Gamma}  \vdash  \gradual{RR}  \Leftarrow   \TypeType_{ \ottnt{j} } }{%
{\ottdrulename{GCCheckLevel}}{}%
}}
\newcommand{\ottdruleGCCheckLamPi}[1]{\ottdrule[#1]{%
\ottpremise{\vdash  \ottsym{(}  \mathit{x}  \ottsym{:}  \gradual{U}_{{\mathrm{1}}}  \ottsym{)}  \gradual{Gamma}}%
\ottpremise{\ottsym{(}  \mathit{x}  \ottsym{:}  \gradual{U}_{{\mathrm{1}}}  \ottsym{)}  \gradual{Gamma}  \vdash  \gradual{u}  \Leftarrow  \gradual{U}_{{\mathrm{2}}}}%
}{
\gradual{Gamma}  \vdash  \ottsym{(}  \lambda  \mathit{x}  \ldotp  \gradual{u}  \ottsym{)}  \Leftarrow  \ottsym{(}  \mathit{x}  \ottsym{:}  \gradual{U}_{{\mathrm{1}}}  \ottsym{)}  \rightarrow  \gradual{U}_{{\mathrm{2}}}}{%
{\ottdrulename{GCCheckLamPi}}{}%
}}
\newcommand{\ottdruleGCCheckLamPiRdxAlpha}[1]{}
\newcommand{\ottdruleGCCheckLamDyn}[1]{\ottdrule[#1]{%
\ottpremise{\vdash  \ottsym{(}  \mathit{x}  \ottsym{:}  {\qm }  \ottsym{)}  \gradual{Gamma}}%
\ottpremise{\ottsym{(}  \mathit{x}  \ottsym{:}  {\qm }  \ottsym{)}  \gradual{Gamma}  \vdash  \gradual{u}  \Leftarrow  {\qm }}%
}{
\gradual{Gamma}  \vdash  \ottsym{(}  \lambda  \mathit{x}  \ldotp  \gradual{u}  \ottsym{)}  \Leftarrow  {\qm }}{%
{\ottdrulename{GCCheckLamDyn}}{}%
}}
\newcommand{\ottdruleGCCheckPi}[1]{\ottdrule[#1]{%
\ottpremise{ \gradual{U}_{{\mathrm{3}}}  \cong \TypeType }%
\ottpremise{\gradual{Gamma}  \vdash  \gradual{U}_{{\mathrm{1}}}  \Leftarrow  \gradual{U}_{{\mathrm{3}}}}%
\ottpremise{\vdash  \ottsym{(}  \mathit{x}  \ottsym{:}  \gradual{U}_{{\mathrm{1}}}  \ottsym{)}  \gradual{Gamma}}%
\ottpremise{\ottsym{(}  \mathit{x}  \ottsym{:}  \gradual{U}_{{\mathrm{1}}}  \ottsym{)}  \gradual{Gamma}  \vdash  \gradual{U}_{{\mathrm{2}}}  \Leftarrow  \gradual{U}_{{\mathrm{3}}}}%
}{
\gradual{Gamma}  \vdash  \ottsym{(}  \mathit{x}  \ottsym{:}  \gradual{U}_{{\mathrm{1}}}  \ottsym{)}  \rightarrow  \gradual{U}_{{\mathrm{2}}}  \Leftarrow  \gradual{U}_{{\mathrm{3}}}}{%
{\ottdrulename{GCCheckPi}}{}%
}}
\newcommand{\ottdruleGCCheckDyn}[1]{\ottdrule[#1]{%
\ottpremise{ \gradual{Gamma}  \vdash  \gradual{U}  : \TypeType }%
}{
\gradual{Gamma}  \vdash  {\qm }  \Leftarrow  \gradual{U}}{%
{\ottdrulename{GCCheckDyn}}{}%
}}
\newcommand{\ottdefnGCCheck}[1]{\begin{ottdefnblock}[#1]{$\gradual{Gamma}  \vdash  \gradual{u}  \Leftarrow  \gradual{U}$}{\ottcom{Well-formed Gradual Canonical Forms}}
\ottusedrule{\ottdruleGCCheckSynth{}}
\ottusedrule{\ottdruleGCCheckLevel{}}
\ottusedrule{\ottdruleGCCheckLamPi{}}
\ottusedrule{\ottdruleGCCheckLamPiRdxAlpha{}}
\ottusedrule{\ottdruleGCCheckLamDyn{}}
\ottusedrule{\ottdruleGCCheckPi{}}
\ottusedrule{\ottdruleGCCheckDyn{}}
\end{ottdefnblock}}
\newcommand{\ottdruleConsistentTypeType}[1]{\ottdrule[#1]{%
}{
  \TypeType_{ \ottnt{i} }   \cong \TypeType }{%
{\ottdrulename{ConsistentTypeType}}{}%
}}
\newcommand{\ottdruleConsistentTypeDyn}[1]{\ottdrule[#1]{%
}{
 {\qm }  \cong \TypeType }{%
{\ottdrulename{ConsistentTypeDyn}}{}%
}}
\newcommand{\ottdefnConsistentType}[1]{\begin{ottdefnblock}[#1]{$ \gradual{U}  \cong \TypeType $}{\ottcom{Type Consistency with Unknown Level}}
\ottusedrule{\ottdruleConsistentTypeType{}}
\ottusedrule{\ottdruleConsistentTypeDyn{}}
\end{ottdefnblock}}
\newcommand{\ottdruleConsistentEq}[1]{\ottdrule[#1]{%
}{
 \gradual{u}  \cong  \gradual{u} }{%
{\ottdrulename{ConsistentEq}}{}%
}}
\newcommand{\ottdruleConsistentPi}[1]{\ottdrule[#1]{%
\ottpremise{ \gradual{U}_{{\mathrm{1}}}  \cong  \gradual{U}'_{{\mathrm{1}}} }%
\ottpremise{ \gradual{U}_{{\mathrm{2}}}  \cong  \gradual{U}'_{{\mathrm{2}}} }%
}{
 \ottsym{(}  \mathit{x}  \ottsym{:}  \gradual{U}_{{\mathrm{1}}}  \ottsym{)}  \rightarrow  \gradual{U}_{{\mathrm{2}}}  \cong  \ottsym{(}  \mathit{x}  \ottsym{:}  \gradual{U}'_{{\mathrm{1}}}  \ottsym{)}  \rightarrow  \gradual{U}'_{{\mathrm{2}}} }{%
{\ottdrulename{ConsistentPi}}{}%
}}
\newcommand{\ottdruleConsistentPiRdxAlpha}[1]{}
\newcommand{\ottdruleConsistentLam}[1]{\ottdrule[#1]{%
\ottpremise{ \gradual{u}  \cong  \gradual{u}' }%
}{
 \ottsym{(}  \lambda  \mathit{x}  \ldotp  \gradual{u}  \ottsym{)}  \cong  \ottsym{(}  \lambda  \mathit{x}  \ldotp  \gradual{u}'  \ottsym{)} }{%
{\ottdrulename{ConsistentLam}}{}%
}}
\newcommand{\ottdruleConsistentLamRdxAlpha}[1]{}
\newcommand{\ottdruleConsistentApp}[1]{\ottdrule[#1]{%
\ottpremise{  \mathit{x} \gradual{e}   \cong   \mathit{x} \gradual{e}'  }%
\ottpremise{ \gradual{u}  \cong  \gradual{u}' }%
}{
  \mathit{x}  \gradual{e} \  \gradual{u}    \cong   \mathit{x}  \gradual{e}' \  \gradual{u}'   }{%
{\ottdrulename{ConsistentApp}}{}%
}}
\newcommand{\ottdruleConsistentDynL}[1]{\ottdrule[#1]{%
}{
 {\qm }  \cong  \gradual{u} }{%
{\ottdrulename{ConsistentDynL}}{}%
}}
\newcommand{\ottdruleConsistentDynR}[1]{\ottdrule[#1]{%
}{
 \gradual{u}  \cong  {\qm } }{%
{\ottdrulename{ConsistentDynR}}{}%
}}
\newcommand{\ottdefnConsistent}[1]{\begin{ottdefnblock}[#1]{$ \gradual{U}  \cong  \gradual{U}' $}{\ottcom{Consistency of Gradual Canonical Terms}}
\ottusedrule{\ottdruleConsistentEq{}}
\ottusedrule{\ottdruleConsistentPi{}}
\ottusedrule{\ottdruleConsistentPiRdxAlpha{}}
\ottusedrule{\ottdruleConsistentLam{}}
\ottusedrule{\ottdruleConsistentLamRdxAlpha{}}
\ottusedrule{\ottdruleConsistentApp{}}
\ottusedrule{\ottdruleConsistentDynL{}}
\ottusedrule{\ottdruleConsistentDynR{}}
\end{ottdefnblock}}
\newcommand{\ottdruleMeetPi}[1]{\ottdrule[#1]{%
\ottpremise{ \gradual{U}_{{\mathrm{1}}} \sqcap  \gradual{U}'_{{\mathrm{1}}}  =  \gradual{U}''_{{\mathrm{1}}} }%
\ottpremise{ \gradual{U}_{{\mathrm{2}}} \sqcap  \gradual{U}'_{{\mathrm{2}}}  =  \gradual{U}''_{{\mathrm{2}}} }%
}{
 \ottsym{(}  \mathit{x}  \ottsym{:}  \gradual{U}_{{\mathrm{1}}}  \ottsym{)}  \rightarrow  \gradual{U}_{{\mathrm{2}}} \sqcap  \ottsym{(}  \mathit{x}  \ottsym{:}  \gradual{U}'_{{\mathrm{1}}}  \ottsym{)}  \rightarrow  \gradual{U}'_{{\mathrm{2}}}  =  \ottsym{(}  \mathit{x}  \ottsym{:}  \gradual{U}''_{{\mathrm{1}}}  \ottsym{)}  \rightarrow  \gradual{U}''_{{\mathrm{2}}} }{%
{\ottdrulename{MeetPi}}{}%
}}
\newcommand{\ottdruleMeetPiRdxAlpha}[1]{}
\newcommand{\ottdruleMeetLam}[1]{\ottdrule[#1]{%
\ottpremise{ \gradual{u} \sqcap  \gradual{u}'  =  \gradual{u}'' }%
}{
 \lambda  \mathit{x}  \ldotp  \gradual{u} \sqcap  \lambda  \mathit{x}  \ldotp  \gradual{u}'  =  \lambda  \mathit{x}  \ldotp  \gradual{u}'' }{%
{\ottdrulename{MeetLam}}{}%
}}
\newcommand{\ottdruleMeetLamRdxAlpha}[1]{}
\newcommand{\ottdruleMeetApp}[1]{\ottdrule[#1]{%
\ottpremise{ \gradual{u} \sqcap  \gradual{u}'  =  \gradual{u}'' }%
\ottpremise{  \mathit{x} \gradual{e}  \sqcap   \mathit{x} \gradual{e}'   =   \mathit{x} \gradual{e}''  }%
}{
  \mathit{x}  \gradual{e} \  \gradual{u}   \sqcap   \mathit{x}  \gradual{e}' \  \gradual{u}'    =   \mathit{x}  \gradual{e}'' \  \gradual{u}''   }{%
{\ottdrulename{MeetApp}}{}%
}}
\newcommand{\ottdruleMeetDynL}[1]{\ottdrule[#1]{%
}{
 {\qm } \sqcap  \gradual{U}  =  \gradual{U} }{%
{\ottdrulename{MeetDynL}}{}%
}}
\newcommand{\ottdruleMeetDynR}[1]{\ottdrule[#1]{%
\ottpremise{\gradual{U}  \neq  {\qm }}%
}{
 \gradual{U} \sqcap  {\qm }  =  \gradual{U} }{%
{\ottdrulename{MeetDynR}}{}%
}}
\newcommand{\ottdruleMeetRefl}[1]{\ottdrule[#1]{%
\ottpremise{\gradual{u}  \neq  {\qm }}%
}{
 \gradual{u} \sqcap  \gradual{u}  =  \gradual{u} }{%
{\ottdrulename{MeetRefl}}{}%
}}
\newcommand{\ottdefnMeet}[1]{\begin{ottdefnblock}[#1]{$ \gradual{U}_{{\mathrm{1}}} \sqcap  \gradual{U}_{{\mathrm{2}}}  =  \gradual{U}_{{\mathrm{3}}} $}{\ottcom{Precision Meet of Gradual Canonical Forms}}
\ottusedrule{\ottdruleMeetPi{}}
\ottusedrule{\ottdruleMeetPiRdxAlpha{}}
\ottusedrule{\ottdruleMeetLam{}}
\ottusedrule{\ottdruleMeetLamRdxAlpha{}}
\ottusedrule{\ottdruleMeetApp{}}
\ottusedrule{\ottdruleMeetDynL{}}
\ottusedrule{\ottdruleMeetDynR{}}
\ottusedrule{\ottdruleMeetRefl{}}
\end{ottdefnblock}}
\newcommand{\ottdruleMorePrecisePi}[1]{\ottdrule[#1]{%
\ottpremise{ \gradual{U}_{{\mathrm{1}}} \sqsubseteq  \gradual{U}'_{{\mathrm{1}}} }%
\ottpremise{ \gradual{U}_{{\mathrm{2}}} \sqsubseteq  \gradual{U}'_{{\mathrm{2}}} }%
}{
 \ottsym{(}  \mathit{x}  \ottsym{:}  \gradual{U}_{{\mathrm{1}}}  \ottsym{)}  \rightarrow  \gradual{U}_{{\mathrm{2}}} \sqsubseteq  \ottsym{(}  \mathit{x}  \ottsym{:}  \gradual{U}'_{{\mathrm{1}}}  \ottsym{)}  \rightarrow  \gradual{U}'_{{\mathrm{2}}} }{%
{\ottdrulename{MorePrecisePi}}{}%
}}
\newcommand{\ottdruleMorePrecisePiRdxAlpha}[1]{}
\newcommand{\ottdruleMorePreciseLam}[1]{\ottdrule[#1]{%
\ottpremise{ \gradual{u} \sqsubseteq  \gradual{u}' }%
}{
 \lambda  \mathit{x}  \ldotp  \gradual{u} \sqsubseteq  \lambda  \mathit{x}  \ldotp  \gradual{u}' }{%
{\ottdrulename{MorePreciseLam}}{}%
}}
\newcommand{\ottdruleMorePreciseLamRdxAlpha}[1]{}
\newcommand{\ottdruleMorePreciseApp}[1]{\ottdrule[#1]{%
\ottpremise{ \gradual{u} \sqsubseteq  \gradual{u}' }%
\ottpremise{  \mathit{x} \gradual{e}  \sqsubseteq   \mathit{x} \gradual{e}'  }%
}{
  \mathit{x}  \gradual{e} \  \gradual{u}   \sqsubseteq   \mathit{x}  \gradual{e}' \  \gradual{u}'   }{%
{\ottdrulename{MorePreciseApp}}{}%
}}
\newcommand{\ottdruleMorePreciseDyn}[1]{\ottdrule[#1]{%
}{
 \gradual{U} \sqsubseteq  {\qm } }{%
{\ottdrulename{MorePreciseDyn}}{}%
}}
\newcommand{\ottdruleMorePreciseRefl}[1]{\ottdrule[#1]{%
\ottpremise{\gradual{U}  \neq  {\qm }}%
}{
 \gradual{U} \sqsubseteq  \gradual{U} }{%
{\ottdrulename{MorePreciseRefl}}{}%
}}
\newcommand{\ottdefnMorePrecise}[1]{\begin{ottdefnblock}[#1]{$ \gradual{U} \sqsubseteq  \gradual{U}' $}{\ottcom{Precision of Canonical Forms}}
\ottusedrule{\ottdruleMorePrecisePi{}}
\ottusedrule{\ottdruleMorePrecisePiRdxAlpha{}}
\ottusedrule{\ottdruleMorePreciseLam{}}
\ottusedrule{\ottdruleMorePreciseLamRdxAlpha{}}
\ottusedrule{\ottdruleMorePreciseApp{}}
\ottusedrule{\ottdruleMorePreciseDyn{}}
\ottusedrule{\ottdruleMorePreciseRefl{}}
\end{ottdefnblock}}
\newcommand{\ottdruleEvConsistentDef}[1]{\ottdrule[#1]{%
\ottpremise{ \gradual{U}_{{\mathrm{1}}} \sqcap  \gradual{U}_{{\mathrm{2}}}  =  \gradual{U}_{{\mathrm{3}}} }%
\ottpremise{ \gradual{U} \sqsubseteq  \gradual{U}_{{\mathrm{3}}} }%
}{
 \langle  \gradual{U}  \rangle \vdash  \gradual{U}_{{\mathrm{1}}}  \cong  \gradual{U}_{{\mathrm{2}}} }{%
{\ottdrulename{EvConsistentDef}}{}%
}}
\newcommand{\ottdefnEvConsistent}[1]{\begin{ottdefnblock}[#1]{$ \myepsilon \vdash  \gradual{U}  \cong  \gradual{U}' $}{\ottcom{Consistency Supported by Evidence}}
\ottusedrule{\ottdruleEvConsistentDef{}}
\end{ottdefnblock}}
\newcommand{\ottdruleGSynthAnn}[1]{\ottdrule[#1]{%
\ottpremise{ \gradual{Gamma}  \vdash  \gradual{U}  \leadsfrom  \gradual{T}  : \TypeType_{\Rightarrow  \ottnt{i} } }%
\ottpremise{\gradual{Gamma}  \vdash  \gradual{t}  \Leftarrow  \gradual{U}}%
}{
\gradual{Gamma}  \vdash  \ottsym{(}   \gradual{t} \dblcolon \gradual{T}   \ottsym{)}  \Rightarrow  \gradual{U}}{%
{\ottdrulename{GSynthAnn}}{}%
}}
\newcommand{\ottdruleGSynthType}[1]{\ottdrule[#1]{%
\ottpremise{\ottnt{i}  \ottsym{>}  \ottsym{0}}%
}{
\gradual{Gamma}  \vdash   \TypeType_{ \ottnt{i} }   \Rightarrow   \TypeType_{ \ottnt{i}  + 1} }{%
{\ottdrulename{GSynthType}}{}%
}}
\newcommand{\ottdruleGSynthVar}[1]{\ottdrule[#1]{%
\ottpremise{\ottsym{(}  \mathit{x}  \ottsym{:}  \gradual{U}  \ottsym{)} \, \in \, \gradual{Gamma}}%
\ottpremise{\vdash  \gradual{Gamma}}%
}{
\gradual{Gamma}  \vdash  \mathit{x}  \Rightarrow  \gradual{U}}{%
{\ottdrulename{GSynthVar}}{}%
}}
\newcommand{\ottdruleGSynthApp}[1]{\ottdrule[#1]{%
\ottpremise{ \gradual{Gamma}  \vdash  \gradual{t}_{{\mathrm{1}}}  \Rightarrow  \gradual{U}  \qquad   \gradual{Gamma}  \vdash  \gradual{u}  \leadsfrom\  \gradual{t}_{{\mathrm{2}}}  \Leftarrow   \ottkw{dom}\  \gradual{U}   }%
\ottpremise{  }%
\ottpremise{ [  \gradual{u}  / {\_} ]  \ottkw{cod}\  \gradual{U}  =  \gradual{U}_{{\mathrm{2}}} }%
}{
\gradual{Gamma}  \vdash   \gradual{t}_{{\mathrm{1}}} \  \gradual{t}_{{\mathrm{2}}}   \Rightarrow  \gradual{U}_{{\mathrm{2}}}}{%
{\ottdrulename{GSynthApp}}{}%
}}
\newcommand{\ottdruleGSynthDyn}[1]{\ottdrule[#1]{%
}{
\gradual{Gamma}  \vdash  {\qm }  \Rightarrow  {\qm }}{%
{\ottdrulename{GSynthDyn}}{}%
}}
\newcommand{\ottdefnGSynth}[1]{\begin{ottdefnblock}[#1]{$\gradual{Gamma}  \vdash  \gradual{t}  \Rightarrow  \gradual{U}$}{\ottcom{Gradual Synthesis}}
\ottusedrule{\ottdruleGSynthAnn{}}
\ottusedrule{\ottdruleGSynthType{}}
\ottusedrule{\ottdruleGSynthVar{}}
\ottusedrule{\ottdruleGSynthApp{}}
\ottusedrule{\ottdruleGSynthDyn{}}
\end{ottdefnblock}}
\newcommand{\ottdruleGCheckSynth}[1]{\ottdrule[#1]{%
\ottpremise{ \gradual{Gamma}  \vdash  \gradual{t}  \Rightarrow  \gradual{U}'  \\\\ }%
\ottpremise{ \gradual{U}'  \cong  \gradual{U} }%
}{
\gradual{Gamma}  \vdash  \gradual{t}  \Leftarrow  \gradual{U}}{%
{\ottdrulename{GCheckSynth}}{}%
}}
\newcommand{\ottdruleGCheckLevel}[1]{\ottdrule[#1]{%
\ottpremise{ \gradual{Gamma}  \vdash  \gradual{T}  \Rightarrow   \TypeType_{ \ottnt{i} }   \\\\ }%
\ottpremise{ 0 <  \ottnt{i}  <  \ottnt{j} }%
}{
\gradual{Gamma}  \vdash  \gradual{T}  \Leftarrow   \TypeType_{ \ottnt{j} } }{%
{\ottdrulename{GCheckLevel}}{}%
}}
\newcommand{\ottdruleGCheckPi}[1]{\ottdrule[#1]{%
\ottpremise{ \gradual{Gamma}  \vdash  \gradual{U}'  \leadsfrom\  \gradual{T}_{{\mathrm{1}}}  \Leftarrow  \gradual{U}  \qquad   \gradual{U}  \cong \TypeType  }%
\ottpremise{ \vdash  \ottsym{(}  \mathit{x}  \ottsym{:}  \gradual{U}'  \ottsym{)}  \gradual{Gamma}  \qquad  \ottsym{(}  \mathit{x}  \ottsym{:}  \gradual{U}'  \ottsym{)}  \gradual{Gamma}  \vdash  \gradual{T}_{{\mathrm{2}}}  \Leftarrow  \gradual{U} }%
}{
\gradual{Gamma}  \vdash  \ottsym{(}  \mathit{x}  \ottsym{:}  \gradual{T}_{{\mathrm{1}}}  \ottsym{)}  \rightarrow  \gradual{T}_{{\mathrm{2}}}  \Leftarrow  \gradual{U}}{%
{\ottdrulename{GCheckPi}}{}%
}}
\newcommand{\ottdruleGCheckLamPi}[1]{\ottdrule[#1]{%
\ottpremise{ \vdash  \ottsym{(}  \mathit{x}  \ottsym{:}  \gradual{U}_{{\mathrm{1}}}  \ottsym{)}  \gradual{Gamma}  \\\\ }%
\ottpremise{\ottsym{(}  \mathit{x}  \ottsym{:}  \gradual{U}_{{\mathrm{1}}}  \ottsym{)}  \gradual{Gamma}  \vdash  \gradual{t}  \Leftarrow  \gradual{U}_{{\mathrm{2}}}}%
}{
\gradual{Gamma}  \vdash  \ottsym{(}  \lambda  \mathit{x}  \ldotp  \gradual{t}  \ottsym{)}  \Leftarrow  \ottsym{(}  \mathit{x}  \ottsym{:}  \gradual{U}_{{\mathrm{1}}}  \ottsym{)}  \rightarrow  \gradual{U}_{{\mathrm{2}}}}{%
{\ottdrulename{GCheckLamPi}}{}%
}}
\newcommand{\ottdruleGCheckLamPiRdxAlpha}[1]{}
\newcommand{\ottdruleGCheckLamDyn}[1]{\ottdrule[#1]{%
\ottpremise{ \vdash  \ottsym{(}  \mathit{x}  \ottsym{:}  {\qm }  \ottsym{)}  \gradual{Gamma}  \\\\ }%
\ottpremise{\ottsym{(}  \mathit{x}  \ottsym{:}  {\qm }  \ottsym{)}  \gradual{Gamma}  \vdash  \gradual{t}  \Leftarrow  {\qm }}%
}{
\gradual{Gamma}  \vdash  \ottsym{(}  \lambda  \mathit{x}  \ldotp  \gradual{t}  \ottsym{)}  \Leftarrow  {\qm }}{%
{\ottdrulename{GCheckLamDyn}}{}%
}}
\newcommand{\ottdefnGCheck}[1]{\begin{ottdefnblock}[#1]{$\gradual{Gamma}  \vdash  \gradual{t}  \Leftarrow  \gradual{U}$}{\ottcom{Gradual Checking}}
\ottusedrule{\ottdruleGCheckSynth{}}
\ottusedrule{\ottdruleGCheckLevel{}}
\ottusedrule{\ottdruleGCheckPi{}}
\ottusedrule{\ottdruleGCheckLamPi{}}
\ottusedrule{\ottdruleGCheckLamPiRdxAlpha{}}
\ottusedrule{\ottdruleGCheckLamDyn{}}
\end{ottdefnblock}}
\newcommand{\ottdruleGradualTypeNormSynth}[1]{\ottdrule[#1]{%
\ottpremise{\gradual{Gamma}  \vdash  \gradual{T}  \leadsto  \gradual{U}  \Rightarrow   \TypeType_{ \ottnt{i} } }%
}{
 \gradual{Gamma}  \vdash  \gradual{U}  \leadsfrom  \gradual{T}  : \TypeType_{\Rightarrow  \ottnt{i} } }{%
{\ottdrulename{GradualTypeNormSynth}}{}%
}}
\newcommand{\ottdruleGradualTypeNormPi}[1]{\ottdrule[#1]{%
\ottpremise{ \gradual{Gamma}  \vdash  \gradual{U}_{{\mathrm{1}}}  \leadsfrom  \gradual{T}_{{\mathrm{1}}}  : \TypeType_{\Rightarrow  \ottnt{i} } }%
\ottpremise{ \ottsym{(}  \mathit{x}  \ottsym{:}  \gradual{U}_{{\mathrm{1}}}  \ottsym{)}  \gradual{Gamma}  \vdash  \gradual{U}_{{\mathrm{2}}}  \leadsfrom  \gradual{T}_{{\mathrm{2}}}  : \TypeType_{\Rightarrow  \ottnt{j} } }%
}{
 \gradual{Gamma}  \vdash   ( \mathit{x}  :  \gradual{U}_{{\mathrm{1}}} ) \ifboolexpr{bool{ShowArrowIndex} }{\xrightarrow{  \ottkw{max}( \ottnt{i} , \ottnt{j} )  } }{\to}  \gradual{U}_{{\mathrm{2}}}   \leadsfrom  \ottsym{(}  \mathit{x}  \ottsym{:}  \gradual{T}_{{\mathrm{1}}}  \ottsym{)}  \rightarrow  \gradual{T}_{{\mathrm{2}}}  : \TypeType_{\Rightarrow   \ottkw{max}( \ottnt{i} , \ottnt{j} )  } }{%
{\ottdrulename{GradualTypeNormPi}}{}%
}}
\newcommand{\ottdruleGradualTypeNormDyn}[1]{\ottdrule[#1]{%
}{
 \gradual{Gamma}  \vdash  {\qm }  \leadsfrom  {\qm }  : \TypeType_{\Rightarrow   \omega  } }{%
{\ottdrulename{GradualTypeNormDyn}}{}%
}}
\newcommand{\ottdefnGradualTypeNorm}[1]{\begin{ottdefnblock}[#1]{$ \gradual{Gamma}  \vdash  \gradual{U}  \leadsfrom  \gradual{T}  : \TypeType_{\Rightarrow  \ottnt{i} } $}{\ottcom{Approximate Normalization and Level Inference}}
\ottusedrule{\ottdruleGradualTypeNormSynth{}}
\ottusedrule{\ottdruleGradualTypeNormPi{}}
\ottusedrule{\ottdruleGradualTypeNormDyn{}}
\end{ottdefnblock}}
\newcommand{\ottdruleGEtaLongAtomic}[1]{\ottdrule[#1]{%
}{
 \gradual{rr}  :_\eta  \gradual{RR} }{%
{\ottdrulename{GEtaLongAtomic}}{}%
}}
\newcommand{\ottdruleGEtaLongAtomicDyn}[1]{\ottdrule[#1]{%
}{
 \gradual{rr}  :_\eta  {\qm } }{%
{\ottdrulename{GEtaLongAtomicDyn}}{}%
}}
\newcommand{\ottdruleGEtaLongLam}[1]{\ottdrule[#1]{%
\ottpremise{ \gradual{u}  :_\eta  \gradual{U} }%
}{
 \ottsym{(}  \lambda  \mathit{x}  \ldotp  \gradual{u}  \ottsym{)}  :_\eta  \ottsym{(}  \mathit{x}  \ottsym{:}  \gradual{U}'  \ottsym{)}  \rightarrow  \gradual{U} }{%
{\ottdrulename{GEtaLongLam}}{}%
}}
\newcommand{\ottdruleGEtaLongLamRdxAlpha}[1]{}
\newcommand{\ottdruleGEtaLongDyn}[1]{\ottdrule[#1]{%
}{
 {\qm }  :_\eta  \gradual{U} }{%
{\ottdrulename{GEtaLongDyn}}{}%
}}
\newcommand{\ottdruleGEtaLongPi}[1]{\ottdrule[#1]{%
}{
 \ottsym{(}  \mathit{x}  \ottsym{:}  \gradual{U}  \ottsym{)}  \rightarrow  \gradual{U}'  :_\eta  \gradual{U}'' }{%
{\ottdrulename{GEtaLongPi}}{}%
}}
\newcommand{\ottdefnGEtaLong}[1]{\begin{ottdefnblock}[#1]{$ \gradual{u}  :_\eta  \gradual{U} $}{\ottcom{Eta-long Canonical Forms}}
\ottusedrule{\ottdruleGEtaLongAtomic{}}
\ottusedrule{\ottdruleGEtaLongAtomicDyn{}}
\ottusedrule{\ottdruleGEtaLongLam{}}
\ottusedrule{\ottdruleGEtaLongLamRdxAlpha{}}
\ottusedrule{\ottdruleGEtaLongDyn{}}
\ottusedrule{\ottdruleGEtaLongPi{}}
\end{ottdefnblock}}
\newcommand{\ottdruleGEtaExpandAtomic}[1]{\ottdrule[#1]{%
}{
 \gradual{rr} \leadsto_\eta  \gradual{rr} :  \gradual{RR} }{%
{\ottdrulename{GEtaExpandAtomic}}{}%
}}
\newcommand{\ottdruleGEtaExpandDyn}[1]{\ottdrule[#1]{%
}{
 \gradual{rr} \leadsto_\eta  \gradual{rr} :  {\qm } }{%
{\ottdrulename{GEtaExpandDyn}}{}%
}}
\newcommand{\ottdruleGEtaExpandPi}[1]{\ottdrule[#1]{%
\ottpremise{  \mathit{y}  \leadsto_\eta  \gradual{u} :  \gradual{U}_{{\mathrm{1}}} }%
\ottpremise{  \mathit{x}  \gradual{e} \  \gradual{u}   \leadsto_\eta  \gradual{u}_{{\mathrm{2}}} :  \gradual{U}_{{\mathrm{2}}} }%
}{
  \mathit{x} \gradual{e}  \leadsto_\eta  \ottsym{(}  \lambda  \mathit{y}  \ldotp  \gradual{u}_{{\mathrm{2}}}  \ottsym{)} :  \ottsym{(}  \mathit{y}  \ottsym{:}  \gradual{U}_{{\mathrm{1}}}  \ottsym{)}  \rightarrow  \gradual{U}_{{\mathrm{2}}} }{%
{\ottdrulename{GEtaExpandPi}}{}%
}}
\newcommand{\ottdefnGEtaExpand}[1]{\begin{ottdefnblock}[#1]{$ \gradual{rr} \leadsto_\eta  \gradual{u} :  \gradual{U} $}{\ottcom{Eta-expansion of Gradual Atomic Forms}}
\ottusedrule{\ottdruleGEtaExpandAtomic{}}
\ottusedrule{\ottdruleGEtaExpandDyn{}}
\ottusedrule{\ottdruleGEtaExpandPi{}}
\end{ottdefnblock}}
\newcommand{\ottdruleGEtaExpandCAtomic}[1]{\ottdrule[#1]{%
\ottpremise{ \gradual{rr} \leadsto_\eta  \gradual{u} :  \gradual{U} }%
}{
 \gradual{rr} \leadsto_\eta  \gradual{u} :  \gradual{U} }{%
{\ottdrulename{GEtaExpandCAtomic}}{}%
}}
\newcommand{\ottdruleGEtaExpandCLam}[1]{\ottdrule[#1]{%
\ottpremise{ \gradual{u} \leadsto_\eta  \gradual{u}' :  \gradual{U}_{{\mathrm{2}}} }%
}{
 \ottsym{(}  \lambda  \mathit{x}  \ldotp  \gradual{u}  \ottsym{)} \leadsto_\eta  \ottsym{(}  \lambda  \mathit{x}  \ldotp  \gradual{u}'  \ottsym{)} :  \ottsym{(}  \mathit{x}  \ottsym{:}  \gradual{U}_{{\mathrm{1}}}  \ottsym{)}  \rightarrow  \gradual{U}_{{\mathrm{2}}} }{%
{\ottdrulename{GEtaExpandCLam}}{}%
}}
\newcommand{\ottdruleGEtaExpandCLamRdxAlpha}[1]{}
\newcommand{\ottdruleGEtaExpandCPi}[1]{\ottdrule[#1]{%
\ottpremise{ \gradual{U}_{{\mathrm{1}}} \leadsto_\eta  \gradual{U}'_{{\mathrm{1}}} :  \gradual{U}'' }%
\ottpremise{ \gradual{U}_{{\mathrm{2}}} \leadsto_\eta  \gradual{U}'_{{\mathrm{2}}} :  \gradual{U}'' }%
}{
 \ottsym{(}  \mathit{x}  \ottsym{:}  \gradual{U}_{{\mathrm{1}}}  \ottsym{)}  \rightarrow  \gradual{U}_{{\mathrm{2}}} \leadsto_\eta  \ottsym{(}  \mathit{x}  \ottsym{:}  \gradual{U}'_{{\mathrm{1}}}  \ottsym{)}  \rightarrow  \gradual{U}'_{{\mathrm{2}}} :  \gradual{U}'' }{%
{\ottdrulename{GEtaExpandCPi}}{}%
}}
\newcommand{\ottdruleGEtaExpandCDyn}[1]{\ottdrule[#1]{%
}{
 {\qm } \leadsto_\eta  {\qm } :  \gradual{U} }{%
{\ottdrulename{GEtaExpandCDyn}}{}%
}}
\newcommand{\ottdruleGEtaExpandCDynType}[1]{\ottdrule[#1]{%
}{
 \gradual{u} \leadsto_\eta  \gradual{u} :  {\qm } }{%
{\ottdrulename{GEtaExpandCDynType}}{}%
}}
\newcommand{\ottdefnGEtaExpandC}[1]{\begin{ottdefnblock}[#1]{$ \gradual{u} \leadsto_\eta  \gradual{u}' :  \gradual{U} $}{\ottcom{Eta-expansion of Gradual Canonical Forms}}
\ottusedrule{\ottdruleGEtaExpandCAtomic{}}
\ottusedrule{\ottdruleGEtaExpandCLam{}}
\ottusedrule{\ottdruleGEtaExpandCLamRdxAlpha{}}
\ottusedrule{\ottdruleGEtaExpandCPi{}}
\ottusedrule{\ottdruleGEtaExpandCDyn{}}
\ottusedrule{\ottdruleGEtaExpandCDynType{}}
\end{ottdefnblock}}
\newcommand{\ottdruleGNSynthAnn}[1]{\ottdrule[#1]{%
\ottpremise{ \gradual{Gamma}  \vdash  \gradual{U}  \leadsfrom  \gradual{T}  : \TypeType_{\Rightarrow  \ottnt{i} } }%
\ottpremise{\gradual{Gamma}  \vdash  \gradual{u}  \leadsfrom\  \gradual{t}  \Leftarrow  \gradual{U}}%
}{
\gradual{Gamma}  \vdash  \ottsym{(}   \gradual{t} \dblcolon \gradual{T}   \ottsym{)}  \leadsto  \gradual{u}  \Rightarrow  \gradual{U}}{%
{\ottdrulename{GNSynthAnn}}{}%
}}
\newcommand{\ottdruleGNSynthType}[1]{\ottdrule[#1]{%
\ottpremise{\ottnt{i}  \ottsym{>}  \ottsym{0}}%
}{
\gradual{Gamma}  \vdash   \TypeType_{ \ottnt{i} }   \leadsto   \TypeType_{ \ottnt{i} }   \Rightarrow   \TypeType_{ \ottnt{i}  + 1} }{%
{\ottdrulename{GNSynthType}}{}%
}}
\newcommand{\ottdruleGNSynthVar}[1]{\ottdrule[#1]{%
\ottpremise{  }%
\ottpremise{\vdash  \gradual{Gamma}}%
\ottpremise{\ottsym{(}  \mathit{x}  \ottsym{:}  \gradual{U}  \ottsym{)} \, \in \, \gradual{Gamma}}%
\ottpremise{  \mathit{x}  \leadsto_\eta  \gradual{u} :  \gradual{U} }%
}{
\gradual{Gamma}  \vdash  \mathit{x}  \leadsto  \gradual{u}  \Rightarrow  \gradual{U}}{%
{\ottdrulename{GNSynthVar}}{}%
}}
\newcommand{\ottdruleGNSynthVarLook}[1]{}
\newcommand{\ottdruleGNSynthApp}[1]{\ottdrule[#1]{%
\ottpremise{  }%
\ottpremise{  \gradual{Gamma}  \vdash  \gradual{t}_{{\mathrm{1}}}  \leadsto  \gradual{u}_{{\mathrm{1}}}  \Rightarrow  \gradual{U}  \qquad   \gradual{u}_{{\mathrm{1}}} \leadsto_\eta  \gradual{u}'_{{\mathrm{1}}} :   {\qm } \to {\qm }    }%
\ottpremise{  \ottkw{dom}\  \gradual{U} =  \gradual{U}_{{\mathrm{1}}}   \qquad  \gradual{Gamma}  \vdash  \gradual{u}_{{\mathrm{2}}}  \leadsfrom\  \gradual{t}_{{\mathrm{2}}}  \Leftarrow  \gradual{U}_{{\mathrm{1}}} }%
\ottpremise{ {[  \gradual{u}_{{\mathrm{2}}}  / {\_} ]}^{ \gradual{U}_{{\mathrm{1}}}  } \ottkw{body}\  \gradual{u}'_{{\mathrm{1}}} =  \gradual{u}_{{\mathrm{3}}} }%
\ottpremise{  [  \gradual{u}_{{\mathrm{2}}}  / {\_} ]  \ottkw{cod}\  \gradual{U}  =  \gradual{U}_{{\mathrm{2}}}   \qquad   \gradual{u}_{{\mathrm{3}}} \leadsto_\eta  \gradual{u}'_{{\mathrm{3}}} :  \gradual{U}_{{\mathrm{2}}}  }%
}{
\gradual{Gamma}  \vdash   \gradual{t}_{{\mathrm{1}}} \  \gradual{t}_{{\mathrm{2}}}   \leadsto  \gradual{u}'_{{\mathrm{3}}}  \Rightarrow  \gradual{U}_{{\mathrm{2}}}}{%
{\ottdrulename{GNSynthApp}}{}%
}}
\newcommand{\ottdruleGNSynthDyn}[1]{\ottdrule[#1]{%
}{
\gradual{Gamma}  \vdash  {\qm }  \leadsto  {\qm }  \Rightarrow  {\qm }}{%
{\ottdrulename{GNSynthDyn}}{}%
}}
\newcommand{\ottdruleGNSynthEv}[1]{\ottdrule[#1]{%
\ottpremise{\gradual{Gamma}  \vdash  \gradual{u}  \leadsfrom\  \evterm{t}  \Leftarrow  \gradual{U}}%
}{
\gradual{Gamma}  \vdash  \langle  \gradual{U}  \rangle \, \evterm{t}  \leadsto  \gradual{u}  \Rightarrow  \gradual{U}}{%
{\ottdrulename{GNSynthEv}}{}%
}}
\newcommand{\ottdefnGNSynth}[1]{\begin{ottdefnblock}[#1]{$\gradual{Gamma}  \vdash  \gradual{t}  \leadsto  \gradual{u}  \Rightarrow  \gradual{U}$}{\ottcom{Approximate Normalization Synthesis}}
\ottusedrule{\ottdruleGNSynthAnn{}}
\ottusedrule{\ottdruleGNSynthType{}}
\ottusedrule{\ottdruleGNSynthVar{}}
\ottusedrule{\ottdruleGNSynthVarLook{}}
\ottusedrule{\ottdruleGNSynthApp{}}
\ottusedrule{\ottdruleGNSynthDyn{}}
\ottusedrule{\ottdruleGNSynthEv{}}
\end{ottdefnblock}}
\newcommand{\ottdruleGNCheckSynthRdx}[1]{}
\newcommand{\ottdruleGNCheckLamPiRdxAlpha}[1]{}
\newcommand{\ottdruleGElabSynthAnn}[1]{\ottdrule[#1]{%
\ottpremise{ \gradual{Gamma}  \vdash  \gradual{U}  \leadsfrom  \gradual{T}  : \TypeType_{\Rightarrow  \ottnt{i} } }%
\ottpremise{ \gradual{Gamma}  \vdash  \evterm{t}  \leftarrowtriangle  \gradual{t}  \Leftarrow  \gradual{U} }%
}{
 \gradual{Gamma}  \vdash  \ottsym{(}   \gradual{t} \dblcolon \gradual{T}   \ottsym{)}  \rightarrowtriangle  \evterm{t}  \Rightarrow  \gradual{U} }{%
{\ottdrulename{GElabSynthAnn}}{}%
}}
\newcommand{\ottdruleGElabSynthType}[1]{\ottdrule[#1]{%
\ottpremise{\ottnt{i}  \ottsym{>}  \ottsym{0}}%
}{
 \gradual{Gamma}  \vdash   \TypeType_{ \ottnt{i} }   \rightarrowtriangle   \TypeType_{ \ottnt{i} }   \Rightarrow   \TypeType_{ \ottnt{i}  + 1}  }{%
{\ottdrulename{GElabSynthType}}{}%
}}
\newcommand{\ottdruleGElabSynthVar}[1]{\ottdrule[#1]{%
\ottpremise{  }%
\ottpremise{\vdash  \gradual{Gamma}}%
\ottpremise{\ottsym{(}  \mathit{x}  \ottsym{:}  \gradual{U}  \ottsym{)} \, \in \, \gradual{Gamma}}%
}{
 \gradual{Gamma}  \vdash  \mathit{x}  \rightarrowtriangle  \mathit{x}  \Rightarrow  \gradual{U} }{%
{\ottdrulename{GElabSynthVar}}{}%
}}
\newcommand{\ottdruleGElabSynthVarLook}[1]{}
\newcommand{\ottdruleGElabSynthAppPi}[1]{\ottdrule[#1]{%
\ottpremise{ \gradual{Gamma}  \vdash  \gradual{t}_{{\mathrm{1}}}  \rightarrowtriangle  \evterm{t}_{{\mathrm{1}}}  \Rightarrow  \ottsym{(}  \mathit{x}  \ottsym{:}  \gradual{U}  \ottsym{)}  \rightarrow  \gradual{U}_{{\mathrm{2}}} }%
\ottpremise{\gradual{Gamma}  \vdash  \gradual{u}  \leadsfrom\  \evterm{t}_{{\mathrm{2}}}  \ottsym{<~~}  \gradual{t}_{{\mathrm{2}}}  \ottsym{<~~=}  \gradual{U}_{{\mathrm{1}}}}%
\ottpremise{ {[  \gradual{u}  / { \mathit{x} } ]}^{ \gradual{U}_{{\mathrm{1}}}  }  \gradual{U}_{{\mathrm{2}}}  =  \gradual{U}'_{{\mathrm{2}}} }%
}{
 \gradual{Gamma}  \vdash   \gradual{t}_{{\mathrm{1}}} \  \gradual{t}_{{\mathrm{2}}}   \rightarrowtriangle   \evterm{t}_{{\mathrm{1}}} \  \evterm{t}_{{\mathrm{2}}}   \Rightarrow  \gradual{U}'_{{\mathrm{2}}} }{%
{\ottdrulename{GElabSynthAppPi}}{}%
}}
\newcommand{\ottdruleGElabSynthAppDyn}[1]{\ottdrule[#1]{%
\ottpremise{ \gradual{Gamma}  \vdash  \gradual{t}_{{\mathrm{1}}}  \rightarrowtriangle  \evterm{t}_{{\mathrm{1}}}  \Rightarrow  {\qm } }%
\ottpremise{\gradual{Gamma}  \vdash  \gradual{u}  \leadsfrom\  \evterm{t}_{{\mathrm{2}}}  \ottsym{<~~}  \gradual{t}_{{\mathrm{2}}}  \ottsym{<~~=}  {\qm }}%
}{
 \gradual{Gamma}  \vdash   \gradual{t}_{{\mathrm{1}}} \  \gradual{t}_{{\mathrm{2}}}   \rightarrowtriangle   \ottsym{(}  \langle   {\qm } \to {\qm }   \rangle \, \evterm{t}_{{\mathrm{1}}}  \ottsym{)} \  \evterm{t}_{{\mathrm{2}}}   \Rightarrow  {\qm } }{%
{\ottdrulename{GElabSynthAppDyn}}{}%
}}
\newcommand{\ottdruleGElabSynthDyn}[1]{\ottdrule[#1]{%
}{
 \gradual{Gamma}  \vdash  {\qm }  \rightarrowtriangle  \langle  {\qm }  \rangle \, {\qm }  \Rightarrow  {\qm } }{%
{\ottdrulename{GElabSynthDyn}}{}%
}}
\newcommand{\ottdefnGElabSynth}[1]{\begin{ottdefnblock}[#1]{$ \gradual{Gamma}  \vdash  \gradual{t}  \rightarrowtriangle  \evterm{t}  \Rightarrow  \gradual{U} $}{\ottcom{Gradual Synthesis Elaboration}}
\ottusedrule{\ottdruleGElabSynthAnn{}}
\ottusedrule{\ottdruleGElabSynthType{}}
\ottusedrule{\ottdruleGElabSynthVar{}}
\ottusedrule{\ottdruleGElabSynthVarLook{}}
\ottusedrule{\ottdruleGElabSynthAppPi{}}
\ottusedrule{\ottdruleGElabSynthAppDyn{}}
\ottusedrule{\ottdruleGElabSynthDyn{}}
\end{ottdefnblock}}
\newcommand{\ottdruleGElabCheckSynth}[1]{\ottdrule[#1]{%
\ottpremise{ \gradual{Gamma}  \vdash  \gradual{t}  \rightarrowtriangle  \evterm{t}  \Rightarrow  \gradual{U}' }%
\ottpremise{ \gradual{U} \sqcap  \gradual{U}'  =  \gradual{U}'' }%
}{
 \gradual{Gamma}  \vdash  \langle  \gradual{U}''  \rangle \, \evterm{t}  \leftarrowtriangle  \gradual{t}  \Leftarrow  \gradual{U} }{%
{\ottdrulename{GElabCheckSynth}}{}%
}}
\newcommand{\ottdruleGElabCheckLevel}[1]{\ottdrule[#1]{%
\ottpremise{ \gradual{Gamma}  \vdash  \gradual{T}  \rightarrowtriangle  \evterm{T}  \Rightarrow   \TypeType_{ \ottnt{i} }  }%
\ottpremise{ 0 <  \ottnt{i}  <  \ottnt{j} }%
}{
 \gradual{Gamma}  \vdash  \evterm{T}  \leftarrowtriangle  \gradual{T}  \Leftarrow   \TypeType_{ \ottnt{j} }  }{%
{\ottdrulename{GElabCheckLevel}}{}%
}}
\newcommand{\ottdruleGElabCheckPi}[1]{\ottdrule[#1]{%
\ottpremise{ \gradual{U}  \cong \TypeType }%
\ottpremise{\gradual{Gamma}  \vdash  \gradual{U}_{{\mathrm{1}}}  \leadsfrom\  \evterm{T}_{{\mathrm{1}}}  \ottsym{<~~}  \gradual{T}_{{\mathrm{1}}}  \ottsym{<~~=}  \gradual{U}}%
\ottpremise{\vdash  \ottsym{(}  \mathit{x}  \ottsym{:}  \gradual{U}_{{\mathrm{1}}}  \ottsym{)}  \gradual{Gamma}}%
\ottpremise{ \ottsym{(}  \mathit{x}  \ottsym{:}  \gradual{U}_{{\mathrm{1}}}  \ottsym{)}  \gradual{Gamma}  \vdash  \evterm{T}_{{\mathrm{2}}}  \leftarrowtriangle  \gradual{T}_{{\mathrm{2}}}  \Leftarrow  \gradual{U} }%
}{
 \gradual{Gamma}  \vdash  \ottsym{(}  \mathit{x}  \ottsym{:}  \evterm{T}_{{\mathrm{1}}}  \ottsym{)}  \rightarrow  \evterm{T}_{{\mathrm{2}}}  \leftarrowtriangle  \ottsym{(}  \mathit{x}  \ottsym{:}  \gradual{T}_{{\mathrm{1}}}  \ottsym{)}  \rightarrow  \gradual{T}_{{\mathrm{2}}}  \Leftarrow  \gradual{U} }{%
{\ottdrulename{GElabCheckPi}}{}%
}}
\newcommand{\ottdruleGElabCheckLamPi}[1]{\ottdrule[#1]{%
\ottpremise{\vdash  \ottsym{(}  \mathit{x}  \ottsym{:}  \gradual{U}_{{\mathrm{1}}}  \ottsym{)}  \gradual{Gamma}}%
\ottpremise{ \ottsym{(}  \mathit{x}  \ottsym{:}  \gradual{U}_{{\mathrm{1}}}  \ottsym{)}  \gradual{Gamma}  \vdash  \evterm{t}  \leftarrowtriangle  \gradual{t}  \Leftarrow  \gradual{U}_{{\mathrm{2}}} }%
}{
 \gradual{Gamma}  \vdash  \langle  \ottsym{(}  \mathit{x}  \ottsym{:}  \gradual{U}_{{\mathrm{1}}}  \ottsym{)}  \rightarrow  \gradual{U}_{{\mathrm{2}}}  \rangle \, \ottsym{(}  \lambda  \mathit{x}  \ldotp  \evterm{t}  \ottsym{)}  \leftarrowtriangle  \ottsym{(}  \lambda  \mathit{x}  \ldotp  \gradual{t}  \ottsym{)}  \Leftarrow  \ottsym{(}  \mathit{x}  \ottsym{:}  \gradual{U}_{{\mathrm{1}}}  \ottsym{)}  \rightarrow  \gradual{U}_{{\mathrm{2}}} }{%
{\ottdrulename{GElabCheckLamPi}}{}%
}}
\newcommand{\ottdruleGElabCheckLamPiRdxAlpha}[1]{}
\newcommand{\ottdruleGElabCheckLamDyn}[1]{\ottdrule[#1]{%
\ottpremise{\vdash  \ottsym{(}  \mathit{x}  \ottsym{:}  {\qm }  \ottsym{)}  \gradual{Gamma}}%
\ottpremise{ \ottsym{(}  \mathit{x}  \ottsym{:}  {\qm }  \ottsym{)}  \gradual{Gamma}  \vdash  \evterm{t}  \leftarrowtriangle  \gradual{t}  \Leftarrow  {\qm } }%
}{
 \gradual{Gamma}  \vdash  \langle  {\qm }  \rangle \, \ottsym{(}  \lambda  \mathit{x}  \ldotp  \evterm{t}  \ottsym{)}  \leftarrowtriangle  \ottsym{(}  \lambda  \mathit{x}  \ldotp  \gradual{t}  \ottsym{)}  \Leftarrow  {\qm } }{%
{\ottdrulename{GElabCheckLamDyn}}{}%
}}
\newcommand{\ottdefnGElabCheck}[1]{\begin{ottdefnblock}[#1]{$ \gradual{Gamma}  \vdash  \evterm{t}  \leftarrowtriangle  \gradual{t}  \Leftarrow  \gradual{U} $}{\ottcom{Gradual Checking Elaboration}}
\ottusedrule{\ottdruleGElabCheckSynth{}}
\ottusedrule{\ottdruleGElabCheckLevel{}}
\ottusedrule{\ottdruleGElabCheckPi{}}
\ottusedrule{\ottdruleGElabCheckLamPi{}}
\ottusedrule{\ottdruleGElabCheckLamPiRdxAlpha{}}
\ottusedrule{\ottdruleGElabCheckLamDyn{}}
\end{ottdefnblock}}
\newcommand{\ottdruleGradualNESynthVarLook}[1]{}
\newcommand{\ottdruleGradualNESynthSynthRdxError}[1]{}
\newcommand{\ottdruleGradualNECheckSynthRdx}[1]{}
\newcommand{\ottdruleGradualNECheckLamPiRdxAlpha}[1]{}
\newcommand{\ottdruleEvTypeVar}[1]{\ottdrule[#1]{%
\ottpremise{\vdash  \gradual{Gamma}}%
\ottpremise{\ottsym{(}  \mathit{x}  \ottsym{:}  \gradual{U}  \ottsym{)} \, \in \, \gradual{Gamma}}%
}{
\gradual{Gamma}  \vdash  \mathit{x}  \ottsym{:}  \gradual{U}}{%
{\ottdrulename{EvTypeVar}}{}%
}}
\newcommand{\ottdruleEvTypeApp}[1]{\ottdrule[#1]{%
\ottpremise{\gradual{Gamma}  \vdash  \evterm{t}_{{\mathrm{1}}}  \ottsym{:}  \ottsym{(}  \mathit{x}  \ottsym{:}  \gradual{U}_{{\mathrm{1}}}  \ottsym{)}  \rightarrow  \gradual{U}_{{\mathrm{2}}}}%
\ottpremise{\gradual{Gamma}  \vdash  \evterm{t}_{{\mathrm{2}}}  \ottsym{:}  \gradual{U}_{{\mathrm{1}}}}%
\ottpremise{\gradual{Gamma}  \vdash  \gradual{u}_{{\mathrm{2}}}  \leadsfrom\  \evterm{t}_{{\mathrm{2}}}  \Leftarrow  \gradual{U}_{{\mathrm{1}}}}%
\ottpremise{ {[  \gradual{u}_{{\mathrm{2}}}  / { \mathit{x} } ]}^{ \gradual{U}_{{\mathrm{1}}}  }  \gradual{U}_{{\mathrm{2}}}  =  \gradual{U}_{{\mathrm{3}}} }%
}{
\gradual{Gamma}  \vdash   \evterm{t}_{{\mathrm{1}}} \  \evterm{t}_{{\mathrm{2}}}   \ottsym{:}  \gradual{U}_{{\mathrm{3}}}}{%
{\ottdrulename{EvTypeApp}}{}%
}}
\newcommand{\ottdruleEvTypePi}[1]{\ottdrule[#1]{%
\ottpremise{ \gradual{U}  \cong \TypeType }%
\ottpremise{\gradual{Gamma}  \vdash  \gradual{U}'  \leadsfrom\  \evterm{T}_{{\mathrm{1}}}  \Leftarrow  \gradual{U}}%
\ottpremise{\vdash  \ottsym{(}  \mathit{x}  \ottsym{:}  \gradual{U}'  \ottsym{)}  \gradual{Gamma}}%
\ottpremise{\ottsym{(}  \mathit{x}  \ottsym{:}  \gradual{U}'  \ottsym{)}  \gradual{Gamma}  \vdash  \evterm{T}_{{\mathrm{2}}}  \ottsym{:}  \gradual{U}}%
}{
\gradual{Gamma}  \vdash  \ottsym{(}  \mathit{x}  \ottsym{:}  \evterm{T}_{{\mathrm{1}}}  \ottsym{)}  \rightarrow  \evterm{T}_{{\mathrm{2}}}  \ottsym{:}  \gradual{U}}{%
{\ottdrulename{EvTypePi}}{}%
}}
\newcommand{\ottdruleEvTypeType}[1]{\ottdrule[#1]{%
\ottpremise{\ottnt{i}  \ottsym{>}  \ottsym{0}}%
}{
\gradual{Gamma}  \vdash   \TypeType_{ \ottnt{i} }   \ottsym{:}   \TypeType_{ \ottnt{i}  + 1} }{%
{\ottdrulename{EvTypeType}}{}%
}}
\newcommand{\ottdruleEvTypeEv}[1]{\ottdrule[#1]{%
\ottpremise{\gradual{Gamma}  \vdash  \evterm{t}  \ottsym{:}  \gradual{U}'}%
\ottpremise{ \myepsilon \vdash  \gradual{U}'  \cong  \gradual{U} }%
}{
\gradual{Gamma}  \vdash  \myepsilon \, \evterm{t}  \ottsym{:}  \gradual{U}}{%
{\ottdrulename{EvTypeEv}}{}%
}}
\newcommand{\ottdruleEvTypeLevel}[1]{\ottdrule[#1]{%
\ottpremise{\gradual{Gamma}  \vdash  \evterm{T}  \ottsym{:}   \TypeType_{ \ottnt{i} } }%
\ottpremise{ 0 <  \ottnt{i}  <  \ottnt{j} }%
}{
\gradual{Gamma}  \vdash  \evterm{T}  \ottsym{:}   \TypeType_{ \ottnt{j} } }{%
{\ottdrulename{EvTypeLevel}}{}%
}}
\newcommand{\ottdruleEvTypeLam}[1]{\ottdrule[#1]{%
\ottpremise{\vdash  \ottsym{(}  \mathit{x}  \ottsym{:}  \gradual{U}_{{\mathrm{1}}}  \ottsym{)}  \gradual{Gamma}}%
\ottpremise{\ottsym{(}  \mathit{x}  \ottsym{:}  \gradual{U}_{{\mathrm{1}}}  \ottsym{)}  \gradual{Gamma}  \vdash  \evterm{t}  \ottsym{:}  \gradual{U}_{{\mathrm{2}}}}%
}{
\gradual{Gamma}  \vdash  \langle  \ottsym{(}  \mathit{x}  \ottsym{:}  \gradual{U}_{{\mathrm{1}}}  \ottsym{)}  \rightarrow  \gradual{U}_{{\mathrm{2}}}  \rangle \, \ottsym{(}  \lambda  \mathit{x}  \ldotp  \evterm{t}  \ottsym{)}  \ottsym{:}  \ottsym{(}  \mathit{x}  \ottsym{:}  \gradual{U}_{{\mathrm{1}}}  \ottsym{)}  \rightarrow  \gradual{U}_{{\mathrm{2}}}}{%
{\ottdrulename{EvTypeLam}}{}%
}}
\newcommand{\ottdruleEvTypeDyn}[1]{\ottdrule[#1]{%
\ottpremise{ \gradual{Gamma}  \vdash  \gradual{U}  : \TypeType }%
\ottpremise{ \myepsilon \vdash  \gradual{U}  \cong  \gradual{U} }%
}{
\gradual{Gamma}  \vdash  \myepsilon \, {\qm }  \ottsym{:}  \gradual{U}}{%
{\ottdrulename{EvTypeDyn}}{}%
}}
\newcommand{\ottdefnEvType}[1]{\begin{ottdefnblock}[#1]{$\gradual{Gamma}  \vdash  \evterm{t}  \ottsym{:}  \gradual{U}$}{\ottcom{Evidence Term Typing}}
\ottusedrule{\ottdruleEvTypeVar{}}
\ottusedrule{\ottdruleEvTypeApp{}}
\ottusedrule{\ottdruleEvTypePi{}}
\ottusedrule{\ottdruleEvTypeType{}}
\ottusedrule{\ottdruleEvTypeEv{}}
\ottusedrule{\ottdruleEvTypeLevel{}}
\ottusedrule{\ottdruleEvTypeLam{}}
\ottusedrule{\ottdruleEvTypeDyn{}}
\end{ottdefnblock}}
\newcommand{\ottdruleSimpleStepAnn}[1]{\ottdrule[#1]{%
}{
\ottsym{(}   \static{v} \dblcolon \static{T}   \ottsym{)}  \longrightarrow  \static{v}}{%
{\ottdrulename{SimpleStepAnn}}{}%
}}
\newcommand{\ottdruleSimpleStepApp}[1]{\ottdrule[#1]{%
}{
 \ottsym{(}  \lambda  \mathit{x}  \ldotp  \static{t}  \ottsym{)} \  \static{v}   \longrightarrow   [  \mathit{x}  \Mapsto  \static{v}  ]  \static{t} }{%
{\ottdrulename{SimpleStepApp}}{}%
}}
\newcommand{\ottdruleSimpleStepContext}[1]{\ottdrule[#1]{%
\ottpremise{\static{t}_{{\mathrm{1}}}  \longrightarrow  \static{t}_{{\mathrm{2}}}}%
}{
 \static{C}[  \static{t}_{{\mathrm{1}}}  ]   \longrightarrow   \static{C}[  \static{t}_{{\mathrm{2}}}  ] }{%
{\ottdrulename{SimpleStepContext}}{}%
}}
\newcommand{\ottdefnSimpleStep}[1]{\begin{ottdefnblock}[#1]{$\static{t}_{{\mathrm{1}}}  \longrightarrow  \static{t}_{{\mathrm{2}}}$}{\ottcom{Simple Small-Step Semantics}}
\ottusedrule{\ottdruleSimpleStepAnn{}}
\ottusedrule{\ottdruleSimpleStepApp{}}
\ottusedrule{\ottdruleSimpleStepContext{}}
\end{ottdefnblock}}
\newcommand{\ottdruleStepAscr}[1]{\ottdrule[#1]{%
\ottpremise{ \myepsilon_{{\mathrm{1}}} \sqcap  \myepsilon_{{\mathrm{2}}}  =  \myepsilon_{{\mathrm{3}}} }%
}{
\myepsilon_{{\mathrm{1}}} \, \ottsym{(}  \myepsilon_{{\mathrm{2}}} \, \evterm{rv}  \ottsym{)}  \longrightarrow  \myepsilon_{{\mathrm{3}}} \, \evterm{rv}}{%
{\ottdrulename{StepAscr}}{}%
}}
\newcommand{\ottdruleStepAscrFail}[1]{\ottdrule[#1]{%
\ottpremise{\myepsilon_{{\mathrm{1}}}  \sqcap  \myepsilon_{{\mathrm{2}}} \, \ottkw{undefined}}%
\ottpremise{  }%
}{
\myepsilon_{{\mathrm{1}}} \, \ottsym{(}  \myepsilon_{{\mathrm{2}}} \, \evterm{rv}  \ottsym{)}  \longrightarrow  \mathsf{err}}{%
{\ottdrulename{StepAscrFail}}{}%
}}
\newcommand{\ottdruleStepAppDyn}[1]{\ottdrule[#1]{%
\ottpremise{ \cdot  \vdash  \gradual{u}  \leadsfrom\  \evterm{v}  \Leftarrow   \ottkw{dom}\  \gradual{U}   \qquad   [  \gradual{u}  / {\_} ]  \ottkw{cod}\  \gradual{U}  =  \gradual{U}_{{\mathrm{2}}}  }%
}{
 \ottsym{(}  \langle  \gradual{U}  \rangle \, {\qm }  \ottsym{)} \  \evterm{v}   \longrightarrow  \langle  \gradual{U}_{{\mathrm{2}}}  \rangle \, {\qm }}{%
{\ottdrulename{StepAppDyn}}{}%
}}
\newcommand{\ottdruleStepAppEv}[1]{\ottdrule[#1]{%
\ottpremise{   \gradual{U}' \sqcap   \ottkw{dom}\  \gradual{U}   =  \gradual{U}_{{\mathrm{1}}}   \qquad  \cdot  \vdash  \gradual{u}  \leadsfrom\  \evterm{rv}  \Leftarrow  \gradual{U}_{{\mathrm{1}}}   \qquad   [  \gradual{u}  / {\_} ]  \ottkw{cod}\  \gradual{U}  =  \gradual{U}_{{\mathrm{2}}}  }%
}{
 \ottsym{(}  \langle  \gradual{U}  \rangle \, \ottsym{(}  \lambda  \mathit{x}  \ldotp  \evterm{t}  \ottsym{)}  \ottsym{)} \  \ottsym{(}  \langle  \gradual{U}'  \rangle \, \evterm{rv}  \ottsym{)}   \longrightarrow  \langle  \gradual{U}_{{\mathrm{2}}}  \rangle \, \ottsym{(}   {[  \mathit{x}  \Mapsto  \langle  \gradual{U}_{{\mathrm{1}}}  \rangle \, \evterm{rv}  ]}^{ \gradual{u}  :  \gradual{U}_{{\mathrm{1}}} }  \evterm{t}   \ottsym{)}}{%
{\ottdrulename{StepAppEv}}{}%
}}
\newcommand{\ottdruleStepAppEvRaw}[1]{\ottdrule[#1]{%
\ottpremise{ \cdot  \vdash  \gradual{u}  \leadsfrom\  \evterm{rv}  \Leftarrow   \ottkw{dom}\  \gradual{U}   \qquad   [  \gradual{u}  / {\_} ]  \ottkw{cod}\  \gradual{U}  =  \gradual{U}_{{\mathrm{2}}}  }%
}{
 \ottsym{(}  \langle  \gradual{U}  \rangle \, \ottsym{(}  \lambda  \mathit{x}  \ldotp  \evterm{t}  \ottsym{)}  \ottsym{)} \  \evterm{rv}   \longrightarrow  \langle  \gradual{U}_{{\mathrm{2}}}  \rangle \, \ottsym{(}   {[  \mathit{x}  \Mapsto  \ottsym{(}  \langle  \gradual{U}_{{\mathrm{1}}}  \rangle \, \evterm{rv}  \ottsym{)}  ]}^{ \gradual{u}  :  \gradual{U}_{{\mathrm{1}}} }  \evterm{t}   \ottsym{)}}{%
{\ottdrulename{StepAppEvRaw}}{}%
}}
\newcommand{\ottdruleStepAppFailTrans}[1]{\ottdrule[#1]{%
\ottpremise{ \ottkw{dom}\  \gradual{U}   \sqcap  \gradual{U}' \, \ottkw{undefined}}%
}{
 \ottsym{(}  \langle  \gradual{U}  \rangle \, \ottsym{(}  \lambda  \mathit{x}  \ldotp  \evterm{t}  \ottsym{)}  \ottsym{)} \  \ottsym{(}  \langle  \gradual{U}'  \rangle \, \evterm{rv}  \ottsym{)}   \longrightarrow  \mathsf{err}}{%
{\ottdrulename{StepAppFailTrans}}{}%
}}
\newcommand{\ottdruleStepContext}[1]{\ottdrule[#1]{%
\ottpremise{\evterm{t}_{{\mathrm{1}}}  \longrightarrow  \evterm{t}_{{\mathrm{2}}}}%
\ottpremise{\evterm{t}_{{\mathrm{1}}}  \ottsym{,}  \evterm{t}_{{\mathrm{2}}}  \neq \, \mathsf{err}}%
}{
 \evterm{C}[  \evterm{t}_{{\mathrm{1}}}  ]   \longrightarrow   \evterm{C}[  \evterm{t}_{{\mathrm{2}}}  ] }{%
{\ottdrulename{StepContext}}{}%
}}
\newcommand{\ottdruleStepContextErr}[1]{\ottdrule[#1]{%
\ottpremise{\evterm{t}  \longrightarrow  \mathsf{err}}%
}{
 \evterm{C}[  \evterm{t}  ]   \longrightarrow  \mathsf{err}}{%
{\ottdrulename{StepContextErr}}{}%
}}
\newcommand{\ottdefnStep}[1]{\begin{ottdefnblock}[#1]{$\evterm{t}_{{\mathrm{1}}}  \longrightarrow  \evterm{t}_{{\mathrm{2}}}$}{\ottcom{Evidence-based Small-Step Semantics}}
\ottusedrule{\ottdruleStepAscr{}}
\ottusedrule{\ottdruleStepAscrFail{}}
\ottusedrule{\ottdruleStepAppDyn{}}
\ottusedrule{\ottdruleStepAppEv{}}
\ottusedrule{\ottdruleStepAppEvRaw{}}
\ottusedrule{\ottdruleStepAppFailTrans{}}
\ottusedrule{\ottdruleStepContext{}}
\ottusedrule{\ottdruleStepContextErr{}}
\end{ottdefnblock}}
\renewcommand{\ottcom}[1]{\textbf{#1}}
\newcommand{\rev}[1]{#1}
\newcommand{\added}[1]{\rev{#1}}
\newcommand{\replaced}[2]{\rev{#1}}
\begin{document}

    \title{Approximate Normalization for Gradual Dependent Types}                                                                                             \renewcommand\footnotemark{}
\titlenote{
This article extends the ICFP19 article found at \url{https://doi.org/10.1145/3341692}. 
This work is partially funded by CONICYT FONDECYT Regular Project 1190058, ERC Starting Grant SECOMP (715753),
an NSERC Discovery grant,
and the NSERC Canada Graduate Scholarship.
}

    \author{Joseph Eremondi}
                                                \affiliation{
        \department{Department of Computer Science}                  \institution{University of British Columbia}                                \country{Canada}                      }
  \email{{jeremond@cs.ubc.ca}}               

        \author{\'{E}ric Tanter}
    \affiliation{
            \department{Computer Science Department (DCC)}                    \institution{University of Chile}                                          \country{Chile}                        }
    \affiliation{
     \institution{Inria Paris}
     \country{France}
    }
    \email{etanter@dcc.uchile.cl}            
      \author{Ronald Garcia}
  \affiliation{
        \department{Department of Computer Science}                  \institution{University of British Columbia}                                \country{Canada}                      }
  \email{rxg@cs.ubc.ca}

        \begin{abstract}
  
Dependent types help programmers write highly reliable code.  However, this
reliability comes at a cost: it can be challenging to write new prototypes in
(or migrate old code to) dependently-typed programming languages.  Gradual
typing {makes} static type disciplines more
flexible, so an appropriate notion of gradual dependent types could fruitfully
lower this cost. 
However, dependent types raise unique challenges for gradual typing.  Dependent 
typechecking involves the execution of program code, but gradually-typed code
can signal runtime type errors or diverge.  These runtime errors
threaten the soundness guarantees that make dependent types so attractive,
while divergence spoils the type-driven programming experience.

This paper presents GDTL, a gradual dependently-typed language
that emphasizes pragmatic dependently-typed programming.  GDTL fully embeds both an
untyped and dependently-typed language, and allows for smooth transitions
between the two.  In addition to gradual types we introduce \emph{gradual
terms}, which allow the user to be imprecise in type indices and to omit proof
terms; runtime checks ensure type {safety}.  To account for nontermination and
failure, we distinguish between compile-time normalization and run-time
execution: compile-time normalization is {\em approximate} but total, while
runtime execution is {\em exact}, but may fail or diverge.  We prove that GDTL
has decidable typechecking and satisfies all the expected properties of gradual
languages. In particular, GDTL satisfies the static and dynamic gradual
guarantees: reducing type precision preserves typedness, and altering type
precision does not change program behavior outside of dynamic type failures.
To prove these properties, we were led to establish a novel {\em
normalization gradual guarantee} that captures the monotonicity of approximate
normalization with respect to imprecision.

  \end{abstract}

    \ccsdesc[500]{Theory of computation~Type structures}
    \ccsdesc[500]{Theory of computation~Program semantics}

      \keywords{Gradual types, dependent types, normalization}     
         
          \maketitle

\section{Introduction}  

Dependent types support the development of extremely reliable software.  With
the full power of higher-order logic, programmers can write expressive
specifications as types, and be confident that if a program typechecks, then it
meets its specification. 
Dependent types are at the core of proof assistants like Coq~\citep{coqart} and
Agda~\citep{agdaPaper}. While these pure systems can be used for certified
programming~\citep{cpdt}, their focus is on the construction of proofs, rather
than practical or efficient code. 
\added{Dependently-typed extensions of practical programming languages maintain a clear phase distinction between compile-time typechecking and runtime execution, and have to embrace some compromise regarding impurity.} 
\rev{One possibility is to forbid potentially impure expressions from occurring in types, either by considering a separate pure sub-language of type-level computation as in Dependent ML~\citep{Xi1999}, by using an effect system and termination checker to prevent impurity to leak in type dependencies as in F$\star$~\cite{mumon} and Idris~\citep{idrisPaper}, or by explicitly separating the language into two fragments with controlled interactions between them, as in Zombie~\citep{EPTCS76.9, Casinghino:2014:CPP:2535838.2535883}. A radical alternative is to give up on decidable typechecking and logical consistency altogether and give all responsibility to the programmer, as in Dependent Haskell~\citep{eisenberg2016dependent}. The design space is wide, and practical dependently-typed programming is a fertile area of research. }

As with any static type system, the reliability brought by dependent types
comes at a cost. Dependently-typed languages impose a rigid discipline,
sometimes requiring programmers to explicitly construct proofs in their
programs. Because of this rigidity, dependent types can interfere with rapid
prototyping, and migrating code from languages with simpler type systems can be
difficult.
Several approaches have been proposed to relax dependent typing in order to
ease programming, for instance by supporting some form of interoperability
between (possibly polymorphic) non-dependently-typed and dependently-typed programs and
structures~\citep{10.1007/1-4020-8141-3_34,Osera:2012:DI:2103776.2103779,
  dagand_tabareau_tanter_2018,Tanter:2015:GCP:2816707.2816710}. These
approaches require programmers to explicitly trigger runtime checks through
casts, liftings, or block boundaries.  Such explicit interventions hamper evolution.

In contrast, gradual
typing~\citep{gradualTypeInitial} exploits {\em type imprecision} to
drive the interaction between static and dynamic checking in a smooth,
continuous manner~\citep{refinedCriteria}. A gradual language introduces an
unknown type $ {\qm } $, and admits imprecise types such as
$  \ottkw{Nat}  \to {\qm } $.  The gradual type system optimistically handles
imprecision, deferring to runtime checks where needed. Therefore, runtime
checking is an {\em implicit} consequence of type imprecision, and is
seamlessly adjusted as programmers evolve the declared types of components, be
they modules, functions, or expressions.  This paper extends gradual typing 
to provide a flexible incremental path to adopting dependent types.

Gradual typing has been adapted to many other type disciplines, including
ownership types~\citep{sergeyClarke:esop2012},
effects~\cite{banadosAl:jfp2016}, refinement
types~\citep{lehmannTanter:popl2017}, security
types~\citep{fennellThiemann:csf2013,toroAl:toplas2018}, and session
types~\citep{igarashiAl:icfp2017b}.  But it has not yet reached dependent
types.  Even as the idea holds much promise, it also poses significant
challenges.  The greatest barrier to gradual dependent types is that a
dependent type checker must evaluate some program terms as part of type
checking, and gradual types complicate this in two ways.
First, if a gradual language fully embeds
an untyped language, then some programs will diverge: indeed, self
application $(\lambda  \mathit{x} : {\qm } .x\; x)$ is typeable in such a
language.
Second, gradual languages introduce the possibility of type
errors \added{that are uncovered as a term is evaluated}: applying the function $(\lambda  \mathit{x} : {\qm }  \ldotp  \mathit{x}  + 1 )$ may fail,
depending on whether its argument \added{can actually be used as a} number.
So a gradual dependently-typed language must account for the potential of non-termination and failure \textit{during typechecking}.

\paragraph{{\bf A gradual dependently-typed language}} This work presents \lang, a gradual dependently-typed core language that supports the whole spectrum between an untyped functional language and a dependently-typed one. As such, \lang adopts a unified term and type language, meaning that the unknown type $ {\qm } $ is also a valid term. This allows programmers to specify types with imprecise indices, and to replace proof terms with $ {\qm } $ (\autoref{sec:motivation}).

\lang is a gradual version of the predicative fragment of the Calculus of Constructions with a cumulative universe hierarchy (\CCw) (\autoref{sec:staticlang}), similar to the core language of Idris~\citep{idrisPaper}. 
We gradualize this language following the Abstracting Gradual Typing (AGT) methodology~\citep{agt} (\autoref{sec:graduallang}).
Because \lang is a {\em conservative extension} of this dependently-typed calculus, it is both strongly normalizing and logically consistent for fully static code. \rev{These strong properties are however lost as soon as imprecise types and/or terms are introduced.}
On the dynamic side, \lang can fully embed the untyped lambda calculus. When writing purely untyped code, static type errors are never encountered.
In between, \lang satisfies the {\em gradual guarantees} of \citet{refinedCriteria}, meaning that typing and evaluation are monotone with respect to type imprecision. These guarantees ensure that programmers can move code between imprecise and precise types in small, incremental steps,
with the program typechecking and \added{behaving identically (modulo dynamic type errors)} at each step.
\added{If a program fails to typecheck, the programmer knows the problem is not too few type annotations, but rather incompatible types.}

\lang is a call-by-value language with a sharp two-phase distinction. The key technical insight on which \lang is built is to exploit two distinct notions of evaluation: one for {\em normalization} during typechecking, and one for {\em execution} at runtime. Specifically, we present a novel {\em approximate} normalization technique that guarantees decidable typechecking (\autoref{sec:gradualhsub}): 
applying a function of unknown type, which may trigger non-termination, normalizes to the {\em unknown value} $ {\qm} $. Consequently, some terms that would be distinct at runtime become indistinguishable as type indices.
Approximation is also used to ensure that compile-time normalization (i.e. during typechecking) always terminates and never signals a ``dynamic error''. \added{In this sense, \lang is closer to Idris than Dependent Haskell, which does admit reduction errors and non-termination during typechecking.} At runtime, \lang uses the standard, precise runtime execution strategy of gradual languages, which may fail due to dynamic type errors, and may diverge as well (\autoref{sec:gradual-semantics}). \added{In that respect, \lang is closer to Dependent Haskell and Zombie than to Idris, which features a termination checker and a static effect system.}
We prove that \lang has decidable typechecking and satisfies all the expected properties of gradual languages \citep{refinedCriteria}: type safety, conservative extension of the static language, embedding of the untyped language, and the gradual guarantees (\autoref{sec:lang-properties}).
We then show how inductive types with eliminators can be added to \lang without significant changes (\autoref{sec:inductive}).
\autoref{sec:related} discusses related work, and \autoref{sec:conclusion} discusses limitations and perspectives for future work.

\paragraph{{\bf Disclaimer}}
\rev{
This work does not aim to develop a full-fledged dependent gradual type theory, a fascinating objective that would raise many metatheoretic challenges. Rather, it proposes a novel technique, {\em approximate normalization}, applicable to \added{full-spectrum} dependently-typed  programming languages. This technique reflects specific design choices that affect the pragmatics of programming and reasoning in \lang: we review these design decisions, after the informal presentation of the language, in \autoref{sec:choices}.
Also, being a core calculus, \lang currently lacks several features expected of a practical dependently-typed language (\autoref{sec:conclusion}); nevertheless, this work provides a foundation on which practical gradual dependently-typed languages can be built.
}

{\bf \added{Implementation}.} 
\rev{ We provide a prototype implementation of \lang in Racket, based on
a Redex model. The code for the implementation is open source~\citep{GDTL-github}.
The implementation also supports our extension of natural numbers, vectors, and equality
as built-in inductive types. }

{\bf \added{Technical report}.} 
\rev{ Complete definitions and proofs can be found in \citep{eremondi2019approximate} }.

\section{Goals and Challenges}
\label{sec:motivation}

We begin by motivating our goals for \lang, and describe the challenges and
design choices that accompany them.

\subsection{The Pain and Promise of Dependent Types}
\label{subsec:vector}

To introduce dependent types, we start with a classic example: length-indexed
vectors. In a dependently-typed language, the type $   \ottkw{Vec}  \  \mathtt{A}  \  \mathtt{n} $ describes any
vector that contains $\mathtt{n}$ elements of type $\mathtt{A}$.  We say that $ \ottkw{Vec} $
is \textit{indexed} by the value $\mathtt{n}$.  This type has two constructors,
${ \ottkw{Nil} :   \ottsym{(}  \mathtt{A}  \ottsym{:}   \TypeType_{ \ottsym{1} }   \ottsym{)}  \rightarrow   \ottkw{Vec}  \  \mathtt{A}  \   0  }$ and
${\ottkw{Cons}:    \ottsym{(}  \mathtt{A}  \ottsym{:}   \TypeType_{ \ottsym{1} }   \ottsym{)}  \rightarrow  \ottsym{(}  \mathtt{n}  \ottsym{:}   \ottkw{Nat}   \ottsym{)}  \rightarrow   \mathtt{A} \to     \ottkw{Vec}  \  \mathtt{A}  \  \mathtt{n}  \to  \ottkw{Vec}    \  \mathtt{A}  \  \ottsym{(}   \mathtt{n}  + 1   \ottsym{)}  }$.
By making length part of the type, we ensure that operations that are typically
partial can only receive values for which they produce results.  
One can type a function that yields the first element of a vector as follows:
$$ \ottkw{head}\ :\ \ottsym{(}  \mathtt{A}  \ottsym{:}   \TypeType_{ \ottsym{1} }   \ottsym{)}  \rightarrow  \ottsym{(}  \mathtt{n}  \ottsym{:}   \ottkw{Nat}   \ottsym{)}  \rightarrow      \ottkw{Vec}  \  \mathtt{A}  \  \ottsym{(}   \mathtt{n}  + 1   \ottsym{)}  \to \mathtt{A} $$ 

Since $\mathtt{head}$ takes a vector of non-zero length, \rev{it can never receive} an empty vector. 
The downside of this strong guarantee is that we can only use vectors in
contexts where their length is known.  This makes it difficult to migrate code
from languages with weaker types, or for newcomers to prototype algorithms.
For example, a programmer may wish to migrate the following quicksort algorithm
into a dependently-typed language:
\begin{align*}
    & \mathtt{sort} \  \mathtt{vec}  \ =&&&&  \ottkw{if}\ \mathtt{vec} == \ottkw{Nil}\ \ottkw{then}\ \ottkw{Nil}\ \ottkw{else}  \\
    &&&&& \qquad (\text{\texttt{sort (filter ($\leq$ (head vec)) (tail vec)))}}) \ \mathtt{++}\   \mathtt{head} \  \mathtt{vec}    \\
    &&&&& \qquad \ \mathtt{++}\ (\text{\texttt{sort (filter ($>$ (head vec)) (tail vec)))}}) 
\end{align*}

\noindent Migrating this definition to a dependently-typed language poses some
difficulties. The recursive calls are not direct deconstructions of $\mathtt{vec}$,
so it takes work to convince the type system that the code will terminate, and
is thus safe to run at compile time.  Moreover, if we try to use this
definition with $ \ottkw{Vec} $, we must account for how the length of each filtered list is unknown, 
and while we can prove that the length
of the resulting list is the same as the input, this must be done
manually. Alternately, we could use simply-typed lists in the dependently-typed
language, but we do not wish to duplicate every vector function for lists.

\subsection{Gradual Types to the Rescue?}

Even at first glance, gradual typing seems like it can provide the desired flexibility.
In a gradually-typed language, a programmer can use the unknown type, 
written $ {\qm } $, to soften the static typing discipline.  Terms with type
$ {\qm } $ can appear in any context.  Runtime type checks ensure
that dynamically-typed code does not violate invariants expressed with static
types.

Since $ {\qm } $ allows us to embed untyped code in a typed language, we can write a gradually-typed fixed-point
combinator ${\mathsf{Z}  :  \ottsym{(}  \mathtt{A}  \ottsym{:}   \TypeType_{ \ottsym{1} }   \ottsym{)}  \rightarrow  \ottsym{(}  \mathtt{B}  \ottsym{:}   \TypeType_{ \ottsym{1} }   \ottsym{)}  \rightarrow   \ottsym{(}   \ottsym{(}    \mathtt{A}  \to  \mathtt{B}    \ottsym{)} \to   \mathtt{A}  \to  \mathtt{B}     \ottsym{)} \to   \mathtt{A}  \to  \mathtt{B}   }$.
\added{We define this in the same way as the usual $\mathsf{Z}$ combinator, but the input function is ascribed the type $\qm$,
allowing for self-application.}
Using this combinator, the programmer can write $\mathtt{sort}$ using general recursion. 
Furthermore, the programmer can give $\mathtt{sort}$ the type $  {\qm } \to {\qm }  $, 
causing the length of the results of $\mathtt{filter}$ to be ignored in typechecking.
Annotating the vector with $ {\qm } $
inserts runtime checks that ensure that the program will \rev{fail (rather than behave in an undefined manner)} if it is given an argument that is not a vector.

However, introducing the dynamic type $ {\qm } $ in a dependently-typed language brings challenges.

\paragraph{{\bf The unknown type is not enough}}If we assign $\mathtt{sort}$
the type $  {\qm } \to {\qm }  $, then we can pass it any argument, whether
it is a vector or not.  This seems like overkill: we want to restrict
$\mathtt{sort}$ to vectors and statically rule out nonsensical calls like
$ \mathtt{sort} \  \mathtt{false} $.  Unfortunately the usual notion of type imprecision is too
coarse-grained to support this. We want to introduce imprecision
more judiciously, as in $   \ottkw{Vec}  \  \mathtt{A}  \  {\qm } $: the type of vectors with \textit{unknown
  length}. But the length is a natural number, not a type.  How can we express
imprecision in type indices?

\paragraph{{\bf Dependent types require proofs}} To seamlessly blend
untyped and dependently-typed code, we want to let programs omit
proof terms, yet still allow code to typecheck and run. But this goes further
than imprecision in type indices, since imprecision also manifests in program
terms.  What should the dynamic semantics of imprecise programs be?

\paragraph{{\bf Gradual typing introduces effects}} 
Adding $ {\qm } $ to types introduces two effects. The ability to type
self-applications means programs may diverge, and the ability to write
imprecise types introduces the possibility of type errors uncovered while evaluating terms.  These effects are troubling because dependent typechecking must often evaluate code, sometimes under
binders, to compare dependent types.  
We must normalize terms at compile time to compute the type of dependent function applications.
This means that both effects can manifest
\textit{during typechecking}.  How should compile-time evaluation errors be
handled? Can we make typechecking decidable in the presence of possible
non-termination?

\paragraph{{\bf Dependent types rely on equality}}

The key reason for normalizing at compile time is that
we must compare types, and
since types can be indexed by terms, we need a method of comparing arbitrary terms.
If we have $A : (\ottkw{Nat} \to \ottkw{Nat}) \to \TypeType_1$,
then  $A\ (\lambda  \mathit{x}  \ldotp    1   + 1) $ and $A\ (\lambda  \mathit{x}  \ldotp   2) $
should be seen as the same type.
In intensional type theories, like that of Coq, Agda and Idris, this is done using \textit{definitional equality}, which fully evaluates terms (even under binders) to determine
whether their normal forms are \textit{syntactically} equal. Of course, this is a weaker notion than {\em propositional equality}, but typechecking with propositional equality (as found in {\em extensional} theories) is undecidable. In intensional theories, explicit rewriting must be used to exploit propositional equalities.

In a gradual language, types are also compared \textit{at runtime} to compensate for imprecise static type information. With gradual dependent types, how should types (and the terms they contain) be compared at runtime? The simplest solution is to use the same notion of definitional equality as used for static typechecking. This has some unfortunate consequences, such as  $A\ (\lambda  \mathit{x}  \ldotp    x   + x - x) $ and $A\ (\lambda  \mathit{x}  \ldotp   x) $ being deemed inconsistent, even though they are clearly propositionally equal. However, mirroring compile-time typechecking at runtime simplifies reasoning about the language behavior.

\subsection{GDTL in Action} 
\label{subsec:gdtl-action}
To propagate imprecision to type indices, and soundly allow omission of proof terms, \lang admits $ {\qm } $ {\em both as a type and a term}.
To manage effects due to gradual typing, we use separate notions of evaluation for compile-time and runtime.
Introducing {\em imprecision in the compile-time normalization of types} avoids both non-termination and failures during typechecking.

\paragraph{{\bf The unknown as a type index}}

Since full-spectrum dependently-typed languages conflate types and terms, \lang allows $ {\qm } $ to be used as either a term or a type. 
Just as any term can have type $ {\qm } $, the term $ {\qm } $ can have any type.
This lets dependent type checks be deferred to runtime. For example, we can define
vectors $\mathtt{staticNil}$, $\mathtt{dynNil}$ and $\mathtt{dynCons}$ as follows:
\begin{align*}
    &\mathtt{staticNil}  :     \ottkw{Vec}  \   \ottkw{Nat}   \   0   && \mathtt{dynNil}  :     \ottkw{Vec}  \   \ottkw{Nat}   \  {\qm }  \quad  &&\mathtt{dynCons}  :     \ottkw{Vec}  \   \ottkw{Nat}   \  {\qm }  \\
    &\mathtt{staticNil}  =    \ottkw{Nil}  \   \ottkw{Nat}   && \mathtt{dynNil}  =    \ottkw{Nil}  \   \ottkw{Nat}    && \mathtt{dynCons}  =       \ottkw{Cons}  \   \ottkw{Nat}   \   0   \   0   \  \ottsym{(}    \ottkw{Nil}  \   \ottkw{Nat}    \ottsym{)} 
\end{align*}

\noindent Then, 
$\ottsym{(}     \mathtt{head} \   \ottkw{Nat}   \   1   \  \mathtt{staticNil}   \ottsym{)}$  does not typecheck,
$\ottsym{(}     \mathtt{head} \   \ottkw{Nat}   \   1   \  \mathtt{dynNil}   \ottsym{)}$ typechecks but fails at runtime,
and $\ottsym{(}     \mathtt{head} \   \ottkw{Nat}   \   1   \  \mathtt{dynCons}   \ottsym{)}$ typechecks and succeeds at runtime.
The programmer can choose between compile-time or runtime checks, but \replaced{safety}{soundness} is maintained
either way, and in the fully-static case, the unsafe code is still rejected.

\paragraph{{\bf The unknown as a term at runtime}}

Having $ {\qm } $ as a term means that programmers can use it to optimistically omit \textit{proof terms}.
Indeed, terms can be used not only as type indices, but also as proofs of propositions.
For example, consider \added{the equality type} $ \ottkw{Eq}  : \ottsym{(}  \mathtt{A}  \ottsym{:}   \TypeType_{ \ottnt{i} }   \ottsym{)}  \rightarrow    \mathtt{A}  \to   \mathtt{A}  \to  \TypeType_{ \ottnt{i} }   $,
along with its lone constructor $ \ottkw{Refl}  :    \ottsym{(}  \mathtt{A}  \ottsym{:}   \TypeType_{ \ottnt{i} }   \ottsym{)}  \rightarrow  \ottsym{(}  \mathit{x}  \ottsym{:}  \mathtt{A}  \ottsym{)}  \rightarrow   \ottkw{Eq}  \  \mathtt{A}  \  \mathit{x}  \  \mathit{x} $.
We can use these to write a (slightly contrived) formulation of the $\mathtt{head}$ function:
$$ \mathtt{head}'\ :\ \ottsym{(}  \mathtt{A}  \ottsym{:}   \TypeType_{ \ottnt{i} }   \ottsym{)}  \rightarrow  \ottsym{(}  \mathtt{n}  \ottsym{:}   \ottkw{Nat}   \ottsym{)}  \rightarrow  \ottsym{(}  \mathtt{m}  \ottsym{:}   \ottkw{Nat}   \ottsym{)}  \rightarrow       \ottkw{Eq}  \   \ottkw{Nat}   \  \mathtt{n}  \  \ottsym{(}   \mathtt{m}  + 1   \ottsym{)}  \to     \ottkw{Vec}  \  \mathtt{A}  \  \mathtt{n}  \to \mathtt{A}  $$

This variant accepts vectors of any length,
provided the user also supplies a proof that its length $\mathtt{n}$ is not zero (by providing the predecessor $\mathtt{m}$ and the equality proof $   \ottkw{Eq}  \  \mathtt{n}  \  \ottsym{(}   \mathtt{m}  + 1   \ottsym{)} $).
\lang allows $ {\qm } $ to be used in place of a proof,
while still ensuring that a runtime error is thrown if $\mathtt{head}'$ is ever given an empty list.  
For instance, suppose we define a singleton vector
and a proof that $0=0$:
\begin{align*}
    &\mathtt{staticCons}  :    \ottkw{Vec}  \   \ottkw{Nat}   \   1   \qquad\qquad\qquad  && \mathtt{staticProof}  :      \ottkw{Eq}  \   \ottkw{Nat}   \   0   \   0   \\
    &\mathtt{staticCons}  =      \ottkw{Cons}  \   \ottkw{Nat}   \   0   \  \ottsym{(}    \ottkw{Nil}  \   \ottkw{Nat}    \ottsym{)}   && \mathtt{staticProof}  =     \ottkw{Refl}  \   \ottkw{Nat}   \   0   
\end{align*}

Then
${\ottsym{(}       \mathtt{head'} \   \ottkw{Nat}   \   0   \   0   \  \mathtt{staticProof}  \  \mathtt{staticNil}   \ottsym{)}}$ does not typecheck,
${\ottsym{(}       \mathtt{head'} \   \ottkw{Nat}   \   0   \  {\qm }  \  {\qm }  \  \mathtt{staticNil}   \ottsym{)}}$  typechecks but fails at runtime, and
${\ottsym{(}       \mathtt{head'} \   \ottkw{Nat}   \   1   \  {\qm }  \  {\qm }  \  \mathtt{staticCons}   \ottsym{)}}$ typechecks and succeeds at runtime.
To see why we get a runtime failure for $\mathtt{staticNil}$, we note that internally, $\mathtt{head'}$ uses an explicit \textit{rewriting} of the equality, i.e. if $\mathit{x}$ and $\mathit{y}$ are equal, then 
any property $\mathtt{P}$ that holds for $\mathit{x}$ must also hold for $\mathit{y}$:
$$ \mathtt{rewrite}\ :\  \ottsym{(}  \mathtt{A}  \ottsym{:}   \TypeType_{ \ottsym{1} }   \ottsym{)}  \rightarrow  \ottsym{(}  \mathit{x}  \ottsym{:}  \mathtt{A}  \ottsym{)}  \rightarrow  \ottsym{(}  \mathit{y}  \ottsym{:}  \mathtt{A}  \ottsym{)}  \rightarrow  \ottsym{(}  \mathtt{P}  \ottsym{:}   \mathtt{A} \to  \TypeType_{ \ottsym{1} }    \ottsym{)}  \rightarrow       \ottkw{Eq}  \  \mathtt{A}  \  \mathit{x}  \  \mathit{y}  \to   \mathtt{P} \  \mathit{x}  \to \mathtt{P}   \  \mathit{y} $$

\rev{
In \lang, when $ {\qm } $ is treated as an equality proof, it behaves as $   \ottkw{Refl}  \  {\qm }  \  {\qm } $.\footnote{
	This follows directly from the understanding of the unknown type $ {\qm } $ as denoting all possible static types~\citep{agt}. 
	Analogously, {\em the gradual term $ {\qm } $ 
denotes all possible static terms}. Thus applying the term $ {\qm } $ as a function represents applying all possible functions, producing all possible results, which can be abstracted as the gradual term $ {\qm } $. This means that $ {\qm } $, when applied as a function, behaves as $\lambda  \mathit{x}  \ldotp  {\qm }$. Similarly, $ {\qm } $ treated as an equality proof behaves as $   \ottkw{Refl}  \  {\qm }  \  {\qm } $.
}
Therefore, in the latter two cases of our example, $\mathtt{rewrite}$ gives a result of type $ \mathtt{P} \  {\qm } $, 
which is checked against type $ \mathtt{P} \  \mathit{y} $, \textit{at runtime}.
If $ \mathtt{P} \  \mathit{x} $ and $ \mathtt{P} \  \mathit{y} $ are not \added{definitionally equal}, then 
this check fails with a runtime error.
}

\paragraph{{\bf Managing effects from gradual typing}}
\label{subsec:factorial}

To illustrate how the effects of gradual typing can show up in typechecking, 
suppose a programmer uses the aforementioned $\mathtt{Z}$ combinator to
accidentally write a non-terminating function $\mathtt{badFact}$.
$$
    \mathtt{badFact}   =   \lambda\ m\ \ldotp Z\ (\lambda\ f\ \ldotp  \mathtt{ifzero}\ m\ (f\ 1) (m * f\ (m)) \quad \textit{ \texttt{--} never terminates} 
$$

As explained before, from a practical point of view, it is desirable for \lang 
to fully support dynamically-typed terms, because it allows the programmer to
opt out of both the type discipline and the termination discipline of a dependently-typed language.
However, this means that computing the return type of a function application may diverge, for instance:
\begin{align*}
    &\mathsf{repeat}  :   \ottsym{(}  \mathtt{A}  \ottsym{:}   \TypeType_{ \ottsym{1} }   \ottsym{)}  \rightarrow  \ottsym{(}  \mathtt{n}  \ottsym{:}   \ottkw{Nat}   \ottsym{)}  \rightarrow   \mathtt{A} \to \ottsym{(}     \ottkw{Vec}  \  \mathtt{A}  \  \mathtt{n}   \ottsym{)} \\
   &\mathsf{factList}   =     \mathtt{repeat} \   \ottkw{Nat}   \  \ottsym{(}   \mathtt{badFact} \   1    \ottsym{)}  \   0    \quad \textit{ \texttt{--} has type } \mathit{   \ottkw{Vec}  \  \ottkw{Nat}  \  \ottsym{(}   \mathtt{badFact} \   1    \ottsym{)} }
\end{align*}

\noindent To isolate the non-termination from imprecise code, we observe that any diverging code will necessarily apply a function of type $ {\qm } $.
\added{While $\mathtt{badFact}$ does not have type $\qm$, its definition uses $Z$, which contains ascriptions of type $\qm$. }
  
Similarly, a na\"ive approach to gradual dependent types will encounter failures
when normalizing some terms.
Returning to our $\mathtt{head}$ function, how should we typecheck the following term?
$$\mathtt{failList} = {   \mathtt{head} \   \ottkw{Nat}   \  \ottsym{(}   \mathtt{false} \dblcolon {\qm }   \ottsym{)}  \  \mathtt{staticCons} }$$ 
\noindent We need to check $\mathtt{staticCons}$ against ${   \ottkw{Vec}  \   \ottkw{Nat}   \  \ottsym{(}   \ottsym{(}   \mathtt{false} \dblcolon {\qm }   \ottsym{)}  + 1   \ottsym{)} }$, but what does ${ \ottsym{(}   \mathtt{false} \dblcolon {\qm }   \ottsym{)}  + 1 }$ mean as a vector length? 

The difficulty here is that if a term contains type ascriptions that 
may produce a runtime failure, then it will
\textit{always} trigger an error when normalizing, since normalization evaluates under binders.
This means that typechecking will fail
any time we apply a function to a possibly-failing term.   
This is highly undesirable, and goes against the spirit of gradual typing: writing programs in the large would be very difficult if applying a function to an argument that does match its domain type caused a type error, or caused typechecking to diverge! 
\added{Whereas Dependent Haskell places the burden on the programmer to ensure termination and freedom from failure during typechecking, doing so in a gradual language would make it difficult for programmers because of the possibly indirect interactions with untyped code.}

\lang avoids both problems by using different notions of running programs for the
compile-time and runtime phases.
We distinguish compile-time \textit{normalization}, which is {\em approximate but total}, 
from runtime  \textit{execution}, 
which is {\em exact but partial}. 
When non-termination or failures are possible, compile-time normalization uses $ {\qm } $ as an approximate but pure result.
So both $\mathtt{factList}$ and $\mathtt{failList}$ can be defined and used in runtime code, but they are assigned type $   \ottkw{Vec}  \   \ottkw{Nat}   \  {\qm } $.
To avoid non-termination and dynamic failures, we want our language to be strongly normalizing during typechecking.
Approximate normalization gives us this. 
  
\lang normalization is focused around \textit{hereditary substitution}~\citep{hereditary}, which is a total operation from canonical forms to canonical forms.
Because hereditary substitution is structurally decreasing in the \textit{type} of the value being substituted,
a static termination proof is easily adapted to \lang.
This allows us to pinpoint exactly where gradual types \rev{introduce} effects,
approximate in those cases, and easily adapt the proof of termination of a static language to the gradual language \lang.
Similarly, our use of bidirectional typing means that a single check needs to be added to
prevent failures in normalization.

\subsection{Gradual Guarantees for \lang}
\label{subsec:gg-intro}

To ensure a smooth transition between precise and imprecise typing,
\lang satisfies the \textit{gradual guarantee},
which comes in two parts~\citep{refinedCriteria}.
The \textit{static gradual guarantee} says that reducing the precision
of a program preserves its well-typedness. 
The \textit{dynamic gradual guarantee} states that reducing the precision of a program also preserves its behavior, though the resulting value may be less precise.

One novel insight of \lang's design is that the interplay between dependent typechecking and program evaluation
carries over to the gradual guarantees. Specifically, the static gradual guarantee fundamentally depends on a restricted
variant of the dynamic gradual guarantee. We show that approximate normalization maps
terms related by precision to canonical forms related by precision, thereby ensuring that reducing a term's precision always preserves well-typedness.

By satisfying the gradual typing criteria, \added{and embedding both a fully static and a fully dynamic fragment}, \lang gives programmers freedom to move within the entire spectrum of typedness,
from the safety of higher-order logic to the flexibility of dynamic languages.
Furthermore, admitting $ {\qm } $ as a term means that we can easily combine
\added{code with dependent and non-dependent types}, the midpoint between dynamic and dependent types.
For example, the simple list type could be written as $  \ottkw{List}  \  \mathtt{A}  =    \ottkw{Vec}  \  \mathtt{A}  \  {\qm } $,
so lists could be given to vector-expecting code and vice-versa.
The programmer knows that as long as vectors are used in vector-expecting code, no crashes can happen,
and \replaced{safety}{soundness} ensures that using a list in a vector operation will always fail gracefully or run successfully.
This is significantly different from work on casts to subset types~\cite{Tanter:2015:GCP:2816707.2816710} and dependent interoperability~\citep{dagand_tabareau_tanter_2018}, 
where the user must explicitly provide decidable properties or (partial) equivalences.

\subsection{Summary of Design Decisions}
\label{sec:choices}

\rev{\lang embodies several important design decisions, each with tradeoffs related to ease of reasoning and usability of the language.

By embracing full-spectrum dependent types, 
\lang allows types to be first-class citizens: arbitrary terms can appear in types and expressions can produce types as a result. Therefore the programmer does not need to learn a separate index language, and there is no need to recreate term-level operations at the type level. 

Sticking to clearly separated phases allows us to adopt different reduction strategies for typechecking and for execution. Crucially, by using \textit{approximate normalization}, we ensure that typechecking in \lang always terminates: compile-time normalization is a total (though imprecise) operation. This means that some type information is statically lost, with checks deferred to runtime. 

\lang features an unknown term $\qm$, which resembles term holes in Agda and Idris, and existential variables in Coq; the notable difference is that programs containing $\qm$ can be run without evaluation getting stuck. Every type in \lang is therefore inhabited at least by the unknown term $\qm$, which means that the language is inconsistent as a logic, except for fully-precise programs. 

In a gradual language that can embed arbitrary untyped terms, programs may not terminate at runtime. Every type in \lang contains expressions that can fail or diverge at runtime, due to imprecision. Fully-precise programs are guaranteed to terminate.

Finally, like Coq, Agda, and Idris, \lang is based on an intensional type theory, meaning that it automatically decides {\em definitional equality}---i.e. syntactic equality up to normalization---and not propositional equality; explicit rewriting is necessary to exploit propositional equalities. Consequently, runtime checks in \lang also rely on definitional equality. 
This makes equality decidable, but means that a runtime error can be triggered even though two (syntactically different) terms are propositionally equal.
}

\section{\slang: A Static Dependently-Typed Language}
\label{sec:staticlang}

We now present \slang, a static dependently-typed language 
which is essentially a bidirectional, call-by-value, cumulative variant of the predicative fragment of \CCw (i.e. the calculus of constructions with a universe hierarchy~\citep{COQUAND198895}). 
\slang is the starting point of our gradualization effort, following the Abstracting Gradual Typing (AGT) methodology~\cite{agt}, refined to accommodate dependent types.

\subsection{Syntax and Dynamic Semantics}

\begin{figure}
\begin{minipage}[t]{0.3\linewidth}  
    \nonterms{t}   
\end{minipage}   
\begin{minipage}[t]{0.3\linewidth} 
    \nonterms{simpleValue}  
\end{minipage}   
\begin{minipage}[t]{0.3\linewidth} 
    \nonterms{simpleContext}  
\end{minipage}

\ottdefnSimpleStep{}
\caption{\slang: Syntax and Semantics}
\label{fig:syntax-static} 
\end{figure}          

The syntax of \slang is shown in \autoref{fig:syntax-static}. Metavariables for the static variants of terms, values, etc. are written in $\staticstyle{red,\ sans\text{-}serif\ font}{}{}$.
Types and terms share a syntactic category. 
Functions and applications are in their usual form.
Function types are \textit{dependent}:
a variable name is given to the argument, and the codomain may refer to this variable.
We have a \textit{universe hierarchy}: the lowest types have the type $ \TypeType_{ \ottsym{1} } $,
and each $ \TypeType_{ \ottnt{i} } $ has type $ \TypeType_{ \ottnt{i}  \ottsym{+}  \ottsym{1} } $.
 This hierarchy is \textit{cumulative}:
any value in $ \TypeType_{ \ottnt{i} } $ is also in $ \TypeType_{ \ottnt{i}  + 1} $.
Finally, we have a form for explicit type ascriptions.

We use metavariables $\static{v},\static{V}$ 
to range over values,
which are the subset of terms consisting only of functions, function types and universes.
For evaluation, we use a call-by-value reduction semantics (\autoref{fig:syntax-static}).
Ascriptions are dropped when evaluating, and function applications result in (syntactic) substitution.
We refer to the values and semantics as \textit{simple} rather than \textit{static},
since they apply equally well to an untyped calculus, albeit without the same soundness guarantees.

\subsection{Comparing Types: Canonical Forms}
\label{subsec:static-canonical}

Since dependent types can contain expressions,
it is possible that types may contain redexes.
Most dependent type systems have a \textit{conversion} rule 
that assigns an expression type $\static{T}_{{\mathrm{1}}}$ 
if it has type $\static{T}_{{\mathrm{2}}}$,
and $\static{T}_{{\mathrm{2}}}$ is convertible to $\static{T}_{{\mathrm{1}}}$ through some sequence of $\beta$-conversions,
$\eta$-conversions, and $\alpha$-renamings.
Instead, we treat types as
$\abe$-equivalence classes.
To compare equivalence classes, we represent them using \textit{canonical forms}~\citep{hereditary},
denoted with metavariables $\static{u}$ and $\static{U}$.
These are $\beta$-reduced, $\eta$-long canonical members of an equivalence class. 
We compare terms for $\abe$-equivalence by normalizing and syntactically comparing their canonical forms.

The syntax for canonical forms is given in \autoref{fig:static-types}.
We omit well-formedness rules for terms and environments, since the only difference from 
the typing rules is the $\eta$-longness check.

By representing function applications in \textit{spine form}~\citep{linearSpineCalc}, 
we can ensure that all heads are variables, and thus no redexes are present, even under binders.
The well-formedness of canonical terms is ensured using bidirectional typing~\citep{localTypeInference}.
 An \textit{atomic form} can be a universe $ \TypeType_{ \ottnt{i} } $,
or a variable $x$ applied to 0 or more arguments, which we refer to as its \textit{spine}.  
 Our well-formedness rules ensure the types of
atomic forms are themselves atomic. This ensures that canonical forms are $\eta$-long,
since they cannot have type $\ottsym{(}  \mathit{y}  \ottsym{:}  \static{U}_{{\mathrm{1}}}  \ottsym{)}  \rightarrow  \static{U}_{{\mathrm{2}}}$.

\setbool{ShowArrowIndex}{false}
\begin{figure}
    $\static{u},\static{U} \in \AllTerms$
    
    \setbool{ShowEmptyDot}{true}
    \begin{minipage}[t]{0.38\textwidth} 
    \nonterms{u}  
    \end{minipage}  
    \begin{minipage}[t]{0.3\textwidth}
        \nonterms{rr} 
        \end{minipage}  
        \begin{minipage}[t]{0.3\textwidth}
            \nonterms{e} 
            \end{minipage}  
    \setbool{ShowEmptyDot}{false} 
    \setbool{ShowArrowIndex}{true}

\drules{$\static{Gamma}  \vdash  \static{t}  \Rightarrow  \static{U} \quad\mid\quad \static{Gamma}  \vdash  \static{t}  \Leftarrow  \static{U}$}{\ottcom{Static Typing: Synthesis and Checking}}
{SSynthAnn,SSynthApp,  SSynthVar, SSynthType,  
SCheckSynth, SCheckLevel,  SCheckLam, SCheckPi}    
   
\caption{\slang: Canonical Forms and Typing Rules}   
\label{fig:static-types}    
\end{figure} 
 
\subsection{Typechecking and Normalization}

Using the concept of canonical forms, we can now express the type rules for \slang
in \autoref{fig:static-types}. To ensure syntax-directedness, we again use bidirectional typing.

The \textit{type synthesis} judgement $\static{Gamma}  \vdash  \static{t}  \Rightarrow  \static{U}$ says that $\static{t}$ has type $\static{U}$
under context $\static{Gamma}$, where the type is treated as an output of the judgement. That is,
from examining the term, we can determine its type.
Conversely, the \textit{checking} judgment $\static{Gamma}  \vdash  \static{t}  \Leftarrow  \static{U}$ says that, given a type $\static{U}$, we can confirm that $\static{t}$ has that type.
These rules allow us to propagate the information from ascriptions inwards, so that only top-level terms and redexes
need ascriptions.

Most rules in the system are standard. To support dependent types,
\rrule{SSynthApp} computes the result of
applying a particular value. 
We switch between checking and synthesis using \rrule{SSynthAnn} and \rrule{SCheckSynth}.
The predicativity of our system is distilled in the \rrule{SSynthType} rule: $ \TypeType_{ \ottnt{i} } $ always has type $ \TypeType_{ \ottnt{i}  + 1} $.
The rule \rrule{SCheckLevel} encodes \textit{cumulativity}:
we can always treat types as if they were \added{at} a higher \added{level}, though the converse does not hold.
This allows us to check function types against any $ \TypeType_{ \ottnt{i} } $ in \rrule{SCheckPi}, provided the domain and codomain
 check against that $ \TypeType_{ \ottnt{i} } $.  

We distinguish \textit{hereditary substitution} on canonical forms $ {[  \static{u}_{{\mathrm{1}}}  / { \mathit{x} } ]}^{ \static{U}  }  \static{u}_{{\mathrm{2}}} =  \static{u}_{{\mathrm{3}}} $,
from \textit{syntactic substitution} $ [  \mathit{x}  \Mapsto  \static{t}_{{\mathrm{1}}}  ]  \static{t}_{{\mathrm{2}}}  = \static{t}_{{\mathrm{3}}}$ on terms.
Notably, the former takes the type of its variable as input, and has canonical forms
as both inputs and as output. 
In \rrule{SSynthApp} and \rrule{SCheckPi}, we use the \textit{normalization} judgement $\static{Gamma}  \vdash  \static{u}  \leadsfrom\  \static{t}  \Leftarrow  \static{U}$,
which computes the canonical form of $\static{t}$ while checking it against $\static{U}$.
Similarly, \rrule{SSynthAnn} uses the judgement  $ \static{Gamma}  \vdash  \static{U}  \leadsfrom  \static{T}  : \TypeType $,
 which uses hereditary substitution to compute the canonical form of $\static{T}$
while ensuring it checks against some $ \TypeType_{ \ottnt{i} } $.    

The rules for normalization (\autoref{fig:static-hsub}) directly mirror those for well-typed terms,
building up the canonical forms from sub-derivations.
In particular, the rule \rrule{SNormSynthVar}
$\eta$-expands any variables with function types, which allows us to assume that the function
in an application will always normalize to a $\lambda$-term.
(The rules for the eta expansion function $  \mathit{x}  \leadsto_\eta  \static{u} :  \static{U} $ are standard, so we omit them).
We utilize this assumption in \rrule{SNormSynthApp}, where the canonical form of an application is
computed using hereditary substitution.

\subsection{Hereditary Substitution}
\label{subsec:static-hsub} 
 
\begin{figure}
    \drules{$ \static{Gamma}  \vdash  \static{U}  \leadsfrom  \static{T}  : \TypeType $}{\ottcom{Type Normalization with Unknown Level (rules omitted)}}{} 
    \drules{$\static{Gamma}  \vdash  \static{t}  \leadsto  \static{u}  \Rightarrow  \static{U} \quad\mid\quad \static{Gamma}  \vdash  \static{u}  \leadsfrom\  \mathit{y}  \Leftarrow  \static{U}$}{\ottcom{Static Normalization}}{SNormSynthAnn, SNormSynthVar, SNormSynthApp}

    \ottdefnSHsub  
    \\ 

    \ottdefnSHsubR  

    \caption{\slang: Normalization (select rules) and Hereditary Substitution}
    \label{fig:static-hsub}  
\end{figure} 

Hereditary substitution is defined in \autoref{fig:static-hsub}.
At first glance, many of the rules look like a traditional substitution definition.
They traverse the expression looking for variables, and replace them with the corresponding term.

However, there are some key differences. 
Hereditary substitution has canonical forms as both inputs and outputs.
The key work takes place in the rule \rrule{SHsubRSpine}.
When replacing $\mathit{x}$ with $\static{u}_{{\mathrm{1}}}$ in $ \mathit{x}  \static{e} \  \static{u}_{{\mathrm{2}}}  $, find the substituted forms of $\static{u}_{{\mathrm{2}}}$ and $ \mathit{x} \static{e} $, which we call $\static{u}_{{\mathrm{3}}}$ and $\lambda  \mathit{y}  \ldotp  \static{u}'_{{\mathrm{1}}}$
respectively.
If the inputs are well-typed and $\eta$-long, the substitution of the spine will always return a $\lambda$-term, meaning that its application to $\static{u}_{{\mathrm{3}}}$
is not a canonical form.
To produce a canonical form in such a case, we continue substituting, recursively replacing $\mathit{y}$ with $\static{u}_{{\mathrm{3}}}$ in $\static{u}'_{{\mathrm{1}}}$\added{.}
A similar substitution in  the codomain of $\static{U}$ gives our result type.
Thus, if this process terminates, it will always produce a canonical form.

To ensure that the process does, in fact, terminate for well-typed inputs, we define hereditary substitution
in terms of the type of the variable being replaced. 
Since we are replacing a different variable in the premise \rrule{SHsubRSpine},
we must keep track of the type of the resultant expression when substituting in spines,
which is why substitution on atomic forms is a separate relation.
We order types by the
multiset of universes of all arrow types that are subterms of the type, similar to techniques used for Predicative System F~\citep{predSysF, eades2010hereditary}.
We can use the well-founded multiset ordering given by \citet{multisetOrder}:
if a type $\static{U}$ has maximum arrow type universe $\ottnt{i}$,
we say that it is greater than all other types containing fewer arrows at universe $\ottnt{i}$
whose maximum is not greater than $\ottnt{i}$. 
Predicativity ensures that, relative to this ordering, the return type of a function application
is always less than the type of the function itself. 
In all premises but the last two of \rrule{SHsubRSpine},
we recursively invoke substitution on strict subterms, while keeping the type of the variable the same.
In the remaining cases,  we perform substitution at a type
that is smaller by our multiset order.

\subsection{Properties of \slang}

Since \slang is mostly standard, it enjoys the standard properties of dependently-typed languages.
Hereditary substitution can be used to show that the language is strongly normalizing,
and thus consistent as a logic. Since the type rules, hereditary substitution, and normalization are syntax directed and terminating,
typechecking is decidable. Finally, because all well-typed terms have canonical forms, \slang is type \replaced{safe}{sound}.

\section{\lang: Abstracting the Static Language} 
\label{sec:graduallang}

We now present \lang, a gradual counterpart to \slang derived following the Abstracting Gradual Typing (AGT) methodology~\cite{agt}, extended to the setting of dependent types. 
\added{The key idea behind AGT is that gradual type systems can be designed by first specifying the \textit{meaning} of gradual types
in terms of sets of static types.
This meaning is given as a concretization function $\gamma$ that maps a gradual type to the set of static types that it represents,
and an abstraction function $\alpha$ that recovers the {\em most precise} gradual type that represents a given set of static types. In other words, $\gamma$ and $\alpha$ form a {\em Galois connection}.

Once the meaning of gradual types is clear, the typing rules and dynamic semantics for the gradual language can be derived systematically.}
\rev{ 
First, $\gamma$ and $\alpha$ allow us to lift the type predicates and type functions used in the static type system (such as equality, subtyping, join, etc.) to obtain their gradual counterparts. From these definitions, algorithmic characterizations can then be validated and implemented.
Second, the gradual type system is obtained from the static type system by using these lifted type predicates and functions.
Finally, the runtime semantics follow by proof reduction of the typing derivation, mirroring the type safety argument at runtime. In particular, typing derivations are augmented with pieces of {\em evidence} for consistent judgments, whose combination during reduction may be undefined, hence resulting in a runtime type error.
}

\added{In this work we follow the AGT methodology, specifying $\gamma$ and $\alpha$, then describing how the typing rules
are lifted to gradual types. In doing so, we uncover several points for which the standard AGT approach lacks the flexibility to accommodate full-spectrum dependent types with $\qm$ as a term.
We describe our extensions to (and deviations from) the AGT methodology, and how they allow us to fully support gradual dependent types.
}

Throughout this section, we assume that we have gradual versions of hereditary substitution and normalization. We leave  the detailed development of these notions to \autoref{sec:gradualhsub}, as they are non-trivial if one wants to preserve both decidable typechecking and the gradual guarantee (\autoref{subsec:gg-intro}). The dynamic semantics of \lang are presented in \autoref{sec:gradual-semantics}, and its metatheory in \autoref{sec:lang-properties}.

\subsection{Terms and Canonical Forms}
 
\subsubsection*{Syntax} 
\setbool{ShowArrowIndex}{true} 

\begin{figure}
    $\gradual{u},\gradual{U} \in \AllGTerms$\\
    \setbool{ShowEmptyDot}{true}
    \begin{minipage}[t]{0.33\textwidth} 
    \nonterms{gt} 
    \end{minipage}
    \rev{
    \begin{minipage}[t]{0.33\textwidth}
    \nonterms{canonical}
    \end{minipage} }
    \begin{minipage}[t]{0.3\textwidth}
    \nonterms{atomic,spine} 
    \end{minipage} 
    \setbool{ShowEmptyDot}{false}

    \caption{\lang: Terms and Canonical Forms} 
    \label{fig:syntax-gradual} 
    \end{figure}

The syntax of \lang (\autoref{fig:syntax-gradual}) is a simple extension
of \slang's syntax. We use $\gradualstyle{\text{blue, serif font}}{}{}$ to for metavariables denoting gradual terms, contexts, etc. 
In addition to constructs from \slang, \lang's syntax includes $ {\qm } $,
the unknown term/type.
This represents a type or term that is unknown to the programmer: by annotating a type with $ {\qm } $
or leaving a term as $ {\qm } $, they can allow their program to typecheck
with only partial typing information or an incomplete proof. Similarly, $ {\qm } $ is added to the syntax of canonical values.
 
\added{Additionally, arrow-types are annotated with a level $i$.} 
\rev{ The type $(x : \gradual{U}_1) \xrightarrow{i} \gradual{U}_2$ 
is well formed at type $\TypeType_i$,
and we have a special top level $\omega$ where $(x : \gradual{U}_1) \xrightarrow{\omega} \gradual{U}_2$ is well formed at type $?$.
These annotations are necessary for ensuring the
termination of hereditary substitution, but are inferred during normalization, and are never present in source programs.
We often omit these annotations,
as they clutter the presentation. }

Canonical forms do not contain ascriptions.
While statically-typed languages use ascriptions only for guiding typechecking, the potential for dynamic type failure
means that ascriptions have computational content in gradual typing.
Notably, only variables \added{or neutral applications} can synthesize $ {\qm } $ as a type, though
any typed expression can be checked against $ {\qm } $.
This allows us to reason about canonical forms at a given type:
while we can layer ascriptions on terms, such as $ \ottsym{(}   \mathtt{true} \dblcolon {\qm }   \ottsym{)} \dblcolon  {\qm } \to {\qm }  $,
the only canonical forms with function types are lambdas and $ {\qm } $.

\subsection{Concretization and Predicates}
\label{subsec:concretization}

\begin{figure}
    $\gamma :  \AllGTerms \to \Pow(\AllTerms) \setminus \emptyset$
\\
    \begin{minipage}{0.33\textwidth}
    \begin{align*} 
        &\gamma( {\qm } ) & = & \AllTerms\\ 
        &\gamma( \TypeType_{ \ottnt{i} } ) & = & \set{ \TypeType_{ \ottnt{i} } } \\
        &\gamma(\mathit{x}) & = & \set{x} 
    \end{align*}    
\end{minipage}
\begin{minipage}{0.6\textwidth}
    \begin{align*}
        &\gamma(\lambda  \mathit{x}  \ldotp  \gradual{u}) & = & \set{\lambda  \mathit{x}  \ldotp  \static{u}' \mid \static{u}' \in \gamma(\gradual{u})}\\
        &\gamma( \mathit{x}  \gradual{e} \  \gradual{u}  ) & = &  \set{ \mathit{x}  \static{e}' \  \static{u}'   \mid  \mathit{x} \static{e}'  \in \gamma( \mathit{x} \gradual{e} ), \static{u}' \in \gamma(\gradual{u})}\\
        &\gamma(\ottsym{(}  \mathit{x}  \ottsym{:}  \gradual{U}_{{\mathrm{1}}}  \ottsym{)}  \rightarrow  \gradual{U}_{{\mathrm{2}}}) &=& \set{\ottsym{(}  \mathit{x}  \ottsym{:}  \static{U}'_{{\mathrm{1}}}  \ottsym{)}  \rightarrow  \static{U}'_{{\mathrm{2}}} \mid \static{U}'_{{\mathrm{1}}} \in \gamma(\gradual{U}_{{\mathrm{1}}}), \static{U}'_{{\mathrm{2}}} \in \gamma(\gradual{U}_{{\mathrm{2}}})}
    \end{align*}    
\end{minipage}

    \ottdefnConsistent

    \caption{\lang: Concretization and Consistency}
    \label{fig:concretization} 
    \end{figure}

The main idea of AGT is that gradual types abstract sets of static types,
and that each gradual type can be made \textit{concrete} as a set of static types.
For our system, we simply extend this to say that gradual terms represent sets of static terms.
In a simply typed language, a static type embedded in the gradual language concretizes to the singleton
set containing itself. However, for terms, we wish to consider the entire $\abe$-equivalence
class of the static term. As with typechecking, this process is facilitated by considering only canonical forms.
The concretization function $\gamma : \AllGTerms \to \Pow(\AllTerms)$, defined in \autoref{fig:concretization},
recurs over sub-terms, with $ {\qm } $ mapping to the set of all terms.

Given the concretization, we can \textit{lift} a predicate from the static system to the gradual system.
A predicate holds for gradual types if it holds for some types in their concretizations.
For equality, this means that $ \gradual{U}  \cong  \gradual{U}' $ if and only if $\gamma(\gradual{U}) \cap \gamma(\gradual{U}') \neq \emptyset$.
We present a syntactic version of this in \autoref{fig:concretization}.
Concretization also gives us a notion of \textit{precision} on gradual types.
We say that $ \gradual{U} \sqsubseteq  \gradual{U}' $ if $\gamma(\gradual{U}) \subseteq \gamma(\gradual{U}')$: that is, $\gradual{U}$ is more precise
because there are fewer terms it could plausibly represent.
We can similarly define $ \gradual{U}  \sqcap  \gradual{U}'  $ as
the most general term that is as precise as both $\gradual{U}$ and $\gradual{U}'$.
Note that $ \gradual{U}  \sqcap  \gradual{U}' $ is defined if and only if $ \gradual{U}  \cong  \gradual{U}' $ i.e. if $\gamma(\gradual{U})\cap\gamma(\gradual{U}')\neq\emptyset$, \added{and like the consistency relation, it can be computed syntactically.}

\subsection{Functions and Abstraction}  
\label{subsec:abstraction}

\begin{figure}
        $\alpha : \Pow(\AllTerms) \setminus \emptyset \to \AllGTerms$

\begin{minipage}{0.95\textwidth}
    \begin{align*}
        \alpha(\set{ \mathit{x}  \static{e} \  \static{u}   \mid  \mathit{x} \static{e}  \in A, \static{u} \in B}) & = &&  \mathit{x}  \gradual{e}' \  \gradual{u}'   & \text{ where } \alpha(A) =  \mathit{x} \gradual{e}' , \alpha(B) = \gradual{u}'&\\
        \alpha(\set{\ottsym{(}  \mathit{x}  \ottsym{:}  \static{U}_{{\mathrm{1}}}  \ottsym{)}  \rightarrow  \static{U}_{{\mathrm{2}}} \mid \static{U}_{{\mathrm{1}}} \in A, \static{U}_{{\mathrm{2}}} \in B }) & = && \ottsym{(}  \mathit{x}  \ottsym{:}  \gradual{U}'_{{\mathrm{1}}}  \ottsym{)}  \rightarrow  \gradual{U}'_{{\mathrm{2}}}  & \text{ where } \alpha(A) = \gradual{U}'_{{\mathrm{1}}}, \alpha(B) = \gradual{U}'_{{\mathrm{2}}}&\\
        \alpha(\set{\lambda  \mathit{x}  \ldotp  \static{u} \mid \static{u} \in A} ) &= && \lambda  \mathit{x}  \ldotp  \gradual{u}' & \text{ where } \alpha(A) = \gradual{u}'&
    \end{align*} 
\end{minipage}

$\alpha(\set{ \TypeType_{ \ottnt{i} } })\ =\  \TypeType_{ \ottnt{i} }  \qquad\qquad\qquad \alpha(\set{\mathit{x}})\ =\ \mathit{x} \qquad\qquad\qquad \alpha(A)\  =\   {\qm }   \text{ otherwise } $

    \begin{minipage}{0.95\textwidth}
    \fbox{$\ottkw{dom} : \AllGTerms \pto \AllGTerms$}\qquad
    \fbox{$[\square / \_ ]^\square\ \ottkw{cod} \square : \AllGTerms^3 \pto \AllGTerms \quad$}
    \end{minipage}
    \drules{
        $[\square / \_ ]^\square\ \ottkw{body} \square : \AllGTerms^3 \pto \AllGTerms$ 
        }{\ottcom{Partial Functions}}{DomainPi,  CodSubPi,  BodySubPi, DomainDyn, CodSubDyn,  BodySubDyn}
    \caption{\lang: Abstraction and Lifted Functions}
    \label{fig:abstraction} 
    \end{figure} 

When typechecking a function application, we must handle the case where the function has type $ {\qm } $.
Since $ {\qm } $ is not an arrow type, the static version of the rule would fail in all such cases.
Instead, we extract the domain and codomain from the type using partial functions.
Statically, $ \ottkw{dom}\  \ottsym{(}  \mathit{x}  \ottsym{:}  \static{U}  \ottsym{)}  \rightarrow  \static{U}' =  \static{U} $, and is undefined otherwise.
But what should the domain of $ {\qm } $ be?

AGT gives a recipe for lifting such partial functions. To do so, we need the
counterpart to concretization: abstraction. The abstraction function $\alpha$
is defined in \autoref{fig:abstraction}.
 It takes a set of static terms,
and finds the most precise gradual term that is consistent with the entire set.
Now, we are able to take gradual terms to sets of static terms, then back to
gradual terms. It is easy to see that $\alpha(\gamma(\gradual{U})) = \gradual{U}$:
\added{they are normal forms describing the same equivalence classes}.
This lets us define our partial functions in terms of their static counterparts:
we concretize the inputs, apply the static function element-wise on all values of the concretization
for which the function is defined,
then abstract the output to obtain a gradual term as a result.

For example, the domain of a gradual term $\gradual{U}$ is
$\alpha(\set{\ottkw{dom}\ \static{U}' \mid \static{U}'\in\gamma(\gradual{U})})$,
which can be expressed algorithmically using the rules in \autoref{fig:abstraction}.
We define function-type codomains and lambda-term bodies similarly, though we pair
these operations with substitution to avoid creating a ``dummy'' bound  variable name for $ {\qm } $.

Taken together, $\alpha$ and $\gamma$ form a Galois connection, which ensures that
our derived type system is a conservative extension of the static system.

\subsection{Typing Rules} 

\begin{figure}
    \rev{
    \drules{$\gradual{Gamma}  \vdash  \gradual{t}  \Rightarrow  \gradual{U} \quad\mid\quad \gradual{Gamma}  \vdash  \gradual{t}  \Leftarrow  \gradual{U}$}{Well-Typed Gradual Terms}{GSynthAnn, GSynthType, GSynthVar, GSynthDyn, GSynthApp,  GCheckPi,   GCheckLamPi,  GCheckLamDyn, GCheckSynth, GCheckLevel}  
     \added{}   
    }
    \caption{\lang: Type Checking and Synthesis}
    \label{fig:typerules-gradual} 
    \end{figure}  

Given concretization and abstraction, AGT gives a recipe for converting a static type system into a gradual one, and we follow it closely. 
\Autoref{fig:typerules-gradual} gives the rules for typing.
Equalities implied by repeated metavariables have been replaced by consistency checks, such as in \rrule{GCheckSynth}.
Similarly, in \rrule{GCheckPi} we use the judgment $ \gradual{U}  \cong \TypeType $ to ensure that the given type is consistent to,
rather than equal to, some $ \TypeType_{ \ottnt{i} } $.
Rules that matched on the form of a synthesized type instead use partial functions, as we can see in
\rrule{GSynthApp}. We split the checking of functions into \rrule{GCheckLamPi} and \rrule{GCCheckLamDyn} for clarity,
but the rules are equivalent to a single rule using partial functions.
\rev{In \rrule{GSynthAnn}, the  judgment $\gradual{Gamma}  \vdash  \gradual{U}  \leadsfrom  \gradual{T}  : \TypeType_{\Rightarrow  \ottnt{i} }$ \added{denotes \textit{level synthesis}}, where we normalize $\gradual{U}$ while inferring at what universe level it resides.}

\added{We note that while $\gamma$ and $\alpha$ are crucial for deriving the definitions of gradual operations,
the operations can be implemented algorithmically as syntactic checks; an implementation does not need to compute $\gamma$ or $\alpha$.}
\rev{
Also, because $\gamma(x)=\{x\}$ for any variable $x$, consistency, precision and meet are all well-defined on open terms.
Consistency corresponds to the gradual lifting of \textit{definitional equality}: $\gradual{u}_1 \cong \gradual{u}_2$
if and only if there is some $\static{u}'_1 \in \gamma(\gradual{u}_1)$ and $\static{u}'_2 \in \gamma(\gradual{u}_2)$
where $\static{u}'_1 =_\abe \static{u}'_2 $.
This reflects our intensional approach: 
functions are consistent if their bodies are 
consistent.
}

We wish to allow the unknown term $ {\qm } $ to replace any term in a program.
But what should its type be? By the AGT philosophy, $ {\qm } $ represents all terms, so it should synthesize the abstraction of all inhabited types,
which is $ {\qm } $. We encode this in the rule \rrule{GSynthDyn}.
This means that we can use the unknown term in any context.

 As with the static system, we represent types in canonical form,
 which makes consistency checking easy.
 Well-formedness rules (omitted) are derived from the static system
 in the same way as the gradual type rules.
 Additionally, the gradual type rules rely on the \textit{gradual normalization} judgments,
 $\gradual{Gamma}  \vdash  \gradual{t}  \leadsto  \gradual{u}  \Rightarrow  \gradual{U}$ and $\gradual{Gamma}  \vdash  \gradual{u}  \leadsfrom\  \gradual{t}  \Leftarrow  \gradual{U}$,
 which we explain in \autoref{subsec:gradualnorm}.

\subsection{Example: Typechecking $\mathtt{head}$ of $\mathtt{nil}$}
\label{subsec:type-headnil}

To illustrate how the \lang type system works,
we explain the typechecking of one example from the introduction.
Suppose we have types for natural numbers and vectors,
and a derivation for \\
$\gradual{Gamma} \vdash \mathtt{head} \ :\  \ottsym{(}  \mathtt{A}  \ottsym{:}   \TypeType_{ \ottsym{1} }   \ottsym{)}  \rightarrow  \ottsym{(}  \mathtt{n}  \ottsym{:}   \ottkw{Nat}   \ottsym{)}  \rightarrow      \ottkw{Vec}  \  \mathtt{A}  \  \ottsym{(}   \mathtt{n}  + 1   \ottsym{)}  \to \mathtt{A} $.
In \autoref{fig:vec-deriv}, we show the (partial) derivation of $\gradual{Gamma} \vdash     \mathtt{head} \   \ottkw{Nat}   \   0   \  \ottsym{(}     \ottsym{(}    \ottkw{Nil}  \   \ottkw{Nat}    \ottsym{)} \dblcolon  \ottkw{Vec}   \   \ottkw{Nat}   \  {\qm }   \ottsym{)}    \Rightarrow   \ottkw{Nat} $.

The key detail here is that the compile-time consistency check lets us compare $ 0 $ to $ {\qm } $,
and then $ {\qm } $ to $ 1 $, which allows the example to typecheck.
Notice how we only check consistency when we switch from checking to synthesis.
While this code typechecks, it fails at runtime. We step through its execution in \autoref{subsec:nilhead-run}.

\section{Approximate Normalization} 
\label{sec:gradualhsub}

\begin{figure}[t!]
    \tiny
\begin{mathpar}
\inferrule*[Right = \tiny\rrule{GSynthApp}]{
    \inferrule*{
        \vdots
    }
    {
        \gradual{Gamma} \vdash    \mathtt{head} \,   \ottkw{Nat}   \,   0   \!  \Rightarrow  \!    \ottkw{Vec}  \,   \ottkw{Nat}   \,   1  \! \to \! \ottkw{Nat}  
    } \\
    \inferrule*[Right = \tiny\rrule{GCheckSynth}]{
        \inferrule*[Left = \tiny\rrule{GSynthAnn}]{
            \inferrule*[Left = \tiny\rrule{GCheckSynth}]{
                \inferrule*{
                    \vdots
                }
                {
                    \gradual{Gamma} \vdash    \ottkw{Nil}  \,   \ottkw{Nat}     \Rightarrow     \ottkw{Vec}  \,   \ottkw{Nat}   \,   0  
                }
                \\
                \inferrule*{\vdots}{
            \gbox{   \ottkw{Vec}  \,   \ottkw{Nat}   \,   0    \cong     \ottkw{Vec}  \,   \ottkw{Nat}   \,  {\qm } }
        }
            }
            {
                \gradual{Gamma} \vdash    \ottkw{Nil}  \,   \ottkw{Nat}     \Leftarrow     \ottkw{Vec}  \,   \ottkw{Nat}   \,  {\qm } 
            }
            \\
            \vdots
        }
        {
            \gradual{Gamma} \vdash     \ottsym{(}    \ottkw{Nil}  \,   \ottkw{Nat}    \ottsym{)} \dblcolon  \ottkw{Vec}   \,   \ottkw{Nat}   \,  {\qm }    \! \Rightarrow \!    \ottkw{Vec}  \,   \ottkw{Nat}   \,  {\qm } 
        }
        \\
        \inferrule*{\vdots}{
            \gbox{   \ottkw{Vec}  \,   \ottkw{Nat}   \,  {\qm } \!   \cong  \!   \ottkw{Vec}  \,   \ottkw{Nat}   \,   1  } 
        }
    }
    {
        \gradual{Gamma} \vdash     \ottsym{(}    \ottkw{Nil}  \,   \ottkw{Nat}    \ottsym{)} \dblcolon  \ottkw{Vec}   \,   \ottkw{Nat}   \,  {\qm }    \Leftarrow     \ottkw{Vec}  \,   \ottkw{Nat}   \,   1  
    }
}
{
    \gradual{Gamma} \vdash     \mathtt{head} \,   \ottkw{Nat}   \,   0   \,  \ottsym{(}     \ottsym{(}    \ottkw{Nil}  \,   \ottkw{Nat}    \ottsym{)} \dblcolon  \ottkw{Vec}   \,   \ottkw{Nat}   \,  {\qm }   \ottsym{)}    \Rightarrow   \ottkw{Nat} 
}
\end{mathpar}
\caption{Type Derivation for head of nil}
\label{fig:vec-deriv}
\end{figure}

In the previous example, normalization was used to compute the type of $  \mathtt{head} \   \ottkw{Nat}   \   0  $, replacing $\mathtt{n}$ with $ 0 $ in the type of $\mathtt{head}$,
normalizing $  0   + 1 $ to $ 1 $. This computation is trivial,
but not all are. 
As we saw in \autoref{subsec:factorial}, the type-term overlap in \lang
means that code that is run during typechecking may fail or diverge.  

A potential solution would be to disallow imprecisely typed code
in type indices. However, this approach breaks the criteria for a gradually-typed language.
In particular, it would result in a language that
violates the static gradual guarantee (\autoref{subsec:gg-intro}).
The static guarantee implies that if a program does not typecheck, the programmer knows that the problem
is not the absence of type precision, but that the types present are fundamentally wrong.
Increasing precision in multiple places will never cause a program to typecheck if doing so in one place fails.

In this section, we present two versions of gradual substitution.
First, we provide \textit{ideal substitution}, which is well defined on all terms,
but for which equality is undecidable.
Second, we describe \textit{approximate hereditary substitution},
which regains decidability while preserving the gradual guarantee,
by producing compile-time canonical forms that are potentially less precise than their runtime counterparts.
Thus, we trade precision for a termination guarantee.
From this, we build \textit{approximate normalization}, which uses hereditary substitution
to avoid non-termination, and avoids dynamic failures by normalizing
certain imprecise terms to $ {\qm } $.

A key insight of this work is that we need separate notions of 
\textit{compile-time normalization} and \textit{run-time execution}.
That is, we use approximate hereditary substitution \textit{only}
in our types. Executing our programs at run-time will not lose information,
but it may diverge or fail.

For typechecking, the effect of this substitution is that
non-equal terms of the unknown type may be indistinguishable at compile-time.
Returning to the example from \autoref{subsec:factorial}, the user's faulty factorial-length vector
will typecheck, but at type $   \ottkw{Vec}  \   \ottkw{Nat}   \  {\qm } $. Using it
will never raise a static error due to its length, but
it may raise a runtime error.

\subsection{Ideal Substitution}

Here, we present a definition of gradual substitution for $\abe$-equivalence classes of terms.
While comparing equivalence classes is undecidable, we will use ideal substitution as the theoretical foundation,
showing that our approximate substitution produces the same results as ideal substitution, save for
some loss of precision. 

The main difficulty with lifting the definition of hereditary substitution is that
the set of terms with a canonical form is only closed under hereditary substitution when we assume a static type discipline.
The terms $ \mathit{y} \  \mathit{y} $ and $ \lambda  \mathit{x}  \ldotp  \mathit{x} \  \mathit{x} $ are both syntactically canonical, but if we substitute the second in for $\mathit{y}$,
there is no normal form. However, both of these terms can be typed in our gradual system.
How can  $[ \ottsym{(}   \lambda  \mathit{x}  \ldotp  \mathit{x} \  \mathit{x}   \ottsym{)}/\mathit{y} ]^{ {\qm } }  \mathit{y} \  \mathit{y}   $ be defined?

If we apply the AGT lifting recipe to hereditary substitution, we get a function that may not
have a defined output for all gradually well-typed canonical inputs. Even worse is that determining whether
substitution is defined for an input is undecidable.
By AGT's formulation, $[ \gradual{u}/\mathit{x} ]^{ {\qm } } \gradual{u}' $ would be 
$\alpha(\set{ [ \static{u}_{{\mathrm{1}}} / x]^{\static{U}} \static{u}_{{\mathrm{2}}} \mid \static{u}_{{\mathrm{1}}} \in \gamma(\gradual{u}), \static{u}_{{\mathrm{2}}} \in \gamma(\gradual{u}'), \static{U}\in\AllTerms })$.
To compute $\alpha$, we must know which of the concretized results are defined,
i.e. find all pairs in $\gamma(\gradual{u})\times\gamma(\gradual{u}')$
for which there exists some $\static{U}$ on which static hereditary substitution is defined.
This means determining if there is any finite number of substitutions
under which the substitution on a (possibly dynamically-typed) term is defined, which requires solving the Halting Problem.

Recall that we introduced canonical forms in \autoref{subsec:static-canonical} to uniquely represent
$\abe$-equivalence classes.
While canonical forms are not closed under substitutions, equivalence classes are.
Going back to our initial example, what we really want is for
$[ \ottsym{(}   \lambda  \mathit{x}  \ldotp  \mathit{x} \  \mathit{x}   \ottsym{)}/\mathit{y} ]^{ {\qm } }  \mathit{y} \  \mathit{y} $  to be  $(\ottsym{(}   \lambda  \mathit{x}  \ldotp  \mathit{x} \  \mathit{x}   \ottsym{)}\ottsym{(}   \lambda  \mathit{x}  \ldotp  \mathit{x} \  \mathit{x}   \ottsym{)})^\abe $,
i.e. the set of all terms $\abe$-equivalent to $\Omega$.

Thus we define ideal substitution on $\abe$-equivalence classes themselves.
For this, we do not need hereditary substitution: if $s \in s^\abe$ and $t\in t^\abe$ are terms with their
respective equivalence classes, the substitution $[x \Mapsto s^\abe]t^\abe$ is simply the equivalence class of
$[x \Mapsto s]t$. 
We now have a total operation from equivalence classes to equivalence classes.
These classes may have no canonical representative, but the function is defined regardless.
If we extend  concretization and abstraction to be defined on equivalence classes, this gives
us the definition of ideal substitution:

$$
    [ \mathit{x} \Mapsto \gradual{t}_{{\mathrm{1}}}^\abe ] \gradual{t}_{{\mathrm{2}}}^\abe = \alpha(\{  [ \mathit{x} \Mapsto \static{t}'_{{\mathrm{1}}} ] \static{t}'_{{\mathrm{2}}} \mid  
      \static{t}'_{{\mathrm{1}}}  \in \gamma(\gradual{t}_{{\mathrm{1}}}), \static{t}'_{{\mathrm{2}}}  \in \gamma(\gradual{t}_{{\mathrm{2}}}) \} )  
$$

That is, we find the concretization of the gradual equivalence classes, which are sets of static equivalence classes.
We then substitute in each combination of these by taking the substitution of
a representative element, and abstract over
this set to obtain a single gradual equivalence class.

\subsection{Approximate Substitution}  
 
\rev{
\begin{figure}
        
    \ottdefnGHsubR         
     
    \caption{\lang: \rev{Approximate Substitution (select rules)}}  
    \label{fig:gradual-hsub}  
    \end{figure}  
}

With a well-defined but undecidable substitution, we now turn to the problem
of how to recover decidable comparison for equivalence classes, without losing the gradual guarantees.
We again turn to (gradual) canonical forms
as representatives of $\abe$-equivalence classes.
What happens when we try to construct a hereditary substitution function syntactically,
as in \slang? 

The problem is in adapting \rrule{SHsubRSpine}.
Suppose we are substituting $\gradual{u}$ for $\mathit{x}$ in $ \mathit{x}  \gradual{e} \  \gradual{u}_{{\mathrm{2}}}  $, 
and the result of substituting in $ \mathit{x} \gradual{e} $ is $\ottsym{(}  \lambda  \mathit{y}  \ldotp  \gradual{u}'  \ottsym{)} :  {\qm } $.
Following the AGT approach, we can use the $\ottkw{dom}$ function
to calculate the domain of $ {\qm } $, which is the type at which we substitute $\mathit{y}$.
But this violates the well-foundedness condition we imposed in the static case!
Since the domain of $ {\qm } $ is $ {\qm } $, eliminating redexes may infinitely apply substitutions
without decreasing the size of the type.

In all other cases, we have no problem, since the term we are substituting into is structurally decreasing.
So, while equivalence classes give us our ideal, theoretical definition,
hereditary substitution provides us with the exact cases we must approximate in order to preserve decidability.
To guarantee termination, we must not perform recursive substitutions in spines with $ {\qm } $-typed heads.

There are two apparent options for how to proceed without making recursive calls: we either fail when we try to apply a $ {\qm } $-typed
function, or we return $ {\qm } $.
The former will preserve termination, but it will not preserve the static gradual guarantee.
Reducing the precision of a well-typed program's ascriptions should never yield ill-typed code.
If applying a dynamically-typed function caused failure, then changing an ascription to $ {\qm } $ could cause
a previously successful program to crash, violating the guarantee.

Our solution is to produce $ {\qm } $ when applying a function of type $ {\qm } $.
We highlight the changes to hereditary substitution in \autoref{fig:gradual-hsub}.
\rrule{GHsubRDynType} accounts for $ {\qm } $-typed functions, and \rrule{GHsubRDynSpine}
accounts for $ {\qm } $ applied as a function.

\added{We must add one more check to guarantee termination, because $\qm : \qm$ could be used to  
circumvent the universe hierarchy. For instance,}
\rev{we can assign $(x : \qm) \to (x\ \TypeType_{99})$ the type $\TypeType_1$, and
 we can even write a version Girard's Paradox~\citep{girard1972interpretation,coquand:inria-00076023} by using $\qm$ in place of $\TypeType$.
Because of this, \rrule{GHsubRLamSpine} manually checks our decreasing metric.}

\added{Concretely,  $i \prec \omega$ for every $i$, and }
\rev{
 $\gradual{U} \prec \gradual{U}'$ when the multiset of annotations on arrow types in $\gradual{U}$
is less than that of $\gradual{U}'$ by the well-founded multiset ordering given by \citet{multisetOrder}.
In the static case, the type of substitution is always decreasing for this metric.  In the presence of $\qm$,
we must check if the order is violated and return $\qm$ if it is, as seen in the rule \rrule{GHsubRLamSpineOrd}.
Unlike applying a function of type $\qm$, we believe that this case is unlikely to arise in practice
unless programmers are deliberately using $\qm$ to circumvent the universe hierarchy.}

\subsection{Approximate Normalization}  
\label{subsec:gradualnorm}  

\begin{figure}
    \rev{
    \drules{$\gradual{Gamma}  \vdash  \gradual{t}  \leadsto  \gradual{u}  \Rightarrow  \gradual{U} \quad\mid\quad \gradual{Gamma}  \vdash  \gradual{u}  \leadsfrom\  \gradual{t}  \Leftarrow  \gradual{U}$}{\ottcom{Approximate Normalization}}{
        GNSynthApp, GNCheckSynth, GNCheckApprox, GNCheckPiType, GNCheckPiDyn  }        
    }
    \caption{\lang: Approximate Normalization (select rules) }
    \label{fig:typing-norm-gradual}  
    \end{figure}

    While approximate hereditary substitution eliminates non-termination, we must still account for dynamic failures.
    We do so with \textit{approximate normalization} (\autoref{fig:typing-norm-gradual}).
    
    To see the issue, consider that we can type the term 
    $\ottsym{(}   \lambda  \mathtt{A}  \ldotp  \ottsym{(}    0  \dblcolon {\qm }   \ottsym{)} \dblcolon \mathtt{A}   \ottsym{)}$ 
    against $\ottsym{(}  \mathtt{A}  \ottsym{:}   \TypeType_{ \ottsym{1} }   \ottsym{)}  \rightarrow   \mathtt{A} $.
    However, there are no ascriptions in canonical form,
    since ascriptions can induce casts, which are a form of computation. 
    The term $\ottsym{(}  \lambda  \mathtt{A}  \ldotp   0   \ottsym{)}$ certainly does not type against 
    against $\ottsym{(}  \mathtt{A}  \ottsym{:}   \TypeType_{ \ottsym{1} }   \ottsym{)}  \rightarrow   \mathtt{A} $, since $\ottsym{0}$ will not check against the type variable $\mathtt{A}$.
    However, if we were to raise a type error, we would never be able to
    apply a function to the above term.
    In the context $\ottsym{(}  \mathtt{A}  \ottsym{:}   \TypeType_{ \ottsym{1} }   \ottsym{)}  \cdot$, the only canonical term with type $\mathtt{A}$ is $ {\qm } $.
    \added{For the term to have a} well-typed normal form, its body must be $ {\qm } $. 

    More broadly, normalization does not preserve \textit{synthesis} of typing, only checking.
    In the rule \rrule{GNCheckSynth}, if $\gradual{Gamma}  \vdash  \gradual{t} \leadsto \gradual{u}  \Rightarrow  \gradual{U}'$, then the normal form of $\gradual{t}$ might check against $\gradual{U}$,
    but it won't necessarily synthesize $\gradual{U}$ (or any type). We need to construct a canonical form $\gradual{u}$ for $\gradual{t}$ at type $\gradual{U}$,
    assuming we have some normal form $\gradual{u}'$ for $\gradual{t}$ at type $\gradual{U}'$.
    If $ \gradual{U} \sqsubseteq  \gradual{U}' $, $\gradual{u}'$ will check against $\gradual{U}$. Otherwise, the only value we can be sure will
    check against $\gradual{U}$ is $ {\qm } $, which checks against any type.
    We formalize this using a pair of rules: \rrule{GNCheckSynth} normalizes fully when we can do so in a type-safe way,
    and \rrule{GNCheckApprox} produces $ {\qm } $ as an approximate result in all other cases.
    
    Gradual typing must also treat $\eta$-expansion carefully. 
    We $\eta$-expand all variables in \rrule{GNSynthVar} (see appendix), but in \rrule{GNSynthApp},
    we may be applying a function of type $ {\qm } $. However, a variable $\mathit{x}$ of type $ {\qm } $ is $\eta$-long.
    Since we are essentially treating a value of type $ {\qm } $ as type $ {\qm } \to {\qm } $,
    we must expand $\mathit{x}$ to $\ottsym{(}   \lambda  \mathit{y}  \ldotp  \mathit{x} \  \mathit{y}   \ottsym{)}$.
    We do this in \rrule{GNSynthVar} with an extra $\eta$-expansion at type $ {\qm } \to {\qm } $,
    which expands a $ {\qm } $-typed term one level, and has no effect on a canonical form with a function type.

    \added{Normalization is also where we generate the annotations necessary for ensuring the decreasing metric of
    hereditary substitution. As we see in the rules} 
    \rev{ \rrule{GNCheckPiType} and \rrule{GNCheckPiDyn}, we annotate arrows 
    either with the level against which they are checked, or with $\omega$ when checking against $\qm$.}
    The remaining rules for normalization (omitted) directly mirror the rules from \autoref{fig:typerules-gradual}.
    $ \TypeType_{ \ottnt{i} } $, $ {\qm } $, and variables all normalize to themselves, and all other rules simply construct normal forms
    from the normal forms of their subterms.

    Some of the difficulty with normalization arises because function arguments are normalized before being substituted.
    One could imagine a language that normalizes after substituting function arguments,
    and typechecking fails if a dynamic error is encountered during normalization.
    Here, normalization could fail, but only on terms that had truly ill-formed types, since unused failing values would be discarded. 
    We leave the development of such a language to future work.

\subsection{Properties of Approximate Normalization}

\subsubsection*{Relationship to the Ideal}

If we expand our definition of concretization to apply to equivalence classes of terms, 
gradual precision gives us a formal relationship between ideal and approximate normalization:

\begin{restatable}[Normalization Approximates the Ideal]{theorem}{norm-precision}
For any $\gradual{Gamma},\gradual{t},\gradual{U}$,
if $\gradual{Gamma}  \vdash  \gradual{t}  \Leftarrow  \gradual{U}$, then 
$\gradual{Gamma}  \vdash  \gradual{u}  \leadsfrom\  \gradual{t}  \Leftarrow  \gradual{U}$ for some $\gradual{u}$,
and $ \gradual{t}^\abe \sqsubseteq \gradual{u}  $.
\end{restatable}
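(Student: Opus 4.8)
The plan is to prove both halves of the statement---totality of approximate normalization (existence of $\gradual{u}$) and the precision inequality $\gradual{t}^\abe \sqsubseteq \gradual{u}$---together, with the application cases discharged by an auxiliary lemma about hereditary substitution. The crucial conceptual observation, which I would isolate up front, is that \emph{every place where approximate normalization or approximate hereditary substitution deviates from the ideal, it does so by emitting $\qm$}. Since $\qm$ is the top of the precision order---$\gradual{v}^\abe \sqsubseteq \qm$ for every $\gradual{v}$, because $\gamma(\qm) = \AllTerms$---each approximation step moves \emph{upward} in precision, so the ideal equivalence class always sits below the approximate canonical form. This reduces the precision half to a monotonicity argument: in the exact cases the two results agree up to the inductive hypotheses on subterms, while in the approximate cases the inequality is immediate.

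First I would establish the substitution lemma. Adapting the static termination argument, approximate hereditary substitution is total on well-typed canonical inputs: the multiset ordering of \citet{multisetOrder} on arrow annotations strictly decreases in exactly the premises it did statically, and the two situations that would otherwise break well-foundedness---substituting through a $\qm$-typed head (\rrule{GHsubRDynType}, \rrule{GHsubRDynSpine}) and a metric violation (\rrule{GHsubRLamSpineOrd})---make no recursive call, returning $\qm$ instead. Simultaneously I would show the result approximates ideal substitution: if $[\gradual{u}_1/x]^{\gradual{U}}\gradual{u}_2 = \gradual{u}_3$ then $([\gradual{u}_1/x]\gradual{u}_2)^\abe \sqsubseteq \gradual{u}_3$, by induction on the substitution derivation, using monotonicity of the term constructors and of the lifted functions $\ottkw{dom}$, $\ottkw{cod}$, and $\ottkw{body}$ in the structural cases and the ``$\qm$-is-top'' observation in the approximate cases.

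With the substitution lemma in hand, I would prove the theorem by induction on the derivation of $\gradual{Gamma} \vdash \gradual{t} \Leftarrow \gradual{U}$, reading off the matching normalization rule. Totality propagates structurally, invoking the substitution lemma (and its totality) in \rrule{GNSynthApp}; the $\eta$-expansion at $\qm \to \qm$ in \rrule{GNSynthVar} must be checked not to disturb precision, which it does not since it only unfolds a $\qm$-typed term one level. The precision inequality propagates by constructor monotonicity in the structural cases. The one genuinely gradual-specific case is the checking-to-synthesis transition: \rrule{GNCheckSynth} retains the synthesized normal form $\gradual{u}'$ only when $\gradual{U} \sqsubseteq \gradual{U}'$, and otherwise \rrule{GNCheckApprox} emits $\qm$. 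When $\gradual{u}'$ is kept, its ideal lies below it by the inductive hypothesis; when $\qm$ is emitted, the inequality is trivial.

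The main obstacle will be the application case, where totality and precision must be maintained together against the hereditary-substitution recursion: the result type comes from substituting into the codomain and the result term from substituting into the function body, so I must argue both that the multiset metric genuinely decreases (keeping the outer induction well-founded) and that the ideal substitution, which operates on whole $\abe$-equivalence classes, lines up under $\gamma$ with the canonical-form substitution, so that the substitution lemma's precision conclusion can be threaded through. Reconciling the equivalence-class view of the ideal with the canonical-form view of the approximate operation, while keeping the structural and metric inductions compatible, is where the care is required.
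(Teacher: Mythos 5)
Your proposal is correct and takes essentially the same route as the paper: the paper's totality argument is likewise a hereditary-substitution lemma whose well-foundedness rests on the multiset metric, with the rules that emit the unknown term making no recursive call, and its justification of the precision half is exactly your key observation---approximate normalization either agrees with the ideal or produces the unknown term, which is less precise than every other term. The reconciliation you flag between the equivalence-class view and the canonical-form view is handled in the paper by the same device you invoke, namely extending concretization (and hence precision) to $\abe$-equivalence classes.
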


\noindent Intuitively, this holds because approximate normalization for a term either matches the ideal,
or produces $ {\qm } $, which is less precise than every other term.

\subsubsection*{Preservation of Typing}
 
To prove type \replaced{safety}{soundness} for \lang, a key property of
normalization is that it preserves typing.
This property relies on the fact that hereditary substitution preserves typing,
which can be shown using a technique similar to that of \citet{churchCurry}.

\begin{restatable}[Normalization preserves typing]{theorem}{typeNorm} 
    \label{lem:typeSub}
        If $\gradual{Gamma}  \vdash  \gradual{u}  \leadsfrom\  \gradual{t}  \Leftarrow  \gradual{U}$, then $\gradual{Gamma}  \vdash  \gradual{u}  \Leftarrow  \gradual{U}$.
\end{restatable}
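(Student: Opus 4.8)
The plan is to prove a strengthened statement by mutual induction over the two approximate-normalization judgments, since the checking judgment $\gradual{Gamma} \vdash \gradual{u} \leadsfrom\ \gradual{t} \Leftarrow \gradual{U}$ and the synthesis judgment $\gradual{Gamma} \vdash \gradual{t} \leadsto \gradual{u} \Rightarrow \gradual{U}$ are defined in terms of one another. Concretely I would prove simultaneously that (i) if $\gradual{Gamma} \vdash \gradual{t} \leadsto \gradual{u} \Rightarrow \gradual{U}$ then $\gradual{Gamma} \vdash \gradual{u} \Leftarrow \gradual{U}$, and (ii) if $\gradual{Gamma} \vdash \gradual{u} \leadsfrom\ \gradual{t} \Leftarrow \gradual{U}$ then $\gradual{Gamma} \vdash \gradual{u} \Leftarrow \gradual{U}$, the second of which is the theorem. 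To reconstruct the bidirectional typing derivations I would additionally carry the mode information each case needs, namely that synthesis-normalizing a neutral (variable-headed) term yields a canonical form that \emph{synthesizes} the same type, whereas synthesis-normalizing a function-typed variable yields its $\eta$-expansion, a $\lambda$ that only checks; this is routine bookkeeping for bidirectional canonical-forms systems. The induction is on the normalization derivation, and because each normalization rule was deliberately built to mirror a typing rule of \autoref{fig:typerules-gradual}, the bulk of the cases are discharged by applying the induction hypotheses to subderivations and reassembling the matching typing rule.

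Three groups of cases deserve explicit attention. First, the approximation rule \rrule{GNCheckApprox} outputs $\qm$, and $\gradual{Gamma} \vdash \qm \Leftarrow \gradual{U}$ holds for every $\gradual{U}$ because $\qm$ synthesizes $\qm$ by \rrule{GSynthDyn} and $\qm \cong \gradual{U}$ always, so \rrule{GCheckSynth} applies. Second, \rrule{GNCheckSynth} turns a synthesis normal form of type $\gradual{U}'$ into a checked one at $\gradual{U}$ under the side condition $\gradual{U} \sqsubseteq \gradual{U}'$; here the synthesis form is atomic, so the induction hypothesis gives $\gradual{Gamma} \vdash \gradual{u} \Rightarrow \gradual{U}'$, and since precision implies consistency (with $\gamma(\gradual{U})$ nonempty, $\gamma(\gradual{U}) \subseteq \gamma(\gradual{U}')$ forces $\gamma(\gradual{U}) \cap \gamma(\gradual{U}') \neq \emptyset$), \rrule{GCheckSynth} yields $\gradual{Gamma} \vdash \gradual{u} \Leftarrow \gradual{U}$. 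Third, the function-type rules \rrule{GNCheckPiType} and \rrule{GNCheckPiDyn} attach a level $i$ (respectively $\omega$) to the arrow; reassembling \rrule{GCheckPi} requires only that the recorded annotation match the level at which domain and codomain were checked, or be $\omega$ when checking against $\qm$, which the rules guarantee by construction.

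The crux is the application rule \rrule{GNSynthApp}, together with the $\eta$-expansion performed by \rrule{GNSynthVar}. \rrule{GNSynthApp} computes the normal form of an application by (approximate) hereditary substitution of the normalized argument into the normalized function body, and its type by substitution into the codomain; the head is first $\eta$-expanded to a $\lambda$ so that substitution applies. To type the result I need the central supporting lemma that approximate hereditary substitution preserves typing: substituting a well-typed canonical argument of type $\gradual{U}_1$ for $\mathit{x}$ into a canonical term well-typed under $\mathit{x} : \gradual{U}_1$ produces a canonical term well-typed at the correspondingly substituted type. Given that lemma, I invert the checking derivation of the $\lambda$-headed body (a $\lambda$ checking against $(y:\gradual{U}_a)\to\gradual{U}_b$ gives its body checking against $\gradual{U}_b$), then apply substitution to obtain a form checking against the substituted codomain. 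The approximation subcases \rrule{GHsubRDynType}, \rrule{GHsubRDynSpine}, and \rrule{GHsubRLamSpineOrd} all emit $\qm$, so preservation there is immediate by the same ``$\qm$ checks against anything'' observation, and I would also use the companion fact that $\eta$-expansion (including the extra one-level expansion at $\qm \to \qm$ in \rrule{GNSynthVar}) yields well-typed canonical forms of the expected type.

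The main obstacle is precisely this hereditary-substitution lemma, which I would prove separately by the $\eta$-long, Church-style argument of \citet{churchCurry}, by induction on the substitution derivation whose well-foundedness is supplied by the multiset metric on arrow annotations from \autoref{subsec:static-hsub} and \autoref{sec:gradualhsub}. Its delicate case is the spine rule (the gradual analogue of \rrule{SHsubRSpine}): after substituting through a spine we obtain a $\lambda$ whose body must be substituted into again at a type strictly smaller in the multiset order, and the type of that recursive result must be matched against the type obtained by substituting into the codomain. Lining up the invariants---that the type at which the recursive substitution occurs is exactly the domain and codomain produced by the lifted $\ottkw{dom}$ and $\ottkw{cod}$ partial functions, and that the approximation to $\qm$ only ever loses precision rather than well-typedness---is the real work, and it is exactly what the appeal to the Adams technique and the decreasing metric are there to manage.
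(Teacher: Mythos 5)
Your proof follows the same architecture as the paper's: a mutual induction over the synthesis and checking normalization judgments, with the real work delegated to an Adams-style lemma that approximate hereditary substitution preserves typing, proved by induction on the substitution derivation with well-foundedness supplied by the arrow-annotation multiset metric. Your handling of \rrule{GNCheckApprox}, the Pi-type rules, and \rrule{GNSynthApp} (invert the checking derivation of the $\lambda$-headed normal form, then apply the substitution lemma) is exactly the paper's.

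However, your \rrule{GNCheckSynth} case contains a genuine error. You assert that ``the synthesis form is atomic, so the induction hypothesis gives $\gradual{Gamma} \vdash \gradual{u} \Rightarrow \gradual{U}'$''. This is false: the premise of \rrule{GNCheckSynth} is a synthesis \emph{normalization} judgment, and the canonical form it produces need be neither atomic nor capable of synthesizing a type. An ascription synthesizes its type via \rrule{GSynthAnn}, yet its normal form is the normal form of the underlying term, which may be a $\lambda$; an application can normalize to a $\lambda$ (or to ${\qm}$) under hereditary substitution; and \rrule{GNSynthVar} $\eta$-expands function-typed variables into $\lambda$s. Since $\lambda$s never synthesize in this bidirectional canonical-forms system, in all of these cases you have no synthesis judgment to feed into \rrule{GCheckSynth}; the mutual induction hypothesis only tells you that $\gradual{u}$ \emph{checks} against the synthesized type $\gradual{U}'$, and you still must transport that checking derivation across the precision side condition to the checking type $\gradual{U}$. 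That transport is monotonicity of typing with respect to precision, i.e., the static gradual guarantee restricted to canonical forms, and it is precisely what the paper's proof invokes in this case: since $\gradual{U}' \sqsubseteq \gradual{U}$, the static gradual guarantee gives that the normal form checks against $\gradual{U}$. This is not a cosmetic omission---that monotonicity result is substantial (its proof runs through the normalization gradual guarantee, Lemma~\ref{thm:normGuarantee})---and without it, or an equivalent lemma stating that decreasing the precision of the type against which a canonical form is checked preserves checkability, your argument for this case does not go through.
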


\subsubsection*{Normalization as a Total Function}  

Since we have defined substitution and normalization using inference rules, 
they are technically relations rather than functions.
Since the rules are syntax directed in terms of their inputs,
it is easy to show that there is at most one result for every set of inputs.
As we discussed above, the approximation in \rrule{GNCheckApprox} makes normalization
total.

\begin{restatable}[Normalization is Total]{theorem}{normTotal}
    \label{lem:normTotal}
    If $\gradual{Gamma}  \vdash  \gradual{t}  \Leftarrow  \gradual{U}$, then $\gradual{Gamma}  \vdash  \gradual{u}  \leadsfrom\  \gradual{t}  \Leftarrow  \gradual{U}$ for exactly one $\gradual{u}$.
\end{restatable}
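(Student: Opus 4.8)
The plan is to split the claim into two independent parts: \emph{determinism} (at most one $\gradual{u}$) and \emph{totality} (at least one $\gradual{u}$), whose conjunction gives ``exactly one.'' Determinism is the routine direction and follows the observation already made in \autoref{sec:gradualhsub}: the normalization rules of \autoref{fig:typing-norm-gradual}, together with the omitted rules that mirror \autoref{fig:typerules-gradual}, are syntax-directed on the inputs $\gradual{t}$ and $\gradual{U}$. So a straightforward induction on the structure of a derivation shows that any two derivations sharing these inputs yield the same canonical form. The only apparent overlap is between \rrule{GNCheckSynth} and \rrule{GNCheckApprox}; these are disambiguated by the side condition $\gradual{U}\sqsubseteq\gradual{U}'$ on the synthesized type $\gradual{U}'$, which is itself fixed by the (deterministic) synthesis judgment, so at most one of the two fires.

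For totality I would first reduce termination of normalization to termination of approximate hereditary substitution. Because hereditary substitution returns a canonical form directly and never calls back into normalization, normalization itself recurses only structurally on $\gradual{t}$: it normalizes subterms, performs the standard (total) $\eta$-expansion of \rrule{GNSynthVar}, and recombines the results, with the application rule \rrule{GNSynthApp} invoking hereditary substitution on already-normalized arguments. Hence, granting that hereditary substitution is total, normalization terminates by structural induction on the term. For coverage I would show that every well-typed checking input is matched by some rule: the synthesis side has one rule per term form (following the typing rules), and the checking side is made exhaustive precisely by \rrule{GNCheckApprox}, which fires whenever the precise rule \rrule{GNCheckSynth} cannot (i.e. $\gradual{U}\not\sqsubseteq\gradual{U}'$) and emits $\qm$, a term that checks against any type. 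I would appeal to \autoref{lem:typeSub} to confirm that the produced $\gradual{u}$ actually checks against $\gradual{U}$, closing the induction.

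The crux, then, is termination of approximate hereditary substitution, which I would prove by well-founded induction exactly as in the static case, using the multiset of arrow annotations ordered by the Dershowitz–Manna ordering of \citet{multisetOrder}, extended with $i \prec \omega$. Every recursive call must either keep the substitution type fixed while recursing on a strictly smaller canonical subterm, or recurse at a type that is strictly smaller in this multiset order. In the static development predicativity guarantees the latter whenever a redex is eliminated. In the gradual setting there are exactly two cases where this decrease fails: (i) the head of a spine substitutes to a $\qm$-typed function, where $\ottkw{dom}\ \qm = \qm$ so the type does not shrink, and (ii) a spine substitution that would violate the ordering because $\qm : \qm$ can be used to escape the universe hierarchy. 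These are precisely the cases diverted by \rrule{GHsubRDynType}, \rrule{GHsubRDynSpine}, and the explicit metric check in \rrule{GHsubRLamSpine}/\rrule{GHsubRLamSpineOrd}, each of which returns $\qm$ with no recursive call. I would verify that these three rules cover all non-decreasing cases, so that the measure strictly decreases on every remaining recursive call.

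I expect the main obstacle to be this termination argument, and specifically the bookkeeping of the measure. The measure must be a lexicographic combination of the substitution type (in the multiset order) and the structural size of the subterm, because substitution is defined mutually over terms, spines, and atomic forms; one has to check that within a single type level the structural component strictly decreases, while across the $\qm$-free redex-elimination cases the type component strictly decreases. The subtle point is confirming that the annotations generated during normalization (including $\omega$, introduced by \rrule{GNCheckPiType} and \rrule{GNCheckPiDyn}) make this measure well-defined on exactly the inputs that arise, and that the $\qm$-diverting rules are triggered in precisely the situations where the static decrease argument breaks down.
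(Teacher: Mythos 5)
Your proposal is correct and takes essentially the same approach as the paper: uniqueness from syntax-directedness of the normalization rules (with \rrule{GNCheckApprox} covering the cases \rrule{GNCheckSynth} misses), and existence by reducing to totality of approximate hereditary substitution, which the paper likewise proves by a nested, lexicographic induction---outer on the multiset of arrow-type level annotations of the substitution type, inner on the typing derivation---with \rrule{GHsubRDynType}, \rrule{GHsubRDynSpine}, and \rrule{GHsubRLamSpineOrd} absorbing exactly the cases where the static decrease argument breaks. The only cosmetic difference is that the paper's inner induction is on the typing derivation rather than on structural term size, and, like you, it invokes the substitution-preserves-typing lemma inside the application case to keep the induction well-formed.
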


\section{\lang: Runtime Semantics} 
\label{sec:gradual-semantics}    
  
With the type system for \lang realized, we
turn to its dynamic semantics.
Following the approaches of \citet{agt} and \citet{parametricityRevisited},
we let the syntactic type-safety proof for the static \slang drive its design.
In place of a cast calculus, gradual terms carry \textit{evidence} 
that they match their type, and computation steps evolve that evidence incrementally.
When evidence no longer supports the well-typedness of a term, 
execution fails with a runtime type error. 

\subsection{The Runtime Language}

\begin{figure}  
    \begin{minipage}[t]{0.32\textwidth} 
    \nonterms{et} 
    \end{minipage}
    \begin{minipage}[t]{0.32\textwidth}
    \nonterms{dummyev,rv}  
    \end{minipage}
    \rev{
    \begin{minipage}[t]{0.32\textwidth}
        \nonterms{evalContext}  
        \nonterms{epsilon} 
        \end{minipage}
        \drules{$\gradual{Gamma}  \vdash  \evterm{t}  \ottsym{:}  \gradual{U}$}{\ottcom{Evidence Term Typing}}{EvTypeApp, EvTypeEv,EvTypeDyn}
    }

    \caption{Evidence Term Syntax and Typing (select rules)} 
    \label{fig:evidenceTerms} 
\end{figure}
  
\Autoref{fig:evidenceTerms} gives the syntax for our runtime language.
It   mirrors the syntax for gradual terms, with two main changes.
In place of type ascriptions,
we have a special form for terms augmented with evidence,
following \citet{parametricityRevisited}.
We also have $\mathsf{err}$, an explicit term for runtime type errors.

Translation proceeds by augmenting our bidirectional typing rules to 
 output the translated term. Type ascriptions are dropped in the
\rrule{GSynthAnn} rule, and initial evidence of consistency is added in \rrule{GCheckSynth}.
\Autoref{subsec:initialEvidence} describes how to derive this initial evidence.
In the \rrule{GSynthDyn} rule, we annotate $ {\qm } $ with evidence $ {\qm } $,
so $ {\qm } $ is always accompanied by some evidence of its type.
\added{Similarly, functions of type $\qm$ are ascribed $\langle \qm \to \qm \rangle$.}

In \autoref{fig:evidenceTerms} we also define the class of syntactic values, which determines those terms that are
done evaluating.
We wish to allow values to be augmented with evidence, but not to have multiple evidence objects stacked
on a value.
To express this, we separate the class of values from the class of \textit{raw values},
which are never ascribed with evidence at the top level. 

Values are similar to, but not the same, as canonical forms.
In particular, there are no redexes in canonical terms, even beneath a $\lambda$,
whereas values may contain redexes within abstractions.

\subsection{Typing and Evidence}
\label{subsec:initialEvidence}

To establish progress and preservation, we need typing rules for evidence terms,
whose key rules we highlight in \autoref{fig:evidenceTerms}.
These are essentially the same as for gradual terms, with two major changes.
First, we no longer use bidirectional typing, since
our type system need not be syntax directed to prove \replaced{safety}{soundness}.
\added{Second, whereas gradual terms could be given any type that is consistent with their actual type,
we only allow this for dynamic terms directly ascribed with evidence, as seen in the}
\rev{ rule \rrule{EvTypeEv}.    
Thus, all applications of consistency are made explicit in the syntax of
evidence terms, and for a term $\varepsilon\ \evterm{t}$, the evidence $\varepsilon$ serves as a concrete witness
between the actual type of $\evterm{t}$ and the type at which it is used.        
The normalization relation is extended to evidence terms: it simply erases evidence ascriptions
and otherwise behaves like the normalization for gradual terms.
We can then define hereditary substitutions
of evidence terms into types, which is crucial for updating evidence after function applications}.

This raises the question: what is evidence?
\added{At a high level, the evidence attached to a term tracks the most precise type information
about this term that is dynamically available. 
As we can see in \autoref{fig:evidenceTerms}, evidence consists of a canonical type: we use brackets 
$\langle \rangle$ to syntactically distinguish evidence from canonical forms.
Ascribing a term with evidence behaves like a cast in a gradual cast calculus; the key difference is that evidence only ever increases in precision.
It serves as a witness for the consistency of two types, and by refining evidence at each step
(and failing when this is not possible), we ensure that each intermediate expression is (gradually) well-typed.} 
\rev{
\citet{agt} identify the correspondence between evidence and the middle type in a threesome calculus~\cite{Siek:2009:TWB:1570506.1570511}.} 

AGT provides a general formulation of evidence, applicable to multi-argument, asymmetric predicates. However, since equality is the only predicate we use,
the meet of two terms is sufficient to serve as evidence of their consistency.
\rev{We say that $ \varepsilon \vdash  \gradual{U}_{{\mathrm{1}}}  \cong  \gradual{U}_{{\mathrm{2}}} $ whenever $\varepsilon = \langle  \gradual{U}'  \rangle$ and $ \gradual{U}' \sqsubseteq   \gradual{U}_{{\mathrm{1}}}  \sqcap  \gradual{U}_{{\mathrm{2}}}  $.}
\added{There are two key operations on evidence. First, we need \textit{initial evidence}  for
elaborating gradual terms to evidence terms.}
\rev{
If a term synthesizes some $\gradual{U}$ and is checked against $\gradual{U}'$, then during elaboration
we can ascribe to it the evidence $ \langle \gradual{U}  \sqcap  \gradual{U}' \rangle $.  
Secondly, we need a way to combine two pieces of evidence at runtime, an operation referred to as \textit{consistent transitivity} in AGT:
if $ \langle  \gradual{U}  \rangle \vdash  \gradual{U}_{{\mathrm{1}}}  \cong  \gradual{U}_{{\mathrm{2}}} $, and $ \langle  \gradual{U}'  \rangle \vdash  \gradual{U}_{{\mathrm{2}}}  \cong  \gradual{U}_{{\mathrm{3}}} $,
then $ \langle   \gradual{U}  \sqcap  \gradual{U}'   \rangle \vdash  \gradual{U}_{{\mathrm{1}}}  \cong  \gradual{U}_{{\mathrm{3}}} $,
provided that the meet is defined.
So we can also use the precision meet to dynamically combine different pieces of evidence.}

\added{Evidence is combined using the meet operation, which is based on definitional (intensional) equality.
This means that if we have a type $\mathtt{A} : (\ottkw{Nat} \to \ottkw{Nat}) \to \TypeType_1$,
then $\mathtt{A} (\lambda x \ldotp x + x - x)$ and $\mathtt{A} (\lambda x \ldotp x)$ will not be consistent at runtime,
despite being extensionally equivalent.
Extensional equality is undecidable, so it cannot be used during typechecking.
Since definitional equality is decidable, we use it both for typechecking and at runtime. This also ensures that the type operations performed at runtime directly mirror those performed during typechecking. 
}

\subsection{Developing a Safe Semantics}

To devise our semantics, we imagine a hypothetical 
proof of progress and preservation.
Progress tells us which expressions we need reduction rules for,
and preservation tells us how to step in order to remain well-typed.

\subsubsection*{Double Evidence}

Since values do not contain terms of the form
$\myepsilon_{{\mathrm{2}}} \, \ottsym{(}  \myepsilon_{{\mathrm{1}}} \, \evterm{rv}  \ottsym{)}$, progress dictates that we need a reduction rule for such a case.
If $\cdot  \vdash  \evterm{rv}  \ottsym{:}  \gradual{U}$, $ \myepsilon_{{\mathrm{1}}} \vdash  \gradual{U}  \cong  \gradual{U}' $ and $ \myepsilon_{{\mathrm{2}}} \vdash  \gradual{U}'  \cong  \gradual{U}'' $,
then $ \myepsilon_{{\mathrm{1}}}  \sqcap  \myepsilon_{{\mathrm{2}}}  \vdash  \gradual{U}  \cong  \gradual{U}'' $, so we step to $\ottsym{(}   \myepsilon_{{\mathrm{1}}}  \sqcap  \myepsilon_{{\mathrm{2}}}   \ottsym{)} \evterm{rv}$.
If the meet is not defined, then a runtime error occurs.

\subsubsection*{Functions with Evidence}

\added{There are two complications for reducing applications with evidence.
The first is that in} 
\rev{$\lambda  \mathit{x}  \ldotp  \evterm{t}$, the variable $\mathit{x}$
may be free in evidence ascriptions within  $\evterm{t}$.
When performing a substitution, we need the type and normal form of the term replacing the variable.
We use the notation
${[  \mathit{x}  \Mapsto  \evterm{t}_{{\mathrm{1}}}  ]}^{ \gradual{u}  :  \gradual{U} }  \evterm{t}_{{\mathrm{2}}}$
to denote the syntactic replacement of $\mathit{x} $ by $\evterm{t}_{{\mathrm{1}}}$ in $\evterm{t}_{{\mathrm{2}}}$,
where free occurrences of $\mathit{x} $ in evidence within  $\evterm{t}_{{\mathrm{2}}}$ are replaced by $\gradual{u}$
(the normal form of $\evterm{t}_{{\mathrm{2}}}$)
using hereditary substitution at type $\gradual{U}$.
We use this operation to reduce applications.

A second issue is that,
while the simple rules dictate how to evaluate a $\lambda$-term applied to a value,
they do not determine how to proceed for applications of the form $ \ottsym{(} \langle \gradual{U} \rangle \, \lambda  \mathit{x}  \ldotp  \evterm{t}  \ottsym{)} \  \ottsym{(}  \langle \gradual{U}' \rangle \, \evterm{rv}  \ottsym{)} $. 
 In such a case, we know that $\cdot  \vdash  \ottsym{(}  \langle \gradual{U} \rangle \, \lambda  \mathit{x}  \ldotp  \evterm{t}   \ottsym{)}  \ottsym{:}  \gradual{U}_1$ and that $ \langle \gradual{U} \rangle\vdash  \gradual{U}_1  \cong  \gradual{U}_2 $ for some $\gradual{U}_2$.
Computing $  (\ottkw{dom}\  \langle \gradual{U} \rangle )  \sqcap  \langle \gradual{U}' \rangle $ yields evidence that the type of $\evterm{rv}$ is consistent with the domain of $\gradual{U}_1$,
so we ascribe  this evidence during substitution to preserve well-typedness.
The evidence-typing rules say that the type of an application is found by normalizing the argument value
and substituting into the codomain of the function type. To produce a result at this type,
we can normalize $\evterm{rv}$ and substitute it into the codomain of $\langle \gradual{U} \rangle$,
thereby producing evidence that the actual result is consistent with the return type.
 In the case where $\evterm{rv}$ is not ascribed with evidence, we can simply behave as if it were ascribed
$\langle  {\qm }  \rangle$ and proceed using the above process. } 

\subsubsection*{Applying The Unknown Term}
 
The syntax for values only admits application under binders,
so we must somehow reduce terms of the form $ \ottsym{(}  \myepsilon \, {\qm }  \ottsym{)} \  \evterm{v} $.
The solution is simple: if the function is unknown, so is its output.
Since the unknown term is always accompanied by evidence at runtime,
we calculate the result type by substituting the argument into the codomain
of the evidence associated with $ {\qm } $.

\subsubsection*{The Full Semantics}

\begin{figure}

     \ottdefnStep{}

    \caption{\lang: \rev{Dynamic Semantics}}
    \label{fig:gradual-semantics}
\end{figure}

All other well-typed terms are either values, or contain a redex as a subterm,
either of the simple variety or of the varieties described above.
Using contextual rules to account for these remaining cases, we have a
semantics that satisfies progress and preservation \textit{by construction}.
\Autoref{fig:gradual-semantics} gives the full set of rules.  

\subsection{Example: Running $\mathtt{head}$ of $\mathtt{nil}$}
\label{subsec:nilhead-run}

We return to the example from \autoref{subsec:type-headnil}, this time explaining its runtime behaviour.
Because of consistency, the term $\ottsym{(}     \ottkw{Nil}  \   \ottkw{Nat}   \dblcolon \ottsym{(}     \ottkw{Vec}  \   \ottkw{Nat}   \  {\qm }   \ottsym{)}   \ottsym{)}$
is given the evidence $\langle   \ottkw{Vec}  \   \ottkw{Nat}   \  {\qm } \rangle$, obtained by computing ${\langle   \ottkw{Vec}  \   \ottkw{Nat}   \   0  \sqcap    \ottkw{Vec}  \   \ottkw{Nat}   \  {\qm } \rangle}$.
Applying consistency to use this as an argument adds the evidence $\langle   \ottkw{Vec}  \   \ottkw{Nat}   \   1  \rangle$,
since we check $\ottsym{(}     \ottkw{Nil}  \   \ottkw{Nat}   \dblcolon \ottsym{(}     \ottkw{Vec}  \   \ottkw{Nat}   \  {\qm }   \ottsym{)}   \ottsym{)}$ against $\ottkw{dom}\ (    \ottkw{Vec}  \   \ottkw{Nat}   \   1   \to  \ottkw{Nat}  )$.
The rule \rrule{StepContext} dictates that we must evaluate the argument to a function before
evaluating the application itself. 
Our argument is $\langle   \ottkw{Vec}  \   \ottkw{Nat}   \   1  \rangle\langle   \ottkw{Vec}  \   \ottkw{Nat}   \   0  \rangle(  \ottkw{Nil}  \   \ottkw{Nat}  )$, and since the meet of the evidence
types is undefined, we step to $\mathsf{err}$ with \rrule{StepAscrFail}.

\section{Properties of \lang}  
\label{sec:lang-properties} 
 
\lang satisfies all the criteria for gradual languages set forth by \citet{refinedCriteria}.

\paragraph{Safety} First, \lang is type \replaced{safe}{sound}  
\textit{by construction}: the runtime semantics
are specifically crafted to maintain progress and preservation.
We can then obtain the standard safety result for gradual languages, namely that well-typed terms do not get stuck.
 
\begin{restatable}[Type safety]{theorem}{typeSound}
    \label{thm:typeSound}
        If $\cdot  \vdash  \evterm{t}  \ottsym{:}  \gradual{U}$, then either $\evterm{t} \longrightarrow^{*} \evterm{v}$ for some $\evterm{v}$,
        $\evterm{t} \longrightarrow^{*} \mathsf{err}$,
        or $\evterm{t}$ diverges.
\end{restatable}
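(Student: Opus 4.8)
The plan is to establish type safety in the standard syntactic (Wright--Felleisen) style, via \emph{progress} and \emph{preservation} for the evidence-term typing judgment $\cdot \vdash \evterm{t} : \gradual{U}$ of \autoref{fig:evidenceTerms}. The three-way disjunction in the conclusion is exactly what these two lemmas deliver: define $\evterm{t}$ to \emph{diverge} when there is an infinite reduction sequence out of it. Given progress and preservation, I would build a maximal reduction sequence starting from $\evterm{t}$ using dependent choice; every term along it remains closed and well typed by preservation, so by progress it is a value, is $\mathsf{err}$, or steps again. If the sequence is finite it must end in a value or in $\mathsf{err}$ (progress forbids a well-typed term from being otherwise stuck); if it is infinite, $\evterm{t}$ diverges. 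This is precisely the ``by construction'' argument promised in \autoref{sec:gradual-semantics}.

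For \textbf{progress}, I would first prove a canonical-forms lemma classifying the shape of closed values at each type (in particular, that a closed value of a function type is either a raw $\lambda$-abstraction, possibly under a single evidence ascription, or the unknown term carrying evidence), and then proceed by induction on the derivation of $\cdot \vdash \evterm{t} : \gradual{U}$. The informative cases are exactly the redex forms identified in \autoref{sec:gradual-semantics}: stacked evidence $\myepsilon_2(\myepsilon_1\,\evterm{rv})$ reduces by combining evidence with the meet, stepping to $\mathsf{err}$ via \rrule{StepAscrFail} when the meet is undefined; an application of an evidence-wrapped abstraction $(\langle \gradual{U} \rangle\,\lambda \mathit{x} \ldotp \evterm{t})\,(\langle \gradual{U}' \rangle\,\evterm{rv})$ is a redex because $\ottkw{dom}\,\langle \gradual{U} \rangle$ and the evidence of the argument can be combined; and an application of the unknown term reduces because its accompanying evidence supplies a codomain. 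Contextual cases follow from the evaluation contexts with \rrule{StepContext}, and variable cases are vacuous in the empty context.

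For \textbf{preservation}, the core is a substitution lemma for evidence terms: if $\mathit{x} : \gradual{U}_1 \vdash \evterm{t}_2 : \gradual{U}_2$ and the argument has the appropriate type and normal form, then ${[\mathit{x} \Mapsto \evterm{t}_1]}^{\gradual{u}:\gradual{U}_1}\evterm{t}_2$ is well typed at the substituted codomain. This is where the metatheory of \autoref{sec:gradualhsub} does the work: free occurrences of $\mathit{x}$ inside evidence ascriptions are replaced using hereditary substitution at type $\gradual{U}_1$, and well-typedness of the result rests on \autoref{lem:typeSub} (normalization preserves typing) together with totality of normalization (\autoref{lem:normTotal}). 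The remaining step cases reduce to properties of the meet: for stacked evidence, $\myepsilon_1 \sqcap \myepsilon_2$ witnesses consistent transitivity and hence retyping, and when the meet is undefined the step goes to $\mathsf{err}$, which is trivially well typed at any type; for $\beta$-style steps, ascribing $(\ottkw{dom}\,\langle \gradual{U} \rangle)\sqcap\langle \gradual{U}' \rangle$ to the argument and the substituted codomain of $\langle \gradual{U} \rangle$ to the result yields evidence at exactly the type assigned by \rrule{EvTypeApp}.

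The main obstacle will be preservation in the dependent application case, because the type of an application is computed by normalizing the argument and hereditarily substituting it into the codomain. I must therefore show that runtime reduction of the argument does not perturb this computed type---concretely, that if $\evterm{t} \longrightarrow \evterm{t}'$ then $\evterm{t}$ and $\evterm{t}'$ share an approximate normal form, so that the codomain substitution is stable across the step. This preservation-of-normal-form property, proved by induction over the reduction rules and leaning on the determinism and totality of approximate normalization, is the delicate part; everything else is bookkeeping over the evidence operations. It is also what reconciles the two notions of evaluation: although runtime execution is exact and may diverge, each individual step leaves the \emph{compile-time} normal form (and hence the ascribed types) invariant, so the evidence machinery stays coherent throughout reduction.
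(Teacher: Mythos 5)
Your high-level plan is the same as the paper's: the paper proves exactly the two lemmas you describe---Progress by induction on $\cdot \vdash \evterm{t} \ottsym{:} \gradual{U}$ (with inversion playing the role of your canonical-forms analysis), and Preservation by induction on the step relation, discharged with the ingredients you list (monotonicity of the meet for \rrule{StepAscr}, $\ottkw{dom}$/$\ottkw{cod}$ and preservation of typing under hereditary substitution for \rrule{StepAppEv}, \rrule{StepAppEvRaw}, and \rrule{StepAppDyn})---and then combines them into the value/error/divergence trichotomy.

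The genuine gap is the lemma you stake the dependent-application case on: that each reduction step leaves the approximate normal form invariant, so that ``the evidence machinery stays coherent'' because ascribed types never change. That lemma is false in \lang, and necessarily so. Consider $\evterm{t} = (\langle {\qm } \to {\qm } \rangle\, \lambda \mathit{x} \ldotp \mathit{x})\;(\langle \ottkw{Nat} \rangle\, 3)$, of type $ {\qm } $. Its approximate normal form is $ {\qm } $: the argument is normalized against the domain $ {\qm } $, where \rrule{GNCheckApprox} collapses it to $ {\qm } $---this information loss is the entire purpose of approximate normalization. At runtime, however, \rrule{StepAppEv} produces $\langle {\qm } \rangle (\langle \ottkw{Nat} \rangle\, 3)$ and \rrule{StepAscr} then merges the evidence to $\langle \ottkw{Nat} \rangle\, 3$, whose normal form is $3$. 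Hence, if you ascribe $\evterm{t}$ evidence at $\ottkw{Nat}$ and place it in argument position of some $\evterm{f}$ of type $(\mathit{n} : \ottkw{Nat}) \rightarrow \ottkw{Vec}\ \ottkw{Nat}\ \mathit{n}$, the type computed by \rrule{EvTypeApp} drifts from $\ottkw{Vec}\ \ottkw{Nat}\ {\qm }$ at the start to $\ottkw{Vec}\ \ottkw{Nat}\ 3$ once the argument is evaluated: the codomain substitution is \emph{not} stable across steps. Nor can this be repaired: runtime execution is exact precisely so that it can produce results strictly more precise than the total-but-lossy normalizer (otherwise, say, $\mathtt{factList}$'s index $ {\qm } $ would already determine its runtime behaviour). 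What holds is at most monotone \emph{refinement}---normal forms only become more precise along reduction---which is the territory of the paper's precision lemmas (Lemma~\ref{thm:normGuarantee} and the substitution-precision lemma), not invariance. Accordingly, the paper's preservation proof never invokes such an invariance: the $\beta$-cases retain the redex's type because the reduct is re-wrapped in the substituted codomain evidence of the function's evidence, so the imprecision introduced by normalization is \emph{remembered in evidence} rather than recomputed from the reduct, and the \rrule{StepContext} case is discharged by preservation of typing under substitution, treating hole-filling as substitution. If you keep your route, you must downgrade invariance to monotonicity and restate preservation up to precision (which still suffices for the safety theorem, since progress only needs the reduct to be well typed at \emph{some} type), and then re-check that \rrule{EvTypeApp} tolerates a refined argument---which is exactly the delicate point your plan currently hides inside a false lemma. (The paper's own one-line treatment of \rrule{StepContext} is admittedly terse on this same point, but its resolution goes through the evidence discipline, not through normal-form invariance.)
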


\rev{
This means that gradually well-typed programs in \lang may fail with runtime type errors, but they will never get stuck. Among the three main approaches to deal with gradual types in the literature, \lang follows the original approach of \citet{gradualTypeInitial} and \citet{refinedCriteria}, which enforces types eagerly at boundaries, including at higher-order types. This is in contrast with first-order enforcement (a.k.a as transient semantics \citep{Vitousek:2017:BTL:3093333.3009849}), or simple type erasure (a.k.a as optional typing).\footnote{\citet{Greenman:2018:STS:3243631.3236766} present a detailed comparative semantic account of these three approaches.} In particular, while the transient semantics support open world soundness \citep{Vitousek:2017:BTL:3093333.3009849} when implemented on top of a (safe) dynamic language, it is unclear if and how this approach, which is restricted to checking type constructors, can scale to full-spectrum dependent types. \lang is a sound gradually-typed language that requires elaboration of the complete program in order to insert the pieces of evidence that support runtime checking.
}

\paragraph{Conservative Extension of\ \slang}    
It is easy to show that \lang is a conservative extension of \slang. This means that any fully-precise \lang
programs enjoy the soundness and \added{logical} consistency properties that \slang guarantees.
Any statically-typed term is well-typed in \lang by construction, thanks to AGT: on fully precise gradual types, $\alpha \circ \gamma$ is the identity.
Moreover, the \textit{only} additions are those pertaining to $ {\qm } $,
meaning that if we restrict ourselves to the static subset of terms (and types) without $ {\qm } $,
then we have all the properties of the static system. We formalize this as follows: 

\begin{restatable}{theorem}{staticEmbed}
    \label{thm:staticEmbed}
    \added{}
    \rev{
    If $\gradual{Gamma},\gradual{t},\gradual{U}$ are the embeddings of some $\static{Gamma}, \static{t},\static{U}$ into \lang,
    and $\gradual{Gamma}  \vdash  \gradual{t}  \Leftarrow  \gradual{U}$, then $\static{Gamma}  \vdash  \static{t}  \Leftarrow  \static{U}$.
    Moreover, if $\static{t} \longrightarrow^{*} \static{v} $ and  $\gradual{t}$ elaborates to $\evterm{t}$, then there exists some $\evterm{v}$ where $\evterm{t} \longrightarrow^{*} \evterm{v}$, where
    removing evidence from $\evterm{v}$ yields $\static{v}$ }.
\end{restatable}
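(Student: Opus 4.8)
The plan is to prove the two conjuncts separately: first that well-typedness transfers from the gradual embedding back to \slang, and second that evaluation of the elaborated term simulates the static reduction. For the static half, the decisive observation is that embeddings are \emph{fully precise}---they contain no occurrence of $\qm$---so $\gamma$ sends each embedded term to a single $\abe$-equivalence class and $\alpha \circ \gamma$ is the identity on them, as already noted for the conservative-extension argument. I would run a mutual induction on the gradual derivation $\gradual{Gamma} \vdash \gradual{t} \Leftarrow \gradual{U}$, showing that on precise inputs each gradual rule collapses to exactly one \slang rule: consistency $\cong$ restricted to precise types is definitional ($\abe$) equality, so the side-conditions of \rrule{GCheckSynth}, \rrule{GCheckPi}, etc.\ become the equalities demanded by \slang; the partial functions $\ottkw{dom}$, $\ottkw{cod}$, $\ottkw{body}$ agree with their static counterparts on precise arrow types; and the purely gradual rules \rrule{GSynthDyn}, \rrule{GCheckLamDyn} cannot fire, as their inputs would have to mention $\qm$. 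The one point requiring care is normalization: I must show that gradual normalization on precise terms coincides with static normalization and never introduces $\qm$. This holds because approximate hereditary substitution departs from the ideal only in the $\qm$-specific rules \rrule{GHsubRDynType}, \rrule{GHsubRDynSpine} (and the metric-violation case), none of which fire without a $\qm$-typed head, and because \rrule{GNCheckApprox} produces $\qm$ only when $\gradual{U} \not\sqsubseteq \gradual{U}'$, which is impossible when the synthesized and checked types coincide, as they do for precise terms; thus \rrule{GNCheckSynth} is always the applicable rule. Given this coincidence, the induction is routine.

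For the dynamic half, I would set up a forward simulation along an evidence-erasure map $\lfloor \cdot \rfloor$ from evidence terms to \slang terms that deletes all evidence ascriptions. First I would verify the elaboration invariant $\lfloor \evterm{t} \rfloor = \static{t}$. Then I would prove that every non-failing step $\evterm{t}_1 \longrightarrow \evterm{t}_2$ erases to at most one static step: a $\beta$- or ascription-elimination step gives $\lfloor \evterm{t}_1 \rfloor \longrightarrow \lfloor \evterm{t}_2 \rfloor$, whereas the evidence-bookkeeping steps (double-evidence combination and the contextual congruences over evidence) give $\lfloor \evterm{t}_1 \rfloor = \lfloor \evterm{t}_2 \rfloor$. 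Because runtime execution is \emph{exact}---evidence is exactly what $\lfloor\cdot\rfloor$ erases, and the residual computation is ordinary call-by-value reduction, matching the static semantics of \autoref{fig:syntax-static}---this simulation is tight, with gradual evaluation refining static evaluation by interspersing bookkeeping steps.

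Two facts then close the argument, and they carry the real content. I must show that (i) evaluation never reaches $\mathsf{err}$, and (ii) evaluation does not diverge. For (i) I would maintain as the simulation invariant that every piece of evidence present is precise and witnesses the definitional equality of the precise types it flanks; this is preserved at each step because substitution and normalization keep terms precise (by the static half's normalization lemma, together with \autoref{lem:typeSub} and \autoref{lem:normTotal}). Under this invariant, every meet required for initial evidence or for consistent transitivity has the form $\gradual{U} \sqcap \gradual{U}$ up to $\abe$, which is always defined, so no evidence-combination step fails. For (ii) I would exploit the hypothesis that $\static{t} \longrightarrow^{*} \static{v}$ is a \emph{finite} reduction: by the simulation the ``real'' steps of $\evterm{t}$ are in bijection with these finitely many static steps, while between consecutive real steps the bookkeeping steps strictly decrease a measure (the number of stacked evidence ascriptions and pending congruence obligations), hence are also finite. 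Invoking Type safety (\autoref{thm:typeSound}) to rule out getting stuck, and (i) to rule out $\mathsf{err}$, we conclude $\evterm{t} \longrightarrow^{*} \evterm{v}$ for some value $\evterm{v}$ with $\lfloor \evterm{v} \rfloor = \static{v}$.

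The hard part will be (i): formulating the simulation invariant strongly enough that it is genuinely preserved---in particular that normalization performed in the runtime type computations keeps the relevant types precise and definitionally equal to those they are matched against---so that every meet reduces to $\gradual{U} \sqcap \gradual{U}$ and no runtime type error can arise. This is the formal embodiment of the slogan that fully-precise \lang programs behave exactly like \slang programs, and getting the invariant to thread through both the evidence-combination steps and the normalization-driven substitutions is where the bulk of the work lies; the measure argument for (ii) is comparatively mechanical once the invariant is in place.
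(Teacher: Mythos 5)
Your proposal is correct in outline, and it follows essentially the route the paper takes---with the caveat that the paper never spells this proof out: \autoref{thm:staticEmbed} gets no appendix proof, only the body-text argument that on fully precise types $\alpha \circ \gamma$ is the identity, so every lifted predicate and operation (consistency, $\ottkw{dom}$, $\ottkw{cod}$, meet) collapses to its static counterpart and the $\qm$-specific rules never fire. Your static half is a faithful expansion of exactly that argument, including the two points that genuinely need checking (that \rrule{GNCheckApprox} cannot fire because synthesized and checked types coincide on precise terms, and that approximate hereditary substitution deviates from the static one only in the $\qm$-headed and metric-violation rules). Your dynamic half mirrors the structure of the paper's proof of the \emph{untyped} embedding theorem (\autoref{thm:untypedEmbed}): an erasure/elaboration map, a lemma that it commutes with substitution, and a step-indexed simulation in which every evidence meet is trivially defined---there because all evidence is $\langle \qm \rangle$ or $\langle \qm \to \qm \rangle$, here because all evidence is precise and flanks definitionally equal types. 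So you have reconstructed, in more detail than the paper provides, the intended argument on both halves.

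One wrinkle to repair when writing this up: elaboration \emph{drops} ascriptions (replacing them with evidence), so your invariant $\lfloor \evterm{t} \rfloor = \static{t}$ is not literally true when $\static{t}$ contains ascriptions, and static \rrule{SimpleStepAnn} steps correspond to zero gradual steps. The simulation relation must be stated modulo ascription erasure on the static side, and the count of ``real'' steps adjusted accordingly; the theorem statement itself (``removing evidence from $\evterm{v}$ yields $\static{v}$'') is informal on the same point. This is a presentational fix, not a gap in the argument.
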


\paragraph{Embedding of Untyped Lambda Calculus}
A significant property of \lang is that it can fully embed the untyped lambda calculus, including non-terminating terms.
Given an untyped embedding function $\lceil t \rceil$ 
that (in essence) annotates all terms with $ {\qm } $ we can show that any untyped term can be embedded in our system.
Since no type information is present, all evidence objects are formed using $ {\qm } $ or $\to$, and the meet operator
never fails and untyped programs behave normally in \lang.

\begin{restatable}{theorem}{untypedEmbed}
    \label{thm:untypedEmbed} 
    For any untyped $\lambda$-term $t$ and closing environment $\ \gradual{Gamma}$ that maps all variables to type $ {\qm } $, then $\gradual{Gamma}\vdash \lceil t \rceil  \Rightarrow   {\qm } $.
    \rev{Moreover, if $t$ is closed, then $t \longrightarrow^{*} v$ implies that $\lceil t \rceil$ elaborates to $\evterm{t}$ where  $\evterm{t} \longrightarrow^{*} \evterm{v} $ and stripping evidence from  $\evterm{v}$ yields $v$}.
            \end{restatable}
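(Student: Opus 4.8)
The plan is to prove the two claims separately: the typing claim $\gradual{Gamma} \vdash \lceil t \rceil \Rightarrow \qm$ by structural induction on $t$, and the dynamic claim by a forward-simulation argument relating \lang's runtime semantics to untyped $\beta$-reduction. For the typing claim I would proceed by induction on $t$, using that $\lceil \cdot \rceil$ annotates binders with $\qm$ and ascribes every abstraction to $\qm$. If $t = x$, then $\lceil x \rceil = x$ and, since $\gradual{Gamma}(x) = \qm$ by hypothesis, \rrule{GSynthVar} gives $\gradual{Gamma} \vdash x \Rightarrow \qm$. If $t = \lambda x.\, s$, then $\lceil t \rceil$ is an abstraction ascribed to $\qm$; by \rrule{GSynthAnn} it suffices to normalize $\qm$ (which is $\qm$) and check the abstraction against $\qm$ using \rrule{GCheckLamDyn}, which extends the context by $x{:}\qm$ and checks $\lceil s \rceil$ against $\qm$. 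The induction hypothesis gives $\gradual{Gamma}, x{:}\qm \vdash \lceil s \rceil \Rightarrow \qm$, and \rrule{GCheckSynth} closes the case since $\qm \cong \qm$. If $t = t_1\, t_2$, the hypothesis makes both $\lceil t_1 \rceil$ and $\lceil t_2 \rceil$ synthesize $\qm$; \rrule{GSynthApp} applies because $\ottkw{dom}\,\qm = \qm$ (\rrule{DomainDyn}) lets $\lceil t_2 \rceil$ check against $\qm$, and the codomain computed by \rrule{CodSubDyn} is again $\qm$ regardless of what is substituted. Hence $\lceil t \rceil$ synthesizes $\qm$, and well-typedness guarantees a unique elaboration to an evidence term $\evterm{t}$.

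For the dynamic claim I would first pin down the shape of $\evterm{t}$: because every subterm is typed at $\qm$, elaboration produces only the evidence $\langle \qm \rangle$ on variables and arguments (per \rrule{GSynthDyn}) and $\langle \qm \to \qm \rangle$ on abstractions. I would then establish the key invariant that every evidence type arising during execution lies in the set generated from $\{\qm, \qm \to \qm\}$ under $\ottkw{dom}$, $\ottkw{cod}$, substitution, and $\sqcap$, and that $\sqcap$ is total on this set, since $\qm \sqcap \gradual{U} = \gradual{U}$ and $(\qm \to \qm) \sqcap (\qm \to \qm) = \qm \to \qm$. Crucially, because these evidence types are closed, the hereditary substitution into evidence performed during application is trivial, so substitution on embedded terms coincides with ordinary capture-avoiding substitution up to erasure. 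The invariant shows that the meet combining evidence never fails, so \rrule{StepAscrFail} never fires and the embedded program never reduces to $\mathsf{err}$.

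With these facts in hand I would define an erasure $\lfloor \cdot \rfloor$ from evidence terms back to untyped $\lambda$-terms (dropping evidence and $\qm$-annotations) and prove a forward simulation. Each \lang step is either \emph{administrative} --- merging stacked evidence $\varepsilon_2\,(\varepsilon_1\, \evterm{rv}) \longrightarrow (\varepsilon_1 \sqcap \varepsilon_2)\, \evterm{rv}$ together with congruence bookkeeping, all of which preserve $\lfloor \cdot \rfloor$ --- or a genuine $\beta$-step on an evidence-ascribed abstraction, whose erasure is exactly one untyped $\beta$-step. I would show the administrative steps are strongly normalizing, by a measure counting stacked evidence, so the only source of divergence in $\evterm{t}$ is genuine $\beta$-reduction and thus matches $t$. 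Since \lang is call-by-value, I take $\longrightarrow$ on the untyped side to be the matching call-by-value relation; then from $t \longrightarrow^{*} v$, repeated simulation yields $\evterm{t} \longrightarrow^{*} \evterm{v}$ with $\lfloor \evterm{v} \rfloor = v$, and because $v$ is an abstraction, $\evterm{v}$ is an evidence-ascribed abstraction value, as required.

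The main obstacle is in the dynamic claim, specifically aligning a single untyped $\beta$-step with the corresponding \lang reduction of $(\langle \qm \to \qm \rangle\, \lambda x.\, \evterm{t})\,(\langle \qm \rangle\, \evterm{rv})$: \lang's application rule threads the domain and codomain evidence through the substitution and may leave an extra ascription that must then be cleared by an administrative merge. I expect most of the work to go into verifying that these evidence manipulations commute with $\lfloor \cdot \rfloor$ and always succeed (the totality-of-meet invariant), and into matching \lang's evaluation contexts to the untyped call-by-value order so that administrative steps interleave correctly and terminate between successive $\beta$-steps.
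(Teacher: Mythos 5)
Your typing argument is essentially the paper's own (induction on the term, \rrule{GSynthVar} for variables, \rrule{GSynthAnn} plus \rrule{GCheckLamDyn} for ascribed abstractions, and \rrule{GSynthApp} with $\ottkw{dom}\, {\qm } = {\qm }$ and codomain ${\qm }$ for applications), so that part needs no comment. The dynamic part, however, takes a genuinely different route. The paper works in the direction untyped $\to$ \lang: it treats $\lfloor \cdot \rfloor$ as \emph{elaboration} (a function from untyped terms to evidence terms, which only inserts $\langle {\qm } \rangle$ and $\langle {\qm } \to {\qm } \rangle$), proves that elaboration commutes with substitution and with evaluation contexts, and then shows by induction on the derivation of a single untyped step that $t \longrightarrow t'$ implies $\lfloor t \rfloor \longrightarrow^{*} \lfloor t' \rfloor$ (via \rrule{StepAppEv}/\rrule{StepAppEvRaw}); the theorem follows by induction on the number of steps, since elaborated abstractions are already values. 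You work in the opposite direction, \lang $\to$ untyped: your $\lfloor \cdot \rfloor$ is \emph{erasure}, you classify every evidence-term step as administrative (erasure-preserving) or a genuine $\beta$-step, and you recover the theorem by combining no-error (meet totality), progress, strong normalization of administrative steps, and alignment of the two deterministic call-by-value orders. Both are sound, but the paper's direction is lighter: because it constructs the \lang reduction sequence explicitly, it never needs an administrative-termination measure nor a determinism/alignment argument --- substitution-commutation is its only real lemma. Your direction buys something the paper leaves implicit: every execution of the embedded term tracks the untyped one, so divergence and error-freedom coincide as well, not just convergence to matching values. One detail to repair in your version: the measure ``counting stacked evidence'' does not decrease under \rrule{StepAppEvRaw}, which \emph{adds} evidence to a raw argument before the $\beta$-step can fire; you need a lexicographic measure (e.g., number of not-yet-wrapped raw-value arguments, then stacked-evidence count), or to fold that rule into the $\beta$-case of your classification.
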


\paragraph{Gradual Guarantees}
\label{subsec:gg-properties}

 \lang smoothly supports the full spectrum between dependent and untyped programming---a property known as the gradual guarantee~\cite{refinedCriteria}, which comes in two parts. \added{We say that} \rev{$\gradual{Gamma} \sqsubseteq \gradual{Gamma}'$ if they contain the same variables, and for each $(x : \gradual{U}) \in \gradual{Gamma}$,
 $(x : \gradual{U}') \in \gradual{Gamma}'$  where $\gradual{U} \sqsubseteq \gradual{U}'$}.

\begin{restatable}[Gradual Guarantee]{theorem}{gradGuarantee}
    \label{thm:gradGuarantee}
    \quad  

    \textsc{(Static Guarantee)} Suppose $\gradual{Gamma}  \vdash  \gradual{t}  \Leftarrow  \gradual{U}$ and $ \gradual{U} \sqsubseteq  \gradual{U}' $.\ If $\gradual{Gamma} \sqsubseteq \gradual{Gamma}'$ and $\gradual{t} \sqsubseteq \gradual{t}'$, then $\gradual{Gamma}'  \vdash  \gradual{t}'  \Leftarrow  \gradual{U}'$.
    
    \textsc{(Dynamic Guarantee)} Suppose that $\cdot  \vdash  \evterm{t}_{{\mathrm{1}}}  \ottsym{:}  \gradual{U}$, $\cdot  \vdash  \evterm{t}'_{{\mathrm{1}}}  \ottsym{:}  \gradual{U}'$,  $\evterm{t}_{{\mathrm{1}}}\sqsubseteq\evterm{t}'_{{\mathrm{1}}}$, and $\gradual{U}\sqsubseteq\gradual{U}'$. If $\evterm{t}_{{\mathrm{1}}}\longrightarrow^{*}\evterm{t}_{{\mathrm{2}}}$, then $\evterm{t}'_{{\mathrm{1}}}\longrightarrow^{*}\evterm{t}'_{{\mathrm{2}}}$ where $\evterm{t}_{{\mathrm{2}}}\sqsubseteq\evterm{t}'_{{\mathrm{2}}}$.
                        \end{restatable}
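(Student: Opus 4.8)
The plan is to prove both halves by induction, each resting on a shared family of \emph{monotonicity} lemmas for the operations used by the type system and the reduction relation. The foundational fact is that concretization is monotone---$\gradual{U} \sqsubseteq \gradual{U}'$ holds exactly when $\gamma(\gradual{U}) \subseteq \gamma(\gradual{U}')$---so everything built from $\gamma$ and $\alpha$ inherits monotonicity. Concretely, I would first establish: (i) consistency is monotone, i.e.\ if $\gradual{U}_1 \cong \gradual{U}_2$, $\gradual{U}_1 \sqsubseteq \gradual{U}_1'$ and $\gradual{U}_2 \sqsubseteq \gradual{U}_2'$ then $\gradual{U}_1' \cong \gradual{U}_2'$; (ii) the meet $\sqcap$ is monotone and its domain of definedness is upward-closed under precision; and (iii) the partial functions $\ottkw{dom}$, $\ottkw{cod}$ and $\ottkw{body}$ are monotone, the key point being that they are total on $\qm$, so relaxing an input never turns a defined application into an undefined one. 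Each of these is a direct calculation from the Galois connection.

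With these in hand I would prove the \emph{static} guarantee by simultaneous induction on the synthesis and checking derivations of $\gradual{Gamma} \vdash \gradual{t} \Leftarrow \gradual{U}$. Most rules are immediate: in \rrule{GCheckSynth} the embedded consistency test is preserved by (i); in \rrule{GSynthApp} the extracted domain and codomain stay related by (iii); and the hypothesis $\gradual{Gamma} \sqsubseteq \gradual{Gamma}'$ feeds the variable lookups. The one load-bearing ingredient is a \emph{normalization gradual guarantee}: if $\gradual{Gamma} \vdash \gradual{u} \leadsfrom\ \gradual{t} \Leftarrow \gradual{U}$ and all inputs are made less precise ($\gradual{Gamma} \sqsubseteq \gradual{Gamma}'$, $\gradual{t} \sqsubseteq \gradual{t}'$, $\gradual{U} \sqsubseteq \gradual{U}'$), then $\gradual{Gamma}' \vdash \gradual{u}' \leadsfrom\ \gradual{t}' \Leftarrow \gradual{U}'$ for some $\gradual{u}'$ with $\gradual{u} \sqsubseteq \gradual{u}'$. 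Because the typing rules normalize terms to compute dependent result types, this lemma is exactly what lets the induction go through wherever a premise reuses a normalized type.

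That normalization lemma is the crux, and it in turn reduces to a precision-monotonicity statement for \emph{approximate hereditary substitution}: substituting precision-related canonical forms at precision-related types yields precision-related results. The difficulty is the approximation machinery. When imprecision grows, a substitution that proceeded structurally in the precise derivation may instead hit a $\qm$-typed head and collapse to $\qm$ via \rrule{GHsubRDynType} or \rrule{GHsubRDynSpine}; likewise \rrule{GNCheckApprox} may replace a fully normalized result by $\qm$ when the checked type is no longer at least as precise as the synthesized one. Monotonicity survives each collapse only because $\qm$ is the least precise element, so the precise result sits automatically below $\qm$. The subtle points are (a) checking that the collapsed derivation is still well-formed, including the inferred arrow-level annotations ($i$ versus $\omega$) and the multiset-ordering side condition of \rrule{GHsubRLamSpineOrd}, and (b) managing the mutual recursion between substitution and normalization in the induction metric. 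This dependence of the \emph{static} guarantee on the monotonicity of compile-time evaluation is precisely the ``restricted dynamic guarantee'' alluded to earlier, and is where most of the effort concentrates.

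For the \emph{dynamic} guarantee I would prove a single-step simulation lemma and then induct on the length of $\evterm{t}_1 \longrightarrow^{*} \evterm{t}_2$: whenever $\evterm{t}_1 \sqsubseteq \evterm{t}_1'$ are well typed and $\evterm{t}_1 \longrightarrow \evterm{t}_2$, there is $\evterm{t}_2'$ with $\evterm{t}_1' \longrightarrow^{*} \evterm{t}_2'$ and $\evterm{t}_2 \sqsubseteq \evterm{t}_2'$, allowing the less precise side to stutter. Three cases mirror the semantics. For evidence merging I use monotonicity of $\sqcap$: if the precise meet is defined so is the less precise one, and it is less precise; if the precise meet fails (\rrule{StepAscrFail}), I appeal to $\mathsf{err}$ being the bottom of the precision order, so $\evterm{t}_2 = \mathsf{err} \sqsubseteq \evterm{t}_2'$ for any continuation of $\evterm{t}_1'$. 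For $\beta$-reduction I need a substitution lemma stating that hereditary substitution into evidence terms preserves precision, so the reducts stay related. For application of $\qm$ in function position the precise side may perform a genuine $\beta$-step while the imprecise side produces a $\qm$-headed result, again related because $\qm$ is maximal in imprecision. The main obstacle on this side is the evidence-term substitution lemma, since it couples term-level precision with the hereditary substitution used to refresh evidence after a call; it is a laborious but routine structural argument once monotonicity of $\sqcap$ and of the partial functions is available.
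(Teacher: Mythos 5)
Your proposal reproduces the paper's own architecture: the paper likewise proves the static guarantee by mutual induction on checking and synthesis derivations, invokes a separately stated normalization gradual guarantee exactly where normalization appears in the typing rules (the \rrule{GSynthAnn} and \rrule{GCheckPi} cases), reduces that lemma to precision-monotonicity of approximate hereditary substitution (with corollaries for $\ottkw{cod}$ and $\ottkw{body}$ substitution), and proves the dynamic guarantee via a single-step simulation lemma resting on monotonicity of $\sqcap$, $\ottkw{dom}$, and $\ottkw{cod}$, plus preservation of precision under substitution. Your reading of the collapse-to-$\qm$ cases, and of the static guarantee's dependence on a restricted dynamic guarantee for compile-time normalization, is exactly the paper's.

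There is, however, one genuine gap: your two crux lemmas (monotonicity of normalization and of hereditary substitution) are stated with plain syntactic precision $\sqsubseteq$, and in that form they are false. Normalization produces $\eta$-long forms relative to the type, and $\eta$-long forms at more precise types are longer: a variable $f$ of type $(y : (z : \TypeType_1) \to \TypeType_1) \to \TypeType_1$ normalizes to $\lambda y \ldotp f\,(\lambda z \ldotp y\, z)$, whereas the same variable at type $\qm$ normalizes (after the one-level expansion of \rrule{GNSynthVar}) to $\lambda y \ldotp f\, y$; the spine arguments $\lambda z \ldotp y\,z$ and $y$ are unrelated by $\sqsubseteq$, since the concretization of a $\lambda$-term contains only $\lambda$-terms. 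This is precisely why the paper defines precision modulo $\eta$-equivalence ($\sqsubseteq^\eta$), states both the substitution lemma and the normalization gradual guarantee in terms of it, and needs a dedicated case in the substitution proof (\rrule{GHsubLam}) relating a $\lambda$ on the precise side to a neutral term that $\eta$-expands to a less precise version of it. Without this refinement, the induction you describe breaks at exactly these cases. A second, smaller divergence: to handle \rrule{StepAscrFail} you posit that $\mathsf{err}$ is a bottom element of term precision, but the paper never defines precision on $\mathsf{err}$; instead it proves the simulation lemma only for steps whose target is not $\mathsf{err}$, rendering the failure cases vacuous. Your convention can be made to work, but it is an addition to the definitions rather than a consequence of them, and you would need to re-check the remaining lemmas under it.
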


AGT ensures that the gradual guarantee holds by construction.
Specifically, because approximate normalization and consistent transitivity
are monotone with respect to precision,
we can establish a weak bisimulation between the steps of the more and less precise versions~\citep{agt}.

A novel insight that arises from our work is that we need a restricted form of the dynamic gradual guarantee {\em for normalization} in order
to prove the static gradual guarantee.
To differentiate it from the standard one, we call it the {\em normalization gradual guarantee}.
Because an $\eta$-long term might be longer at a more precise type, we phrase the guarantee modulo $\eta$-equivalence:
\added{we say that} \rev{ $\gradual{U}_1 \sqsubseteq^\eta \gradual{U}_2$ if $\gradual{U}_1 =_\eta \gradual{U}'_1$,
$\gradual{U}_2 =_\eta \gradual{U}'_2$ and $\gradual{U}'_1 \sqsubseteq \gradual{U}'_2$}.

With these defined, we can state the normalization gradual guarantee:

\begin{restatable}[Normalization Gradual Guarantee]{lemma}{normGuarantee}
    \label{thm:normGuarantee} 
    Suppose $\gradual{Gamma}_{{\mathrm{1}}}  \vdash  \gradual{u}_{{\mathrm{1}}}  \leadsfrom\  \gradual{t}_{{\mathrm{1}}}  \Leftarrow  \gradual{U}_{{\mathrm{1}}}$. 
    If $\gradual{Gamma}_{{\mathrm{1}}} \sqsubseteq^\eta \gradual{Gamma}_{{\mathrm{2}}}$, $ \gradual{t}_{{\mathrm{1}}} \sqsubseteq  \gradual{t}_{{\mathrm{2}}} $,
        and $ \gradual{U}_{{\mathrm{1}}} \sqsubseteq^\eta  \gradual{U}_{{\mathrm{2}}} $, then $\gradual{Gamma}_{{\mathrm{2}}}  \vdash  \gradual{u}_{{\mathrm{2}}}  \leadsfrom\  \gradual{t}_{{\mathrm{2}}}  \Leftarrow  \gradual{U}_{{\mathrm{2}}}$ where $ \gradual{u}_{{\mathrm{1}}} \sqsubseteq^\eta  \gradual{u}_{{\mathrm{2}}} $.
\end{restatable}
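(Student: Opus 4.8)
The plan is to prove this by a mutual induction over the two approximate-normalization judgments --- synthesis normalization $\gradual{Gamma} \vdash \gradual{t} \leadsto \gradual{u} \Rightarrow \gradual{U}$ and checking normalization $\gradual{Gamma} \vdash \gradual{u} \leadsfrom\ \gradual{t} \Leftarrow \gradual{U}$ --- strengthening the statement to cover both simultaneously. Since both judgments invoke approximate hereditary substitution, I first need (and would prove separately) a \emph{substitution monotonicity lemma}: if $[\gradual{u}_1/x]^{\gradual{U}}\gradual{u}_2 = \gradual{u}_3$ and $\gradual{u}_1 \sqsubseteq^\eta \gradual{u}_1'$, $\gradual{u}_2 \sqsubseteq^\eta \gradual{u}_2'$, $\gradual{U} \sqsubseteq^\eta \gradual{U}'$, then approximate hereditary substitution is defined on the primed inputs and yields some $\gradual{u}_3'$ with $\gradual{u}_3 \sqsubseteq^\eta \gradual{u}_3'$. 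This lemma is itself proved by induction on the well-founded multiset metric that orders hereditary substitution --- the same metric used for termination in \slang and carried over to \lang via the arrow-type annotations. The two interesting cases are \rrule{GHsubRDynSpine} (a $\qm$-typed head, where the result is forced to $\qm$) and \rrule{GHsubRLamSpineOrd} (the metric-violation fallback, also forced to $\qm$); in both the output is the most imprecise term, so the precision obligation is discharged immediately, and all remaining cases recurse structurally and thread the induction hypothesis.

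Given the substitution lemma, I would carry out the main induction. The organizing principle --- and the reason the guarantee holds at all --- is that \emph{every place approximate normalization discards information it does so by emitting $\qm$}, and $\qm$ is $\sqsubseteq^\eta$-below every canonical form. Concretely, contrast \rrule{GNCheckSynth} with \rrule{GNCheckApprox}: the more precise derivation may have normalized fully via \rrule{GNCheckSynth} because its synthesized type was suitably related to $\gradual{U}_1$. Lowering the precision of the context, subject, and checked type can flip that precision test, so the less precise derivation may be forced into \rrule{GNCheckApprox} and return $\gradual{u}_2 = \qm$; this is harmless, since $\gradual{u}_1 \sqsubseteq^\eta \qm$ holds trivially. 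Dually, if the more precise derivation already used \rrule{GNCheckApprox} then $\gradual{u}_1 = \qm$ and the conclusion is immediate. The only case requiring real work is when both derivations use \rrule{GNCheckSynth}, where I invoke the synthesis induction hypothesis together with monotonicity of the precision/consistency test to conclude $\gradual{u}_1 \sqsubseteq^\eta \gradual{u}_2$. The application rule \rrule{GNSynthApp} is handled by applying the induction hypotheses to head and argument, using monotonicity of the $\ottkw{dom}$ and $\ottkw{cod}$ partial functions, and then invoking the substitution monotonicity lemma to relate the substituted codomains and bodies; the $\qm$-typed-function subcase again bottoms out at $\qm$. Universe, $\qm$, and annotation-stripping cases are immediate.

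The delicate engineering is the $\eta$-expansion in \rrule{GNSynthVar}, which is exactly why the conclusion is stated modulo $\eta$. A variable normalized at a precise function type is $\eta$-expanded into nested $\lambda$-abstractions, whereas the same variable at the less precise type $\qm$ is treated as already $\eta$-long (up to the single extra expansion to $\qm \to \qm$ that \rrule{GNSynthVar} performs). These canonical forms are $\eta$-equivalent but \emph{not} related by bare $\sqsubseteq$, so I must carry $\sqsubseteq^\eta$ throughout and show it is preserved by every rule --- in particular that $\eta$-expanding at $\gradual{U}_1$ and then weakening the type to $\gradual{U}_2$ yields something $\eta$-equal to $\eta$-expanding directly at $\gradual{U}_2$. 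I would isolate this as an auxiliary fact about the expansion relation $x \leadsto_\eta \gradual{u} : \gradual{U}$ and about how $\sqsubseteq^\eta$ composes with $\lambda$-introduction and with spine extension.

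I expect the main obstacle to be precisely this $\eta$/precision interaction, rather than the $\qm$-approximation cases, which are nearly automatic. Two aspects compound the difficulty. First, $\sqsubseteq^\eta$ is not a purely syntactic, structurally compatible ordering the way $\sqsubseteq$ is, so the induction must be arranged so that the $\eta$-normalization witnesses can be produced uniformly at each step. Second, the mutual dependence between normalization and hereditary substitution means the substitution lemma must already be phrased modulo $\eta$ and must cope with the arrow-type level annotations, which shift toward $\omega$ as precision drops; one must check that the termination metric degrades gracefully so that the \rrule{GHsubRLamSpineOrd} fallback fires consistently across the two derivations. Exhibiting a single induction measure --- derivation height for normalization, combined lexicographically with the multiset type ordering for substitution --- that visibly decreases on every recursive appeal, including the cross-calls between the two lemmas, is where I would concentrate the most care.
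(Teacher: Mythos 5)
Your overall architecture matches the paper's proof: the same substitution-monotonicity lemma stated modulo $\eta$ (the paper's lemma on approximate substitution and precision, with corollaries for $\ottkw{cod}$ and $\ottkw{body}$ substitution), followed by the same mutual induction over the synthesis and checking normalization judgments, with $\sqsubseteq^\eta$ carried throughout for exactly the \rrule{GNSynthVar} reason you give. But there is a genuine gap at the pivotal case, caused by reading the precision order backwards at $ {\qm } $. In this paper $ \gradual{u} \sqsubseteq  {\qm } $ holds for \emph{every} $\gradual{u}$ --- $ {\qm } $ is the \emph{least} precise term --- so the required conclusion $ {\qm }  \sqsubseteq^\eta \gradual{u}_2$ forces $\gradual{u}_2 =  {\qm } $. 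Hence your claim that ``if the more precise derivation already used \rrule{GNCheckApprox} then $\gradual{u}_1 = {\qm}$ and the conclusion is immediate'' is false as stated: you still owe a proof that the \emph{less} precise derivation cannot take \rrule{GNCheckSynth} and return some non-$ {\qm } $ canonical form, which would falsify $ \gradual{u}_1 \sqsubseteq^\eta  \gradual{u}_2 $. This is where the paper does its real work: by the mutual induction hypothesis for synthesis, lowering the precision of the subject can only lower the precision of its synthesized type, and from this the paper argues that the side-condition that forced \rrule{GNCheckApprox} on the precise side still fails on the imprecise side, so \emph{both} derivations return $ {\qm } $. In short, you have inverted where the difficulty lies: the case where both derivations use \rrule{GNCheckSynth}, which you call ``the only case requiring real work,'' is a routine appeal to the induction hypothesis, while the case you dismiss as immediate is the crux of the lemma.

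The same directional slip appears in your substitution lemma: when \rrule{GHsubRDynSpine} or \rrule{GHsubRLamSpineOrd} fires on the \emph{more} precise substitution, its $ {\qm } $ output does not discharge the obligation ``immediately.'' The paper's corresponding cases instead apply the induction hypothesis to the spine head, and use the fact that only $ {\qm } $ is less precise than $ {\qm } $ to conclude that the imprecise side's head also yields $ {\qm } $ (respectively, a $ {\qm } $-typed result), so the same fallback rule fires there as well. Your closing paragraph does acknowledge that the fallbacks must ``fire consistently across the two derivations,'' so the right idea is within reach for substitution; but the analogous argument for \rrule{GNCheckApprox} is entirely absent from your main induction, and the lemma stands or falls on it. (Conversely, your worry about a lexicographic measure handling ``cross-calls'' between the two lemmas is unnecessary: the substitution lemma is closed under its own induction on the substitution derivation and is then used as a black box by the normalization induction.)
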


\section{Extension: Inductive Types}
\label{sec:inductive}

Though \lang provides type safety and the gradual guarantees, its lack of inductive types means that
programming is cumbersome. \added{Church encodings allow for some induction, but are strictly less powerful
than proper inductive types.}
Additionally, 
induction principles, along with basic facts like $0 \neq 1$, cannot be proven in the purely negative Calculus of Constructions~\citep{stump_2017}.
However, we can type such a term if we introduce inductive types with eliminators, and allow types to be defined
in terms of such eliminations.

This section describes how to extend \lang with 
a few common inductive types---natural numbers, vectors, and an identity type for equality proofs---
along with their eliminators. While not as useful as user-defined types or pattern matching (both of which are important subjects for future work),
this specific development illustrates how our approach can be extended to a more full-fledged dependently-typed language.
Note that while we show how inductives can be added to the language, extending our metatheory
to include inductives is left as future work.

\paragraph{Syntax and Typing}
 
\rev{We augment the syntax for terms as follows:}
\begin{align*}
\gradual{t},\gradual{T} & ::= & \ldots & |  \ottkw{Nat}  |  0  |   \ottkw{Succ}\ \gradual{t}  |    \ottkw{Vec}  \  \gradual{T}  \  \gradual{t}  |     \ottkw{Eq}  \  \gradual{T}  \  \gradual{t}_{{\mathrm{1}}}  \  \gradual{t}_{{\mathrm{2}}}  |    \ottkw{Refl}  \  \gradual{T}  \  \gradual{t} 
|   \ottkw{Nil}  \  \gradual{t}  |      \ottkw{Cons}  \  \gradual{T}  \  \gradual{t}_{{\mathrm{1}}}  \  \gradual{t}_{{\mathrm{2}}}  \  \gradual{t}_{{\mathrm{3}}}  \\ &&&
|  \ottkw{natElim}\    \gradual{T} \  \gradual{t}_{{\mathrm{1}}}  \  \gradual{t}_{{\mathrm{2}}}  \  \gradual{t}_{{\mathrm{3}}}  
| \ottkw{vecElim}\      \gradual{T}_{{\mathrm{1}}} \  \gradual{t}_{{\mathrm{1}}}  \  \gradual{T}_{{\mathrm{2}}}  \  \gradual{t}_{{\mathrm{2}}}  \  \gradual{t}_{{\mathrm{3}}}  \  \gradual{t}_{{\mathrm{4}}}     
| \ottkw{eqElim}\      \gradual{T}_{{\mathrm{1}}} \  \gradual{T}_{{\mathrm{2}}}  \  \gradual{t}_{{\mathrm{1}}}  \  \gradual{t}_{{\mathrm{2}}}  \  \gradual{t}_{{\mathrm{3}}}  \  \gradual{t}_{{\mathrm{4}}} 
\end{align*} 

\begin{figure}
    \begin{minipage}{0.45\textwidth}
    \begin{align*} 
         \ottkw{Nat}  & : &&  \TypeType_{ \ottsym{1} } \\
         \ottkw{Vec}  & : &&   \TypeType_{ \ottnt{i} }  \to   \ottkw{Nat}  \to  \TypeType_{ \ottnt{i} }   \\
         \ottkw{Eq}  & : && \ottsym{(}  \mathtt{A}  \ottsym{:}   \TypeType_{ \ottnt{i} }   \ottsym{)}  \rightarrow    \mathtt{A}  \to   \mathtt{A}  \to  \TypeType_{ \ottnt{i} }   
        \end{align*}
    \end{minipage}
    \begin{minipage}{0.45\textwidth}
        \begin{align*} 
             0  & : &&  \ottkw{Nat} \\
        \ottkw{Succ} & : &&   \ottkw{Nat}  \to  \ottkw{Nat}  \\
         \ottkw{Nil}  & : &&   \ottsym{(}  \mathtt{A}  \ottsym{:}   \TypeType_{ \ottnt{i} }   \ottsym{)}  \rightarrow   \ottkw{Vec}  \  \mathtt{A}  \   0  \\
         \ottkw{Refl}  & : &&    \ottsym{(}  \mathtt{A}  \ottsym{:}   \TypeType_{ \ottnt{i} }   \ottsym{)}  \rightarrow  \ottsym{(}  \mathit{x}  \ottsym{:}  \mathtt{A}  \ottsym{)}  \rightarrow   \ottkw{Eq}  \  \mathtt{A}  \  \mathit{x}  \  \mathit{x} 
        \end{align*}
    \end{minipage}
    \begin{align*}
         \ottkw{Cons}  & : &&   \ottsym{(}  \mathtt{A}  \ottsym{:}   \TypeType_{ \ottnt{i} }   \ottsym{)}  \rightarrow  \ottsym{(}  \mathtt{n}  \ottsym{:}   \ottkw{Nat}   \ottsym{)}  \rightarrow  \ottsym{(}  \mathtt{hd}  \ottsym{:}  \mathtt{A}  \ottsym{)}  \rightarrow  \ottsym{(}  \mathtt{tl}  \ottsym{:}     \ottkw{Vec}  \  \mathtt{A}  \  \mathtt{n}   \ottsym{)}  \rightarrow   \ottkw{Vec}  \  \mathtt{A}  \  \ottsym{(} \ottkw{Succ}\   \mathtt{n}     \ottsym{)}  \\
        \ottkw{natElim} & : && \ottsym{(}  \mathtt{m}  \ottsym{:}  \ottsym{(}    \ottkw{Nat}  \to  \TypeType_{ \ottnt{i} }    \ottsym{)}  \ottsym{)}  \rightarrow   \ottsym{(}   \mathtt{m} \   0    \ottsym{)} \to \ottsym{(}   \ottsym{(}  \mathtt{n}  \ottsym{:}   \ottkw{Nat}   \ottsym{)}  \rightarrow    \mathtt{m} \  \mathtt{n}  \to \mathtt{m}  \  \ottsym{(} \ottkw{Succ}\   \mathtt{n}     \ottsym{)}   \ottsym{)}  
         \to  \ottsym{(}  \mathtt{n}  \ottsym{:}   \ottkw{Nat}   \ottsym{)}  \rightarrow  \mathtt{m} \  \mathtt{n}  \\
        \ottkw{vecElim} & : &&  \ottsym{(}  \mathtt{A}  \ottsym{:}   \TypeType_{ \ottnt{i} }   \ottsym{)}  \rightarrow  \ottsym{(}  \mathtt{n}  \ottsym{:}   \ottkw{Nat}   \ottsym{)}  \rightarrow  \ottsym{(}  \mathtt{m}  \ottsym{:}  \ottsym{(}      \ottkw{Vec}  \  \mathtt{A}  \  \mathtt{n}  \to  \TypeType_{ \ottnt{i} }    \ottsym{)}  \ottsym{)}  \rightarrow  \mathtt{m} \  \ottsym{(}     \ottkw{Vec}  \  \mathtt{A}  \   0    \ottsym{)} 
        \\ &&& \to \ottsym{(}   \ottsym{(}  \mathtt{n_2}  \ottsym{:}   \ottkw{Nat}   \ottsym{)}  \rightarrow  \ottsym{(}  \mathtt{h}  \ottsym{:}  \mathtt{A}  \ottsym{)}  \rightarrow  \ottsym{(}  \mathtt{tl}  \ottsym{:}     \ottkw{Vec}  \  \mathtt{A}  \  \mathtt{n_2}   \ottsym{)}  \rightarrow    \mathtt{m} \  \mathtt{vec}  \to \mathtt{m}  \  \ottsym{(}       \ottkw{Cons}  \  \mathtt{A}  \  \ottsym{(} \ottkw{Succ}\   \mathtt{n_2}     \ottsym{)}  \  \mathtt{hd}  \  \mathtt{tl}   \ottsym{)}   \ottsym{)}
        \\ &&& \to  \ottsym{(}  \mathtt{vec}  \ottsym{:}     \ottkw{Vec}  \  \mathtt{A}  \  \mathtt{n}   \ottsym{)}  \rightarrow  \mathtt{m} \  \mathtt{vec} \\
        \ottkw{eqElim} & : && \ottsym{(}  \mathtt{A}  \ottsym{:}   \TypeType_{ \ottnt{i} }   \ottsym{)}  \rightarrow  \ottsym{(}  \mathtt{m}  \ottsym{:}  \ottsym{(}  \mathit{x}  \ottsym{:}  \mathtt{A}  \ottsym{)}  \rightarrow  \ottsym{(}  \mathit{y}  \ottsym{:}  \mathtt{A}  \ottsym{)}  \rightarrow       \ottkw{Eq}  \  \mathtt{A}  \  \mathit{x}  \  \mathit{y}  \to  \TypeType_{ \ottnt{i} }    \ottsym{)}  \\
        &&&
        \rightarrow  \ottsym{(}     \ottsym{(}  \mathit{z}  \ottsym{:}  \mathtt{A}  \ottsym{)}  \rightarrow  \mathtt{m} \  \mathit{z}  \  \mathit{z}  \  \ottsym{(}     \ottkw{Refl}  \  \mathtt{A}  \  \mathit{z}   \ottsym{)}   \ottsym{)} 
        \to \ottsym{(}  \mathit{x}  \ottsym{:}  \mathtt{A}  \ottsym{)}  \rightarrow  \ottsym{(}  \mathit{y}  \ottsym{:}  \mathtt{A}  \ottsym{)}  \rightarrow  \ottsym{(}  \mathtt{p}  \ottsym{:}      \ottkw{Eq}  \  \mathtt{A}  \  \mathit{x}  \  \mathit{y}   \ottsym{)}  \rightarrow  \ottsym{(}     \mathtt{m} \  \mathit{x}  \  \mathit{y}  \  \mathtt{p}   \ottsym{)}
        \end{align*}
        \caption{Constructor and Eliminator Types}
        \label{fig:elimTypes}
    \end{figure}

The typing rules are generally straightforward. We omit the full rules, but we essentially type them
as functions that must be fully applied, with the types given in \autoref{fig:elimTypes}.
Each form checks its arguments against the specified types, and the rule \rrule{GCheckSynth} ensures that
typechecking succeeds so long as argument types are consistent with the expected types.
Adding these constructs to canonical forms is interesting.
Specifically, the introduction forms are added as atomic forms, and the 
eliminators become new variants of the canonical spines. Since $\ottkw{natElim}$ applied to a $ \ottkw{Nat} $
is a redex, canonical forms can eliminate variables. 
\added{Eliminators take one fewer argument than in the term version, since the last argument is always the rest of the spine in which the eliminator occurs.} 
\begin{align*}
\gradual{rr},\gradual{RR} & ::= & \ldots |  \ottkw{Nat}  |  0  | \ottkw{Succ}\ \gradual{u} | \text{etc.\ldots}\\
\gradual{e} & ::= & \ldots | \ottkw{natElim}\ \gradual{u}_{{\mathrm{1}}}\ \gradual{u}_{{\mathrm{2}}}\ \gradual{u}_{{\mathrm{3}}} | \ottkw{vecElim}\ \gradual{U}_{{\mathrm{1}}}\ \gradual{u}_{{\mathrm{1}}}\ \gradual{U}_{{\mathrm{2}}}\ \gradual{u}_{{\mathrm{2}}}\ \gradual{u}_{{\mathrm{3}}}\ | \text{etc.\ldots}
\end{align*}

\paragraph{Normalization}

We extend hereditary substitution to inductive types.
\added{Unfortunately, we must treat hereditary substitution as a \textit{relation} between normal forms.
The strictly-decreasing metric we previously used no longer holds for inductive types,
so we have not proved that hereditary substitution with inductives is a well-defined function; this is left as future work.}  
For introduction forms, we simply substitute in the subterms.
For eliminators, if we are ever replacing $\mathit{x}$ with $\gradual{u}'$ in $ \mathit{x} \gradual{e} \ \ottkw{natElim}\ \gradual{u}_{{\mathrm{1}}}\ \gradual{u}_{{\mathrm{2}}}\ \gradual{u}_{{\mathrm{3}}}$,
then we substitute in $ \mathit{x} \gradual{e} $ and see if we get $ 0 $ or $\ottkw{Succ}$ back.
If we get $ 0 $, we produce $\gradual{u}_{{\mathrm{2}}}$, and if we get $ \ottkw{Succ}\ \mathtt{n}$, we compute the recursive elimination
for $\mathtt{n}$ as $\gradual{u}'_{{\mathrm{2}}}$, and substitute $\mathtt{n}$ and $\gradual{u}'_{{\mathrm{2}}}$ for the arguments of $\gradual{u}_{{\mathrm{3}}}$. 
Vectors are handled similarly. An application of $\ottkw{eqElim}$ to $   \ottkw{Refl}  \  \mathtt{A}  \  \mathit{x} $ simply returns the
given value of type $   \mathtt{m} \  \mathit{x}  \  \mathit{x}  \  \ottsym{(}     \ottkw{Refl}  \  \mathit{x}  \  \mathtt{A}   \ottsym{)} $ as a value of type $   \mathtt{m} \  \mathit{x}  \  \mathit{y}  \  \mathtt{p} $.

How should we treat eliminations with $ {\qm } $ as a head?
Since $ {\qm } $ represents the set of all static values, the result of eliminating it is the 
abstraction of the eliminations of all possible values. Since these values may produce conflicting results,
the abstraction is simply $ {\qm } $, which is our result.
However, for equality, we have a special case. Each instance of $\ottkw{eqElim}$
can have only one possible result: the given value, considered as having the output type.
Then, we abstract a singleton set.
This means we can treat each application of $\ottkw{eqElim}$ to $ {\qm } $ as an application to $   \ottkw{Refl}  \  {\qm }  \  {\qm } $. 
This principle holds for any single-constructor inductive type. 

With only functions, we needed to return $ {\qm } $ any time we applied a dynamically-typed function.
However, with eliminators, we are always structurally decreasing on the value being eliminated.
For $\ottkw{natElim}$, we can eliminate  a $ {\qm } $-typed value provided it is $ 0 $ or $\ottkw{Succ}\ \gradual{u}$, 
but otherwise we must produce $ {\qm } $ for substitution to be total and type-preserving.
Other types are handled similarly.

\paragraph{Runtime Semantics}

The semantics are straightforward. 
Eliminations are handled as with hereditary substitution: eliminating $ {\qm } $ produces $ {\qm } $,
except with $ \ottkw{Eq} $, where $ {\qm } $
 behaves like $   \ottkw{Refl}  \  {\qm }  \  {\qm } $ when eliminating.
When applying eliminators or constructors, evidence is composed as with functions.

One advantage of \lang is that the meet operator on evidence \added{allows definitional equality checks to be moved to runtime}.
Thus, if we write $   \ottkw{Refl}  \  {\qm }  \  {\qm } $, but use it at type $    \ottkw{Eq}  \  \mathtt{A}  \  \mathit{x}  \  \mathit{y} $, then it is ascribed with evidence $\langle     \ottkw{Eq}  \  \mathtt{A}  \  \mathit{x}  \  \mathit{y}  \rangle$.
If we ever use this proof to transform a value of type  $ \mathtt{P} \  \mathit{x} $ into one of type $ \mathtt{P} \  \mathit{y} $, the meet operation on the evidence
ensures that the result actually has type $ \mathtt{P} \  \mathit{y} $.

Returning to the $\mathtt{head'}$ function from \autoref{subsec:gdtl-action}, in $     \mathtt{head'} \   \ottkw{Nat}   \   0   \  {\qm }  \  {\qm }  \  \mathtt{staticNil} $, the second $ {\qm } $
is ascribed with the evidence $\langle    \ottkw{Eq}  \   \ottkw{Nat}   \   0   \  \ottsym{(}   \ottkw{Succ}\ {\qm }    \ottsym{)} \rangle$.
The call to $\mathtt{rewrite}$ using this proof tries to convert a $ \ottkw{Vec} $ of length $ 0 $ into one of length $ 1 $,
which adds the evidence $\langle    \ottkw{Eq}  \   \ottkw{Nat}   \   0   \   1  \rangle$ to our proof term. Evaluation tries to compose the layered evidence,
but fails with the rule \rrule{StepAscrFail}, since they cannot be composed.

\section{Related Work}
\label{sec:related}

\paragraph{\slang} The static dependently-typed language \slang, from which \lang is derived, incorporates many features and techniques from the literature. The core of the language is very similar to that of \CCw~\citep{COQUAND198895}, albeit without an impredicative $\mathsf{Prop}$ sort. The core language of Idris~\citep{idrisPaper}, $\mathsf{TT}$, also features cumulative universes
with a single syntactic category for terms and types.  
Our use of canonical forms 
draws heavily from work on the  Logical Framework (LF)~\citep{Harper:1993:FDL:138027.138060, mechanizingLF}. 
The bidirectional type system we adopt is inspired by the tutorial of \citet{basicDepTutorial}. 
Our formulation of hereditary substitution~\citep{hereditary,churchCurry} in \slang is largely drawn from
that of~\citet{mechanizingLF}, particularly the type-outputting judgment for substitution on atomic forms, and the treatment of the variable type as an extrinsic argument.

\paragraph{Mixing Dependent Types and Non-termination}
Dependently-typed languages that admit non-termination either give up on logical consistency altogether ($\Omega$mega~\citep{DBLP:conf/cefp/SheardL07}, Haskell), or isolate a sublanguage of pure terminating expressions. This separation can be either enforced through the type system and/or a termination checker (Aura~\cite{DBLP:conf/icfp/JiaVMZZSZ08}, F$\star$~\cite{mumon}, Idris~\cite{idrisPaper}), or through a strict syntactic separation (Dependent ML~\cite{Xi1999}, ATS~\cite{Chen:2005:CPT:1086365.1086375}). The design space is very wide, reflecting a variety of sensible tradeoffs between expressiveness, guarantees, and flexibility.

The Zombie language~\citep{EPTCS76.9, Casinghino:2014:CPP:2535838.2535883} implements a flexible combination of programming and proving. 
The language is defined as a the programmatic fragment that ensures type safety but not termination, and a logical fragment 
(a syntactic subset of the programmatic one) that
guarantees logical consistency. Programmers must declare in which fragment a given definition lives, but mobile types and cross-fragment case expressions allow interactions between the fragments. 
Zombie embodies a different tradeoff from \lang: while the logical fragment is consistent as a logic, typechecking may diverge due to normalization of terms from the programmatic fragment. In contrast, \lang eschews logical consistency as soon as imprecision is introduced (with~$ {\qm } $), but approximate normalization ensures that typechecking terminates. 

In general, gradual dependent types as provided in \lang can be interpreted as a smooth, tight integration of such a two-fragment approach. Indeed, the subset of \lang that is fully precise corresponds to \slang, which is consistent as a logic. However, in the gradual setting, the fragment separation is fluid: it is driven by the precision of types and terms, which is free to evolve at an arbitrarily fine level of granularity. Also, the mentioned approaches are typically not concerned with accommodating the flexibility of a fully dynamically-typed language.

\paragraph{Mixing Dependent Types and Simple Types}

Several approaches have explored how soundly combine dependent types with non-dependently typed components. 
\citet{10.1007/1-4020-8141-3_34} support a two-fragment language, where runtime checks at the boundary ensure that dependently-typed properties hold. 
The approach is limited to properties that are expressible as boolean-valued functions. 
\citet{Tanter:2015:GCP:2816707.2816710} develop a cast framework for subset types in Coq, allowing one to assert a term of type $A$ to have the subset type $\{ a : A \mid P\;a \}$ for any decidable (or checkable) property $P$. They use an axiom to represent cast errors. 
\citet{Osera:2012:DI:2103776.2103779} present \textit{dependent interoperability} as a multi-language approach to combine both typing disciplines, mediated by runtime checks. Building on the subset cast framework (reformulated in a monadic setting instead of using axiomatic errors), \citet{dagandAl:icfp2016,dagand_tabareau_tanter_2018} revisit dependent interoperability in Coq via type-theoretic Galois connections that allow automatic lifting of higher-order programs. 
Dependent interoperability allows exploiting the connection between pre-existing types, such as $ \ottkw{Vec} $ and $ \ottkw{List} $, imposing the overhead of data structure conversions. The fact that $ \ottkw{List} $ is a less precise structure than $ \ottkw{Vec} $ is therefore defined {\em a posteriori}. In contrast, in \lang, one can simply define $  \ottkw{List}  \  \mathtt{A} $ as an alias for $   \ottkw{Vec}  \  \mathtt{A}  \  {\qm } $, thereby avoiding the need for deep structural conversions.

The work of \citet{lehmannTanter:popl2017} on gradual refinement types 
includes some form of dependency in types. 
Gradual refinement types range from simple types to logically-refined types, i.e. subset types where the refinement is drawn from an SMT-decidable logic. Imprecise logical formulae in a function type can refer to arguments and variables in context. This kind of value dependency is less expressive than the dependent type system considered here. Furthermore, \lang is the first gradual language to allow $ {\qm } $ to be used in both term and type position, and to fully embed the untyped lambda calculus.

\paragraph{Programming with Holes}
Finally, we observe that using $ {\qm } $ in place of proof terms in \lang is related to the concept of \textit{holes} in dependently-typed languages.
Idris~\citep{idrisPaper} and Agda~\citep{agdaPaper} both allow typechecking of programs with typed holes. The main difference between $ {\qm } $ and holes in these languages is that applying a hole to a value results in a stuck term,
while in \lang, applying $ {\qm } $ to a value produces another $ {\qm } $. 

Recently, \citet{DBLP:journals/pacmpl/OmarVCH19} describe Hazelnut, a language and programming system with \textit{typed holes} that fully supports evaluation in presence of holes, including reduction {\em around} holes. The approach is based on Contextual Modal Type Theory~\citep{Nanevski:2008:CMT:1352582.1352591}. It would be interesting to study whether the dependently-typed version of CMTT~\citep{Pientka:2008:PPE:1389449.1389469} could be combined with the evaluation approach of Hazelnut, and the IDE support, in order to provide a rich programming experience with gradual dependent types.

\section{Conclusion}
\label{sec:conclusion}

\lang  represents a glimpse of the challenging and potentially large design space induced by combining dependent types and gradual typing. Specifically, this work proposes approximate normalization as a novel technique for designing gradual dependently-typed languages, in a way that ensures decidable typechecking and naturally satisfies the gradual guarantees.

Currently, \lang lacks a number of features required of a practical dependently-typed programming language.  
While we have addressed the most pressing issue of supporting inductive types in Section~\ref{sec:inductive}, the metatheory of this extension, in particular the proof of strong normalization, is future work. 
It might also be interesting to consider pattern matching as the primitive notion for eliminating inductives, as in Agda, instead of elimination principles as in Coq; the equalities implied by dependent matches could be turned into runtime checks for gradually-typed values.

Future work includes supporting implicit arguments and higher-order unification, blame tracking~\citep{10.1007/978-3-642-00590-9_1}, and efficient runtime semantics with erasure of computationally-irrelevant arguments~\cite{DBLP:conf/types/BradyMM03}.
Approximate normalization might be made more precise by exploiting {\em termination contracts}~\citep{nguyen2018sizechange}.

\paragraph{{\bf Acknowledgments}}                         We thank the anonymous reviewers for their constructive feedback.

    \bibliography{myRefs}

  \pagebreak

  \appendix

\section{Complete Definitions} 
 
\subsection{\slang}

\begin{minipage}[t]{0.45\textwidth}
    \nonterms{t,simpleContext,u}
\end{minipage}
\begin{minipage}[t]{0.45\textwidth}
    \nonterms{rr,e,SGamma}
\end{minipage}

\ottdefnSSynth

\ottdefnSCheck

\ottdefnSNormSynth 

\ottdefnSNormCheck

\ottdefnEtaExpand

\ottdefnSCSynth

\ottdefnSCCheck 

\ottdefnSWF

\ottdefnStaticType

\ottdefnStaticTypeNorm

\ottdefnSHsub

\ottdefnSHsubR 

\ottdefnSimpleStep

\subsection{\lang Surface Language}

\begin{minipage}[t]{0.45\textwidth}
    \nonterms{gt,canonical}
\end{minipage}
\begin{minipage}[t]{0.45\textwidth}
    \nonterms{spine,atomic,Gamma}
\end{minipage}

\ottdefnGSynth
 
\ottdefnGCheck

\ottdefnGNSynth
 
\drules{$\gradual{Gamma}  \vdash  \gradual{u}  \leadsfrom\  \gradual{t}  \Leftarrow  \gradual{U}$}{\ottcom{Approximate Normalization Checking}}{GNCheckSynth, GNCheckApprox, GNCheckLevel, GNCheckPiType, GNCheckPiDyn, GNCheckLamPi, GNCheckLamPiRdxAlpha, GNCheckLamDyn} 

\ottdefnGradualType

\ottdefnGradualTypeNorm

\ottdefnGEtaLong

\ottdefnGEtaExpand

\ottdefnGEtaExpandC
 
\ottdefnGCSynth

\ottdefnGCCheck

\ottdefnWF

\ottdefnGradualEnvSub
 
\ottdefnGHsub

\ottdefnGHsubR

\ottdefnConsistent

\ottdefnConsistentType  

\ottdefnMeet

\ottdefnMorePrecise

\ottdefnDomain

\ottdefnCodSub

\ottdefnBodySub

\subsection{\lang Elaboration and Evidence Term Rules}

\begin{minipage}[t]{0.32\textwidth} 
    \nonterms{et} 
    \end{minipage}
    \begin{minipage}[t]{0.32\textwidth}
    \nonterms{dummyev,rv}  
    \end{minipage}
    \begin{minipage}[t]{0.32\textwidth}
        \nonterms{evalContext}  
        \nonterms{epsilon} 
        \end{minipage}

\drules{$\gradual{Gamma}  \vdash   \evterm{t} \ \leadsto \gradual{u}    \Rightarrow  \gradual{U}$}{\ottcom{Additional Synthesis Rule for Evidence Term Normalization}}{GNSynthEv} 

\ottdefnEvConsistent

\ottdefnGElabSynth 

\ottdefnGElabCheck 

\ottdefnEvType 

\ottdefnStep

\section{Proofs}

\subsection{Properties of Normalization and Hereditary Substitution}

The following formulation of type preservation for substitution is based on that of \citet{churchCurry}.

\begin{lemma}[Approximate Substitution Preserves Typing]
            Suppose $\gradual{Gamma}' \, \ottsym{(}  \mathit{x}  \ottsym{:}  \gradual{U}  \ottsym{)}  \gradual{Gamma}  \vdash   \mathit{y} \gradual{e}   \Rightarrow  \gradual{U}'$, $\gradual{Gamma}  \vdash  \gradual{u}  \Leftarrow  \gradual{U}$.
        Let $ {[  \gradual{u}  / { \mathit{x} } ]}^{ \gradual{U}  }  \gradual{Gamma}'  =  \gradual{Gamma}'' $, $ {[  \gradual{u}  / { \mathit{x} } ]}^{ \gradual{U}  }  \mathit{x} \gradual{e}  = { \gradual{u}'' } : { \gradual{U}''' } $,
        and $ {[  \gradual{u}  / { \mathit{x} } ]}^{ \gradual{U}  }  \gradual{U}'  =  \gradual{U}'' $.
        \begin{enumerate} 
            \item If $x \neq y$, then $\gradual{u}'' =  \mathit{y} \gradual{e}' $ and $\gradual{Gamma}'' \, \gradual{Gamma}  \vdash   \mathit{y} \gradual{e}'   \Rightarrow  \gradual{U}''$, and $\gradual{U}'' = \gradual{U}'''$
            \item If $x = y$, then $\gradual{Gamma}'' \, \gradual{Gamma}  \vdash  \gradual{u}''  \Leftarrow  \gradual{U}''$
        \end{enumerate}

        Similarly, suppose $\gradual{Gamma}' \, \ottsym{(}  \mathit{x}  \ottsym{:}  \gradual{U}  \ottsym{)}  \gradual{Gamma}  \vdash  \gradual{u}'  \Leftarrow  \gradual{U}'$, $\gradual{Gamma}  \vdash  \gradual{u}  \Leftarrow  \gradual{U}$.
        Let $ {[  \gradual{u}  / { \mathit{x} } ]}^{ \gradual{U}  }  \gradual{Gamma}'  =  \gradual{Gamma}'' $, $ {[  \gradual{u}  / { \mathit{x} } ]}^{ \gradual{U}  }  \gradual{u}'  =  \gradual{u}'' $,
        and $ {[  \gradual{u}  / { \mathit{x} } ]}^{ \gradual{U}  }  \gradual{U}'  =  \gradual{U}'' $. 

        \begin{enumerate}[resume]  
        \item Then $\gradual{Gamma}'' \, \gradual{Gamma}  \vdash  \gradual{u}''  \Leftarrow  \gradual{U}''$  
\end{enumerate}
\end{lemma}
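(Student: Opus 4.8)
The plan is to prove all three parts simultaneously by well-founded induction on a lexicographic measure: first the type $\gradual{U}$ at which we substitute, ordered by the arrow-annotation multiset order $\prec$ used to justify termination of hereditary substitution, and then the structure of the typing/substitution derivation for the subject term. The three parts are mutually recursive in exactly the way the substitution judgments are: the checking case (part 3) bottoms out at atomic forms and hands off to parts 1 and 2, while the atomic cases recurse into their arguments via part 3 and, in the head-matching case, fire a hereditary continuation handled by part 3 at a strictly smaller type. Because the subject's type $\gradual{U}$ only ever shrinks along the recursion, the measure is well-founded, and I carry as a companion statement that substitution into the ambient context, $ {[  \gradual{u}  / { \mathit{x} } ]}^{ \gradual{U}  }  \gradual{Gamma}'  =  \gradual{Gamma}'' $, preserves well-formedness.

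For part 3 I case on the last rule of $\gradual{Gamma}' \, \ottsym{(}  \mathit{x}  \ottsym{:}  \gradual{U}  \ottsym{)}  \gradual{Gamma}  \vdash  \gradual{u}'  \Leftarrow  \gradual{U}'$. The introduction-form congruences ($\lambda$, $\Pi$, universes) are routine: substitution descends into the subterms, the induction hypothesis re-types each, and the same checking rule reassembles the result; the $\lambda$ and $\Pi$ cases additionally push the binder into the context, which is covered by the companion context-substitution statement. The one interesting sub-case is when $\gradual{u}'$ is an atomic form checked through synthesis (\rrule{GCheckSynth}): here I invert to its synthesized type and dispatch to part 1 or part 2 according to whether the head is $\mathit{x}$, then restore the checking judgment. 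Restoring it needs that consistency is preserved by substitution---that $ {[  \gradual{u}  / { \mathit{x} } ]}^{ \gradual{U}  } $ maps consistent types to consistent types---which I establish as an auxiliary lemma. Part 1's spine case is reconstructed directly by \rrule{GSynthApp}: the induction hypothesis gives that the shortened spine synthesizes a $\Pi$-type and that the substituted argument checks against its domain, and \rrule{GSynthApp} then yields exactly the substitution's output type $\gradual{U}'''$, provided hereditary substitutions commute---i.e.\ $ {[  \gradual{u}  / { \mathit{x} } ]}^{ \gradual{U}  } $ distributes over the codomain substitution. This exchange lemma for hereditary substitution is the principal bookkeeping obligation, and I would prove it first, by the same measure.

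The crux is part 2 with a non-empty spine, where $\mathit{x}$ is itself being eliminated. After substituting the head I obtain, by the induction hypothesis and inversion of a $\lambda$ checked against a $\Pi$-type, a term $\lambda  \mathit{y}  \ldotp  \gradual{u}'_{{\mathrm{1}}}$; the argument yields some $\gradual{u}_{{\mathrm{3}}}$ checking against the domain; and I must re-substitute $\gradual{u}_{{\mathrm{3}}}$ for $\mathit{y}$ in $\gradual{u}'_{{\mathrm{1}}}$. Predicativity guarantees that this domain type is strictly $\prec$ the function's type, so the hereditary continuation invokes part 3 at a smaller measure and preservation follows. This is precisely where the gradual setting deviates from \slang, and it is the main obstacle: when the head has type $\qm$ (so its ``domain'' is again $\qm$ and the metric fails to decrease), the rules \rrule{GHsubRDynType} and \rrule{GHsubRDynSpine} stop the recursion and return $\qm$ instead of continuing, and \rrule{GHsubRLamSpineOrd} returns $\qm$ when the multiset order is violated. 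In all of these cases I appeal to no induction hypothesis---I simply observe that $\qm$ checks against every type (via \rrule{GSynthDyn} and \rrule{GCheckSynth}), so the result trivially checks against $\gradual{U}''$. Thus the cases introduced to keep substitution terminating are exactly the cases where type preservation is immediate, and everywhere else the argument faithfully adapts the static Church--Curry-style preservation proof.

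Two smaller points round out the argument. The empty-spine base cases are direct: for $\mathit{x} \neq \mathit{y}$ the head is untouched and re-synthesizes from the substituted context, while for $\mathit{x} = \mathit{y}$ the result is $\gradual{u}$ (possibly $\eta$-expanded to stay canonical), which checks against $\gradual{U}$ by hypothesis, using that $\mathit{x} \notin \FV(\gradual{U})$ since $\gradual{U}$ is well-formed before $\mathit{x}$ in the context. And throughout, $\eta$-expansion must be tracked carefully---a substituted variable or $\qm$-typed head may need expanding one level, as in the $\qm \to \qm$ treatment during normalization---but this only affects which canonical representative is produced, not its type, so it is absorbed into the congruence cases.
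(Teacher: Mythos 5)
Your proposal is correct, but it is organized around a genuinely different induction than the paper's. The paper takes the hereditary-substitution derivation itself as the object of induction: since $[\gradual{u}/\mathit{x}]^{\gradual{U}}$ is defined by inference rules, the hereditary continuation in \rrule{GHsubRLamSpine} is literally a premise of that rule, so ordinary structural induction—with the ambient contexts $\gradual{Gamma}''$, $\gradual{Gamma}$ universally quantified, as the paper notes at the end of its proof—already covers the re-substitution at the domain type, and no termination measure appears in the preservation argument at all; the multiset order is confined to the separate totality lemma (Lemma~\ref{lem:subTotal}). You instead induct on the lexicographic pair (multiset order on $\gradual{U}$, then the typing derivation), which in effect replays that totality lemma's well-foundedness argument inside the preservation proof. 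Both work; the paper's choice makes the crux case a one-line appeal to the inductive hypothesis, while yours has the virtue of surfacing obligations the paper buries in its ``trivial'' cases: that hereditary substitution commutes with the codomain substitution (needed to get $\gradual{U}''=\gradual{U}'''$ in part 1 when the spine's type is an arrow rather than ${\qm}$), and that substitution preserves consistency (needed to restore \rrule{GCCheckSynth} after dispatching to parts 1 and 2). Two cautions. First, your appeal to predicativity for the decreasing domain type is not what justifies the measure in the gradual setting—it is the explicit side condition of \rrule{GHsubRLamSpine}, with \rrule{GHsubRLamSpineOrd} catching violations—though you do cite the latter, so this is phrasing rather than substance. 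Second, the approximation cases are not quite ``no induction hypothesis'' as you claim: in \rrule{GHsubRDynSpine} the paper still invokes the inductive hypothesis to show that the computed result type (the substituted codomain) is well-formed before concluding that ${\qm}$ checks against it, and the same obligation arises on your route; checking ${\qm}$ is only free against types that are themselves well-typed.
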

\begin{proof}
    By induction on the derivation of $\gradual{u}''$. For most rules, the typing derivation is trivial
    to construct using our inductive hypotheses, so we give the interesting rules here.
    
    \begin{itemize}
     
    \item \rrule{GHsubDiffNil}:
    Since $\mathit{y}$ is unchanged by substitution, we must show that substitution on the environment produces the right type.
    If $\ottsym{(}  \mathit{y}  \ottsym{:}  \gradual{U}'  \ottsym{)} \, \in \, \gradual{Gamma}$, then $\gradual{U}'$ cannot contain $\mathit{x}$, so $\ottsym{(}  \mathit{y}  \ottsym{:}  \gradual{U}'  \ottsym{)} \, \in \, \gradual{Gamma}$
    and $ {[  \gradual{u}  / { \mathit{x} } ]}^{ \gradual{U}  }  \gradual{U}'  =  \gradual{U}' $.
    If $\ottsym{(}  \mathit{y}  \ottsym{:}  \gradual{U}'  \ottsym{)} \, \in \, \gradual{Gamma}'$, then $\ottsym{(}  \mathit{y}  \ottsym{:}  \gradual{U}''  \ottsym{)} \, \in \, \gradual{Gamma}''$ by the definition of substitution on environments.
    In both cases, $\mathit{y}$ synthesizes the desired type.
        
    \item \rrule{GHsubDiffCons}: 
    Suppose $\gradual{Gamma}' \, \ottsym{(}  \mathit{x}  \ottsym{:}  \gradual{U}  \ottsym{)}  \gradual{Gamma}  \vdash   \mathit{y}  \gradual{e} \  \gradual{u}'    \Rightarrow  \gradual{U}'$.
    Then $\gradual{Gamma}' \, \ottsym{(}  \mathit{x}  \ottsym{:}  \gradual{U}  \ottsym{)}  \gradual{Gamma}  \vdash   \mathit{y} \gradual{e}   \Rightarrow  \gradual{U}'_{{\mathrm{0}}}$, the domain of $\gradual{U}'_{{\mathrm{0}}}$ is defined,
    and $ [  \gradual{u}'  / {\_} ]  \ottkw{cod}\  \gradual{U}'_{{\mathrm{0}}}  =  \gradual{U}' $
    
    If $\gradual{U}'_{{\mathrm{0}}} =  {\qm } $, then $ \ottkw{dom}\  \gradual{U}'_{{\mathrm{0}}} =  {\qm } $,
    and $\gradual{Gamma}' \, \ottsym{(}  \mathit{x}  \ottsym{:}  \gradual{U}  \ottsym{)}  \gradual{Gamma}  \vdash  \gradual{u}'  \Leftarrow  {\qm }$,
    and $\gradual{U}' = \gradual{U}'' =  {\qm } $.
    If $\gradual{e}', \gradual{u}''$ are the substituted versions of $\gradual{e},\gradual{u}'$ respectively,
    by our hypothesis we know that
    $\gradual{Gamma}'' \, \gradual{Gamma}'  \vdash   \mathit{y} \gradual{e}'   \Rightarrow  {\qm }$
    and $\gradual{Gamma}'' \, \gradual{Gamma}'  \vdash  \gradual{u}''  \Leftarrow  {\qm }$.
    By the definition of codomain, $ [  \gradual{u}''  / {\_} ]  \ottkw{cod}\  {\qm }  =  {\qm } $,
    which gives us enough to construct our desired typing derivation.
    
    \item \rrule{GHsubRHead}: 
    In this case, $\gradual{u}' = \mathit{x}$, $\gradual{U}' = \gradual{U}$ and $\gradual{u}'' = \gradual{u}$.
    By our premise, $\gradual{Gamma}'' \, \ottsym{(}  \mathit{x}  \ottsym{:}  \gradual{U}  \ottsym{)}  \gradual{Gamma}  \vdash  \mathit{x}  \Rightarrow  \gradual{U}$, with $\gradual{U}' = \gradual{U}$
    and for the context to be well formed, this means that
    $\mathit{x}$ cannot occur in $\gradual{U}$, so $\gradual{U}'' = \gradual{U}$.
    By our premise, combined with the fact that  typing is preserved under context extension, 
    $\gradual{Gamma}'' \, \gradual{Gamma}  \vdash  \gradual{u}  \Leftarrow  \gradual{U}$. 
        
    \item \rrule{GHsubRLamSpine}: 
    In this case, $\gradual{u}' =  \mathit{x}  \gradual{e} \  \gradual{u}'_{{\mathrm{1}}}  $,
    $ {[  \gradual{u}  / { \mathit{x} } ]}^{ \gradual{U}_{{\mathrm{1}}}  }  \mathit{x} \gradual{e}  = { \ottsym{(}  \lambda  \mathit{y}  \ldotp  \gradual{u}_{{\mathrm{2}}}  \ottsym{)} } : { \ottsym{(}  \mathit{y}  \ottsym{:}  \gradual{U}_{{\mathrm{1}}}  \ottsym{)}  \rightarrow  \gradual{U}_{{\mathrm{2}}} } $,
    and $ {[  \gradual{u}  / { \mathit{x} } ]}^{ \gradual{U}_{{\mathrm{1}}}  }  \gradual{u}'  =  \gradual{u}'_{{\mathrm{2}}} $.  
    By our hypothesis, $\gradual{Gamma}'' \, \gradual{Gamma}  \vdash  \gradual{u}'_{{\mathrm{2}}}  \Leftarrow  \gradual{U}_{{\mathrm{2}}}$
    where $\gradual{Gamma}'' \, \gradual{Gamma}  \vdash  \ottsym{(}  \lambda  \mathit{y}  \ldotp  \gradual{u}_{{\mathrm{2}}}  \ottsym{)}  \Leftarrow  \ottsym{(}  \mathit{y}  \ottsym{:}  \gradual{U}_{{\mathrm{1}}}  \ottsym{)}  \rightarrow  \gradual{U}_{{\mathrm{2}}}$.
    However, the function must be typed using \rrule{GCCheckLamPi}, 
    meaning that $\ottsym{(}  \mathit{y}  \ottsym{:}  \gradual{U}_{{\mathrm{1}}}  \ottsym{)}  \gradual{Gamma}'' \, \gradual{Gamma}  \vdash  \gradual{u}_{{\mathrm{2}}}  \Leftarrow  \gradual{U}_{{\mathrm{2}}}$.
    We can once again apply our inductive hypothesis here to see that
    $\gradual{Gamma}'' \, \gradual{Gamma}  \vdash  \gradual{u}'_{{\mathrm{3}}}  \Leftarrow  \gradual{U}'$, proving our result.
    
    \item \rrule{GHsubRDynType}, \rrule{GHsubRDynSpine}, \rrule{GHsubRLamSpineOrd}: The result holds since $ {\qm } $ checks against any type
    that is itself well-typed. $ {\qm } $ checks against any $ \TypeType_{ \ottnt{i} } $. For  \rrule{GHsubRDynSpine},
    our hypothesis says that $\gradual{Gamma}'' \, \gradual{Gamma}'  \vdash  {\qm }  \Leftarrow  \ottsym{(}  \mathit{y}  \ottsym{:}  \gradual{U}'_{{\mathrm{1}}}  \ottsym{)}  \rightarrow  \gradual{U}'_{{\mathrm{2}}}$, so 
    $\gradual{Gamma}'' \, \gradual{Gamma}'  \vdash  \ottsym{(}  \mathit{y}  \ottsym{:}  \gradual{U}'_{{\mathrm{1}}}  \ottsym{)}  \rightarrow  \gradual{U}'_{{\mathrm{2}}}  \Leftarrow   \TypeType_{ \ottnt{i} } $ for some $\ottnt{i}$.
    This in turn means that $\ottsym{(}  \mathit{y}  \ottsym{:}  \gradual{U}'_{{\mathrm{1}}}  \ottsym{)}  \gradual{Gamma}'' \, \gradual{Gamma}'  \vdash  \gradual{U}'_{{\mathrm{2}}}  \Leftarrow   \TypeType_{ \ottnt{i} } $.
    We can again use our hypothesis to show that $\gradual{Gamma}'' \, \gradual{Gamma}  \vdash  \mathtt{V3}  \Leftarrow   \TypeType_{ \ottnt{i} } $.

    \end{itemize}
    
    Note that for the last two cases, the induction is well-founded: since $\gradual{Gamma}''$ and $\gradual{Gamma}$
    are not part of the derivation of the substitution, we can quantify them universally
    in our inductive hypothesis.

    \end{proof}

    \typeNorm*
    \begin{proof}
        We perform mutual induction, proving that if $\gradual{Gamma}  \vdash  \gradual{t}  \leadsto  \gradual{u}  \Rightarrow  \gradual{U}$, then $\gradual{Gamma} \vdash \gradual{u} \Leftarrow \gradual{U}$.
        All cases are simple, except for the following:

        \rrule{GNCheckSynth}: since $ \gradual{U}' \sqsubseteq  \gradual{U} $, the static gradual guarantee gives us that
        $\gradual{t}$ checks against $\gradual{U}$.

        \rrule{GNCheckApprox}: $ {\qm } $ checks against any type.

        \rrule{GNSynthApp}: while we only know that the normal form of $\gradual{t}_{{\mathrm{1}}}$ checks against the type that $\gradual{t}_{{\mathrm{1}}}$
        synthesizes, this is enough to fulfill the premises of preservation of typing by approximate substitution,
        which, combined with our hypothesis, gives us our result.

    \end{proof}
 
    \begin{lemma}
        \label{lem:subTotal}
        Suppose $\gradual{Gamma}  \vdash  \gradual{u}  \Leftarrow  \gradual{U}$ and $\gradual{Gamma}' \, \ottsym{(}  \mathit{x}  \ottsym{:}  \gradual{U}  \ottsym{)}  \gradual{Gamma}  \vdash  \gradual{u}'  \Leftarrow  \gradual{U}'$.
        Then there exists some $\gradual{u}''$ such that $ {[  \gradual{u}  / { \mathit{x} } ]}^{ \gradual{U}  }  \gradual{u}'  =  \gradual{u}'' $.
    
        Similarly, if $\gradual{Gamma}  \vdash  \gradual{u}  \Leftarrow  \gradual{U}$ and $\gradual{Gamma}' \, \ottsym{(}  \mathit{x}  \ottsym{:}  \gradual{U}  \ottsym{)}  \gradual{Gamma}  \vdash  \gradual{rr}  \Rightarrow  \gradual{U}'$.
        Then there exists some $\gradual{u}''$ such that $ {[  \gradual{u}  / { \mathit{x} } ]}^{ \gradual{U}  }  \gradual{rr}  =  \gradual{u}'' $,
    and if $\gradual{rr} =  \mathit{x}  \gradual{e} \  \gradual{u}'  $, $ {[  \gradual{u}  / { \mathit{x} } ]}^{ \gradual{U}  }  \mathit{x}  \gradual{e} \  \gradual{u}'   = { \gradual{u}'' } : { \gradual{U}' } $.
    \end{lemma}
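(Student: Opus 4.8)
The plan is to establish all three clauses by a single simultaneous induction, using a lexicographic measure whose primary component is the type $\gradual{U}$ of the substituted variable $\mathit{x}$, ordered by the well-founded multiset order $\prec$ of \citet{multisetOrder} on arrow-type level annotations, and whose secondary component is the derivation of the checking/synthesis judgment for the term being substituted into. Since the already-proved preservation lemma (Approximate Substitution Preserves Typing) presupposes that the relevant substitution is defined, the two results must be used in tandem: within a single inductive step, once an inner induction hypothesis guarantees that a sub-substitution is defined, I invoke preservation on that sub-substitution to recover the type at which the next recursive call is issued, and hence to justify that this call lies within the measure.

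The majority of cases are congruences: the term is a constructor (a $\lambda$, a $\Pi$-type, a universe, a variable head, or an applied spine that does not reduce) whose immediate subterms are substituted at the \emph{same} variable type $\gradual{U}$. For each such subterm the secondary (structural) induction hypothesis yields a result, and the outputs are reassembled by the corresponding syntax-directed rule, so definedness is immediate. The base case \rrule{GHsubRHead}, where $\gradual{u}' = \mathit{x}$, simply returns $\gradual{u}$. For the synthesis clauses I additionally read off the intermediate types from the preservation lemma, which is what lets me know which rule applies and that its side conditions (such as an argument checking against a domain) are met.

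The crux is \rrule{GHsubRLamSpine}, the only rule that reduces a redex created by substitution. Here $\gradual{u}' = \mathit{x}\,\gradual{e}\,\gradual{u}_2$: substituting $\gradual{u}$ into the shorter spine $\mathit{x}\,\gradual{e}$ is a strict subterm, discharged by the secondary induction hypothesis, and by preservation it yields some $\lambda\mathit{y}\ldotp\gradual{u}_{b}$ of function type $(\mathit{y}:\gradual{U}_1)\xrightarrow{i}\gradual{U}_2$; substituting into the argument $\gradual{u}_2$, also a strict subterm, yields some $\gradual{u}_3$ checking against $\gradual{U}_1$. The two remaining, hereditary, calls substitute $\gradual{u}_3$ for $\mathit{y}$ in the body $\gradual{u}_b$ and in the codomain $\gradual{U}_2$, both at the domain type $\gradual{U}_1$. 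Predicativity guarantees $\gradual{U}_1 \prec \gradual{U}$ (peeling a spine application yields a type no larger than $\gradual{U}$, and the domain of the head's function type lies at a strictly lower universe), so these calls decrease the primary component of the measure and are covered by the outer induction hypothesis. This is exactly the place where the static termination argument transfers to the gradual setting unchanged.

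The genuinely new cases, and the main obstacle, concern $\qm$. When the substituted head has type $\qm$ (\rrule{GHsubRDynType}), when $\qm$ itself is applied as a function (\rrule{GHsubRDynSpine}), or when the decreasing metric would be violated (\rrule{GHsubRLamSpineOrd}), the rules produce $\qm$ with \emph{no} recursive call, so definedness is immediate and no measure obligation arises. The obstacle is to verify that the measure is genuinely well-founded, i.e. that every recursive call which could fail to decrease is in fact diverted to one of these $\qm$-producing rules: because $\ottkw{dom}\,\qm = \qm$ does not shrink under $\prec$, and because $\qm : \qm$ could otherwise be used to circumvent the universe hierarchy (as in the encoding of Girard's paradox noted earlier), I must confirm that the arrow-level annotations installed by normalization render $\prec$ well-founded and that the explicit check in \rrule{GHsubRLamSpineOrd} fires precisely in the offending situations. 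Establishing this coverage — that the syntactic rules partition the well-typed inputs so that every non-decreasing configuration is caught — is the delicate step; the reward is that approximation recovers the totality that ideal substitution lacks.
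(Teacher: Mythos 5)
There is a genuine gap, and it sits exactly at the crux case. In your treatment of \rrule{GHsubRLamSpine} you assert that predicativity guarantees $\gradual{U}_{{\mathrm{1}}} \prec \gradual{U}$, so that ``the static termination argument transfers to the gradual setting unchanged.'' In \lang this is false, and it is the central difficulty of the lemma: because $\ottkw{dom}\ {\qm } = {\qm }$ and ${\qm } : {\qm }$ lets terms circumvent the universe hierarchy, a well-typed spine head can perfectly well have a (substituted) domain type $\gradual{U}'_{{\mathrm{1}}}$ that is \emph{not} below $\gradual{U}$ in the multiset order. The paper's proof therefore does not derive the decrease from predicativity; after the inner induction hypothesis and the preservation lemma produce the substituted function and argument, it case-splits on the decidable condition $\gradual{U}'_{{\mathrm{1}}} \prec \gradual{U}$: if it fails, \rrule{GHsubRLamSpineOrd} applies and the result is ${\qm }$; only if it holds does the outer induction hypothesis license the hereditary calls into the body and codomain via \rrule{GHsubRLamSpine}. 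Your closing paragraph concedes that such a check is needed and that some configurations must be ``diverted'' to the ${\qm }$-producing rules, but you leave establishing this as an open ``delicate step''; combined with the predicativity claim, your write-up contradicts itself on whether the decrease is automatic, and the proof is never actually closed at the one point where the gradual language departs from the static one. (Once organized as in the paper, the coverage worry disappears: \rrule{GHsubRLamSpine} and \rrule{GHsubRLamSpineOrd} carry complementary side conditions, so exhaustiveness holds by construction.)

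A second, related error: you claim that \rrule{GHsubRDynSpine} --- the case where ${\qm }$ is applied as a function --- produces ${\qm }$ ``with no recursive call, so definedness is immediate and no measure obligation arises.'' It does carry an obligation: the atomic-form clause of the lemma must also output the result \emph{type}, which \rrule{GHsubRDynSpine} obtains by hereditarily substituting the argument into the codomain $\gradual{U}_{{\mathrm{2}}}$ at type $\gradual{U}'_{{\mathrm{1}}}$. In the paper's proof this call is justified by the outer induction hypothesis, which is exactly why the order check $\gradual{U}'_{{\mathrm{1}}} \prec \gradual{U}$ is performed \emph{before} the split on whether the substituted head is ${\qm }$ or a $\lambda$-abstraction. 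Your proposal inverts that structure (treating all ${\qm }$ cases as recursion-free), so the \rrule{GHsubRDynSpine} obligation is left uncovered even if the predicativity claim were repaired. Only \rrule{GHsubRDynType}, where the head's \emph{type} is ${\qm }$ and the output is ${\qm } : {\qm }$, is recursion-free in the way you describe.
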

    \begin{proof}
        We perform nested induction: first, on the multiset of level annotations on arrow types in $\gradual{U}$, then
        on the typing derivation for $\gradual{u}'$.
             
            \begin{itemize}
                \item \rrule{GCSynthType}, \rrule{GCCheckDyn}: take $\gradual{u}'' = \gradual{u}'$.
                \item \rrule{GCSynthVar}: take $\gradual{u}'' = \mathit{y}$ if $ \mathit{x}   \neq   \mathit{y} $, $\gradual{u}$ otherwise.
                \item \rrule{GCCheckLamDyn}, \rrule{GCCheckLamPi}, \rrule{GCCheckPi}, \rrule{GCCheckSynth}: 
                follows immediately from our hypothesis.
                \item \rrule{GCSynthApp}: This case is the most interesting. Suppose we are substituting into $ \mathit{x}  \gradual{e} \  \gradual{u}'  $.  
                If $ \mathit{x}   \neq   \mathit{y} $, then $ \mathit{y} \gradual{e} $ and $\gradual{u}'$ must both be well typed, so we can apply our hypothesis
                to find their substituted forms. 
    
                Assume then that $\mathit{x} = \mathit{y}$. By our premise, there must be some type such that
                $\gradual{Gamma}' \, \ottsym{(}  \mathit{x}  \ottsym{:}  \gradual{U}  \ottsym{)}  \gradual{Gamma}  \vdash  \ottsym{(}   \mathit{x} \gradual{e}   \ottsym{)}  \Rightarrow  \gradual{U}_{{\mathrm{3}}}$, and it must have a defined domain.
                
                If $\gradual{U}_{{\mathrm{3}}} =  {\qm } $, then by our hypothesis, there's some value where $ {[  \gradual{u}  / { \mathit{x} } ]}^{ \gradual{U}  }  \mathit{x} \gradual{e}  = { \gradual{u}'' } : { {\qm } } $,
                meaning we can apply \rrule{GHsubRDynType} to produce $ {\qm }  :  {\qm } $ as our result.
                Note that if $\gradual{U} =  {\qm } $, we necessarily have this case.                 
                Otherwise, for the domain to be defined, $\gradual{U}_{{\mathrm{3}}} = \ottsym{(}  \mathit{y}  \ottsym{:}  \gradual{U}_{{\mathrm{1}}}  \ottsym{)}  \rightarrow  \gradual{U}_{{\mathrm{2}}}$.
                By our hypothesis, there must be some value such that $ {[  \gradual{u}  / { \mathit{x} } ]}^{ \gradual{U}  }  \mathit{x} \gradual{e}  = { \gradual{u}'' } : { \gradual{U}'' } $,
                and moreover, preservation of typing says that $ {[  \gradual{u}  / { \mathit{x} } ]}^{ \gradual{U}  }  \ottsym{(}  \mathit{y}  \ottsym{:}  \gradual{U}_{{\mathrm{1}}}  \ottsym{)}  \rightarrow  \ottsym{(}  \gradual{U}_{{\mathrm{2}}}  \ottsym{)}  =  \gradual{U}'' $
                and that $\gradual{Gamma}'' \, \gradual{Gamma}  \vdash  \gradual{u}''  \Leftarrow  \gradual{U}''$, where $\gradual{Gamma}''$ is $\gradual{Gamma}'$ with $\gradual{u}$ substituted for $\mathit{x}$.
                But this means that $\gradual{U}'' = \ottsym{(}  \mathit{y}  \ottsym{:}  \gradual{U}'_{{\mathrm{1}}}  \ottsym{)}  \rightarrow  \gradual{U}'_{{\mathrm{2}}}$, since we must have constructed it using \rrule{GHsubPi}.
                Then, if $\gradual{U}''$ an arrow type, it is not atomic, meaning that $\gradual{u}''$ could only check against it
                using \rrule{GCCheckDyn} or \rrule{GCCheckLamPi}.     
                In both cases, since the whole application is well typed, we know that
                $\gradual{Gamma}' \, \ottsym{(}  \mathit{x}  \ottsym{:}  \gradual{U}  \ottsym{)}  \gradual{Gamma}  \vdash  \gradual{u}'  \Leftarrow  \gradual{U}_{{\mathrm{1}}}$, so there's some value where
                $ {[  \gradual{u}  / { \mathit{x} } ]}^{ \gradual{U}  }  \gradual{u}'  =  \gradual{u}'_{{\mathrm{2}}} $.
                Preservation of typing gives that $\gradual{Gamma}'' \, \gradual{Gamma}  \vdash  \gradual{u}'_{{\mathrm{2}}}  \Leftarrow  \gradual{U}'_{{\mathrm{1}}}$.
                If $\gradual{U}'_{{\mathrm{1}}}$ is not less than $\gradual{U}$ in our multiset order,
                then we can apply \rrule{GHsubRLamSpineOrd}. Otherwise
                $\gradual{U}'_{{\mathrm{1}}}$ is less than $\gradual{U}$ in our multiset order.
                                
                In the first case, $\gradual{u}'' =  {\qm } $. Then we can use $ {\qm } $ as our return value, and
                by our outer inductive hypothesis, there must be some value such that  $ {[  \gradual{u}'_{{\mathrm{2}}}  / { \mathit{y} } ]}^{ \gradual{U}'_{{\mathrm{1}}}  }  \gradual{U}_{{\mathrm{2}}}  =  \gradual{U}' $,
                giving us our return type. This allows us to build the derivation with \rrule{GHsubRDynSpine}.
     
                In the second case, $\gradual{u}'' = \ottsym{(}  \lambda  \mathit{y}  \ldotp  \gradual{u}_{{\mathrm{2}}}  \ottsym{)}$.
                We can decompose the typing of $\gradual{u}''$ to see that
                $\ottsym{(}  \mathit{y}  \ottsym{:}  \gradual{U}'_{{\mathrm{1}}}  \ottsym{)}  \gradual{Gamma}'' \, \gradual{Gamma}  \vdash  \gradual{u}_{{\mathrm{2}}}  \Leftarrow  \gradual{U}'_{{\mathrm{2}}}$,
                which lets us apply our outer inductive hypothesis see that
                $ {[  \gradual{u}'_{{\mathrm{2}}}  / { \mathit{y} } ]}^{ \gradual{U}'_{{\mathrm{1}}}  }  \gradual{u}_{{\mathrm{2}}}  =  \gradual{u}'_{{\mathrm{3}}} $, which we can use as our final result.
                Similarly,  $ {[  \gradual{u}'_{{\mathrm{2}}}  / { \mathit{y} } ]}^{ \gradual{U}'_{{\mathrm{1}}}  }  \gradual{U}'_{{\mathrm{2}}}  =  \gradual{U}'_{{\mathrm{3}}} $ gives us our final type,
                which allows us to construct the derivation using \rrule{GHsubRLamSpine}.
            \end{itemize}
          
    \end{proof}

    The fact that the normalization rules directly correspond to typing rules, combined with the totality
    of hereditary substitution, is enough to give us our result for normalization.
    \normTotal*

    \subsection{Type Safety}

    \begin{lemma}[Progress]
        If $\cdot  \vdash  \evterm{t}'  \ottsym{:}  \gradual{U}$ and $\evterm{t}'$ is not a value, then there exists some $\evterm{t}$
        such that $\evterm{t}'  \longrightarrow  \evterm{t}$.
        \end{lemma}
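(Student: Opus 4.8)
The plan is to proceed by induction on the derivation of $\cdot \vdash \evterm{t}' : \gradual{U}$, using the evidence-term typing rules (\rrule{EvTypeApp}, \rrule{EvTypeEv}, \rrule{EvTypeDyn}, and the remaining rules that mirror the gradual ones). Because the context is empty, the variable rule can never conclude the derivation, so every case corresponds to a term former. The proof rests on a supporting inversion / canonical-forms lemma: if $\cdot \vdash \evterm{v} : \gradual{U}$ and $\evterm{v}$ is a value, then $\evterm{v}$ is either a raw value $\evterm{rv}$ or an ascribed raw value $\myepsilon\ \evterm{rv}$, and when $\gradual{U}$ is (consistent with) a function type or ${\qm }$, the underlying raw value is a $\lambda$-abstraction or ${\qm }$. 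I would establish this lemma first, since it tells us exactly which reduction rule applies to each redex.

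The routine cases are the introduction forms. Universes $\TypeType_i$, function types, $\lambda$-abstractions, the ascribed unknown, and the inductive constructors ($0$, $\ottkw{Succ}$, $\ottkw{Nil}$, $\ottkw{Cons}$, $\ottkw{Refl}$) are all (raw) values, so the hypothesis that $\evterm{t}'$ is not a value discharges them vacuously. For every compound form I first check whether an immediate subterm fails to be a value; if so, the inductive hypothesis produces a step for that subterm, and \rrule{StepContext} lifts it to a step of $\evterm{t}'$ by plugging into the appropriate evaluation context (or propagates $\mathsf{err}$ out of the context). Thus the only genuine work is the case in which all relevant subterms are already values.

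The substantive cases are the redexes. For an evidence ascription $\myepsilon\ \evterm{v}$ with $\evterm{v}$ a value: if $\evterm{v}$ is a raw value then $\myepsilon\ \evterm{rv}$ is itself a value (excluded); if $\evterm{v} = \myepsilon'\ \evterm{rv}$ we are in the double-evidence situation and step by combining evidence via the meet $\myepsilon \sqcap \myepsilon'$ (consistent transitivity), reducing either to $(\myepsilon \sqcap \myepsilon')\ \evterm{rv}$ or, when the meet is undefined, to $\mathsf{err}$ by \rrule{StepAscrFail}. For an application $\evterm{v}_1\ \evterm{v}_2$ with both subterms values, the canonical-forms lemma forces $\evterm{v}_1$ to be a (possibly ascribed) $\lambda$ or an ascribed ${\qm }$: in the $\lambda$ case we perform the substitution, ascribing $(\ottkw{dom}\ \myepsilon) \sqcap \myepsilon'$ to the argument and updating the codomain evidence (again possibly failing to $\mathsf{err}$); in the ${\qm }$ case we reduce using the evidence carried by ${\qm }$, substituting the argument into its codomain. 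Eliminators ($\ottkw{natElim}$, $\ottkw{vecElim}$, $\ottkw{eqElim}$) fully applied to value scrutinees are handled analogously, dispatching on the constructor of the scrutinee and reducing ${\qm }$-scrutinees to ${\qm }$ (to $\ottkw{Refl}\ {\qm }\ {\qm }$ in the equality case).

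I expect the main obstacle to be precisely this canonical-forms lemma together with exhaustively matching each well-typed redex to a reduction rule: I must verify that every value occurring in function or eliminator position has one of the anticipated shapes, that the evidence it carries has the structure (domain and codomain defined) required by the relevant rule's side conditions, and that whenever a required meet is undefined the semantics still affords a step to $\mathsf{err}$ rather than getting stuck. The treatment of $\mathsf{err}$ itself is a bookkeeping point: it is a terminal configuration, so it is excluded from the statement (grouped with the final states), while error propagation through evaluation contexts is what keeps every genuinely non-value term reducible.
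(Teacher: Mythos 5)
Your proposal is correct and follows essentially the same route as the paper's proof: induction on the evidence-typing derivation, discharging introduction forms as values, lifting subterm steps through evaluation contexts, and resolving the genuine redexes (double evidence, application of an ascribed $\lambda$ or an ascribed $ {\qm } $) via the meet-based rules, stepping to $\mathsf{err}$ when the required meet is undefined. The only differences are presentational: you package the inversion reasoning as a separate canonical-forms lemma where the paper inverts inline within the \rrule{EvTypeApp} case, and your eliminator cases ($\ottkw{natElim}$, $\ottkw{vecElim}$, $\ottkw{eqElim}$) go beyond the lemma's scope, since the paper's metatheory covers only the core language without inductives.
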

        \begin{proof}
            By induction on the derivation on the typing of $\evterm{t}'$.
            \begin{itemize}
                \item \rrule{EvTypeType}, \rrule{EvTypeLevel}, \rrule{EvTypeDyn}, \rrule{EvTypeLam}: must be values 
                \item \rrule{EvTypeEv},\rrule{EvTypePi}: either we have a value, or can step with our context rules.
                \item \rrule{EvTypeVar}: cannot be typed under empty environment
                \item \rrule{EvTypeApp}: then $\evterm{t}'= \evterm{t}_1 \  \evterm{t}_2 $, where $\cdot  \vdash  \evterm{t}_1  \ottsym{:}  \ottsym{(}  \mathit{x}  \ottsym{:}  \gradual{U}  \ottsym{)}  \rightarrow  \gradual{U}'$.
                and $\Gamma \vdash \evterm{t}_2 : \gradual{U}$.
                If either of $\evterm{t}_1$ and $\evterm{t}_2$ is not a value, then we can step by the context rule.
                Otherwise, both $\evterm{t}_1$ and $\evterm{t}_2$  are values,
                and by inversion on our typing rules, we know that $\evterm{t}_1 = \langle \gradual{U}_1 \rangle \evterm{rv}$,
                where $\gradual{U}_1 \cong \ottsym{(}  \mathit{x}  \ottsym{:}  \gradual{U}  \ottsym{)}  \rightarrow  \gradual{U}'$.
                This means that $\ottkw{dom}$ and $\ottkw{cod}$ are both defined on $\gradual{U}_1$.
                                If $\evterm{rv}$ is $\qm$, we can step by \rrule{StepAppDyn}.
                Otherwise, $\evterm{rv}$ is $\lambda x \ldotp \evterm{v}$.
                First consider when $\evterm{t}_2$ is annotated with evidence.
                We can then either apply 
                \rrule{StepAppEv} or  \rrule{StepAppFailTrans},
                depending on whether the meet with the evidence of $\evterm{t}_2$ is defined.
                In the case that $\evterm{t}_2$ has no evidence, we repeat the above process after applying \rrule{StepAppEvRaw}. 
            \end{itemize}
        \end{proof}
        
        \begin{lemma}[Preservation]
            If $ \cdot  \vdash  \evterm{t}'  \ottsym{:}  \gradual{U}$ and $\evterm{t}'  \longrightarrow  \evterm{t}$, then $\evterm{t} = \mathsf{err}$ or $\cdot  \vdash  \evterm{t}  \ottsym{:}  \gradual{U}$.
            \end{lemma}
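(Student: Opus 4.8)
The plan is to proceed by induction on the derivation of the reduction $\evterm{t}' \longrightarrow \evterm{t}$, performing inversion on the typing derivation $\cdot \vdash \evterm{t}' : \gradual{U}$ in each case. Because the semantics were engineered from a hypothetical preservation argument, every reduction rule is designed so that its right-hand side either is $\mathsf{err}$ (discharging the goal immediately through the left disjunct) or can be re-typed at the \emph{same} type $\gradual{U}$. The failure rules \rrule{StepAscrFail} and \rrule{StepAppFailTrans} produce $\mathsf{err}$ and so are immediate.

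For the non-failing reductions I would handle them in roughly this order. The double-evidence combination $\myepsilon_2\,(\myepsilon_1\,\evterm{rv}) \longrightarrow (\myepsilon_1 \sqcap \myepsilon_2)\,\evterm{rv}$ is discharged by \emph{consistent transitivity}: inversion via \rrule{EvTypeEv} gives $\myepsilon_1 \vdash \gradual{U}_0 \cong \gradual{U}_1$ and $\myepsilon_2 \vdash \gradual{U}_1 \cong \gradual{U}$ for the type $\gradual{U}_0$ of $\evterm{rv}$, and the meet property guarantees $\myepsilon_1 \sqcap \myepsilon_2 \vdash \gradual{U}_0 \cong \gradual{U}$ whenever the meet is defined, so \rrule{EvTypeEv} rebuilds the derivation. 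The $\beta$-style rule \rrule{StepAppEv}, on a redex of the shape $(\langle \gradual{U}_f \rangle\,\lambda x.\evterm{t})\,(\langle \gradual{U}_a\rangle\,\evterm{rv})$, is the most involved: inversion through \rrule{EvTypeApp} and \rrule{EvTypeEv} exposes the function and argument types, the evidence $(\ottkw{dom}\ \langle\gradual{U}_f\rangle)\sqcap\langle\gradual{U}_a\rangle$ witnesses consistency of the argument with the domain, and the evidence-ascribing hereditary substitution pushes this into the body. Well-typedness of the result then follows from the substitution-preservation lemma for evidence-annotated hereditary substitution proved earlier, together with the fact that \rrule{EvTypeApp} assigns the whole redex the codomain of $\langle\gradual{U}_f\rangle$ with the normalized argument substituted in, which is exactly the type one obtains after the step. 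The unknown-application rule \rrule{StepAppDyn}, reducing $(\myepsilon\,\qm)\,\evterm{v}$, is analogous but simpler, since $\qm$ always carries evidence of a $\qm \to \qm$ shape and we read the result type off its codomain; \rrule{StepAppEvRaw} is routine, reusing the previous cases after treating an unascribed argument as ascribed $\langle\qm\rangle$.

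The main obstacle is the congruence case \rrule{StepContext}. Here the redex sits inside an evaluation context $E$, and the induction hypothesis yields that the reduced subterm keeps its type (or becomes $\mathsf{err}$, which must then be propagated outward through $E$). The difficulty is specific to dependent typing: when the hole occupies the argument position of an application, the type assigned by \rrule{EvTypeApp} is the codomain with the \emph{normal form} of the argument substituted in, so preserving the type of the whole application requires that a runtime step not change the canonical form of the argument. I would therefore isolate an auxiliary lemma stating that runtime reduction is sound with respect to approximate normalization --- if $\evterm{t}_2 \longrightarrow \evterm{t}_2'$ then $\evterm{t}_2$ and $\evterm{t}_2'$ have the same canonical form (evidence is erased by normalization, $\beta$-redexes are already contracted by normalization, and applications of $\qm$ normalize to $\qm$ before and after the step). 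With this lemma the two substituted codomains coincide and $E$ can be re-typed at $\gradual{U}$. Proving this normalization/reduction agreement cleanly and uniformly across all context shapes --- in particular under a $\lambda$-binder, where runtime values may still contain redexes that canonical forms do not --- is the crux of the argument.
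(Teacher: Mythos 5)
Your handling of the redex cases follows the paper's proof essentially step for step: induction on the reduction derivation, the failure rules discharged immediately, \rrule{StepAscr} via consistent transitivity of the meet, \rrule{StepAppEv} via $\ottkw{dom}$, the meet, the evidence-ascribing substitution and the substitution-preservation lemma, and \rrule{StepAppDyn}/\rrule{StepAppEvRaw} as simpler variants. Up to the congruence case, the two arguments coincide.

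The gap is the auxiliary lemma you propose for \rrule{StepContext}: that $\evterm{t}_{{\mathrm{2}}} \longrightarrow \evterm{t}'_{{\mathrm{2}}}$ implies $\evterm{t}_{{\mathrm{2}}}$ and $\evterm{t}'_{{\mathrm{2}}}$ have the same canonical form. This is false in \lang, and it fails for exactly the reason the language exists: compile-time normalization is \emph{approximate} while runtime execution is \emph{exact}. Consider $\evterm{t}_{{\mathrm{2}}} = (\langle (z:{\qm})\rightarrow{\qm} \rangle\, \lambda z \ldotp z\, z)\ (\langle (y:{\qm})\rightarrow{\qm} \rangle\, \lambda y \ldotp y)$, the elaboration of an ordinary untyped redex. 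Normalizing it erases the evidence and must hereditarily substitute $\lambda y \ldotp y$ for the head $z$ of the spine $z\,z$ \emph{at type} ${\qm}$; this is precisely the case (\rrule{GHsubRDynType}/\rrule{GHsubRDynSpine}) where substitution punts, so the canonical form of $\evterm{t}_{{\mathrm{2}}}$ is ${\qm}$. One \rrule{StepAppEv} step, however, contracts the redex exactly, yielding (up to evidence ascriptions) $(\lambda y \ldotp y)(\lambda y \ldotp y)$, whose canonical form is $\lambda y \ldotp y$. So a single runtime step can strictly \emph{refine} the canonical form ($\lambda y \ldotp y \sqsubseteq {\qm}$), your parenthetical claim that ``$\beta$-redexes are already contracted by normalization'' does not hold for approximated redexes, and the two substituted codomains you need to identify in the argument-position case genuinely differ. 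Any repair has to work with a monotonicity statement (reduction makes normal forms more precise, the same phenomenon underlying the paper's normalization and dynamic gradual guarantees) plus the slack in evidence typing that lets the surrounding application be re-typed at a less precise type, rather than with on-the-nose equality of normal forms. For comparison, the paper dispatches \rrule{StepContext} in one line, ``by the inductive hypothesis and preservation under substitution (since hole-filling is a special case of substitution),'' which quietly glosses over the same subtlety you spotted; your instinct that this case is the crux is sound, but the lemma you build on is not a theorem of this system.
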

        \begin{proof}
            By induction on the derivation of $\evterm{t}'  \longrightarrow  \evterm{t}$.
            \begin{itemize}
                \item \rrule{StepAscrFail},\rrule{StepAppFailTrans},\rrule{StepContextErr}: 
                trivial, since we step to $\mathsf{err}$.
                \item \rrule{StepAscr}: then $\evterm{t}' = \langle  \gradual{U}_{{\mathrm{1}}}  \rangle \, \ottsym{(}  \langle  \gradual{U}_{{\mathrm{2}}}  \rangle \, \evterm{t}''  \ottsym{)}$, and by inversion on typing, we know that
                $\cdot  \vdash  \evterm{t}''  \ottsym{:}  \gradual{U}'$ where $ \langle  \gradual{U}_{{\mathrm{2}}}  \rangle \vdash  \gradual{U}'  \cong  \gradual{U} $. 
                 Since $  \gradual{U}_{{\mathrm{1}}}  \sqcap  \gradual{U}_{{\mathrm{2}}}  \sqsubseteq  \gradual{U}_{{\mathrm{2}}} $ then $ \langle   \gradual{U}_{{\mathrm{1}}}  \sqcap  \gradual{U}_{{\mathrm{2}}}   \rangle \vdash  \gradual{U}'  \cong  \gradual{U} $, which gives us $\cdot  \vdash  \langle   \gradual{U}_{{\mathrm{1}}}  \sqcap  \gradual{U}_{{\mathrm{2}}}   \rangle \, \evterm{t}''  \ottsym{:}  \gradual{U}$.
                                                 \item \rrule{StepAppEv}: then $\evterm{t}' =  \ottsym{(}  \langle  \gradual{U}_{{\mathrm{1}}}  \rangle \, \ottsym{(}  \lambda  \mathit{x}  \ldotp  \evterm{t}''  \ottsym{)}  \ottsym{)} \  \ottsym{(}  \langle  \gradual{U}_{{\mathrm{2}}}  \rangle \, \evterm{rv}  \ottsym{)} $.
                 By inversion on typing, $\cdot  \vdash  \lambda  \mathit{x}  \ldotp  \evterm{t}''  \ottsym{:}  \gradual{U}'_{{\mathrm{1}}}$ where $ \langle  \gradual{U}_{{\mathrm{1}}}  \rangle \vdash  \gradual{U}'_{{\mathrm{1}}}  \cong  \gradual{U}'_{{\mathrm{2}}} $,
                 and the whole expression $\evterm{t}'$ types against the codomain of $\gradual{U}'_{{\mathrm{2}}}$.
                 We know that $\gradual{Gamma}  \vdash  \langle  \gradual{U}_{{\mathrm{2}}}  \rangle \, \evterm{rv}  \ottsym{:}   \ottkw{dom}\  \gradual{U}'_{{\mathrm{2}}} $, so $ \langle  \gradual{U}_{{\mathrm{2}}}  \rangle \vdash   \ottkw{dom}\  \gradual{U}'_{{\mathrm{2}}}   \cong  \gradual{U}_{{\mathrm{3}}} $ and $\cdot  \vdash  \evterm{rv}  \ottsym{:}  \gradual{U}_{{\mathrm{3}}}$.
                  
                 We then know that $ \langle   \gradual{U}_{{\mathrm{2}}}  \sqcap   \ottkw{dom}\  \gradual{U}_{{\mathrm{1}}}    \rangle \vdash   \ottkw{dom}\  \gradual{U}'_{{\mathrm{1}}}   \cong  \gradual{U}'_{{\mathrm{2}}} $.                  So $\cdot  \vdash  \langle   \gradual{U}_{{\mathrm{2}}}  \sqcap   \ottkw{dom}\  \gradual{U}_{{\mathrm{1}}}    \rangle \, \evterm{rv}  \ottsym{:}   \ottkw{dom}\  \gradual{U}'_{{\mathrm{1}}} $. 
                 By preservation of typing under substitution, substituting this into $\evterm{t}''$ has type $ {[  \evterm{rv}  / {\_} ]} \ottkw{cod}\  \gradual{U}'_{{\mathrm{1}}} $.
                 Finally, we then know that if $ \langle   {[  \evterm{rv}  / {\_} ]} \ottkw{cod}\  \gradual{U}_{{\mathrm{1}}}   \rangle \vdash   {[  \evterm{rv}  / {\_} ]} \ottkw{cod}\  \gradual{U}'_{{\mathrm{1}}}   \cong   {[  \evterm{rv}  / {\_} ]} \ottkw{cod}\  \gradual{U}'_{{\mathrm{2}}}  $.
                                 This means that our final result can be typed at $\gradual{U}'_{{\mathrm{2}}}$.
                 \item \rrule{StepAppDyn}: holds by preservation of typing under codomain substitution.
                 \item \rrule{StepAppEvRaw}: similar reasoning as for \rrule{StepAppEv}, except with less indirection since we know
                 $\cdot  \vdash  \evterm{rv}  \ottsym{:}   \ottkw{dom}\  \gradual{U}'_{{\mathrm{2}}} $.
                 \item \rrule{StepContext}: holds by our inductive hypothesis and preservation under substitution (since hole-filling is a special case of substitution).

            \end{itemize}
        \end{proof}

These together give us type safety.
\typeSound*

\subsection{Soundness and Optimality of $\alpha$ with respect to $\gamma$}

We now show that our AGT functions form a Galois-connection. 

\begin{theorem}[Soundness of $\alpha$]
    If $A \neq \emptyset$, then $A\subseteq \gamma(\alpha(A))$.
\end{theorem}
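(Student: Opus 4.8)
The plan is to prove the inclusion $A \subseteq \gamma(\alpha(A))$ by induction, case-splitting on which clause in the definition of $\alpha$ (\autoref{fig:abstraction}) computes $\alpha(A)$. Since $\alpha$ is defined by recursion on the common structure shared by all elements of $A$, the natural induction is on the size of the terms in $A$ (equivalently, on the structure of $\alpha(A)$); in each structured clause the recursive calls to $\alpha$ are applied to projected component sets whose elements are strict subterms, so the induction is well-founded.

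First I would dispatch the base cases. If $\alpha(A) = \qm$ (the catch-all clause), then $\gamma(\alpha(A)) = \gamma(\qm) = \AllTerms$, and $A \subseteq \AllTerms$ holds trivially. If $\alpha(A) = \TypeType_i$ or $\alpha(A) = x$, then the corresponding clause only fires when $A$ is exactly the singleton $\{\TypeType_i\}$ (resp. $\{x\}$), and $\gamma$ maps these gradual atoms back to the same singletons, giving $A = \gamma(\alpha(A))$.

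Next come the structured, inductive cases. Consider $\alpha(A) = \lambda x. \gradual{u}'$: this clause fires precisely when every element of $A$ has the form $\lambda x. \static{u}$, so $A = \{\lambda x. \static{u} \mid \static{u} \in B\}$, where $B$ is the set of bodies (necessarily non-empty, since $A \neq \emptyset$) and $\alpha(B) = \gradual{u}'$. By the induction hypothesis $B \subseteq \gamma(\gradual{u}')$, hence for every $\lambda x. \static{u} \in A$ we have $\static{u} \in \gamma(\gradual{u}')$, which by the definition of $\gamma$ on lambdas places $\lambda x. \static{u} \in \gamma(\lambda x. \gradual{u}') = \gamma(\alpha(A))$. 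The function-type case $\alpha(A) = (x : \gradual{U}_1') \to \gradual{U}_2'$ and the spine/application case $\alpha(A) = x\, \gradual{e}'\, \gradual{u}'$ are handled identically: decompose $A$ into its component sets (domain and codomain sets; head-spine set and argument set), apply the induction hypothesis componentwise, and reassemble using the matching clause of $\gamma$. For the spine case, this recursion descends along the spine as well.

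The main obstacle is bookkeeping rather than mathematical depth: I must verify that whenever a structured clause of $\alpha$ fires, the set $A$ genuinely decomposes uniformly into its projected component sets, that each such projection is non-empty (so the recursive invocations of $\alpha$ and the induction hypothesis apply), and that the clauses are mutually exclusive so that exactly one is selected. Establishing these uniform-shape and non-emptiness side conditions for the structured clauses — and confirming that $\alpha$'s pattern matching on the common shape of $A$ is well-defined — is the crux; once they are in place, each inductive case closes immediately by unfolding $\gamma$.
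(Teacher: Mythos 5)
Your proposal is correct and follows essentially the same route as the paper's proof: induction on the structure of $\alpha(A)$, with trivial base cases for $\qm$, $\TypeType_i$, and variables, and componentwise application of the induction hypothesis in the structured cases (the paper spells out the spine case and notes that $\lambda$ and $\to$ are analogous, while you spell out $\lambda$ and defer the others, but the argument is identical). The decomposition and non-emptiness side conditions you flag as the crux are likewise taken for granted in the paper's proof, so there is no substantive difference.
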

\begin{proof} 
    By induction on the structure of $\alpha(A)$.

    If $\alpha(A)= \mathit{x}  \gradual{e}' \  \gradual{u}'  $ then $A = \set{ \mathit{x}  \static{e} \  \static{u}   \mid  \mathit{x} \static{e}  \in B_1, \static{u} \in B_2}$ for some $B_1, B_2$ where $ \alpha(B_1) =  \mathit{x} \gradual{e}' , \alpha(B_2) = \gradual{u}'$.
    Then $\gamma( \mathit{x}  \gradual{e}' \  \gradual{u}'  ) = \set{ \mathit{x}  \static{e}' \  \static{u}'   \mid  \mathit{x} \static{e}'  \in \gamma( \mathit{x} \gradual{e}' ), \static{u}' \in \gamma(\gradual{u}')}$.
    By our hypothesis, $B_1 \subseteq \gamma( \mathit{x} \gradual{e}' )$ and $B_2 \subseteq \gamma(\gradual{u}')$, so 
    $\set{ \mathit{x}  \static{e} \  \static{u}   \mid  \mathit{x} \static{e}  \in B_1, \static{u} \in B_2} \subseteq \set{ \mathit{x}  \static{e}' \  \static{u}'   \mid  \mathit{x} \static{e}'  \in \gamma( \mathit{x} \gradual{e}' ), \static{u}' \in \gamma(\gradual{u}')}$.
    The cases for $\lambda$ and $\rightarrow$ are proved in the same way.

    If $\alpha(A)=\mathit{x}$, then $A = \set{\mathit{x}}$ and $\gamma(\alpha(A)) = \set{\mathit{x}}$. The same logic proves the case for $ \TypeType $.

    If $\alpha(A) =  {\qm } $, $\gamma(\alpha(A)) = \AllTerms \supseteq A$ by definition. 
\end{proof}

\begin{theorem}[Optimality of $\alpha$]
    If $A \neq \emptyset$, $A\subseteq \gamma(\gradual{u})$ then $\alpha(A) \sqsubseteq \gradual{u}$.
\end{theorem}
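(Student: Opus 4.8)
The plan is to unfold the definition of precision and reduce the goal to a concretization inclusion: since $\gradual{U} \sqsubseteq \gradual{U}'$ means $\gamma(\gradual{U}) \subseteq \gamma(\gradual{U}')$, proving $\alpha(A) \sqsubseteq \gradual{u}$ amounts to proving $\gamma(\alpha(A)) \subseteq \gamma(\gradual{u})$. I would then proceed by induction on the structure of $\gradual{u}$. This is the natural dual of the soundness argument (which recurses on $\alpha(A)$); recursing on $\gradual{u}$ is more convenient here because $A \subseteq \gamma(\gradual{u})$ forces every element of $A$ to share the top-level shape of $\gradual{u}$, so $\gradual{u}$ dictates the case analysis.

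For the base cases: when $\gradual{u} = \qm$ the result is immediate, since $\gamma(\qm) = \AllTerms$ and hence $\gamma(\alpha(A)) \subseteq \AllTerms = \gamma(\qm)$ regardless of what $\alpha(A)$ is. When $\gradual{u} = \TypeType_i$ or $\gradual{u} = \mathit{x}$, concretization is the singleton $\set{\TypeType_i}$ (resp. $\set{\mathit{x}}$), and since $A \neq \emptyset$ this forces $A$ to equal that singleton; then $\alpha(A) = \gradual{u}$ directly by the definition of $\alpha$, so the claim holds with equality.

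For the structural cases I would decompose $A$ into projections. When $\gradual{u} = \lambda \mathit{x} \ldotp \gradual{u}_0$, every element of $A$ has the form $\lambda \mathit{x} \ldotp \static{u}$, so the set of bodies $A_0 = \set{\static{u} \mid \lambda \mathit{x} \ldotp \static{u} \in A}$ is nonempty and satisfies $A_0 \subseteq \gamma(\gradual{u}_0)$; since $\alpha$ decomposes as $\alpha(A) = \lambda \mathit{x} \ldotp \alpha(A_0)$, the induction hypothesis gives $\gamma(\alpha(A_0)) \subseteq \gamma(\gradual{u}_0)$, and monotonicity of $\gamma$ in the body's concretization yields $\gamma(\alpha(A)) \subseteq \gamma(\gradual{u})$. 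The arrow case $(\mathit{x} : \gradual{U}_1) \to \gradual{U}_2$ and the spine-application case $\mathit{x} \, \gradual{e} \, \gradual{u}_2$ are handled the same way, except that I would take projections $A_1, A_2$ of $A$ onto each component, observe that each is nonempty and contained in the concretization of the corresponding subterm of $\gradual{u}$, apply the hypothesis to each, and recombine using the fact that $\gamma$ of a compound form is monotone in all of its components at once.

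The step I expect to require the most care is the decomposition itself: an arbitrary nonempty $A \subseteq \gamma(\gradual{u})$ need not be a literal Cartesian product of its projections, whereas the defining clauses for $\alpha$ are written for product-shaped inputs. I would resolve this by reading $\alpha$ as acting through projections—so that, e.g., $\alpha(A) = \mathit{x} \, \gradual{e}' \, \gradual{u}'$ where $\gradual{e}'$ and $\gradual{u}'$ abstract the head-spine and argument projections of $A$—and then noting that the containment of $A$ in the product of its projections is all the recombination step needs: the hypothesis supplies only the per-component upper bounds $\gamma(\alpha(A_i)) \subseteq \gamma(\gradual{U}_i)$, and $\gamma$ of a compound form is exactly the product of the component concretizations, so the bound lifts. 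No termination or well-foundedness reasoning beyond structural induction on $\gradual{u}$ is required.
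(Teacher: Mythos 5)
Your proof is correct and takes essentially the same route as the paper's: structural induction on $\gradual{u}$, with the singleton argument for $\TypeType_i$ and variables, immediacy for ${\qm}$, and component-wise use of the inductive hypothesis for the $\lambda$, arrow, and spine-application cases (your rephrasing of $\sqsubseteq$ as a $\gamma$-containment is exactly the paper's definition of precision, so nothing changes there). The only divergence is that you explicitly repair the decomposition step---the paper's proof simply asserts that any nonempty $A \subseteq \gamma(\mathit{x}\,\gradual{e}\;\gradual{u})$ has the product form $\set{\mathit{x}\,\static{e}_2\;\static{u}_2 \mid \mathit{x}\static{e}_2 \in B_1,\ \static{u}_2 \in B_2}$, which is not literally true of arbitrary subsets of a product concretization---so on the one point you flagged as delicate, your projection-based reading of $\alpha$ is the more careful of the two arguments.
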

\begin{proof}
By induction on the structure of $\gradual{u}$.

Case $ \TypeType_{ \ottnt{i} } $: then $\gamma(\gradual{u}) = \set{ \TypeType_{ \ottnt{i} } }$, so if $A$ is nonempty $A = \set{ \TypeType_{ \ottnt{i} } }$.
The same reasoning holds if $\gradual{u}=\mathit{x}$.

Case $ \mathit{x}  \gradual{e} \  \gradual{u}  $: then $\gamma(\gradual{u}) = \set{ x e' u' \mid  \mathit{x} \static{e}'  \in \gamma( \mathit{x} \gradual{e} ), \static{u}' \in \gamma(\gradual{u})}$.
If $A$ is nonempty and $A \subseteq \gamma(\gradual{u})$, then there are some set $B_1, B_2$ such that 
$B_1 \subseteq \gamma( \mathit{x} \gradual{e} )$, $B_2 \subseteq \gamma(\gradual{u})$, and
$A = \set{ \mathit{x}  \static{e}_{{\mathrm{2}}} \  \static{u}_{{\mathrm{2}}}   \mid  \mathit{x} \static{e}_{{\mathrm{2}}}  \in B_1, \static{u}_{{\mathrm{2}}} \in B_2}$.
So $\alpha(A) =  \mathit{x}  \gradual{e}_{{\mathrm{3}}} \  \gradual{u}_{{\mathrm{3}}}  $ where $\alpha(B_1) =  \mathit{x} \gradual{e}_{{\mathrm{3}}} $, $\alpha(B_2) = \gradual{u}_{{\mathrm{3}}}$. 
Neither $B_1$ or $B_2$ can be empty if $A$ is non-empty, so by our inductive hypothesis,
$  \mathit{x} \gradual{e}_{{\mathrm{3}}}  \sqsubseteq   \mathit{x} \gradual{e}  $ and $ \gradual{u}_{{\mathrm{3}}} \sqsubseteq  \gradual{u} $. So $  \mathit{x}  \gradual{e}_{{\mathrm{3}}} \  \gradual{u}_{{\mathrm{3}}}   \sqsubseteq   \mathit{x}  \gradual{e} \  \gradual{u}   $, giving us our result.
The same reasoning holds for the $\lambda$ and $\to$ cases.

Case $?$: we have our result, since $ \gradual{u}' \sqsubseteq  {\qm } $ holds for all $gu'$.
\end{proof}

\subsection{Embeddings and Gradual Guarantees}

\untypedEmbed*
\begin{proof} 
    For the typing, we perform induction on $\lceil t \rceil$. 

    For a variable $x$, by our premise, its type in $\gradual{Gamma}$ is $ {\qm } $.
  
    If $\lceil t \rceil =  \gradual{t}' \  \gradual{t}'' $, then by our premise $\gradual{t}'$ and $\gradual{t}''$ both synthesize $ {\qm } $.
    This means that $\gradual{t}''$ checks against $ {\qm } $ and $ {\qm } $ checks against $ \TypeType_{ \ottnt{i} } $ for any $\ottnt{i}$.
    Finally, the codomain substitution of $ {\qm } $ is $ {\qm } $, so $ \gradual{t}' \  \gradual{t}'' $ synthesizes $ {\qm } $.
 
    If $ \lceil t \rceil =  \ottsym{(}  \lambda  \mathit{x}  \ldotp  \gradual{t}'  \ottsym{)} \dblcolon {\qm } $, then $ {\qm } $ is well-kinded with canonical form $ {\qm } $. We need to show that $\ottsym{(}  \lambda  \mathit{x}  \ldotp  \gradual{t}'  \ottsym{)}$
    checks against $ {\qm } $. To do so, we apply the rule \rrule{GCheckLamDyn}. By our hypothesis,
    $\gradual{t}'$ synthesizes $ {\qm } $ in context $\ottsym{(}  \mathit{x}  \ottsym{:}  {\qm }  \ottsym{)}  \gradual{Gamma}$, so it must check against $ {\qm } $.

    For the semantics, we  begin with some notations and facts.
    Let $\lfloor t \rfloor$ denote the elaboration of $\lceil t \rceil$.
    First, we note $\lfloor t \rfloor$ is simply $\lceil t \rceil$ with each $\qm$ ascription on a function replaced by the evidence $\langle \qm \to \qm \rangle$.
    This means that at any point during execution, any evidence will be $\qm$ or $\qm \to \qm$, so the meet, domain, and codomain are always defined.
     Next, we have $\lfloor [x \Mapsto v]t \rfloor =  [x \Mapsto \lfloor v \rfloor] \lfloor t \rfloor$ for any $x,v$ and $t$:
    this can be easily shown by induction on the structure of $t$.
    Similarly, we extend $\lfloor \_ \rfloor$ to contexts with $\lfloor \square \rfloor = \square$, with all other cases defined as for terms.
    Then, for any $t$ and $\mathcal{C}$, we have
    $\lfloor \mathcal{C}[t] \rfloor = \lfloor \mathcal{C}\rfloor[\lfloor t \rfloor] $. This again can be shown by  induction on the structure of $\mathcal{C}$.
    Finally we note that $\gradual{C}[\gradual{e}] \longrightarrow^{*} \gradual{C}[\gradual{e}']$ whenever
    $\gradual{e} \longrightarrow \gradual{e}'$: this is shown by induction on the length of the $\longrightarrow^{*}$ derivation.

    We then show that if  $t \longrightarrow t'$ then $\lfloor t \rfloor \longrightarrow^{*} \lfloor t' \rfloor$.
    We perform induction on the derivation of $t \longrightarrow t'$ using our simple semantics,
    ignoring the case for \rrule{SimpleStepAnn}, since by our premise $t$ is untyped.
        In the case for \rrule{SimpleStepApp}, we have $t = (\lambda x \ldotp t')v$, so
    $\lfloor t \rfloor = \varepsilon (\lambda x \ldotp \lfloor t' \rfloor) \lfloor v \rfloor$,
    where $\varepsilon \in \{ \langle \qm \to \qm \rangle, \langle \qm \rangle \}$.
    We can then apply either \rrule{StepAppEv} or \rrule{StepAppEvRaw}, and then apply substitution commuting with $\lfloor \_ \rfloor$
    to obtain our result.
    For \rrule{SimpleStepContext}, the result follows from the latter two facts we presented above.

    Induction on the number of steps then shows that $t \longrightarrow^{*} t'$ implies that $\lfloor t \rfloor \longrightarrow^{*} \lfloor t' \rfloor$,
    which in turn implies our result.

\end{proof}

\begin{lemma}
    \label{lem:sub-precise}
    Suppose $ \gradual{u}_{{\mathrm{1}}} \sqsubseteq^\eta  \gradual{u}'_{{\mathrm{1}}} $,  $ \gradual{u}_{{\mathrm{2}}} \sqsubseteq^\eta  \gradual{u}'_{{\mathrm{2}}} $ and $ \gradual{U} \sqsubseteq^\eta  \gradual{U}' $ and $ \gradual{U}_{{\mathrm{2}}} \sqsubseteq^\eta  \gradual{U}'_{{\mathrm{2}}} $,
    where $\gradual{Gamma}  \vdash  \gradual{u}_{{\mathrm{2}}}  \Leftarrow  \gradual{U}_{{\mathrm{2}}}$ and $\gradual{Gamma}'  \vdash  \gradual{u}'_{{\mathrm{2}}}  \Leftarrow  \gradual{U}'_{{\mathrm{2}}}$ for some $ \gradual{Gamma} \sqsubseteq^\eta  \gradual{Gamma}' $.
    If $ {[  \gradual{u}_{{\mathrm{1}}}  / { \mathit{x} } ]}^{ \gradual{U}  }  \gradual{u}_{{\mathrm{2}}}  =  \gradual{u}_{{\mathrm{3}}} $ and $ {[  \gradual{u}'_{{\mathrm{1}}}  / { \mathit{x} } ]}^{ \gradual{U}'  }  \gradual{u}'_{{\mathrm{2}}}  =  \gradual{u}'_{{\mathrm{3}}} $,
    then $ \gradual{u}_{{\mathrm{3}}} \sqsubseteq^\eta  \gradual{u}'_{{\mathrm{3}}} $.

    Suppose also that  $  \mathit{x} \gradual{e}  \sqsubseteq^\eta   \mathit{x} \gradual{e}'  $, where 
    $ {[  \gradual{u}_{{\mathrm{1}}}  / { \mathit{x} } ]}^{ \gradual{U}  }  \mathit{x} \gradual{e}  = { \gradual{u}_{{\mathrm{3}}} } : { \gradual{U}_{{\mathrm{3}}} } $, $ {[  \gradual{u}'_{{\mathrm{1}}}  / { \mathit{x} } ]}^{ \gradual{U}'  }  \mathit{x} \gradual{e}'  = { \gradual{u}'_{{\mathrm{3}}} } : { \gradual{U}'_{{\mathrm{3}}} } $,
    $\gradual{Gamma}  \vdash   \mathit{x} \gradual{e}   \Rightarrow  \gradual{U}_{{\mathrm{2}}}$ and $\gradual{Gamma}'  \vdash   \mathit{x} \gradual{e}'   \Rightarrow  \gradual{U}'_{{\mathrm{2}}}$.
    Then $ \gradual{u}_{{\mathrm{3}}} \sqsubseteq^\eta  \gradual{u}'_{{\mathrm{3}}} $ and $ \gradual{U}_{{\mathrm{3}}} \sqsubseteq^\eta  \gradual{U}'_{{\mathrm{3}}} $.
\end{lemma}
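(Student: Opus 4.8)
The plan is to prove both conclusions (for checkable terms and for atomic/spine forms) simultaneously, by the same nested induction used to establish totality in \autoref{lem:subTotal}: an \emph{outer} induction on the multiset of arrow annotations of the variable type $\gradual{U}$, ordered by $\prec$, and an \emph{inner} induction on the typing derivation of $\gradual{u}_{{\mathrm{2}}}$ (equivalently, on the structure of the substitution derivation). Both substitution derivations $\gradual{u}_{{\mathrm{3}}}$ and $\gradual{u}'_{{\mathrm{3}}}$ exist by \autoref{lem:subTotal}, and the typing-preservation lemma above supplies the types of all intermediate results, which the inductive hypotheses require. Throughout I reason up to $\eta$, relying on the fact that $\sqsubseteq^\eta$ is a congruence for every term former and that the lifted operations $\ottkw{dom}$, $\ottkw{cod}$ and $\sqcap$ are monotone with respect to $\sqsubseteq$ --- a direct consequence of the Galois connection between $\alpha$ and $\gamma$.

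\textbf{Routine cases.} For the structural rules --- substitution into a universe, into $\qm$, into a $\lambda$, into a $\Pi$-type, into a variable $\mathit{y} \neq \mathit{x}$ (\rrule{GHsubDiffNil}), or a spine descent (\rrule{GHsubDiffCons}) --- both substitutions fire the same rule, and the conclusion follows by congruence of $\sqsubseteq^\eta$ from the inductive hypotheses on the immediate subterms; in the spine case, monotonicity of codomain substitution relates the output types. The head rule \rrule{GHsubRHead} ($\mathit{y} = \mathit{x}$) returns $\gradual{u}_{{\mathrm{1}}}$ and $\gradual{u}'_{{\mathrm{1}}}$ respectively, so its conclusion is exactly the hypothesis $ \gradual{u}_{{\mathrm{1}}} \sqsubseteq^\eta \gradual{u}'_{{\mathrm{1}}} $.

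\textbf{The key non-collapsing case.} The central case is \rrule{GHsubRLamSpine}, where substituting into $ \mathit{x} \gradual{e} $ produces a $\lambda$ whose body must then be substituted at the function's domain type. Applying the inner (spine) hypothesis to $ \mathit{x} \gradual{e} $ yields the two substituted abstractions related by $\sqsubseteq^\eta$, together with their synthesized types $(\mathit{y} : \gradual{U}_{\mathit{dom}}) \to \gradual{U}_{\mathit{cod}} \sqsubseteq^\eta (\mathit{y} : \gradual{U}'_{\mathit{dom}}) \to \gradual{U}'_{\mathit{cod}}$; in particular $ \gradual{U}_{\mathit{dom}} \sqsubseteq^\eta \gradual{U}'_{\mathit{dom}} $. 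Since predicativity makes the domain type strictly smaller than $\gradual{U}$ under $\prec$, the recursive body substitution is governed by the \emph{outer} hypothesis, which delivers the required precision of the final term and, for the spine part, of the final type.

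\textbf{The main obstacle: the $\qm$-collapsing rules.} The crux is reconciling \rrule{GHsubRDynType}, \rrule{GHsubRDynSpine} and \rrule{GHsubRLamSpineOrd}, which return $\qm$, with their non-collapsing counterparts. Whenever the \emph{less}-precise substitution collapses to $\qm$, the goal $ \gradual{u}_{{\mathrm{3}}} \sqsubseteq^\eta \qm $ holds for free, since $\qm$ is the greatest (least-precise) element of $\sqsubseteq^\eta$. The delicate direction is to show that if the \emph{more}-precise side collapses, so must the less-precise side. For \rrule{GHsubRDynType} and \rrule{GHsubRDynSpine} this is clean: they fire exactly when the substituted head has type $\qm$, and the spine hypothesis gives $ \gradual{U}_{\mathit{head}} \sqsubseteq^\eta \gradual{U}'_{\mathit{head}} $; but $ \qm \sqsubseteq^\eta X $ forces $X = \qm$ (nothing is strictly less precise than the top, since $\gamma(\qm)$ is already everything), so the less-precise head is also $\qm$-typed and the same rule fires, giving $ \qm \sqsubseteq^\eta \qm $. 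The genuinely hard case is \rrule{GHsubRLamSpineOrd}, which fires when the domain type fails the metric $\gradual{U}_{\mathit{dom}} \prec \gradual{U}$. Here both the domain type and the variable type become less precise simultaneously, and reducing precision only ever raises an arrow annotation from some level $i$ to $\omega$; I therefore expect to need a dedicated monotonicity lemma showing that this growth preserves a metric \emph{violation}, so that a failure of $\gradual{U}_{\mathit{dom}} \prec \gradual{U}$ on the precise side persists on the imprecise side and both sides return $\qm$. Establishing this annotation-monotonicity lemma --- and checking that it survives the $\eta$-slack built into $\sqsubseteq^\eta$ --- is where essentially all of the real work of the proof lies.
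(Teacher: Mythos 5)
Your skeleton does match the paper's: induction over the substitution derivation, the observation that the goal is free whenever the imprecise side yields $\qm$, and, for \rrule{GHsubRDynType}/\rrule{GHsubRDynSpine}, the argument that the inductive hypothesis on types plus maximality of $\qm$ (that $\qm \sqsubseteq^\eta X$ forces $X = \qm$) makes the imprecise side fire the same collapsing rule. However, your classification of the remaining cases as ``routine congruence'' hides a genuine gap: $\sqsubseteq^\eta$ is not a congruence in the sense you use it, and terms related by it do \emph{not} in general fire the same substitution rule. Precisely because precision is taken modulo $\eta$, the precise side can be $\lambda \mathit{y} \ldotp \gradual{u}_4$ while the imprecise side is a neutral spine $\mathit{z}\gradual{e}'$ (its $\eta$-contraction); the former steps by the $\lambda$-rule while the latter steps by the spine/head rules. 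This lambda-versus-neutral mismatch is where most of the paper's proof actually lives: one must invoke $\eta$-longness and well-formedness of both sides to conclude that the imprecise type must be $\qm$, relate the $\lambda$-body to $\mathit{z}\gradual{e}'$ applied to the bound variable, and then split on whether $\mathit{z}$ is the substituted variable (in which case the imprecise result collapses to $\qm$ because a $\qm$-typed head is being applied). The analogous issue recurs in \rrule{GHsubRHead}, where the imprecise side may be an $\eta$-expansion of $\mathit{x}$. None of this is dispatched by a congruence appeal, so your ``routine cases'' are not routine.

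Second, concerning \rrule{GHsubRLamSpineOrd}: you are right that it needs a dedicated argument --- notably, the paper's own proof silently omits this case --- but the annotation-monotonicity lemma you propose is false. Losing precision does not only raise annotations from some $i$ to $\omega$; it can also replace an entire arrow type by $\qm$, which \emph{deletes} its annotation (and all annotations nested inside it) from the multiset. Hence a metric violation on the precise side need not persist: take the substituted head's domain type equal to the variable type $\gradual{U}$ on the precise side (equal multisets, so $\prec$ fails and \rrule{GHsubRLamSpineOrd} returns $\qm$), while on the imprecise side the corresponding domain has degraded to $\qm$ (empty multiset, strictly below any nonempty one, so \rrule{GHsubRLamSpine} fires and performs the recursive body substitution). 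Such configurations are gradually well-typed; they arise exactly from the Girard-style self-applications (a function applied to an argument as large as the function's own type) that motivated the metric check in the first place. In that situation your argument yields $\qm$ on the precise side and a genuine recursive result on the imprecise side, and you still owe a proof that the latter equals $\qm$; neither your monotonicity claim nor anything in the paper supplies one, so this case remains open in your proposal rather than being the solvable ``real work'' you describe.
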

\begin{proof}
    First we note that if $\gradual{u}'_{{\mathrm{2}}}= {\qm } $, then $\gradual{u}'_{{\mathrm{3}}}= {\qm } $, and the result is trivially true.

    We then assume that $\gradual{u}'_{{\mathrm{2}}}\neq {\qm } $, and proceed by induction on the derivation of $ {[  \gradual{u}_{{\mathrm{1}}}  / { \mathit{x} } ]}^{ \gradual{U}  }  \gradual{u}_{{\mathrm{2}}}  =  \gradual{u}_{{\mathrm{3}}} $.

    \rrule{GHsubType}, \rrule{GHsubDiffNil}: In these cases, $\gradual{u}_{{\mathrm{2}}}=^\eta\gradual{u}'_{{\mathrm{2}}}=^\eta\gradual{u}_{{\mathrm{3}}}=^\eta\gradual{u}'_{{\mathrm{3}}}$. 

    \rrule{GHsubPi}, \rrule{GHsubLam}, : follows from the inductive hypothesis.

    \rrule{GHsubLam}: This is the case in which we must consider $\eta$-equality. Here, $\gradual{u}_{{\mathrm{2}}}=\lambda  \mathit{y}  \ldotp  \gradual{u}_{{\mathrm{4}}}$. 
    If $\gradual{u}'_{{\mathrm{2}}}=\lambda  \mathit{y}  \ldotp  \gradual{u}'_{{\mathrm{4}}}$, then the result follows from our inductive hypothesis. 
    Otherwise, it must be some $ \mathit{z} \gradual{e}' $ that $\eta$-expands to a (possibly) less-precise version of $\gradual{u}_{{\mathrm{2}}}$.
    Since both terms are well-formed, they are $\eta$-long with respect to $\gradual{U}_{{\mathrm{2}}}$ and $\gradual{U}'_{{\mathrm{2}}}$ respectively.
    So $\gradual{U}_{{\mathrm{2}}}$ must be $ {\qm } $ or an arrow-type, but $\gradual{U}'_{{\mathrm{2}}}$ cannot be an arrow type, so $\gradual{U}_{{\mathrm{2}}}= {\qm } $.
    Then we know that $\gradual{u}_{{\mathrm{4}}} \sqsubseteq^\eta \mathit{z}  \gradual{e}' \   \mathit{y}   $, so by our hypothesis,
    if $ {[  \gradual{u}_{{\mathrm{1}}}  / { \mathit{x} } ]}^{ \gradual{U}  }  \gradual{u}_{{\mathrm{4}}}  =  \gradual{u}_{{\mathrm{5}}} $ and $ {[  \gradual{u}'_{{\mathrm{1}}}  / { \mathit{x} } ]}^{ \gradual{U}'  }   \mathit{z}  \gradual{e}' \   \mathit{y}     =  \gradual{u}'_{{\mathrm{5}}} $
    then $\gradual{u}'_{{\mathrm{5}}}\sqsubseteq^\eta\gradual{u}'_{{\mathrm{5}}}$. We know that $\mathit{x}\neq\mathit{y}$ by our premise.
    If $\mathit{x}\neq \mathit{z}$, then we know that $\gradual{u}'_{{\mathrm{5}}}= \mathit{z}  \gradual{e}' \   \mathit{y}   $, so $\lambda  \mathit{y}  \ldotp  \gradual{u}'_{{\mathrm{5}}}\sqsubseteq^\eta \mathit{z} \gradual{e}' $.
    If $\mathit{x}=\mathit{z}$, then $[ \gradual{u}'_{{\mathrm{1}}} / \mathit{x} ]^{\gradual{U}'}  \mathit{z} \gradual{e}'  $ has type $ {\qm } $, so applying it to $\mathit{y}$
    produces $ {\qm } $. Then $\gradual{u}'_{{\mathrm{5}}}= {\qm } $, giving us our result.
 
    \rrule{GHsubSpine}:  Then $\gradual{u}_{{\mathrm{1}}} =  \mathit{x} \gradual{e} $.
    Since $ \gradual{U}_{{\mathrm{2}}} \sqsubseteq^\eta  \gradual{U}'_{{\mathrm{2}}} $, $\gradual{u}'_{{\mathrm{2}}}$ cannot possibly be $\eta$-expanded any further than $ \mathit{x} \gradual{e} $m
    so it must be equal to some $ \mathit{x} \gradual{e}' $. The result then follows from our hypothesis. 

    \rrule{GHsubDiffCons}: same logic as the previous case.

    \rrule{GHsubRHead}: Then $\gradual{u}_{{\mathrm{2}}}=\gradual{u}'_{{\mathrm{2}}}=\mathit{x}$, $\gradual{u}_{{\mathrm{3}}}=\gradual{u}_{{\mathrm{1}}}$ and $\gradual{u}'_{{\mathrm{3}}}=\gradual{u}'_{{\mathrm{1}}}$.
    Likewise, $\gradual{U}=\gradual{U}_{{\mathrm{2}}}$ and $\gradual{U}'=\gradual{U}'_{{\mathrm{2}}}$, so our result follows from our premises.
    The only other case is where $\gradual{u}'_{{\mathrm{2}}}$ is an $\eta$-expansion of $\mathit{x}$.

    \rrule{GHsubRDynSpine}: then $\gradual{u}_{{\mathrm{1}}}= \mathit{x}  \gradual{e} \  \gradual{u}_{{\mathrm{4}}}  $ and $\gradual{u}'_{{\mathrm{1}}} =  \mathit{x}  \gradual{e}' \  \gradual{u}'_{{\mathrm{4}}}  $, where $  \mathit{x} \gradual{e}  \sqsubseteq^\eta   \mathit{x} \gradual{e}'  $ and $ \gradual{u}_{{\mathrm{4}}} \sqsubseteq^\eta  \gradual{u}'_{{\mathrm{4}}} $.
    By our premise, $ {[  \gradual{u}_{{\mathrm{1}}}  / { \mathit{x} } ]}^{ \gradual{U}  }  \mathit{x} \gradual{e}  = { {\qm } } : { \gradual{U}_{{\mathrm{2}}} } $, so by our hypothesis, $ {[  \gradual{u}'_{{\mathrm{1}}}  / { \mathit{x} } ]}^{ \gradual{U}'  }  \mathit{x} \gradual{e}'  = { {\qm } } : { \gradual{U}'_{{\mathrm{2}}} } $, so
    $\gradual{e}'_{{\mathrm{3}}} =  {\qm } $, giving us our result.
 
    \rrule{GHsubRLamSpine}: follows from applying our inductive hypothesis to each sub-derivation. 
        
    \rrule{GHsubRDynType}: then $\gradual{u}_{{\mathrm{1}}}= \mathit{x}  \gradual{e} \  \gradual{u}_{{\mathrm{4}}}  $ and $\gradual{u}'_{{\mathrm{1}}} =  \mathit{x}  \gradual{e}' \  \gradual{u}'_{{\mathrm{4}}}  $, where $  \mathit{x} \gradual{e}  \sqsubseteq^\eta   \mathit{x} \gradual{e}'  $ and $ \gradual{u}_{{\mathrm{4}}} \sqsubseteq^\eta  \gradual{u}'_{{\mathrm{4}}} $.
    By our premise, $ {[  \gradual{u}_{{\mathrm{1}}}  / { \mathit{x} } ]}^{ \gradual{U}  }  \mathit{x} \gradual{e}  = { \gradual{u}_{{\mathrm{4}}} } : { {\qm } } $, so by our hypothesis, $ {[  \gradual{u}'_{{\mathrm{1}}}  / { \mathit{x} } ]}^{ \gradual{U}'  }  \mathit{x} \gradual{e}'  = { \gradual{u}'_{{\mathrm{4}}} } : { {\qm } } $,
    since $ {\qm } $ is the least precise type.
    Then $\gradual{u}_{{\mathrm{3}}}=\gradual{u}'_{{\mathrm{3}}}= {\qm } $.

\end{proof}

\begin{corollary}
    \label{cor:codsub-precise}
    If $ \gradual{u}_{{\mathrm{1}}} \sqsubseteq^\eta  \gradual{u}'_{{\mathrm{1}}} $, and $ \gradual{U} \sqsubseteq^\eta  \gradual{U}' $,
    $ [  \gradual{u}_{{\mathrm{1}}}  / {\_} ]  \ottkw{cod}\  \gradual{U}  =  \gradual{U}_{{\mathrm{2}}} $ and $ [  \gradual{u}'_{{\mathrm{1}}}  / {\_} ]  \ottkw{cod}\  \gradual{U}'  =  \gradual{U}'_{{\mathrm{2}}} $,
    then $ \gradual{U}_{{\mathrm{2}}} \sqsubseteq^\eta  \gradual{U}'_{{\mathrm{2}}} $.
\end{corollary}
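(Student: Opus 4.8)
The plan is to derive this corollary directly from Lemma~\ref{lem:sub-precise} by a case analysis on the shape of $\gradual{U}$ and $\gradual{U}'$. The starting observation is that codomain substitution $[\,\cdot\,/\_]\,\ottkw{cod}$ is defined on only two kinds of types: the unknown type, via \rrule{CodSubDyn}, where $[\gradual{u}_{{\mathrm{1}}}/\_]\,\ottkw{cod}\ \qm = \qm$, and dependent function types, via \rrule{CodSubPi}, where $[\gradual{u}_{{\mathrm{1}}}/\_]\,\ottkw{cod}\ ((x:\gradual{U}_{{\mathrm{1}}})\to\gradual{U}_a) = [\gradual{u}_{{\mathrm{1}}}/x]^{\gradual{U}_{{\mathrm{1}}}}\gradual{U}_a$ is an ordinary hereditary substitution into the codomain. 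Since both $\gradual{U}_{{\mathrm{2}}}$ and $\gradual{U}'_{{\mathrm{2}}}$ are assumed defined, each of $\gradual{U}$ and $\gradual{U}'$ is either $\qm$ or an arrow type.

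First I would dispatch the cases involving the unknown type. If $\gradual{U}' = \qm$, then \rrule{CodSubDyn} gives $\gradual{U}'_{{\mathrm{2}}} = \qm$, and since $\qm$ is the least precise type, $\gradual{U}_{{\mathrm{2}}} \sqsubseteq^\eta \qm$ holds regardless of the form of $\gradual{U}$. The configuration $\gradual{U} = \qm$ with $\gradual{U}'$ an arrow type cannot arise: from $\gradual{U}\sqsubseteq^\eta\gradual{U}'$ and $\gamma(\qm) = \AllTerms$, the more precise $\gradual{U}$ being $\qm$ forces $\gradual{U}'$ to be $\qm$ as well. This leaves the trivial case where both are $\qm$, and the principal case where both are arrow types.

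For the arrow/arrow case, write $\gradual{U} = (x:\gradual{U}_{{\mathrm{1}}})\to\gradual{U}_a$ and $\gradual{U}' = (x:\gradual{U}'_{{\mathrm{1}}})\to\gradual{U}'_a$. The next step is to decompose $\gradual{U}\sqsubseteq^\eta\gradual{U}'$ into the component relations $\gradual{U}_{{\mathrm{1}}}\sqsubseteq^\eta\gradual{U}'_{{\mathrm{1}}}$ and $\gradual{U}_a\sqsubseteq^\eta\gradual{U}'_a$, using that both $\eta$-equivalence and precision act structurally on function types---the same bookkeeping already carried out in the \rrule{GHsubLam} case of Lemma~\ref{lem:sub-precise}. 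Since $\gradual{U}_{{\mathrm{2}}} = [\gradual{u}_{{\mathrm{1}}}/x]^{\gradual{U}_{{\mathrm{1}}}}\gradual{U}_a$ and $\gradual{U}'_{{\mathrm{2}}} = [\gradual{u}'_{{\mathrm{1}}}/x]^{\gradual{U}'_{{\mathrm{1}}}}\gradual{U}'_a$, I would then apply Lemma~\ref{lem:sub-precise} with the substituted-into terms instantiated to the codomains $\gradual{U}_a,\gradual{U}'_a$, the substitution type instantiated to the domains $\gradual{U}_{{\mathrm{1}}},\gradual{U}'_{{\mathrm{1}}}$, the substituted values to $\gradual{u}_{{\mathrm{1}}},\gradual{u}'_{{\mathrm{1}}}$, and the ambient type instantiated to a common universe $\TypeType_i$ (a codomain is itself a type). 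Its conclusion is precisely $\gradual{U}_{{\mathrm{2}}}\sqsubseteq^\eta\gradual{U}'_{{\mathrm{2}}}$.

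The one point requiring care is supplying the well-typedness hypotheses that Lemma~\ref{lem:sub-precise} demands: that $\gradual{U}_a$ and $\gradual{U}'_a$ check against a common universe level in the extended contexts $(x:\gradual{U}_{{\mathrm{1}}})\gradual{Gamma}$ and $(x:\gradual{U}'_{{\mathrm{1}}})\gradual{Gamma}'$, together with $(x:\gradual{U}_{{\mathrm{1}}})\gradual{Gamma}\sqsubseteq^\eta(x:\gradual{U}'_{{\mathrm{1}}})\gradual{Gamma}'$ and $\TypeType_i\sqsubseteq^\eta\TypeType_i$. These are not stated in the corollary, so the plan is to read them off from the well-formedness of the arrow types $\gradual{U}$ and $\gradual{U}'$ (aligning the two codomain levels via cumulativity so the $\sqsubseteq^\eta$ on universes holds), which is always available at the call sites where this corollary is used inside the normalization and dynamic gradual guarantees. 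I expect this interface matching---rather than any genuinely new reasoning---to be the main obstacle, since everything else is a mechanical specialization of the substitution lemma.
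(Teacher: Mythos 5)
Your proposal is correct and matches the paper's intent exactly: the paper states this as an unproven corollary of Lemma~\ref{lem:sub-precise}, precisely because codomain substitution is either trivial (the \rrule{CodSubDyn} case, where the result is $\qm$) or an instance of hereditary substitution into the codomain of a $\Pi$-type (the \rrule{CodSubPi} case), which is what the lemma handles. Your additional care about discharging the lemma's typing hypotheses at call sites is a detail the paper glosses over, not a deviation in approach.
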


\begin{corollary}
    \label{cor:bodysub-precise}
    If $ \gradual{u}_{{\mathrm{1}}} \sqsubseteq^\eta  \gradual{u}'_{{\mathrm{1}}} $,  $ \gradual{u}_{{\mathrm{2}}} \sqsubseteq^\eta  \gradual{u}'_{{\mathrm{2}}} $, $ \gradual{U} \sqsubseteq^\eta  \gradual{U}' $,
    $ {[  \gradual{u}_{{\mathrm{1}}}  / {\_} ]}^{ \gradual{U}  } \ottkw{body}\  \gradual{u}_{{\mathrm{2}}} =  \gradual{u}_{{\mathrm{3}}} $ and $ {[  \gradual{u}'_{{\mathrm{1}}}  / {\_} ]}^{ \gradual{U}'  } \ottkw{body}\  \gradual{u}'_{{\mathrm{2}}} =  \gradual{u}'_{{\mathrm{3}}} $,
    then $ \gradual{u}_{{\mathrm{3}}} \sqsubseteq^\eta  \gradual{u}'_{{\mathrm{3}}} $.
\end{corollary}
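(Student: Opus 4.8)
The plan is to reduce the corollary to Lemma~\ref{lem:sub-precise}, since body substitution is by definition nothing more than hereditary substitution performed underneath a binder. Recall that $[\gradual{u}_{{\mathrm{1}}}/{\_}]^{\gradual{U}}\ \ottkw{body}\ \gradual{u}_{{\mathrm{2}}}$ is generated by exactly two rules: \rrule{BodySubPi}, which sends a $\lambda$-abstraction $\lambda\mathit{y}\ldotp\gradual{u}$ to the hereditary substitution $[\gradual{u}_{{\mathrm{1}}}/\mathit{y}]^{\gradual{U}}\gradual{u}$, and \rrule{BodySubDyn}, which sends $\qm$ to $\qm$. Because these are the only two rules, for the operation to be defined on both $\gradual{u}_{{\mathrm{2}}}$ and $\gradual{u}'_{{\mathrm{2}}}$ each of them must be either a $\lambda$-abstraction or $\qm$, and I would proceed by case analysis on the shape of the less precise term $\gradual{u}'_{{\mathrm{2}}}$.

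First I would dispatch the $\qm$ cases. If $\gradual{u}'_{{\mathrm{2}}}=\qm$ then \rrule{BodySubDyn} forces $\gradual{u}'_{{\mathrm{3}}}=\qm$, and $\gradual{u}_{{\mathrm{3}}}\sqsubseteq^\eta\qm$ holds trivially since $\qm$ is the least precise canonical form; this already subsumes the situation $\gradual{u}_{{\mathrm{2}}}=\qm$, because $\qm\sqsubseteq^\eta\gradual{u}'_{{\mathrm{2}}}$ forces $\gradual{u}'_{{\mathrm{2}}}=\qm$ ($\qm$ is $\sqsubseteq^\eta$-below only itself). The remaining principal case has $\gradual{u}'_{{\mathrm{2}}}=\lambda\mathit{y}\ldotp\gradual{u}'$. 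Here $\gradual{u}_{{\mathrm{2}}}$ cannot be $\qm$ (again because $\qm$ is below only $\qm$), so $\gradual{u}_{{\mathrm{2}}}=\lambda\mathit{y}\ldotp\gradual{u}$, and $\gradual{u}_{{\mathrm{2}}}\sqsubseteq^\eta\gradual{u}'_{{\mathrm{2}}}$ descends by congruence to $\gradual{u}\sqsubseteq^\eta\gradual{u}'$ under the context extended with $\mathit{y}$. Then \rrule{BodySubPi} gives $\gradual{u}_{{\mathrm{3}}}=[\gradual{u}_{{\mathrm{1}}}/\mathit{y}]^{\gradual{U}}\gradual{u}$ and $\gradual{u}'_{{\mathrm{3}}}=[\gradual{u}'_{{\mathrm{1}}}/\mathit{y}]^{\gradual{U}'}\gradual{u}'$, and I would apply Lemma~\ref{lem:sub-precise} to the bodies, using $\gradual{u}_{{\mathrm{1}}}\sqsubseteq^\eta\gradual{u}'_{{\mathrm{1}}}$, $\gradual{U}\sqsubseteq^\eta\gradual{U}'$, and $\gradual{u}\sqsubseteq^\eta\gradual{u}'$, to conclude $\gradual{u}_{{\mathrm{3}}}\sqsubseteq^\eta\gradual{u}'_{{\mathrm{3}}}$.

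The deep content is entirely in Lemma~\ref{lem:sub-precise}, so the only real obstacle is bookkeeping. Lemma~\ref{lem:sub-precise} carries typing side-conditions ($\gradual{Gamma}\vdash\gradual{u}_{{\mathrm{2}}}\Leftarrow\gradual{U}_{{\mathrm{2}}}$ and its primed counterpart over $\sqsubseteq^\eta$-related contexts) that the corollary statement leaves implicit, so I would first recover the typing of the two bodies $\gradual{u},\gradual{u}'$ and their type-precision by inverting the $\lambda$-checking rule \rrule{GCCheckLamPi} on the well-typed inputs of the body operation. The one genuinely delicate phenomenon---that at the less precise type a $\lambda$-abstraction may be $\eta$-equated to a neutral spine---never arises here, since body substitution is only ever defined on syntactic abstractions and on $\qm$; all $\eta$-reasoning of that kind is already absorbed inside the $\lambda$ case of Lemma~\ref{lem:sub-precise}. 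Corollary~\ref{cor:codsub-precise} is established in exactly the same manner, with codomain substitution in place of body substitution.
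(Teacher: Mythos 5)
Your proposal is correct and takes essentially the same route as the paper, which states this result as an immediate, unproved corollary of Lemma~\ref{lem:sub-precise}: unfolding the two defining rules of $\ottkw{body}$ substitution, the $ {\qm } $ case is trivial and the $\lambda$ case is exactly an instance of the lemma applied to the bodies. Your extra bookkeeping---recovering the implicit typing hypotheses by inversion and noting that $ {\qm } $ lies below only itself in $\sqsubseteq^\eta$---is sound and simply makes explicit what the paper leaves tacit.
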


\normGuarantee*
\begin{proof}
    We first note that if $\gradual{t}_{{\mathrm{2}}}= {\qm } $, then $\gradual{u}_{{\mathrm{2}}}= {\qm } $, so $ \gradual{u}_{{\mathrm{1}}} \sqsubseteq  \gradual{u}_{{\mathrm{2}}} $.

    Assume then that $\gradual{t}_{{\mathrm{2}}}\neq {\qm } $. We proceed by induction on the derivation of the normalization of $\gradual{t}_{{\mathrm{1}}}$. We perform mutual induction
    to prove the same result for synthesis.

    \rrule{GNSynthAnn}: follows immediately from inductive hypothesis.
    
    \rrule{GNSynthType}, 
    \rrule{GNSynthDyn}: trivial, since $\gradual{t}_{{\mathrm{2}}}$ must be $ {\qm } $.

    \rrule{GNSynthVar}: trivial, since we consider precision modulo $\eta$-expansion

    \rrule{GNSynthApp}: follows from our inductive hypothesis,  \autoref{cor:bodysub-precise}, \autoref{cor:codsub-precise},
    and the transitivity of precision.

    \rrule{GNCheckSynth},\rrule{GNCheckLevel}, \rrule{GNCheckPiType}, \rrule{GNCheckPiDyn}, \rrule{GNCheckLamPi}, \rrule{GNCheckLamDyn}: follows immediately from our inductive hypothesis.

    \rrule{GNCheckSynth}: we note that decreasing the precision of a term can only decrease the precision of its synthesized type.
    Given this, we know that if $\gradual{t}$ synthesizes $\gradual{U}'$, then $\gradual{t}_{{\mathrm{2}}}$ synthesizes $\gradual{U}''$, where also
    $\gradual{U}''\not\sqsubseteq \gradual{U}$. So both normalize to $ {\qm } $.

\end{proof} 

We note that we can apply the exact same proof procedure to achieve the same result for elaborated terms.

\begin{lemma}
    If $ \evterm{v} \sqsubseteq  \evterm{t} $, then $\evterm{t}$ is a syntactic value.
\end{lemma}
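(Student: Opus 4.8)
The plan is to prove a slightly stronger, mutually inductive statement, since the bare claim is not self-supporting in the evidence-ascription case. I would prove simultaneously: (i) if $\evterm{v}\sqsubseteq\evterm{t}$ and $\evterm{v}$ is a value, then $\evterm{t}$ is a value; and (ii) if $\evterm{rv}\sqsubseteq\evterm{t}$ and $\evterm{rv}$ is a raw value, then $\evterm{t}$ is a raw value. Part (ii) is needed because a value of the form $\varepsilon\,\evterm{rv}$ can only be approximated by some $\varepsilon'\,\evterm{t}'$, and for $\varepsilon'\,\evterm{t}'$ to again be a value---rather than illegally stacked evidence---we must know that its immediate subterm $\evterm{t}'$ is still a raw value.

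I would carry out the proof by induction on the derivation of $\evterm{v}\sqsubseteq\evterm{t}$ (equivalently, on the structure of $\evterm{v}$). The organizing observation, read off from the definition of $\sqsubseteq$ on evidence terms, is that precision is top-constructor preserving up to $\qm$: for any $\evterm{v}\sqsubseteq\evterm{t}$, either $\evterm{t}=\qm$, or $\evterm{t}$ shares the outermost syntactic form of $\evterm{v}$ with immediate subterms related by $\sqsubseteq$ and evidence annotations related by evidence precision. This holds precisely because $\qm$ is the least precise term, so the only way to lose precision is to blur some subterms (or the whole term) to $\qm$. I would record this inversion principle first, then dispatch the cases mechanically.

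For the case analysis: if $\evterm{t}=\qm$ it is a raw value and we are done. If $\evterm{v}$ is a raw value---a universe, the unknown $\qm$, a $\lambda$-abstraction, a function type, or one of the inductive introduction forms---then the matching $\evterm{t}$ is built with the same constructor and is itself a raw value; crucially, raw-valueness depends only on the outermost form, so the less precise subterms are irrelevant (in particular $\lambda$-abstractions remain values regardless of their bodies). The one informative case is $\evterm{v}=\varepsilon\,\evterm{rv}$, where $\evterm{t}=\varepsilon'\,\evterm{t}'$ with $\evterm{rv}\sqsubseteq\evterm{t}'$; here the mutual hypothesis (ii) yields that $\evterm{t}'$ is a raw value, so $\varepsilon'\,\evterm{t}'$ is a value. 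When $\evterm{v}=\qm$, the only term below it is $\qm$ itself.

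I expect the evidence-ascription case to be the main obstacle: it is the sole place where the single, unstrengthened statement fails to close, because the ``no double evidence'' invariant separating values from raw values must be propagated through the induction---exactly the role of the mutual claim (ii). A secondary check is that precision never places a value below $\mathsf{err}$, i.e.\ that $\mathsf{err}$ is $\sqsubseteq$-comparable only to itself; this is immediate from the definition of $\sqsubseteq$, where $\qm$, and not $\mathsf{err}$, is the unique maximally imprecise term, guaranteeing the less-precise side cannot degenerate into an error.
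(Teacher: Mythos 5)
Your proposal is correct and takes essentially the same route as the paper's proof: induction on the structure of $\evterm{v}$, driven by the observation that precision preserves the top-level constructor unless the less precise side is $ {\qm } $, with $\lambda$-abstractions being values regardless of their bodies. The only difference is that you explicitly strengthen the statement to a mutual claim about raw values to handle the evidence-ascription case, whereas the paper handles this implicitly (the same-constructor observation ensures the term under $\varepsilon'$ cannot itself be an evidence ascription, hence is raw); your version is a more careful rendering of the same argument.
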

\begin{proof}
    If $\evterm{t}= {\qm } $, then it is a value.
    Otherwise we proceed by induction on the structure of $\evterm{v}$,
    noting that a less precise value must have the same top-level constructor if it is not $ {\qm } $.
    For $\myepsilon \, \evterm{rv}$ and $\ottsym{(}  \mathit{x}  \ottsym{:}  \evterm{rv}  \ottsym{)}  \rightarrow  \evterm{T}$, it follows by applying our hypothesis to $\evterm{rv}$ and $\evterm{rV}$ respectively.
    $ \TypeType_{ \ottnt{i} } $ and $ {\qm } $ are trivial, and a function is a value regardless of its body, giving us our result.
      
\end{proof}

\gradGuarantee*
\begin{proof}[Proof of Static Guarantee]
    We prove by mutual induction with the following proposition: if $\gradual{Gamma}  \vdash  \gradual{t}  \Rightarrow  \gradual{U}$, $\gradual{Gamma} \sqsubseteq \gradual{Gamma}'$ and $\gradual{t} \sqsubseteq \gradual{t}'$, then $\gradual{Gamma}'  \vdash  \gradual{t}'  \Rightarrow  \gradual{U}'$   for some $\gradual{U}'$ where $ \gradual{U} \sqsubseteq  \gradual{U}' $.

    First we note that if $\gradual{t}'= {\qm } $, then our results holds trivially: $ {\qm } $ synthesizes the least precise type,
    and can check against any type using \rrule{GCheckSynth}.
    We hence assume that $\gradual{t}'\neq {\qm } $, and proceed by induction on the typing derivation for $\gradual{t}$.

    \rrule{GSynthAnn}: then $\gradual{t}= \gradual{t}_{{\mathrm{1}}} \dblcolon \gradual{T} $, and $\gradual{t}'= \gradual{t}'_{{\mathrm{1}}} \dblcolon \gradual{T}' $.
    By our premise, $\gradual{T}$ has some normal form $\gradual{U}_{{\mathrm{1}}}$.
    By \autoref{thm:normGuarantee}, $\gradual{T}'$ has a normal form $\gradual{U}'_{{\mathrm{1}}}$ where $ \gradual{U}_{{\mathrm{1}}} \sqsubseteq  \gradual{U}'_{{\mathrm{1}}} $.
    By our inductive hypothesis, $\gradual{Gamma}  \vdash  \gradual{t}'_{{\mathrm{1}}}  \Leftarrow  \gradual{U}'_{{\mathrm{1}}}$, allowing us to complete our typing derivation.

    \rrule{GSynthType}: then $\gradual{t}=\gradual{t}'= \TypeType_{ \ottnt{i} } $ and we can use an identical typing derivation.

    \rrule{GSynthVar}: then $\gradual{t}=\gradual{t}'=\mathit{x}$, and by our premise that $ \gradual{Gamma} \sqsubseteq  \gradual{Gamma}' $, the synthesized type $\gradual{U}$
    from $\gradual{Gamma}$ is at least as precise as $\gradual{U}'$ from $\gradual{Gamma}'$.

    \rrule{GSynthApp}: then $\gradual{t}= \gradual{t}_{{\mathrm{1}}} \  \gradual{t}_{{\mathrm{2}}} $, and $\gradual{t}'= \gradual{t}'_{{\mathrm{1}}} \  \gradual{t}'_{{\mathrm{2}}} $. The result then follows from our hypothesis,
    combined with the monotonicity of domain and codomain.

    \rrule{GSynthDyn}: vacuous.

    \rrule{GCheckSynth}: our hypothesis gives that $\gradual{Gamma}  \vdash  \gradual{t}  \Rightarrow  \gradual{U}_{{\mathrm{1}}}$ and $\gradual{Gamma}'  \vdash  \gradual{t}'  \Rightarrow  \gradual{U}'_{{\mathrm{1}}}$
    where $ \gradual{U}_{{\mathrm{1}}} \sqsubseteq  \gradual{U}'_{{\mathrm{1}}} $. Since we know $ \gradual{U} \sqsubseteq  \gradual{U}' $, we know that $ \gradual{U}'_{{\mathrm{1}}}  \cong  \gradual{U}' $, giving us our typing derivation.

    \rrule{GCheckLevel}: if $\gradual{Gamma}  \vdash  \gradual{t}  \Rightarrow   \TypeType_{ \ottnt{i} } $, by our hypothesis, either $\gradual{Gamma}'  \vdash  \gradual{t}'  \Rightarrow   \TypeType_{ \ottnt{i} } $,
    or $\gradual{Gamma}'  \vdash  \gradual{t}'  \Rightarrow  {\qm }$. In the first case, we can type $\gradual{t}'$ with \rrule{GCheckLevel},
    and in the second we can use \rrule{GCheckSynth}.

    \rrule{GCheckPi}: follows from our hypothesis and \autoref{thm:normGuarantee}.

    \rrule{GCheckLamPi}, \rrule{GCheckLamDyn}: follows from our hypothesis.

\end{proof}

\begin{proof}[Proof of Dynamic Guarantee]
    We prove a slightly stronger result:
        If $\evterm{t}_{{\mathrm{1}}}  \longrightarrow  \evterm{t}_{{\mathrm{2}}}$, where $\evterm{t}_{{\mathrm{2}}}\neq\mathsf{err}$ and $ \evterm{t}_{{\mathrm{1}}} \sqsubseteq  \evterm{t}'_{{\mathrm{1}}} $, then $\evterm{t}'_{{\mathrm{1}}}\longrightarrow^{*}\evterm{t}'_{{\mathrm{2}}}$ for some $\evterm{t}'_{{\mathrm{2}}}$ such that
        $ \evterm{t}_{{\mathrm{2}}} \sqsubseteq  \evterm{t}'_{{\mathrm{2}}} $. 
        
        We first note that if $\evterm{t}'_{{\mathrm{1}}}=  {\qm } $, then the result trivially holds since $ {\qm } \longrightarrow^{*} {\qm } $. We assume then that
        $\evterm{t}'_{{\mathrm{1}}}\neq {\qm } $, and proceed by
        induction on the derivation of $\evterm{t}_{{\mathrm{1}}}  \longrightarrow  \evterm{t}_{{\mathrm{2}}}$.
        
        \rrule{StepAscrFail}, \rrule{StepAppFailTrans}, \rrule{StepContextErr}: vacuous.
    
        \rrule{StepAscr}: then $\evterm{t}_{{\mathrm{1}}}=\myepsilon_{{\mathrm{1}}} \, \ottsym{(}  \myepsilon_{{\mathrm{2}}} \, \evterm{rv}  \ottsym{)}$ and $\evterm{t}'_{{\mathrm{1}}}=ep1'(ep2' rv')$.
        If $ \evterm{t}_{{\mathrm{1}}} \sqsubseteq  \evterm{t}'_{{\mathrm{1}}} $, then $ \myepsilon_{{\mathrm{1}}} \sqsubseteq  \myepsilon'_{{\mathrm{1}}} $, $ \myepsilon_{{\mathrm{2}}} \sqsubseteq  \myepsilon'_{{\mathrm{2}}} $ and $ \evterm{rv}_{{\mathrm{1}}} \sqsubseteq  \evterm{rv}'_{{\mathrm{1}}} $.
        Our premise gives that $ \myepsilon_{{\mathrm{1}}}  \sqcap  \myepsilon_{{\mathrm{2}}} $ is defined, and since we define meet and precision on sets of static values,
        $  \myepsilon_{{\mathrm{1}}}  \sqcap  \myepsilon_{{\mathrm{2}}}  \sqsubseteq   \myepsilon'_{{\mathrm{1}}}  \sqcap  \myepsilon'_{{\mathrm{2}}}  $. So $\evterm{t}_{{\mathrm{1}}}  \longrightarrow  \langle   \myepsilon'_{{\mathrm{1}}}  \sqcap  \myepsilon'_{{\mathrm{2}}}   \rangle \, \evterm{rv}'$ and $ \langle   \myepsilon_{{\mathrm{1}}}  \sqcap  \myepsilon_{{\mathrm{2}}}   \rangle \, \evterm{rv} \sqsubseteq  \langle   \myepsilon'_{{\mathrm{1}}}  \sqcap  \myepsilon'_{{\mathrm{2}}}   \rangle \, \evterm{rv}' $.

        \rrule{StepAppEv}: Then $\evterm{t}_{{\mathrm{1}}}=  \ottsym{(}  \myepsilon_{{\mathrm{1}}} \, \ottsym{(}  \lambda  \mathit{x}  \ldotp  \evterm{t}_{{\mathrm{3}}}  \ottsym{)}  \ottsym{)} \  \ottsym{(}  \myepsilon_{{\mathrm{2}}} \, \evterm{rv}  \ottsym{)} $.
        We have two cases for our precision relation to hold. 
        
        In the first, $\evterm{t}_{{\mathrm{2}}} =  \ottsym{(}  \myepsilon'_{{\mathrm{1}}} \, {\qm }  \ottsym{)} \  \ottsym{(}  \myepsilon'_{{\mathrm{2}}} \, \evterm{rv}'  \ottsym{)} $ where $ \myepsilon_{{\mathrm{1}}} \sqsubseteq  \myepsilon'_{{\mathrm{1}}} $ and $ \myepsilon_{{\mathrm{2}}} \sqsubseteq  \myepsilon'_{{\mathrm{2}}} $. By our premise, $ [  \myepsilon_{{\mathrm{2}}} \, \evterm{rv}  / {\_} ]  \ottkw{cod}\  \myepsilon_{{\mathrm{1}}}  =  \myepsilon_{{\mathrm{4}}} $ for some $\myepsilon_{{\mathrm{4}}}$, so there must be some  $\myepsilon'_{{\mathrm{4}}}$ where $ [  \myepsilon'_{{\mathrm{2}}} \, \evterm{rv}'  / {\_} ]  \ottkw{cod}\  \myepsilon'_{{\mathrm{1}}}  =  \myepsilon'_{{\mathrm{4}}} $,
         and by \autoref{cor:codsub-precise}, $ \myepsilon_{{\mathrm{4}}} \sqsubseteq  \myepsilon'_{{\mathrm{4}}} $. So we can step $ \ottsym{(}  \myepsilon'_{{\mathrm{1}}} \, {\qm }  \ottsym{)} \  \ottsym{(}  \myepsilon'_{{\mathrm{2}}} \, \evterm{rv}'  \ottsym{)}   \longrightarrow  \myepsilon'_{{\mathrm{4}}} \, {\qm }$ by \rrule{StepAppDyn},
         and since $ \myepsilon_{{\mathrm{4}}} \sqsubseteq  \myepsilon'_{{\mathrm{4}}} $ and $ {\qm } $ is less precise than all terms, we have our result.
    
         In the second case, $\evterm{t}'_{{\mathrm{1}}} =  \ottsym{(}  \myepsilon_{{\mathrm{1}}} \, \ottsym{(}  \lambda  \mathit{x}  \ldotp  \evterm{t}'_{{\mathrm{3}}}  \ottsym{)}  \ottsym{)} \  \ottsym{(}  \myepsilon'_{{\mathrm{2}}} \, \evterm{rv}  \ottsym{)} $. Then we can step $\evterm{t}'_{{\mathrm{1}}}$ using \rrule{StepAppEv}.
         By definition of $\ottkw{dom}$, \autoref{cor:codsub-precise} and the monotonicity of the precision meet,
         the evidences created to step $\evterm{t}'_{{\mathrm{1}}}$ are all no more precise than the corresponding ones for $\evterm{t}_{{\mathrm{1}}}$.
         Since syntactic substitution preserves precision, we have our result.

        \rrule{StepAppEvRaw}: By the same argument as \rrule{StepAppEv}, except that in the second case we need not apply monotonicity of the meet.

        \rrule{StepAppDyn}: Then $\evterm{t}= \ottsym{(}  \myepsilon_{{\mathrm{1}}} \, {\qm }  \ottsym{)} \  \evterm{v} $, so for precision to hold, $\evterm{t}'_{{\mathrm{1}}}$ must be $ \ottsym{(}  \myepsilon'_{{\mathrm{1}}} \, {\qm }  \ottsym{)} \  \evterm{v}' $, where $ \myepsilon_{{\mathrm{1}}} \sqsubseteq  \myepsilon'_{{\mathrm{1}}} $ and $ \evterm{v} \sqsubseteq  \evterm{v}' $.
        By \autoref{cor:codsub-precise}, if $ [  \evterm{v}  / {\_} ]  \ottkw{cod}\  \myepsilon_{{\mathrm{1}}}  =  \myepsilon_{{\mathrm{2}}} $ and $ [  \evterm{v}  / {\_} ]  \ottkw{cod}\  \myepsilon'_{{\mathrm{1}}}  =  \myepsilon'_{{\mathrm{2}}} $, then $ \myepsilon_{{\mathrm{2}}} \sqsubseteq  \myepsilon'_{{\mathrm{2}}} $,
        so we can step $\evterm{t}'_{{\mathrm{1}}}  \longrightarrow  \myepsilon'_{{\mathrm{2}}} \, {\qm }$ and our precision result holds.
    
        \rrule{StepContext}: If $\evterm{t}_{{\mathrm{1}}}= \evterm{C}[  \evterm{t}_{{\mathrm{3}}}  ] $ and $\evterm{t}'_{{\mathrm{1}}}= \evterm{C}[  \evterm{t}'_{{\mathrm{3}}}  ] $, by our premise and inductive hypothesis,
        we have $\evterm{t}_{{\mathrm{3}}}  \longrightarrow  \evterm{t}_{{\mathrm{4}}}$, $\evterm{t}'_{{\mathrm{3}}}  \longrightarrow  \evterm{t}'_{{\mathrm{4}}}$ and $ \evterm{t}_{{\mathrm{4}}} \sqsubseteq  \evterm{t}'_{{\mathrm{4}}} $.
        If $\evterm{t}_{{\mathrm{1}}}= \evterm{t}_{{\mathrm{3}}} \  \evterm{t}_{{\mathrm{5}}} $, then $\evterm{t}'_{{\mathrm{1}}}= \evterm{t}'_{{\mathrm{3}}} \  \evterm{t}'_{{\mathrm{5}}} $ and $ \evterm{t}_{{\mathrm{5}}} \sqsubseteq  \evterm{t}'_{{\mathrm{5}}} $, so $  \evterm{t}_{{\mathrm{4}}} \  \evterm{t}_{{\mathrm{5}}}  \sqsubseteq   \evterm{t}'_{{\mathrm{4}}} \  \evterm{t}'_{{\mathrm{5}}}  $,
        and we can step to $ \evterm{C}[  \evterm{t}'_{{\mathrm{4}}}  ] $ using \rrule{StepContext}, and preserve the precision relation.
        Similar reasoning shows the same result for the other possible frames,
        though we note that the fact that precision preserves the value property is required for the frames involving values.
    \end{proof}

  \end{document}